\numberwithin{equation}{section}
\numberwithin{figure}{section}
\theoremstyle{plain}
\newtheorem{thm}{\protect\theoremname}[section]
  \theoremstyle{definition}
  \newtheorem{defn}[thm]{\protect\definitionname}
  \theoremstyle{definition}
  \newtheorem{example}[thm]{\protect\examplename}
  \theoremstyle{plain}
  \newtheorem{cor}[thm]{\protect\corollaryname}
  \theoremstyle{remark}
  \newtheorem*{rem*}{\protect\remarkname}
  \theoremstyle{plain}
  \newtheorem{lem}[thm]{\protect\lemmaname}
  \theoremstyle{plain}
  \newtheorem{prop}[thm]{\protect\propositionname}
\setlist{leftmargin=*}
\renewcommand\[{\begin{equation}}
\renewcommand\]{\end{equation}} 
  \providecommand{\corollaryname}{Corollary}
  \providecommand{\definitionname}{Definition}
  \providecommand{\examplename}{Example}
  \providecommand{\lemmaname}{Lemma}
  \providecommand{\propositionname}{Proposition}
  \providecommand{\remarkname}{Remark}
\providecommand{\theoremname}{Theorem}
\begin{document}
\global\long\def\disgraph{\mathcal{G}}

\global\long\def\metgraph{\Gamma}

\global\long\def\hamil{\mathcal{H}}

\global\long\def\len{\mathscr{L}}

\global\long\def\lencl{\mathscr{\overline{L}}}

\global\long\def\lenbd{\mathscr{\partial L}}

\global\long\def\lenvec{\underline{l}}

\global\long\def\sgp{\theta^{SG}}

\global\long\def\ray{\mathcal{R}}

\global\long\def\energy{\mathscr{E}}

\global\long\def\E{\mathcal{E}}

\global\long\def\V{\mathcal{V}}

\global\long\def\R{\mathbb{R}}

\global\long\def\Q{\mathbb{Q}}

\global\long\def\Z{\mathbb{Z}}

\global\long\def\C{\mathbb{C}}

\global\long\def\N{\mathbb{N}}

\global\long\def\at{\mathbf{\big|}}

\global\long\def\id{\mathbf{1}}

\global\long\def\ui{\mathbf{\textrm{i}}}

\global\long\def\ue{\mathbf{\textrm{e}}}

\global\long\def\ud{\mathbf{\textrm{d}}}

\title[Quantum graphs which optimize the spectral gap]{Quantum graphs which optimize the spectral gap}

\author{Ram Band$^{1}$ and Guillaume L\'{e}vy$^{2}$}

\address{$^{1}${\small{}Department of Mathematics, Technion--Israel Institute
of Technology, Haifa 32000, Israel}}

\address{$^{2}$Laboratoire Jacques-Louis Lions, UMR 7598, Université Pierre
et Marie Curie, 75252 Paris Cedex 05, France.}

\email{$^{1}${ramband@technion.ac.il},}

\email{$^{2}${levy@ljll.math.upmc.fr}}

\subjclass[2000]{05C45, 34L15, 35Pxx, 35R02}
\begin{abstract}
A finite discrete graph is turned into a quantum (metric) graph once
a finite length is assigned to each edge and the one-dimensional Laplacian
is taken to be the operator. We study the dependence of the spectral
gap (the first positive Laplacian eigenvalue) on the choice of edge
lengths. In particular, starting from a certain discrete graph, we
seek the quantum graph for which an optimal (either maximal or minimal)
spectral gap is obtained. We fully solve the minimization problem
for all graphs. We develop tools for investigating the maximization
problem and solve it for some families of graphs.
\end{abstract}

\keywords{Quantum graph, spectral gap}
\maketitle

\section{Introduction\label{sec:Introduction}}

The spectral gap is a vastly explored quantity due to its importance
both for applicative purposes and theoretic ones. The applicative
aspects range from estimates of convergence to equilibrium to behavior
of quantum many body systems. The theoretic study concerns with connecting
the shape of an object to a fundamental spectral property. Such relations
stand in the heart of spectral geometry and motivate the current work.

A compact quantum graph can be thought of as a three-fold object,
consisting of a topology, a metric and an operator. The topology is
described by an underlying discrete graph and the metric is simply
the assignment of a positive length to each of the edges. The operator
together with its domain complete this description. In the current
work we adopt the most common choice and fix the operator to be the
one-dimensional Laplacian acting on functions which satisfy the so
called Neumann conditions at the graph vertices (see \cite{BerKuc_quantum_graphs,GnuSmi_ap06}).
It is then most natural to fix a certain graph topology and explore
how the graph spectral properties depend on the choice of edge lengths
\cite{Friedlander_ijm05,ExnJax_pla12,BerLiu_jmaa16}. In particular,
we examine the spectral gap which, in our case, is the first positive
eigenvalue of the Laplacian. Picking a particular graph topology,
we ask which edge lengths minimize or maximize the spectral gap. We
notice that as our space of edge lengths is not compact, it is possible
that there is no maximum or no minimum. The space of edge lengths
is thus extended by allowing zero length edges so that the minima
(maxima) of this new length space are the infimums (supremums) of
the previous. This leads to a most interesting exploration direction:
sending edge lengths to zero changes the topology of the original
graph and makes us wonder what are the topologies which are obtained
as optimizers (either maximizers or minimizers) of other graphs. This
is the central question of the current paper.

Already in 1987, Nicaise showed that among all graphs with a fixed
length, the minimal spectral gap is obtained for the single edge graph
\cite{Nicaise_bsm87}. In 2005, Friedlander proved a more general
result, showing that the minimum of the $k^{\textrm{th}}$ eigenvalue
is uniquely obtained for a star graph with $k$ edges \cite{Friedlander_aif05}.
More recently, Exner and Jex showed how the change of graph edge lengths
may increase or decrease the spectral gap, depending on the graph's
topology \cite{ExnJax_pla12}. In the last couple of years, a series
of works on the subject came to light. Kurasov and Naboko \cite{KurNab_jps14}
treated the spectral gap minimization and together with Malenov\'a
they explored how the spectral gap changes with various modifications
of the graph connectivity \cite{KurMalNab_jpa13}. Kennedy, Kurasov,
Malenov\'a and Mugnolo provided a broad survey on bounding the spectral
gap in terms of various geometric quantities of the graph \cite{KenKurMalMug_ahp16}.
Karreskog, Kurasov and Trygg Kupersmidt generalized the minimization
results mentioned above to Schr\"odinger operators with potentials
and $\delta$-type vertex conditions \cite{KarKurKup_ams16}. Del
Pezzo and Rossi proved upper and lower bounds for the spectral gap
of the p-Laplacian and evaluated its derivatives with respect to change
of edge lengths \cite{DelRoss_amp16}. Rohleder solved the spectral
gap maximization problem for all eigenvalues of tree graphs \cite{Rohleder_pams16}.
When this manuscript was accpeted for publication, two additional
preprints became available online. Ariturk provides some improved
upper bounds for all graph eigenvalues \cite{Ariturk_arXiv16}. Berkolaiko,
Kennedy, Kurasov and Mugnolo further generalize lower and upper bounds
of the spectral gap in terms of the edge connectivity \cite{BerKenKurMug_arXiv17}.

We complement this literature review by mentioning some interesting
and recent works on the spectral gap of metric graphs, whose scope
is different than ours. Post \cite{Post_coll09}, Kurasov \cite{Kurasov_acta13},
Kennedy and Mugnolo \cite{KenMun_arxv16} all treated various estimates
of the spectral gap in terms of the Cheeger constant (a line of research
which already originated in \cite{Nicaise_bsm87} for quantum graphs).
Buttazzo, Ruffini and Velichkov optimize over spectral gap of graphs
given some prescribed set of Dirichlet vertices embedded in $\R^{d}$
\cite{ButRufVel_esaim12}.

The spectral gap optimization we consider in this paper is close in
nature to the first line of works mentioned above. Nevertheless, our
point of view is different as we wish to solve the optimization problem
for each and every topology. This broad phrasing of the question provides
a unified framework for several of the works mentioned above. In particular,
it allows to take a step forward and complement those.

\subsection{Discrete graphs and graph topologies\label{subsec:discrete_graphs_introduction}}

Let $\disgraph=\left(\V,\E\right)$ be a connected graph with finite
sets of vertices $\V$ and edges $\E$ and we denote $V:=\left|\V\right|,~E:=\left|\E\right|$.
We allow edges to connect either two distinct vertices or a vertex
to itself. In the latter case, this edge is called a loop, or sometimes
a petal.

For a vertex $v\in\V$, its \emph{degree, $d_{v}$,} equals the number
of edges connected to it. Vertices of degree one are called leaves.
Furthermore, we abuse this naming and frequently also use the name
leaf for an edge which is connected to a vertex of degree one.

An important topological quantity of the graph is 
\begin{equation}
\beta:=E-V+1,\label{eq:Betti_number}
\end{equation}
which counts the number of ``independent'' cycles on the graph (assuming
the graph is connected). This is also known as the first Betti number,
which is the dimension of the graph's first homology. In particular,
tree graphs are characterized by $\beta=0$.

We consider the following two ways for treating the graph connectivity.
The graph's edge connectivity is the minimal number of edges one needs
to remove in order to disconnect the graph. If the graph's edge connectivity
equals one, then an edge whose removal disconnects the graph is called
a \emph{bridge}. In particular, leaf edges are bridges. Similarly,
the graph's vertex connectivity is the number of vertices needed to
be removed in order to disconnect the graph. In particular, we show
the special role played by graphs of edge connectivity one (Theorem
\ref{thm:infimizers}) and of vertex connectivity one (Theorem \ref{thm:supremum_of_gluing}).

\subsection{Spectral theory of quantum graphs\label{subsec:Spectral-theory-of-Quantum-Graphs}}

A \emph{metric graph} is a discrete graph for which each edge, $e\in\E$,
is identified with a one-dimensional interval $[0,l_{e}]$ of positive
finite length $l_{e}$. We assign to each edge $e\in\E$ a coordinate,
$x_{e}$, which measures the distance along the edge from the starting
vertex of $e$. We denote a coordinate by $x$, when its precise nature
is unimportant.

A function on the graph is described by its restrictions to the edges,
$\left\{ \left.f\right|_{e}\right\} _{e\in\E}$, where $\left.f\right|_{e}:\left[0,l_{e}\right]\rightarrow\C$.
We equip the metric graphs with a self-adjoint differential operator,
\begin{equation}
\mathcal{H}\ :\ \left.f\right|_{e}(x_{e})\mapsto-\frac{\ud^{2}}{\ud x_{e}^{2}}\left.f\right|_{e}\left(x_{e}\right),\label{eq:metric_Schroedinger}
\end{equation}
which in our case is just the one-dimensional negative Laplacian on
every edge\footnote{Note that more general operators appear in the literature. See for
example the book \cite{BerKuc_quantum_graphs} and the survey \cite{GnuSmi_ap06}.}. It is most common to call this setting of a metric graph and an
operator by the name quantum graph.

To complete the definition of the operator we need to specify its
domain. We denote by $H^{2}(\metgraph)$ the following direct sum
of Sobolev spaces 
\begin{equation}
H^{2}(\metgraph):=\bigoplus_{e\in\E}H^{2}([0,l_{e}])\ .\label{eq:Sobolev_on_graph}
\end{equation}
In addition we require the following matching conditions on the graph
vertices. A function $f\in H^{2}(\metgraph)$ is said to satisfy the
Neumann vertex conditions at a vertex $v$ if
\begin{enumerate}
\item $f$ is continuous at $v\in\V$, i.e., 
\begin{equation}
\forall e_{1},e_{2}\in\E_{v\,\,\,\,\,}\left.f\right|_{e_{1}}(0)=\left.f\right|_{e_{2}}(0),\label{eq:Neumann_continuity}
\end{equation}
where $\E_{v}$ is the set of edges connected to $v$, and for each
$e\in\E_{v}$ we choose the coordiante such that $x_{e}=0$ at $v$.
\item the outgoing derivatives of $f$ at $v$ satisfy 
\begin{equation}
\sum_{e\in\E_{v}}\left.\frac{\ud f}{\ud x_{e}}\right|_{e}\left(0\right)=0.\label{eq:Neumann_deriv_conditions}
\end{equation}
\end{enumerate}
Another common vertex condition is called the Dirichlet condition.
Imposing Dirichlet condition at vertex $v\in\V$ means 
\begin{equation}
\forall e\in\E_{v\,\,\,\,\,}\left.f\right|_{e}(0)=0.\label{eq:Dirichlet_vertex_condition}
\end{equation}
Requiring either of these conditions at each vertex leads to the operator
\eqref{eq:metric_Schroedinger} being self-adjoint and its spectrum
being real and bounded from below \cite{BerKuc_quantum_graphs}. In
addition, since we only consider compact graphs, the spectrum is discrete.
We number the eigenvalues in the ascending order and denote them with
$\left\{ \lambda_{n}\right\} _{n=0}^{\infty}$ and their corresponding
eigenfunctions with $\left\{ f_{n}\right\} _{n=0}^{\infty}$. As the
operator is both real and self-adjoint, we may choose the eigenfunctions
to be real, which we will always do.

In this paper, we almost solely consider graphs whose vertex conditions
are Neumann at all vertices. Those are called \emph{Neumann graphs.}
For Neumann graphs, we define the Rayleigh quotient 
\[
\mathcal{R}(f):=\frac{\int_{\metgraph}|f'(x)|^{2}dx}{\int_{\metgraph}|f(x)|^{2}dx},
\]
which makes sense whenever $f\in H^{1}(\metgraph)$ (see \eqref{eq:Sobolev_on_graph}).
The eigenvalues of a Neumann graph have a nice expression using the
Rayleigh quotient. Indeed, denoting $V_{n}:=\textrm{Span}\{f_{0},\dots,f_{n}\}$
for $n\in\mathbb{N}$, we have 
\[
\lambda_{n}=\min_{f\perp V_{n-1}}\mathcal{R}(f).
\]
 In particular, the spectrum of a Neumann graph is nonnegative, which
means that we may represent the spectrum by the non-negative square
roots of the eigenvalues, $k_{n}=\sqrt{\lambda_{n}}$, and say that
$\left\{ k_{n}\right\} _{n=0}^{\infty}$ are the $k$-eigenvalues
of the graph. For convenience, we express most of our results and
proofs in terms of the $k$-eigenvalues. This choice makes all expressions
of this paper look nicer. A Neumann graph has $k_{0}=0$ with multiplicity
which equals the number of graph components (which is taken to be
one throughout this paper). It is $k_{1}$ which is in the focus of
this paper and is called the spectral gap\footnote{This terminology is justified, as a spectral gap is a common name
for the difference between some trivial eigenvalue (which is $k_{0}=0$
in our case) and the next eigenvalue. We note that in this sense it
is also common to call $\lambda_{1}$ the spectral gap.}.

\subsection{Graph Optimizers}
\begin{defn}
\label{def:discrete_and_metric_graph}Let $\disgraph$ be a discrete
graph with $E$ edges. 

\begin{enumerate}
\item Denote by 
\[
\len_{\disgraph}:=\left\{ \left(l_{1},\ldots,l_{E}\right)\in\R^{E}\left|~\sum_{e=1}^{E}l_{e}=1~\textrm{and~}\forall e,~l_{e}>0\right.\right\} 
\]
the space of all possible lengths we may assign to the edges of $\disgraph$.
We further denote by $\lencl_{\disgraph}$ the closure of $\len$
in $\R^{E}$ and by $\lenbd$ its boundary.
\item Denote by $\metgraph(\disgraph;~\lenvec$) the metric graph whose
connectivity is the same as $\disgraph$ and whose edge lengths are
given by $\lenvec\in\lencl_{\disgraph}$. We take $\metgraph(\disgraph;~\lenvec)$
to be a Neumann graph. If $\lenvec\in\partial\len$, then $\lenvec$
has some vanishing entries and in this case the connectivity of $\metgraph(\disgraph;~\lenvec$)
is not the same as $\disgraph$. For each vanishing entry, $l_{e}=0$,
the edge $e$ does not exist in $\metgraph(\disgraph;~\lenvec),$
but rather the vertices at the endpoints of this edge are identified
and form a single vertex when considered in $\metgraph(\disgraph;~\lenvec)$.
\end{enumerate}
\end{defn}
\noindent We emphasize that the definition above contains a normalization
choice; unless otherwise stated, all the graphs studied in this paper
are required to have total metric length one.

This paper studies the spectral gap, $k_{1}\left[\metgraph(\disgraph;~\lenvec)\right]$,
as a function of $\lenvec\in\lencl_{\disgraph}$. A first step is
to show that the function $k_{1}\left[\metgraph(\disgraph;~\lenvec)\right]$
is continuous on $\lencl_{\disgraph}$, which is done in Appendix
\ref{sec:appendix_eigenvalue_continuity}. Combining this continuity
statement with the compactness of ths set $\overline{\len_{\disgraph}}$,
the existence of a maximum and a minimum of the spectral gap on $\overline{\len_{\disgraph}}$
(but not necessarily on $\len_{\disgraph}$) follows. Indeed, the
focus of the current paper is on the extremal points of $k_{1}\left[\metgraph(\disgraph;~\lenvec)\right]$.
In particular we investigate whether the extremal points are obtained
on $\len_{\disgraph}$ or on $\lenbd_{\disgraph}$ and to which metric
graphs $\metgraph(\disgraph;~\lenvec)$ they correspond. This motivates
the following.
\begin{defn}
\label{def:optimizers}Let $\disgraph$ be a discrete graph. 

\begin{enumerate}
\item $\metgraph(\disgraph;~\lenvec^{*})$ is called a maximizer of $\disgraph$
if $\lenvec^{*}\in\len_{\disgraph}$ and 
\[
\forall\lenvec\in\len_{\disgraph}\quad k_{1}\left[\metgraph\left(\disgraph;~\lenvec^{*}\right)\right]\geq k_{1}\left[\metgraph\left(\disgraph;~\lenvec\right)\right].
\]
In this case we call $k_{1}\left[\metgraph(\disgraph;~\lenvec^{*})\right]$
the maximal spectral gap of $\disgraph$.
\item $\metgraph(\disgraph;~\lenvec^{*})$ is called a supremizer of $\disgraph$
if $\lenvec^{*}\in\lencl_{\disgraph}$ and 
\[
\forall\lenvec\in\lencl_{\disgraph}\quad k_{1}\left[\metgraph\left(\disgraph;~\lenvec^{*}\right)\right]\geq k_{1}\left[\metgraph\left(\disgraph;~\lenvec\right)\right].
\]
In this case we call $k_{1}\left[\metgraph(\disgraph;~\lenvec^{*})\right]$
the supremal spectral gap of $\disgraph$.
\item $\metgraph(\disgraph;~\lenvec^{*})$ is called the unique maximizer
of $\disgraph$ if for all $\lenvec\neq\lenvec^{*}$, $\metgraph(\disgraph;~\lenvec)$
is not a maximizer of $\disgraph$. The same definition holds for
the unique supremizer.
\item Analogous definitions to the above hold for minimizers and infimizers.
\item $\metgraph(\disgraph;~\lenvec^{*})$ is called an optimizer of $\disgraph$
if it is either a supremizer, a maximizer, an infimizer or a minimizer
of $\disgraph$.
\end{enumerate}
\end{defn}
Continuing the discussion preceding the definition, we note that there
might be graphs which do not have a maximizer or a minimizer. Yet,
a supremizer and an infimizer exist for any graph. Let $\disgraph$
be a discrete graph and $\metgraph(\disgraph;~\lenvec^{*})$ be its
supremizer (infimizer), with $\lenvec^{*}\in\lencl_{\disgraph}$.
Denote by $\disgraph^{*}$ the discrete graph which corresponds to
$\metgraph(\disgraph;~\lenvec^{*})$. We note that if $\lenvec^{*}\in\len_{\disgraph}$
then $\disgraph^{*}=\disgraph$ and if $\lenvec^{*}\in\lenbd_{\disgraph}$
then $\disgraph^{*}$ is obtained from $\disgraph$ by contracting
all edges which correspond to the zero entries of $\lenvec$.

The questions which motivate this work are the following: what are
the metric graphs $\metgraph(\disgraph;~\lenvec^{*})$ which serve
as supremizers (or infimizers) and what are all the possible topologies
(i.e. the discrete graphs $\disgraph^{*}$) obtained by these optimizations? 

We start by presenting a few examples of topologies which form part
of the answer to the questions above.
\begin{example}
\label{exa:star}Star graph

Let $\disgraph$ be a graph with $V\geq3$ vertices, and $E=V-1$
edges, where one of the vertices (called the central vertex) is connected
by edges to all the $V-1$ other vertices (Figure \ref{fig:star_flower_stower}(a)).
$\disgraph$ is called a star graph. The graph $\metgraph(\disgraph;~\lenvec)$
with $\lenvec=(\frac{1}{E},\ldots,\frac{1}{E})$ is called the equilateral
star. A simple calculation shows that $k_{1}\left[\metgraph(\disgraph;~\lenvec)\right]=\frac{\pi}{2}E$.
We show (Theorem \ref{thm:tree_supremizer}) that the equilateral
star is the unique maximizer of the star topology and that it is also
the unique supremizer of any tree graph with $E$ leaves. If we choose
above $V=2,~E=1$ we get an interval, which is the unique infimizer
of any graph with a bridge (Theorem \ref{thm:infimizers}).
\end{example}
\begin{figure}
\hfill{}%
\begin{minipage}[t]{0.3\columnwidth}%
(a)\includegraphics[scale=0.44]{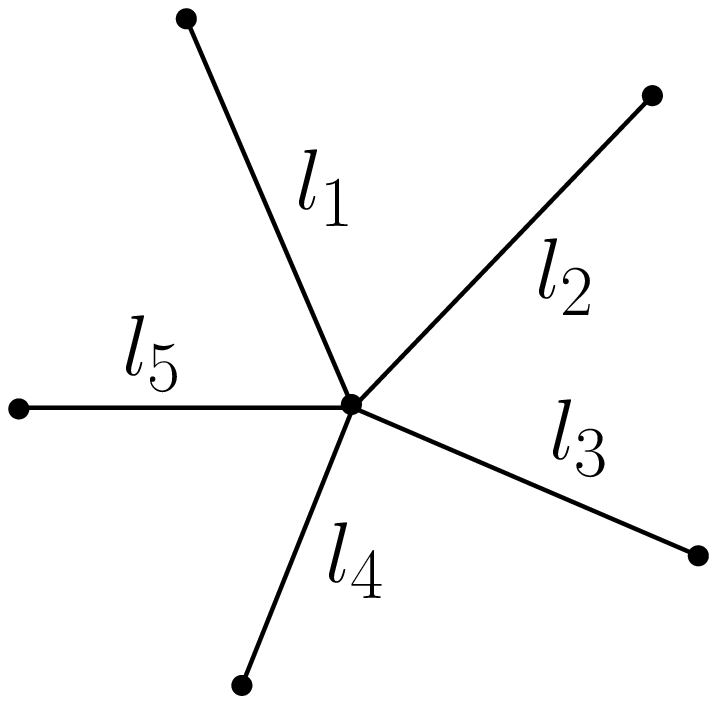}%
\end{minipage}\hfill{}%
\begin{minipage}[t]{0.3\columnwidth}%
(b)\includegraphics[scale=0.44]{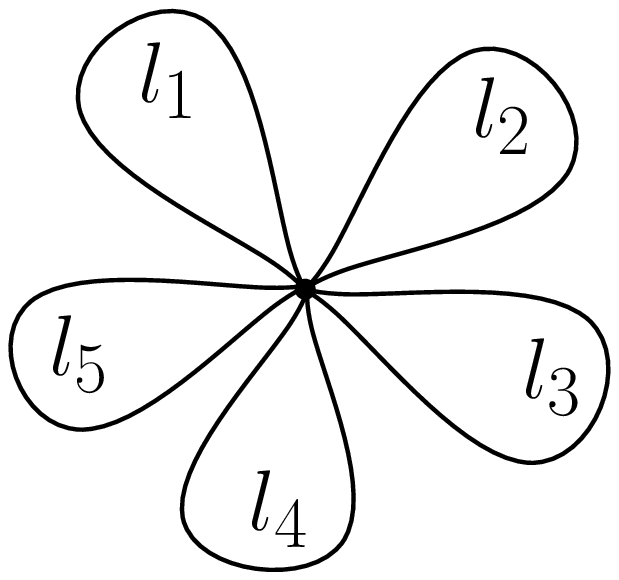}%
\end{minipage}\hfill{}%
\begin{minipage}[t]{0.3\columnwidth}%
(c)\includegraphics[scale=0.44]{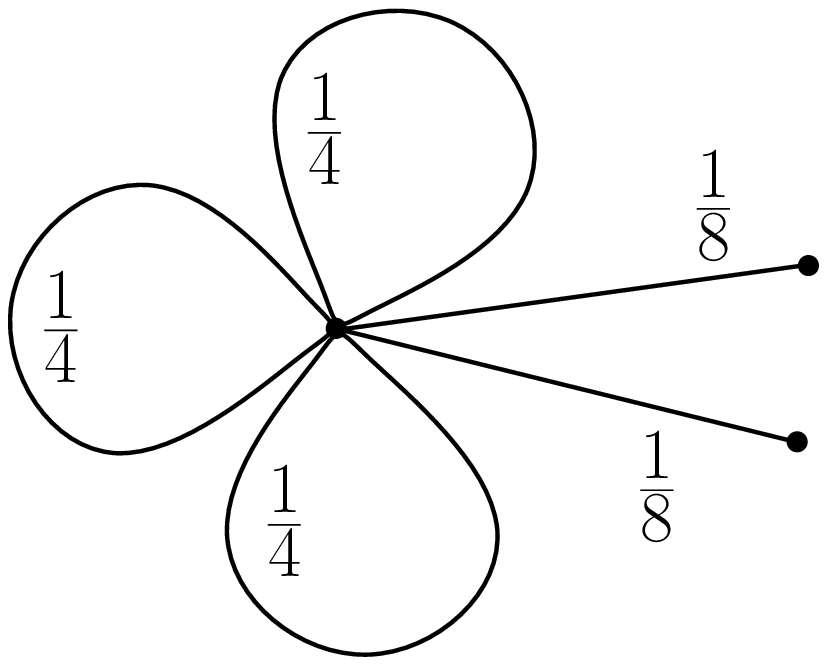}%
\end{minipage}\hfill{}

\caption{A few basic examples. (a) star graph (b) flower graph (c) equilateral
stower graph with $E_{p}=3,~E_{l}=2$}
\label{fig:star_flower_stower}
\end{figure}

\begin{example}
\label{exa:flower}Flower graph

Let $\disgraph$ be a graph with a single vertex and $E\geq2$ edges,
where each edge is a loop (petal) connecting that single vertex to
itself (Figure \ref{fig:star_flower_stower}(b)). $\disgraph$ is
called a flower graph. The graph $\metgraph(\disgraph;~\lenvec)$
with $\lenvec=(\frac{1}{E},\ldots,\frac{1}{E})$ is called the equilateral
flower. A simple calculation shows that $k_{1}\left[\metgraph(\disgraph;~\lenvec)\right]=\pi E$.
We show (Corollary \ref{cor:stower_supremum}) that the equilateral
flower is the unique maximizer of the flower topology. If we choose
above $E=1$ we get a single loop graph, which is an infimizer for
all bridgeless graphs (Theorem \ref{thm:infimizers}).
\end{example}
~
\begin{example}
\label{exa:stower}Stower graph

Let $\disgraph$ be a graph with $V$ vertices and $E=E_{p}+E_{l}\geq2$
edges. $E_{p}$ of the edges are loops which connect a single vertex
to itself (the same vertex for all those edges) and, as before, they
are called petals. Each of the rest $E_{l}=V-1$ edges connect this
single vertex to another graph vertex and they are called dangling
edges or just leaves (Figure \ref{fig:star_flower_stower}(c)). Being
a hybrid between a star graph and a flower graph, such $\disgraph$
is called a stower graph. We note that a flower graph is a stower
(with $E_{l}=0$) and a star graph is a stower as well (with $E_{p}=0$).
The graph $\metgraph(\disgraph;~\lenvec)$ with $\lenvec=\frac{1}{2E_{p}+E_{l}}(\underbrace{2,\ldots,2}_{E_{p}},\underbrace{1,\ldots,1}_{E_{l}})$
is called the equilateral stower. Note that we abuse terminology and
call the graph equilateral, even though not all edges of the description
above have the same length. A simple calculation shows that $k_{1}\left[\metgraph(\disgraph;~\lenvec)\right]=\frac{\pi}{2}(2E_{p}+E_{l})$.
We show (Corollary \ref{cor:stower_supremum}) that the equilateral
stower is the unique maximizer of the stower topology, except when
$E_{p}=E_{l}=1$, for which the supremizer is actually a single loop.
Furthermore, spectral gaps of stowers obey a sort of additive property
in the following sense: if two graphs whose supremizers are stowers
are glued at non-leaf vertices to form a single graph, then this graph's
supremizer is a stower graph obtained by adding the petals and the
leaves of the two individual stower supremizers (Corollary \ref{cor:stower_supremum}).
\end{example}
~
\begin{example}
\label{exa:mandarin}Mandarin graph

Let $\disgraph$ be a graph with $2$ vertices and $E$ edges, each
connecting those two vertices (Figure \ref{fig:mandarin_and_necklace}(a)).
Such $\disgraph$ is called a mandarin graph. In the literature it
is also called a watermelon or a pumpkin, but we adopt the name mandarin
which was used in a thorough exploration of spectral properties of
these graphs, \cite{BanBerWey_jmp15}. The graph $\metgraph(\disgraph;~\lenvec)$
with $\lenvec=(\frac{1}{E},\ldots,\frac{1}{E})$ is called the equilateral
mandarin. A simple calculation shows that $k_{1}\left[\metgraph(\disgraph;~\lenvec)\right]=\pi E$.
The equilateral mandarin is the unique maximizer of the mandarin topology,
as was shown recently in \cite{KenKurMalMug_ahp16} (theorem 4.2 there).
\end{example}
\begin{figure}[h]
\hfill{}%
\begin{minipage}[t]{0.4\columnwidth}%
(a)\includegraphics[scale=0.58]{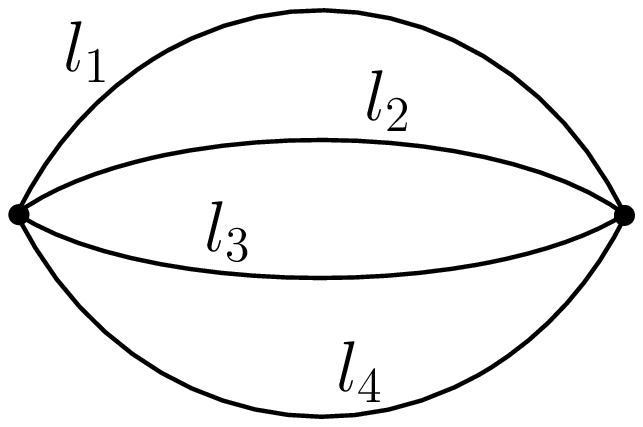}%
\end{minipage}\hfill{}%
\begin{minipage}[t]{0.5\columnwidth}%
(b)\includegraphics[scale=0.7]{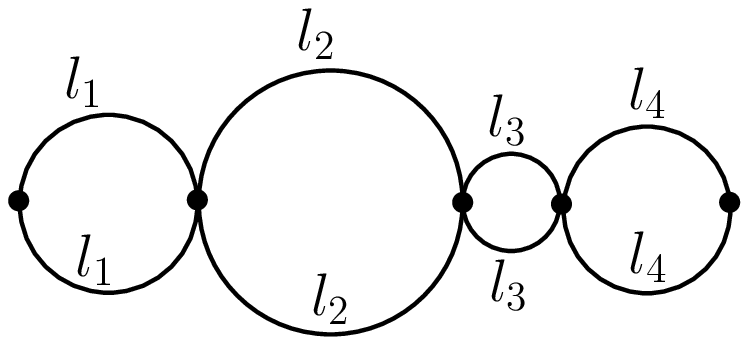}%
\end{minipage}\hfill{}

\caption{(a) mandarin graph (b) symmetric necklace graph}
\label{fig:mandarin_and_necklace}
\end{figure}

\begin{example}
\label{exa:necklace}Necklace graph

Let $\disgraph$ be a graph with $V$ vertices and $E=2\left(V-1\right)$
edges, such that every two adjacent vertices, $v_{i},v_{i+1}$ ($1\leq i\leq V-1$)
are connected by two edges (Figure \ref{fig:mandarin_and_necklace}(b)).
If $\lenvec$ is chosen such that every pair of parallel edges connecting
two vertices have the same length, $\metgraph(\disgraph;~\lenvec)$
is called a symmetric necklace. Note that the two vertices at the
endpoints of the necklace are redundant, being Neumann vertices of
degree two (they are merely used here to shorten the graph description).
Necklace graphs are the only graphs which may serve as infimizers
of bridgeless graphs (Theorem \ref{thm:infimizers}).
\end{example}

\section{Main Results\label{sec:Main-Results}}

The main results of the current paper are stated below, arranged by
subjects. In each of the following subsections, we mention which section
of the paper contains the relevant proofs and discussions.

\subsection{Infimizers (section \ref{sec:Infimizers})}
\begin{thm}
\label{thm:infimizers}~

\begin{enumerate}
\item Let $\mathcal{G}$ be a graph with a bridge. Then the infimal spectral
gap of $\mathcal{G}$ equals $\pi$. Moreover, the unique infimizer
is the unit interval.
\item Let $\mathcal{G}$ be a bridgeless graph. Then the infimal spectral
gap of $\mathcal{G}$ equals $2\pi$. Moreover, any infimizer is a
symmetric necklace graph.
\end{enumerate}
\end{thm}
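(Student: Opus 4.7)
For the upper bound $\inf k_1 \leq \pi$, fix a bridge $e^*$ of $\disgraph$ and consider $\lenvec^{(n)} \in \len_{\disgraph}$ putting length $1 - 1/n$ on $e^*$ with the remaining $1/n$ shared among the other edges. The limit point in $\lenbd_{\disgraph}$ is the unit interval, whose Neumann spectral gap is $\pi$; continuity of $k_1$ on $\lencl_{\disgraph}$ (Appendix~\ref{sec:appendix_eigenvalue_continuity}) then delivers the bound. For the matching lower bound and uniqueness I would invoke Nicaise's classical theorem, which states that every Neumann quantum graph of total length $1$ has $k_1 \geq \pi$, with equality exactly at the unit interval.

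\medskip
\noindent \textbf{Part (2), upper bound.} Bridgelessness of $\disgraph$ forces the existence of a cycle $C \subset \disgraph$. Taking $\lenvec^{(n)}$ that concentrates mass $1-1/n$ on $C$ (and shares $1/n$ among the remaining edges), the limit graph in $\lenbd_{\disgraph}$ is a quotient of $C$---a single circle of length $1$, or a bouquet of circles joined at identified vertices if chords of $C$ were contracted. A short direct computation shows $k_1 = 2\pi$ on every such quotient, and continuity yields $\inf k_1 \leq 2\pi$.

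\medskip
\noindent \textbf{Part (2), lower bound.} The heart of the argument is to show $k_1 \geq 2\pi$ for every bridgeless Neumann graph $\Gamma$ of total length $1$. Let $f = f_1$, $k = k_1$, and replace $f$ by $-f$ if needed so that $M := \max f \geq -\min f =: m$; in particular $|f| \leq M$ on $\Gamma$. Let $x_+, x_-$ be points where $f = M$ and $f = -m$; at these extrema the Neumann condition forces every outgoing edge-derivative of $f$ to vanish. By $2$-edge connectivity Menger's theorem yields two edge-disjoint paths $P_1, P_2 \subset \Gamma$ from $x_+$ to $x_-$. Parametrising $f|_{P_i}$ by arclength gives a continuous function $g_i$, piecewise satisfying $g_i'' = -k^2 g_i$, with $g_i(0) = M$, $g_i(|P_i|) = -m$, and $g_i'(0^+) = g_i'(|P_i|^-) = 0$. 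Writing $(g_i, g_i'/k) = R_i(\cos\theta_i, -\sin\theta_i)$, the Pr\"ufer angle $\theta_i$ increases at rate $k$ on each sub-edge and may jump at intermediate vertices, absorbing the Neumann derivative transferred to the edges that leave $P_i$ at those vertices. The key step is to show that each phase jump of size $\Delta$ at a vertex $v$ must be paid for by length of order at least $\Delta/k$ on the off-path edges at $v$, by combining the local Neumann jump relation with the global bound $|f|\leq M$. A global bookkeeping---the phase accumulated along $P_i$ plus the shortcuts at its intermediate vertices must equal $\pi \pmod{2\pi}$, while the off-path length available is bounded by $1 - |P_1| - |P_2|$---then yields $2\pi/k \leq \mathrm{length}(\Gamma) = 1$, i.e.\ $k \geq 2\pi$.

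\medskip
\noindent \textbf{Infimizers and main obstacle.} Saturating every inequality at $k = 2\pi$ forces $|P_1| = |P_2| = 1/2$, $m = M$, no Pr\"ufer jumps (so no off-path edges meet the interior of $P_1 \cup P_2$), and $P_1 \cup P_2 = \Gamma$. A combinatorial description of how two edge-disjoint $x_+ \to x_-$ paths of equal length can fill $\Gamma$---possibly sharing intermediate degree-$2$ vertices that are redundant in the sense of Example~\ref{exa:necklace}---identifies $\Gamma$ as a symmetric necklace. The principal technical difficulty lies in the off-path length accounting for Pr\"ufer shortcuts: a purely local analysis at a degree-$\geq 3$ vertex in fact permits substantial positive phase jumps and would defeat the naive $|P_i| \geq \pi/k$ estimate, so the argument must globally exploit $|f|\leq M$---likely through an energy-type estimate on the off-path edges---to quantify the length cost of each jump.
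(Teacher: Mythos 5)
The core of part (2) --- the lower bound $k_{1}(\Gamma)\geq2\pi$ for every bridgeless graph of total length one --- is not established in your write-up. Your Pr\"ufer-angle bookkeeping rests on the claim that a phase jump of size $\Delta$ at an intermediate vertex of $P_{i}$ must be paid for by off-path length of order $\Delta/k$, and you concede yourself that the local Neumann relation does not yield this; indeed it is doubtful as stated, since an off-path edge has constant energy $\left(f'\right)^{2}+k^{2}f^{2}$ but arbitrarily small length, so a short edge can absorb a large derivative, and the hoped-for ``global energy-type estimate'' is never produced. Because the identification of infimizers as symmetric necklaces is obtained by saturating these same unproven inequalities, both the bound $2\pi$ and the characterization remain open in your argument. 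The paper proves the bound by a different and softer route: take the eigenfunction $f$, its monotone rearrangement $f^{*}$ on $\left[0,1\right]$ and the co-area formula, and use bridgelessness only through the fact that every regular value $t$ is attained at least $n\left(t\right)\geq2$ times (cutting $\Gamma$ at a point where $f=t$ on a path from $x_{m}$ to $x_{M}$ leaves the graph connected, so a second crossing exists). Cauchy--Schwarz then gives $\int_{\Gamma}\left|f'\right|^{2}\geq\left(\inf n\left(t\right)\right)^{2}\int_{0}^{1}\left|\left(f^{*}\right)'\right|^{2}\geq4\pi^{2}$, since $f^{*}$ is a zero-mean, unit-norm test function on the unit interval; the equality analysis ($n\left(t\right)\equiv2$, two edge-disjoint strictly monotone paths of equal length exhausting $\Gamma$) then yields the symmetric necklace. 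If you wish to keep the Pr\"ufer approach, the jump-cost lemma is precisely what you must prove, and I do not see how to do it locally.

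There is also a genuine error in your upper bound for part (2): the claim that ``$k_{1}=2\pi$ on every quotient of the cycle $C$'' is false, since contracting chords may identify several points of $C$ and produce, for example, an equilateral flower with $E\geq3$ petals, whose spectral gap is $\pi E>2\pi$. The fix is easy: keep positive length on a single (non-bridge, hence any) edge $e$ and contract everything else; as $\mathcal{G}\setminus e$ is connected, the limit is a single loop of length one with $k_{1}=2\pi$, and continuity gives the bound. Part (1) is essentially fine: the degeneration onto the bridge gives the interval, and resting the lower bound and uniqueness on Nicaise's theorem is legitimate (the paper itself notes this is known, although it reproves it with the same rearrangement argument used for the bridgeless case).
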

We note that it was already proved in \cite{Nicaise_bsm87,Friedlander_aif05,KurNab_jps14}
that $\pi$ is a universal lower bound for the spectral gap, attained
only by the interval. In \cite{Friedlander_aif05} it is even shown
that $\pi n$ is a lower bound for $k_{n}$. The paper \cite{KurNab_jps14}
proves that the lower bound may be improved to $2\pi$ if all vertices
have even degrees. Theorem \ref{thm:infimizers} extends the set of
graph topologies whose spectral gap is bounded by $2\pi$ to all bridgeless
graphs (indeed graphs whose all vertices are of even degrees form
a particular case). Furthermore, combining Theorem \ref{thm:infimizers}
with the continuity of eigenvalues with respect to the graphs edge
lengths (Appendix \ref{sec:appendix_eigenvalue_continuity}) allows
to conclude that our result cannot be improved by imposing further
restrictions on the graph topology. For any bridgeless graph $\disgraph$,
there exists $\lenvec^{*}\in\lencl_{\disgraph}$ for which $\metgraph(\disgraph;~\lenvec^{*})$
is a single cycle graph with spectral gap $2\pi$. As $k_{1}\left[\metgraph(\disgraph;~\lenvec)\right]$
is a continuous function of $\lenvec$, the spectral gap may be as
close to $2\pi$ as we wish, by choosing $\lenvec\in\len_{\disgraph}$
close enough to $\lenvec^{*}$. Similarly, the lower bound $\pi$
cannot be improved for graphs with a bridge. Therefore, Theorem \ref{thm:infimizers}
complements the previous results and provides a complete answer to
the infimization problem.

\subsection{Supremizers of tree graphs (section \ref{sec:TreeGraphs})}
\begin{thm}
\label{thm:tree_supremizer}Let $\disgraph$ be a tree graph with
$E_{l}\geq2$ leaves. Then the unique supremizer of $\disgraph$ is
the equilateral star with $E_{l}$ edges, whose spectral gap is $\frac{\pi}{2}E_{l}$.
In particular, the uniqueness implies that this supremizer is a maximizer
if and only if $\disgraph$ is a star graph.
\end{thm}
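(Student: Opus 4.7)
The plan has two stages: solve the star case directly, then reduce a general tree to a star using the cut-vertex structure of trees via Theorem \ref{thm:supremum_of_gluing}.

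\textbf{Star case.} When $\disgraph$ is a star with $E_{l}$ edges of lengths $l_{1},\ldots,l_{E_{l}}$ summing to $1$, separation of variables on each edge, combined with continuity and Kirchhoff at the central vertex and Neumann at the leaves, shows that non-constant eigenfunctions split into ``symmetric'' modes governed by the secular equation $F(k):=\sum_{i}\tan(kl_{i})=0$, and ``antisymmetric'' modes sitting at the poles $k=\pi/(2l_{i})$ of $F$ whenever $l_{i}$ is shared by at least two edges (with the edge coefficients on those edges summing to zero). The equilateral configuration realises an antisymmetric eigenvalue $k=\pi E_{l}/2$ of multiplicity $E_{l}-1$. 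For a non-equilateral star one has $l_{\max}>1/E_{l}$: if $l_{\max}$ is attained multiply, an antisymmetric mode already yields $k_{1}\leq\pi/(2l_{\max})<\pi E_{l}/2$; if $l_{\max}$ is unique, the strict monotonicity $F'(k)=\sum_{i}l_{i}\sec^{2}(kl_{i})>0$ between consecutive poles, combined with a sign computation of $F$ at $k=\pi E_{l}/2$ (performed, if necessary, by first-order expansion around the equilateral configuration), places the first zero of $F$ strictly below $\pi E_{l}/2$.

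\textbf{Reduction to a star.} Any tree that is not itself a star contains an internal vertex of degree at least two, which is automatically a cut vertex. This allows $\disgraph$ to be written as a non-trivial gluing of two strictly smaller trees at an internal vertex, so Theorem \ref{thm:supremum_of_gluing} applies. I would proceed by induction on the number of edges, with the single edge / single star as the base case: the inductive hypothesis delivers the supremizers of the two smaller trees as equilateral stars, and the gluing theorem predicts the supremizer of the whole tree to be the gluing of these stars at the chosen cut vertex --- which is again an equilateral star, with total leaf count traced to equal $E_{l}$.

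\textbf{Main obstacle.} The principal difficulty is controlling two quantities through the induction: the leaf count (to ensure that the reconstructed supremizer has exactly $E_{l}$ leaves, neither more nor fewer) and strict monotonicity (to obtain uniqueness, not merely a supremum). The leaf count requires careful bookkeeping because cutting a tree at an internal vertex of degree $d$ creates $d$ new ``pseudo-leaves'' which must be reconciled under re-gluing with the leaves of the original tree. Strict monotonicity demands that Theorem \ref{thm:supremum_of_gluing} provide strict rather than non-strict comparison, which is plausible since each gluing introduces a non-trivial Neumann constraint at the cut vertex, but which needs verification. Combining the reduction with the star case gives the equilateral star with $E_{l}$ edges as the unique supremizer of $\disgraph$, and the ``in particular'' statement about maximizers follows at once, because the equilateral star lies in the interior $\len_{\disgraph}$ precisely when $\disgraph$ is itself a star with $E_{l}$ edges (so that no internal edge has to be contracted).
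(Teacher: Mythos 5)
Your reduction step does not work, and the problem is not the bookkeeping you flag but the hypotheses of Theorem \ref{thm:supremum_of_gluing} themselves. When you cut a non-star tree at an internal vertex $v$, at least one of the two subtrees (in fact both, unless $\deg v\geq 4$ and you split the incident edges two-against-two) has its copy of $v$ as a \emph{leaf}. By the inductive hypothesis the supremizer of that subtree is an equilateral star in which this copy of $v$ remains a pendant (degree-one) vertex, and the equilateral star violates the Dirichlet criterion (Definition \ref{def:Dirichlet_criterion}) at a pendant vertex: imposing Dirichlet at a tip of the equilateral $E$-star lowers the relevant eigenvalue strictly below $\tfrac{\pi}{2}E$ (a one-line secular computation gives $k=E\arctan(1/\sqrt{E-1})$). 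So condition \eqref{enu:thm_SGP_condition_3} of Theorem \ref{thm:supremum_of_gluing} fails for every admissible cut in any tree whose internal vertices all have degree at most $3$, and the theorem gives you nothing there. Worse, the conclusion you want it to give is simply false: gluing the supremizers of the parts would predict additivity of spectral gaps, e.g.\ a path of three edges cut at a degree-two vertex would get supremal gap $\pi+\pi=2\pi$ instead of the true value $\pi$, and a $3$-star with one subdivided edge would get $\tfrac{3\pi}{2}+\pi$ instead of $\tfrac{3\pi}{2}$. Your "pseudo-leaf" worry is the visible symptom: the cut genuinely creates an extra leaf on each side, the leaf counts of the parts sum to more than $E_l$, and the predicted gap $\tfrac{\pi}{2}\bigl(E_l^{(1)}+E_l^{(2)}\bigr)$ overshoots $\tfrac{\pi}{2}E_l$; no amount of careful bookkeeping repairs this, because the subadditive bound of Proposition \ref{prop:subadditive_gluing_principle} is not attained in these decompositions. (Your star base case is also only sketched: the claim that a non-equilateral star with a unique longest edge has its first secular zero strictly below $\tfrac{\pi}{2}E_l$ is asserted via "first-order expansion," which is local information and not a proof; note also that when some $\cos(kl_i)$ vanishes the correct secular function is $\sum_i\sin(kl_i)\prod_{j\neq i}\cos(kl_j)$, not $\sum_i\tan(kl_i)$.)

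The paper avoids all of this by proving the theorem directly, before and independently of the gluing machinery. The key is the purely geometric Lemma \ref{lem:tree_diameter_lower_bound}: a metric tree of total length $1$ with $E_l$ leaves has diameter $d(\Gamma)\geq\tfrac{2}{E_l}$, with equality exactly for the equilateral star. One then takes the test function $\cos\bigl(\pi x/d(\Gamma)\bigr)$ along a diametral path, extended by constants on the complementary components, subtracts its mean, and a Cauchy--Schwarz estimate on the Rayleigh quotient yields $k_1(\Gamma)\leq\pi/d(\Gamma)\leq\tfrac{\pi}{2}E_l$; uniqueness falls out of the equality case of the lemma. If you want to salvage a gluing-based proof you would have to glue at vertices that are \emph{centers} of the star supremizers on both sides (degree at least two in each part), which only happens at internal vertices of degree at least four — this is exactly how the paper later uses Theorem \ref{thm:supremum_of_gluing} for stowers (Corollary \ref{cor:gluing_stowers}), and it cannot cover general trees.
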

Theorem \ref{thm:tree_supremizer} completely solves the optimization
problem for tree graphs. While writing this paper, we became aware
of the recent work, \cite{Rohleder_pams16}, which solves the maximization
problem for trees (theorem 3.2 there). In the course of doing so,
that work provides the upper bound $\frac{\pi}{2}E$ on the spectral
gap of trees\footnote{Theorem 3.2 in that paper is actually more general and provides the
upper bound $\frac{\pi n}{2}E$ for $k_{n}$. }. Our proof is close in spirit to that of theorem 3.4 in \cite{Rohleder_pams16}.
Yet, thanks to a basic geometric observation (Lemma \ref{lem:tree_diameter_lower_bound}
here), the better bound $\frac{\pi}{2}E_{l}$ is obtained\footnote{Furthermore, the same geometric observation may be used to improve
the more general theorem 3.2 of \cite{Rohleder_pams16}.}.

Theorem \ref{thm:tree_supremizer} allows to deduce the following.
\begin{cor}
\label{cor:non_tree_supremizer} Let $\disgraph$ be a non-tree graph.
Then its supremizer is not a tree graph.
\end{cor}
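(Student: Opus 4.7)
The plan is to argue by contradiction. Suppose $\disgraph$ is non-tree and that its supremizer $\metgraph(\disgraph;\lenvec^*)$ has tree topology. The strategy is to exhibit a length vector on $\disgraph$ beating the supposed supremum by producing a suitable stower topology. First, I would pin down the supremizer precisely: by the uniqueness in Theorem \ref{thm:tree_supremizer}, the sup over metric graphs on any tree topology is attained only by the equilateral star of matching leaf count, and further contracting internal edges to reach such a star is itself realizable on $\disgraph$ by further setting lengths to zero. Thus, under the contradiction hypothesis, the supremizer of $\disgraph$ must itself be an equilateral star $S_{E_l^*}$ for some $E_l^*\geq 2$, with spectral gap $\frac{\pi}{2}E_l^*$.

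Next, I would locate a cycle inside the contracted subgraph. Let $H\subseteq\disgraph$ consist of the edges with $l^*_e=0$, so that the resulting contracted topology is $\disgraph^* = S_{E_l^*}$. The standard contraction identity $\beta(\disgraph^*)=\beta(\disgraph)-\beta(H)$ forces $\beta(H)=\beta(\disgraph)\geq 1$, so $H$ contains a cycle $C$ lying in some connected component $H_\star$ of $H$. Under the contraction $\disgraph\to\disgraph^*$, the component $H_\star$ collapses to a single vertex $w$ of $S_{E_l^*}$, which is either the center $c$ or one of the leaves $v_i$.

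The core step is constructing a modified length vector $\lenvec^{**}$ yielding a stower topology. Pick any edge $e^*\in C$. Because $e^*$ lies on a cycle of $H$, its two endpoints remain connected in $H\setminus\{e^*\}$, so contracting only $H\setminus\{e^*\}$ turns $e^*$ into a loop at the vertex $w$. If $w=c$, this directly yields the stower with $(E_p,E_l)=(1,E_l^*)$; if $w=v_i$, I would additionally contract the star edge $cv_i$, thereby merging $c$ with $v_i$ so that the revived loop $e^*$ and the remaining $E_l^*-1$ star edges all meet at a single new vertex, giving the stower $(1,E_l^*-1)$. The corresponding equilateral stowers (see Example \ref{exa:stower}) have spectral gaps $\frac{\pi}{2}(E_l^*+2)$ and $\frac{\pi}{2}(E_l^*+1)$ respectively, both strictly greater than $\frac{\pi}{2}E_l^*$. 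Since each is realized by a concrete $\lenvec^{**}\in\lencl_\disgraph$, this contradicts the supremality of $S_{E_l^*}$.

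The main obstacle is the combinatorial bookkeeping in the leaf case: one must verify that simultaneously contracting $H\setminus\{e^*\}$ and the star edge $cv_i$ indeed fuses $c$ with the image of $H_\star$ into a single vertex of degree $E_l^*+1$ carrying the revived loop, so that the output topology is genuinely a stower rather than some more complicated graph. This follows directly from the definition of edge contraction but deserves care, especially in checking that no other cycles are inadvertently created. A minor side point is the exceptional stower $(1,1)$ appearing when $E_l^*=2$ in the leaf case; here Corollary \ref{cor:stower_supremum} replaces the equilateral stower with a single loop as supremizer, but since the equilateral stower $(1,1)$ already achieves spectral gap $\frac{3\pi}{2}>\pi=\frac{\pi}{2}\cdot 2$, the contradiction persists.
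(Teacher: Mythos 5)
Your strategy is correct and genuinely different from the paper's. The paper's proof contracts a maximal spanning tree of $\disgraph\setminus\mathcal{E}_{l}$ to produce an equilateral stower with $\beta$ petals and $E_{l}$ leaves, compares its gap $\frac{\pi}{2}\left(2\beta+E_{l}\right)$ against every tree obtainable from $\disgraph$, and treats $\left(\beta,E_{l}\right)\in\left\{ \left(1,0\right),\left(1,1\right)\right\} $ separately via Lemma \ref{lem:stower_1_1}. You instead use the uniqueness in Theorem \ref{thm:tree_supremizer} to force the putative tree supremizer to be an equilateral star $S_{E_{l}^{*}}$, and then perform a local surgery: revive one zero-length edge lying on a cycle of the contracted set to create a single petal, obtaining a stower $\left(1,E_{l}^{*}\right)$ or $\left(1,E_{l}^{*}-1\right)$ whose equilateral gap strictly exceeds $\frac{\pi}{2}E_{l}^{*}$. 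This buys a proof with no small-graph special cases (your $\left(1,1\right)$ remark is handled correctly, since you only need the value $\frac{3\pi}{2}$ from Example \ref{exa:stower}, not Corollary \ref{cor:stower_supremum}), and it avoids the paper's global comparison of leaf counts of contracted trees; the Betti-number bookkeeping ($\beta(\disgraph^{*})=\beta(\disgraph)-\beta(H)$, non-bridgeness of $e^{*}$ in $H_{\star}$, and the verification that the leaf-case contraction yields a genuine stower) is sound.

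There is, however, one genuine (though easily repairable) hole in the case analysis. The equality $k_{1}=\frac{\pi}{2}E_{l}^{*}$ identifies the supremizer with the equilateral star only as a \emph{metric} graph; the discrete graph $\disgraph^{*}$ obtained by contracting $H$ need not equal $S_{E_{l}^{*}}$ but may be any subdivision of it, because Neumann vertices of degree two are spectrally invisible. Hence $H_{\star}$ may collapse onto a degree-two subdivision vertex $w$ in the interior of a star edge, a case outside your dichotomy ``$w=c$ or $w=v_{i}$''; reviving the loop there gives a star with a loop attached at an interior point of an edge, which is not a stower, so your gap comparison does not apply verbatim. The fix is exactly your leaf-case trick: additionally contract the path of $\disgraph^{*}$-edges from $w$ to the central (branching) vertex, merging $w$ into the center so that the outer portion of the subdivided edge survives as a pendant edge and the result is the stower $\left(1,E_{l}^{*}\right)$; for $E_{l}^{*}=2$, where no branching vertex exists, the revived loop at an interior point already yields a $\left(1,2\right)$ stower metrically. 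Likewise, in your leaf case ``the star edge $cv_{i}$'' should be read as the whole $c$--$v_{i}$ path of $\disgraph^{*}$. With these adjustments the argument is complete.
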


\subsection{Supremizers whose spectral gap is a simple eigenvalue (section \ref{sec:LocalOptimizers})}

Whenever the spectral gap is a simple eigenvalue, it is differentiable
with respect to edge lengths, which allows to search for local maximizers.
There are indeed examples for critical values (not just maximizers)
of the spectral gap, which we demonstrate in Proposition \ref{prop:Standarins}.
If such local critical point is actually a supremizer it is possible
to prove the following.
\begin{thm}
\label{thm:simple_supremum} Let $\disgraph$ be a discrete graph
and let $\lenvec\in\len_{\disgraph}$. Assume that $\metgraph\left(\disgraph;~\lenvec\right)$
is a supremizer of $\disgraph$ and that the spectral gap $k_{1}\left(\metgraph(\disgraph;~\lenvec)\right)$
is a simple eigenvalue. Then $\metgraph\left(\disgraph;~\lenvec\right)$
is not a unique supremizer. There exists a choice of lengths $\lenvec^{*}\in\lencl_{\disgraph}$
such that $\metgraph\left(\disgraph;~\lenvec^{*}\right)$ is an equilateral
mandarin and 
\[
k_{1}\left(\metgraph\left(\disgraph;~\lenvec\right)\right)=k_{1}\left(\metgraph\left(\disgraph;~\lenvec^{*}\right)\right).
\]
\end{thm}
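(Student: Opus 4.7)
The plan is to exploit that $\lenvec\in\len_\disgraph$ is an interior point at which $k_1$ is simple, so $\lambda_1:=k_1^2$ and the $L^2$-normalized real eigenfunction $f_1$ depend analytically on $\lenvec$ in a neighborhood. Since $\lenvec$ is a supremizer subject to $\sum_e l_e=1$, Lagrange multipliers give a constant $\mu$ with $\partial\lambda_1/\partial l_e=\mu$ for every edge $e$.

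First I would derive the Hadamard-type variational formula for Neumann quantum graphs,
\[
\frac{\partial\lambda_1}{\partial l_e}\;=\;-\frac{1}{l_e}\int_0^{l_e}\!\bigl[(f_1')^2+\lambda_1 f_1^2\bigr]\,dx,
\]
by rescaling each edge to unit length and invoking the envelope theorem on the Rayleigh quotient. On each edge $f_1(x)=A_e\cos(k_1x)+B_e\sin(k_1x)$, and a direct computation gives $(f_1')^2+k_1^2 f_1^2\equiv k_1^2(A_e^2+B_e^2)$; hence the Lagrange condition collapses to
\[
A_e^2+B_e^2=C^2\qquad\text{(the same constant $C$ on every edge),}
\]
i.e.\ $f_1^2+(f_1'/k_1)^2\equiv C^2$ pointwise on $\metgraph$.

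I would then translate this amplitude rigidity into vertex-wise constraints. Writing $f_1(x)=C\cos(k_1x+\psi_e)$ on each edge at $v$ (with $x=0$ at $v$), continuity forces $\cos\psi_e=f_1(v)/C$ for all $e\in\E_v$ while the derivative condition reads $\sum_{e\in\E_v}\sin\psi_e=0$. Consequently, at each vertex either $|f_1(v)|=C$ (and every incident edge has $f_1'(v)=0$), or $|f_1(v)|<C$, in which case the incident phases split into $\pm\psi_0$ in equal numbers and $d_v$ is even. Pairing opposite-phase half-edges at each vertex of the latter type unfolds $\metgraph$ into a disjoint union of monotone arcs whose endpoints lie in the finite set $S=\{x:|f_1(x)|=C\}$; the phase varies by $\pi$ along each such arc, so its length equals $\pi/k_1$. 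Summing over all $N$ arcs gives $1=N\pi/k_1$, hence $k_1=\pi N$ for a positive integer $N$.

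Finally I would realize an equilateral $N$-mandarin inside $\lencl_\disgraph$ with spectral gap equal to $\pi N=k_1$. Partition $S=S^+\sqcup S^-$ by the sign of $f_1$; each of the $N$ arcs joins $S^+$ to $S^-$ and has length $1/N$. Choosing $\lenvec^*\in\lencl_\disgraph$ that contracts to zero every edge of $\disgraph$ whose unfolded image lies inside the components of $\metgraph\setminus S$ joining points of identical sign, while retaining for every arc a single representative edge of length $1/N$, produces a metric graph whose combinatorial type is the equilateral $N$-mandarin; its spectral gap is $\pi N$ by Example~\ref{exa:mandarin}. This gives a second supremizer distinct from $\metgraph(\disgraph;\lenvec)$, proving non-uniqueness.

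The main obstacle is this final realization step: one must translate the abstract unfolding of $\metgraph$ along $f_1$ into an honest edge contraction of $\disgraph$. This requires checking that the bipartition of $S$ is compatible with the combinatorics of $\disgraph$, that the balanced even-degree vertices can be traversed consistently along every monotone arc, and that any degree-$2$ Neumann vertices introduced along the way are spectrally inert, so that $\metgraph(\disgraph;\lenvec^*)$ is genuinely an equilateral mandarin in the sense of Example~\ref{exa:mandarin}.
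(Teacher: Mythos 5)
Your first half is correct and runs essentially parallel to the paper's route: simplicity plus the supremizer hypothesis at an interior point of $\len_{\disgraph}$ gives criticality (Lemma \ref{lem:supremal_and_simple_is_critical}), the Hadamard-type formula is exactly Lemma \ref{lem:Derivative_on_edge_equals_energy}, the constancy of $\energy_{e}$ across edges is Lemma \ref{lem:critical_point_derivative_at_vertices}, and your monotone-arc unfolding is a (slightly finer, and in fact cleaner) variant of the Eulerian-path decomposition of Lemma \ref{lem:path_decomposition}; your count $N$ of arcs equals the weighted zero count $\mu$ there, and $k_{1}=\pi N$ is the paper's identity $k=\pi\mu$.

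The genuine gap is the realization step, which you flag as the ``main obstacle'' but do not close, and which is the actual content of the theorem. As stated, your recipe does not define an element of $\lencl_{\disgraph}$: the arcs are not edges of $\disgraph$ (their endpoints, where $f_{1}=\pm C$, are generically interior points of edges, one edge may be split between two arcs, and one arc may contain several whole edges), so ``retain for every arc a single representative edge of length $1/N$'' is not a well-defined assignment of lengths to the edges of $\disgraph$; moreover, after your unfolding every arc joins a point of $S^{+}$ to a point of $S^{-}$, so the prescription ``contract components of $\metgraph\setminus S$ joining points of identical sign'' is ambiguous at best. More importantly, turning the unfolding into an honest edge contraction requires precisely the ingredients your sketch omits, and which the paper has to work for: (i) by Courant's theorem together with simplicity, $f_{1}$ has exactly two nodal domains, so $\{f_{1}>0\}$ and $\{f_{1}<0\}$ are connected and each sign cluster contracts to a \emph{single} vertex -- without this the quotient graph may have more than two vertices and is not a mandarin; (ii) $f_{1}$ vanishes at most once on each edge of $\disgraph$ (endpoints included) -- in the paper this is itself a nontrivial consequence of the supremizer hypothesis, proved by comparing with a mandarin on $d_{1}+d_{2}-2$ edges -- and it is what lets one attach the zero vertices to the positive cluster, select for each arc a distinct edge of $\disgraph$ joining the two clusters (the paper's $E_{0-}$ edges), and verify that their number is exactly $\mu=N$. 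The correct bipartition is by the sign of $f_{1}$ (zeros adjoined to one side), not by the sign of $f_{1}$ on the extremal set $S$; without the two points above, neither the mandarin structure nor the edge count $N$ of $\metgraph(\disgraph;~\lenvec^{*})$ follows, so the proof is incomplete at its decisive step.
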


\subsection{Supremizers of vertex connectivity one (sections \ref{sec:Gluing_Graphs},
\ref{sec:Symmetrization}, \ref{sec:Applications_of_Gluing_and_Symmetrization})}

Next, we describe a bottom to top construction which allows to find
out a supremizer of a graph by knowing the supremizers of two of its
subgraphs. This is possible for graphs of vertex connectivity one.
In order to state the result, the following criteria are introduced.
\begin{defn}
\label{def:Dirichlet_criterion}

\begin{enumerate}
\item \label{enu:def_Dirichlet_criterion_1} A Neumann graph $\Gamma$ obeys
the \emph{Dirichlet criterion} with respect to its vertex $v$ if
imposing Dirichlet vertex condition at $v$ does not change the value
of $k_{1}$ (comparing to the one with Neumann condition at $v$).
\item \label{enu:def_Dirichlet_criterion_2} A Neumann graph $\Gamma$ obeys
the \emph{strong Dirichlet criterion} with respect to its vertex $v$
if it obeys the Dirichlet criterion and if imposing the Dirichlet
vertex condition at $v$ strictly increases the eigenvalue multiplicity
of $k_{1}$. 
\end{enumerate}
\end{defn}
\begin{thm}
\label{thm:supremum_of_gluing}Let $\disgraph_{1},\disgraph_{2}$
be discrete graphs, let $v_{i}$ ($i=1,2$) be a vertex of $\disgraph_{i}$.
Let $\disgraph$ be the graph obtained by identifying $v_{1}$ and
$v_{2}$. Let $\lenvec^{\left(i\right)}\in\lencl_{\disgraph_{i}}$
and $\metgraph_{i}:=\metgraph(\disgraph;~\lenvec^{\left(i\right)})$
be the corresponding metric graphs. Define $\lenvec:=(L\lenvec^{\left(1\right)},~\left(1-L\right)\lenvec^{\left(2\right)})\in\lencl_{\disgraph}$,
for some $L\in\left[0,1\right]$. Then the graph $\metgraph:=\metgraph(\disgraph;~\lenvec)$
is a supremizer of $\disgraph$ if all the following conditions are
met:

\begin{enumerate}
\item \label{enu:thm_SGP_condition_1} $L=\frac{k_{1}\left(\Gamma_{1}\right)}{k_{1}\left(\Gamma_{1}\right)+k_{1}\left(\Gamma_{2}\right)}$.
\item \label{enu:thm_SGP_condition_2} $\Gamma_{i}$ is a supremizer of
$\disgraph_{i}$ ($i=1,2$).
\item \label{enu:thm_SGP_condition_3} $\Gamma_{i}$ obeys the Dirichlet
criterion with respect to $v_{i}$ ($i=1,2$).
\end{enumerate}
If we further assume either of the following: \renewcommand{\labelenumi}{(\alph{enumi})}
\begin{enumerate}
\item For both $i=1,2$ , $\metgraph_{i}$ is a unique supremizer of $\disgraph_{i}$
or
\item For both $i=1,2$, $\metgraph_{i}$ obeys the strong Dirichlet criterion
and any other supremizer of $\disgraph_{i}$ violates the Dirichlet
criterion.
\end{enumerate}
then $\Gamma$ is the unique supremizer of $\disgraph$.

\end{thm}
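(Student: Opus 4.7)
The plan is to establish $k_{1}(\Gamma)=k^{*}$ with $k^{*}:=k_{1}(\Gamma_{1})+k_{1}(\Gamma_{2})$, and then to show that every other $\lenvec'\in\lencl_{\disgraph}$ satisfies $k_{1}(\Gamma')\le k^{*}$, with strict inequality under (a) or (b). Throughout I will exploit the scaling identity that multiplying all edge lengths of a graph by a factor $L_{i}$ divides its $k$-eigenvalues by $L_{i}$.

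For the lower bound on $k_{1}(\Gamma)$, I would impose a Dirichlet condition at $v$, which decouples $\Gamma$ into $L\Gamma_{1}$ and $(1-L)\Gamma_{2}$, each carrying a Dirichlet condition at the image of $v$. The Dirichlet criterion (condition (3)) identifies the smallest positive Dirichlet eigenvalue of the two scaled pieces with $k_{1}(\Gamma_{1})/L$ and $k_{1}(\Gamma_{2})/(1-L)$, and the balancing choice (1) of $L$ makes both equal to $k^{*}$; hence $k_{0}^{\mathrm{Dir}\,v}(\Gamma)=k^{*}$. The standard interlacing of Neumann and Dirichlet eigenvalues (the Dirichlet form domain is of codimension one in the Neumann one) then gives $k_{1}(\Gamma)\ge k^{*}$. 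For the matching upper bound I would use Rayleigh-quotient test functions: let $g_{i}$ be a first Dirichlet eigenfunction of the scaled piece $i$; extended by zero across $v$ it becomes $G_{i}\in H^{1}(\Gamma)$ with $\mathcal{R}(G_{i})=(k^{*})^{2}$, and an appropriate linear combination $\alpha G_{1}+\beta G_{2}$ orthogonal to the constant function still has Rayleigh quotient $(k^{*})^{2}$ because the supports of $G_{1}$ and $G_{2}$ are disjoint, so $k_{1}(\Gamma)\le k^{*}$.

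For the supremality claim, fix $\lenvec'\in\lencl_{\disgraph}$ and let $L_{1}',L_{2}'$ denote the total lengths of the two pieces in $\Gamma'$ (the degenerate cases $L_{i}'=0$ reduce at once to the supremality of $\Gamma_{j}$, $j\neq i$). Otherwise assume without loss of generality $L_{1}'\ge L$, denote $\Gamma_{1}':=\Gamma(\disgraph_{1};\lenvec^{(1)'})$, and let $\psi_{1}$ be a first Neumann eigenfunction of the scaled piece~$1$, of $k$-eigenvalue $\mu_{1}:=k_{1}(\Gamma_{1}')/L_{1}'$. Take as test function $F=\psi_{1}+c$ on piece~$1$ and $F=d$ on piece~$2$, with $c,d$ uniquely determined by continuity at $v$ and $F\perp 1$. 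A direct computation gives
\[
\mathcal{R}(F)=\frac{\mu_{1}^{2}\,\|\psi_{1}\|^{2}}{\|\psi_{1}\|^{2}+\psi_{1}(v)^{2}\,L_{1}'L_{2}'}\le\mu_{1}^{2},
\]
and the chain $\mu_{1}\le k_{1}(\Gamma_{1})/L_{1}'\le k_{1}(\Gamma_{1})/L=k^{*}$ (using the supremality of $\Gamma_{1}$ and $L_{1}'\ge L$) closes the estimate.

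For uniqueness, equality $k_{1}(\Gamma')=k^{*}$ propagates through every step of this chain: $L_{1}'=L$, $k_{1}(\Gamma_{1}')=k_{1}(\Gamma_{1})$, and $\psi_{1}(v)=0$, with the analogous statements on piece~$2$ obtained by the symmetric argument. Under (a), the uniqueness of $\Gamma_{i}$ as the supremizer of $\disgraph_{i}$ forces $\lenvec^{(i)'}=\lenvec^{(i)}$ and hence $\lenvec'=\lenvec$. Under (b), the condition $\psi_{1}(v)=0$ produces a Neumann eigenfunction of $\Gamma_{1}'$ at $k_{1}(\Gamma_{1})^{2}$ which is also a Dirichlet eigenfunction at $v_{1}$; combined with $\Gamma_{1}'$ being a supremizer of $\disgraph_{1}$, the strong Dirichlet structure of $\Gamma_{1}$ and the hypothesis that any other supremizer violates the Dirichlet criterion pin down $\Gamma_{1}'=\Gamma_{1}$ (and likewise for piece~$2$). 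The main technical obstacle is the fine-tuning of the uniqueness argument under (b), where the interplay between strong Dirichlet for $\Gamma_{i}$ and the mere Neumann\,/\,Dirichlet compatibility of a putative competitor must be disentangled; in particular a careful book-keeping of eigenspace multiplicities at $k^{*}$ plays a central role.
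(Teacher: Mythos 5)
Your argument for the supremizer statement and for uniqueness under (a) is correct, but it follows a genuinely different and more elementary route than the paper. The paper derives both the identity $k_{1}(\Gamma)=k_{1}(\Gamma_{1})+k_{1}(\Gamma_{2})$ and the comparison with competitors from Proposition \ref{prop:subadditive_gluing_principle}, whose proof runs through the spectral gap parameter, the dispersion relation and an intermediate-value matching of $\delta$-type conditions at the glued vertex. You instead obtain the lower bound by decoupling $\Gamma$ with a Dirichlet condition at $v$ and invoking Neumann--Dirichlet interlacing (the paper's Lemma \ref{lem:interlacing_wrt_delta_parameters} with $\theta'=\pi$), using exactly the paper's reading of the Dirichlet criterion, $k_{0}(\Gamma_{i};\pi)=k_{1}(\Gamma_{i};0)$; and you obtain the upper bound over all $\lenvec'$ by the ``eigenfunction plus constant'' test function, which is precisely Lemma \ref{lem:appendix_test_functions_extension_constant}. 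This buys a much shorter proof of the theorem as stated (it sidesteps the hardest, sufficiency, direction of Proposition \ref{prop:subadditive_gluing_principle}), at the price of not yielding the strengthened version alluded to in the remark after the theorem, where condition (3) is weakened to $\sgp_{1}+\sgp_{2}\leq2\pi$. The equality-chain analysis ($L_{1}'=L$, $k_{1}(\Gamma_{i}')=k_{1}(\Gamma_{i})$, $\psi_{i}(v)=0$) and the conclusion under (a) are sound.

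Uniqueness under (b), however, is not established, and the one concrete step you propose for it is invalid. From $\psi_{1}(v)=0$ you only learn that $k_{1}(\Gamma_{1}')$ remains in the Dirichlet spectrum of $\Gamma_{1}'$ at $v_{1}$; this is strictly weaker than the Dirichlet criterion, which requires that imposing Dirichlet at $v_{1}$ does not create a \emph{lower} eigenvalue, i.e.\ $k_{0}(\Gamma_{1}';\pi)=k_{1}(\Gamma_{1}';0)$. A competitor with $\sgp\in(\pi,2\pi)$ (Figure \ref{fig:SGP}(c)) has $k_{1}\in\Delta$, hence its spectral-gap eigenfunctions do vanish at the vertex, yet it violates the Dirichlet criterion because $k_{0}(\,\cdot\,;\pi)<k_{1}(\,\cdot\,;0)$; so your hypothesis ``any other supremizer violates the Dirichlet criterion'' does not pin down $\Gamma_{1}'=\Gamma_{1}$. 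Moreover, once all the pieces' $k_{1}$-eigenfunctions vanish at $v$, extending them by zero gives genuine eigenfunctions of $\Gamma'$ at $(k^{*})^{2}$, so to exclude such a competitor you must produce an eigenvalue of $\Gamma'$ strictly below $(k^{*})^{2}$; plain test functions cannot do this (subtracting the mean from a sub-$k^{*}$ Dirichlet ground state only increases the Rayleigh quotient). The paper closes exactly this gap with Lemma \ref{lem:multiplicity_at_crossings} (strong Dirichlet criterion $\Leftrightarrow\sgp=\pi$) together with the strict-inequality direction of Proposition \ref{prop:subadditive_gluing_principle}: when $\sgp_{1}+\sgp_{2}>2\pi$ one constructs, via the intermediate-value argument for $h(\theta)=k_{0}(\tilde{\Gamma}_{1};\theta)-k_{1}(\tilde{\Gamma}_{2};-\theta)$, an actual Neumann eigenfunction of the glued graph with eigenvalue below $(k^{*})^{2}$. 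Some argument of this strength (also covering the mixed case $\Gamma_{1}'=\Gamma_{1}$, $\Gamma_{2}'\neq\Gamma_{2}$, which is where the strong Dirichlet criterion is really used) is needed to complete part (b).
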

\begin{rem*}
This theorem may be strengthened by weakening condition \eqref{enu:thm_SGP_condition_3}.
Yet, the description of the weaker condition is more technical and
we leave its specification, as well as the proof of the stronger version
of this theorem, to section \ref{sec:Gluing_Graphs}.
\end{rem*}
We note that the equilateral stower obeys the Dirichlet criterion
with respect to its central vertex. Obviously, this observation also
includes the equilateral star and equilateral flower as special cases.
This observation together with theorem \ref{thm:supremum_of_gluing}
allow to prove the following corollaries.
\begin{cor}
\label{cor:gluing_stowers} Let $\disgraph_{1},\disgraph_{2}$ be
discrete graphs. Denote by $v_{1},v_{2}$ non-leaf vertices of each
of those graphs and let $\disgraph$ be the graph obtained by identifying
$v_{1}$ and $v_{2}$. If the (unique) supremizer of $\disgraph_{i}$
is the equilateral stower with $E_{p}^{\left(i\right)}$ petals and
$E_{l}^{\left(i\right)}$ leaves, such that $E_{p}^{\left(i\right)}+E_{l}^{\left(i\right)}\geq2$,
then the (unique) supremizer of $\disgraph$ is an equilateral stower
with $E_{p}^{\left(1\right)}+E_{p}^{\left(2\right)}$ petals and $E_{l}^{\left(1\right)}+E_{l}^{\left(2\right)}$
leaves.
\end{cor}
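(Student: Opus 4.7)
My plan is to apply Theorem \ref{thm:supremum_of_gluing} to the equilateral stower supremizers of $\disgraph_1$ and $\disgraph_2$. The first step is to choose $\lenvec^{(i)}\in\lencl_{\disgraph_{i}}$ so that $\metgraph_{i}:=\metgraph(\disgraph_i;\lenvec^{(i)})$ is the equilateral stower supremizer of $\disgraph_i$ with $v_i$ realized as its central vertex. The non-leafness of $v_i$ together with the hypothesis $E_p^{(i)}+E_l^{(i)}\geq 2$ (so the stower has a unique non-leaf vertex, namely its center) make this identification natural. This is in fact the only non-computational step of the argument and I expect it to be the main (albeit minor) obstacle.

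Once the identification is in place, I verify the three hypotheses of Theorem \ref{thm:supremum_of_gluing}. Condition \eqref{enu:thm_SGP_condition_2} is the standing assumption that $\metgraph_i$ is a supremizer, and condition \eqref{enu:thm_SGP_condition_3} is exactly the observation, recorded just before the corollary, that the equilateral stower with $E_p+E_l\geq 2$ obeys the Dirichlet criterion at its central vertex. For condition \eqref{enu:thm_SGP_condition_1}, Example \ref{exa:stower} gives $k_1(\metgraph_i)=\frac{\pi}{2}(2E_p^{(i)}+E_l^{(i)})$, whence the required scaling parameter is
\[
L \;=\; \frac{k_1(\metgraph_1)}{k_1(\metgraph_1)+k_1(\metgraph_2)} \;=\; \frac{2E_p^{(1)}+E_l^{(1)}}{2\bigl(E_p^{(1)}+E_p^{(2)}\bigr)+E_l^{(1)}+E_l^{(2)}}.
\]
Theorem \ref{thm:supremum_of_gluing} then produces a supremizer $\metgraph:=\metgraph(\disgraph;(L\lenvec^{(1)},(1-L)\lenvec^{(2)}))$ of $\disgraph$, obtained by identifying the central vertices of the two rescaled stowers.

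It remains to recognize $\metgraph$ as the claimed equilateral stower. Combinatorially, gluing two stowers at their central vertices produces a single vertex carrying all $E_p^{(1)}+E_p^{(2)}$ petals and $E_l^{(1)}+E_l^{(2)}$ leaves, so $\metgraph$ is itself a stower. Substituting the value of $L$ above into the petal length $\frac{2L}{2E_p^{(1)}+E_l^{(1)}}$ from $\metgraph_1$ and into the corresponding expression for $\metgraph_2$ shows that every petal of $\metgraph$ has the common length $\frac{2}{2(E_p^{(1)}+E_p^{(2)})+E_l^{(1)}+E_l^{(2)}}$, and every leaf has exactly half this length, so $\metgraph$ is equilateral as required. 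Finally, the uniqueness assertion follows from clause~(a) of Theorem \ref{thm:supremum_of_gluing} applied under the assumed uniqueness of each $\metgraph_i$.
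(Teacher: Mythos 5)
Your proposal is correct and follows essentially the same route as the paper: a direct application of Theorem \ref{thm:supremum_of_gluing}, checking condition \eqref{enu:thm_SGP_condition_1} via the stower spectral gaps from Example \ref{exa:stower}, condition \eqref{enu:thm_SGP_condition_3} via the Dirichlet-criterion observation for equilateral stowers at their central vertex, identifying $v_1,v_2$ with those central vertices, verifying that the glued graph with your value of $L$ is the equilateral stower with $E_p^{(1)}+E_p^{(2)}$ petals and $E_l^{(1)}+E_l^{(2)}$ leaves, and deducing uniqueness from clause (a) of the theorem. Your computation of $L$ and of the resulting petal and leaf lengths agrees with the paper, and your brief justification of why $v_i$ serves as the stower's central vertex is in fact slightly more explicit than the paper's corresponding observation.
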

We note that as we have shown (Theorem \ref{thm:tree_supremizer})
that equilateral stars are the unique supremizers of trees, the corollary
above implies that gluing a tree (at its internal vertex) to any graph
whose (unique) supremizer is a stower gives a graph whose (unique)
supremizer is a stower as well.
\begin{cor}
\label{cor:stower_supremum} Let $\disgraph$ be a stower graph with
$E_{p}$ petals and $E_{l}$ leaves, such that $E_{p}+E_{l}\geq2$
and $(E_{p},E_{l})\neq(1,1)$ . Then it has a maximizer which is the
equilateral stower graph with $E_{p}$ petals and $E_{l}$ dangling
edges and the corresponding spectral gap is $\frac{\pi}{2}\left(2E_{p}+E_{l}\right)$.
Furthermore, this maximizer is unique for all cases except $\left(E_{p},E_{l}\right)\in\left\{ \left(2,0\right),\left(1,2\right)\right\} $.
\end{cor}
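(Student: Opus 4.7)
The plan is to prove Corollary \ref{cor:stower_supremum} by induction on $N=E_p+E_l$, invoking Corollary \ref{cor:gluing_stowers} for the inductive step. The structural observation is that every $(E_p,E_l)$-stower splits at its (non-leaf) central vertex as the gluing of two sub-stowers whose petal and leaf counts sum to $(E_p,E_l)$; since the candidate spectral gap $\frac{\pi}{2}(2E_p+E_l)$ is additive under this split, it matches exactly how Corollary \ref{cor:gluing_stowers} combines the spectral gaps of its two pieces. So once both pieces are known to have equilateral stower supremizers with the expected spectral gaps, the corollary delivers the equilateral $(E_p,E_l)$-stower as the (unique or not, depending on the pieces) supremizer of the glued graph.

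The base cases not reachable by such a split into two sub-stowers each with $E_p^{(i)}+E_l^{(i)}\ge 2$ and $(E_p^{(i)},E_l^{(i)})\neq(1,1)$ are: the pure stars $(0,E_l)$ for $E_l\ge 2$ (handled by Theorem \ref{thm:tree_supremizer}), the flowers $(2,0)$ and $(3,0)$, and the small asymmetric stowers $(1,2),(2,1),(1,3),(3,1)$. For each I would carry out the direct secular-equation analysis of the stower Laplacian: the symmetric modes satisfy $\sum_j\tan(k\ell_j)+\sum_i 2\tan(kp_i/2)=0$, while the antisymmetric modes come from functions vanishing at the central vertex and supported either on antisymmetric combinations between two petals of equal length or between two leaves of equal length. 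At the equilateral configuration these antisymmetric branches first meet at $k=\frac{\pi}{2}(2E_p+E_l)$ and produce a high-multiplicity spectral gap; perturbing the lengths either keeps $k_1$ constant on a continuous family (which is how the exceptional cases $(2,0)$ and $(1,2)$ acquire their additional maximizers, together with the boundary single loop) or strictly lowers $k_1$ in the remaining base cases, giving uniqueness there.

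For the inductive step at $N\ge 4$, split the stower at its central vertex using $(0,2)+(E_p,E_l-2)$ when $E_l\ge 2$ or $(2,0)+(E_p-2,E_l)$ when $E_p\ge 2$ and $E_l\le 1$. Whenever both pieces are uniquely equilateral-supremized (by induction and Theorem \ref{thm:tree_supremizer}), condition (a) of Theorem \ref{thm:supremum_of_gluing} applies and Corollary \ref{cor:gluing_stowers} directly yields the unique equilateral supremizer of the glued graph. Whenever a non-unique base-case piece such as $(2,0)$ or $(1,2)$ appears, invoke condition (b) instead: verify that the equilateral representative of the piece satisfies the strong Dirichlet criterion at its central vertex (imposing Dirichlet strictly increases the multiplicity of the spectral gap), while every other supremizer of that piece fails the ordinary Dirichlet criterion (the Dirichlet-at-vertex first eigenvalue drops strictly below the Neumann spectral gap). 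The main obstacle is this careful Dirichlet/strong-Dirichlet bookkeeping in the inductive step, together with the direct treatment of the four stubborn base cases $(3,0)$, $(2,1)$, $(1,3)$, $(3,1)$ that cannot be decomposed via Corollary \ref{cor:gluing_stowers} at all; once these are in hand the induction closes and, as a byproduct, proves the additive property for spectral gaps of stower supremizers alluded to in Example \ref{exa:stower}.
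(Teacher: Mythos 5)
Your proposal is correct and follows essentially the same route as the paper: split the stower at its central vertex, apply Theorem \ref{thm:supremum_of_gluing} (via Corollary \ref{cor:gluing_stowers}) with Dirichlet/strong-Dirichlet bookkeeping, and treat by direct computation exactly the base cases the paper treats --- stars via Theorem \ref{thm:tree_supremizer}, the two- and three-petal flowers, and the small stowers $(1,2),(1,3),(2,1),(3,1)$ (the paper's Lemmata \ref{lem:stower_1_2}--\ref{lem:stower_3_1}). The only difference is organizational (induction on $E_{p}+E_{l}$ with canonical $(0,2)$ or $(2,0)$ splits, versus the paper's star-plus-flower and small-stower gluings); just note that condition (b) of Theorem \ref{thm:supremum_of_gluing} requires the strong Dirichlet criterion for \emph{both} glued pieces, so your induction hypothesis must carry that property of the equilateral stower (easily verified, as its spectral gap multiplicity $E_{p}+E_{l}-1$ rises to $E_{p}+E_{l}$ under the Dirichlet condition) in addition to the bare statement of the corollary.
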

We remark that a partial result of the above was already proved within
the proof of theorem 4.2 in \cite{KenKurMalMug_ahp16}. It was shown
there that the equilateral flower is the unique maximizer among all
flowers\footnote{It is claimed there that the equilateral flower is the unique maximizer
for all flowers with $E\geq2$. Actually, the uniqueness does not
hold for the $E=2$ case, as we show in the proof of Corollary \ref{cor:stower_supremum}.}. This was used there to prove the global bound $k_{1}\left[\metgraph\right]\leq\pi E$
(theorem 4.2 in \cite{KenKurMalMug_ahp16}). Having corollary \ref{cor:stower_supremum},
it is possible to prove the following improved bound.
\begin{cor}
\label{cor:upper_bound_for_supremum}Let $\disgraph$ be a graph with
$E$ edges, out of which $E_{l}$ are leaves. Then 
\begin{equation}
\forall~\lenvec\in\len_{\disgraph},~~\;\:k_{1}\left[\metgraph\left(\disgraph;~\lenvec\right)\right]\leq\pi\left(E-\frac{E_{l}}{2}\right),\label{eq:global_upper_bound}
\end{equation}
provided that $(E,E_{l})\notin\left\{ \left(1,1\right),\,\left(1,0\right),\,\mbox{\ensuremath{\left(2,1\right)}}\right\} $.

Assume in addition that $(E,E_{l})\notin\left\{ \left(2,0\right),\,\left(3,2\right)\right\} $.
Then an equality above implies that the graph $\metgraph(\disgraph;~\lenvec)$
achieving the inequality is either an equilateral mandarin or an equilateral
stower.
\end{cor}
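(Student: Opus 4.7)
The plan is to reduce the problem to Corollary \ref{cor:stower_supremum} via a vertex-identification argument, and then characterize equality by combining the uniqueness of the equilateral stower with a test-function argument. For the inequality, the key monotonicity principle is that identifying two vertices of a metric graph strengthens the continuity constraints defining $H^{1}$, so by the min-max characterization $k_{1}^{2}=\min_{f\perp 1}\mathcal{R}(f)$ the spectral gap can only weakly increase. We would identify all non-leaf vertices of $\disgraph$ into a single central vertex $v^{*}$, keeping edge lengths fixed. Non-leaf edges then become petals and leaf edges become dangling edges at $v^{*}$, producing a stower with $E_{p}=E-E_{l}$ petals and $E_{l}$ leaves of total length one. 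Under the hypothesis $(E,E_{l})\notin\{(1,0),(1,1),(2,1)\}$ the stower satisfies $E_{p}+E_{l}\geq 2$ and $(E_{p},E_{l})\neq(1,1)$, so Corollary \ref{cor:stower_supremum} bounds its spectral gap by $\tfrac{\pi}{2}(2E_{p}+E_{l})=\pi(E-E_{l}/2)$, yielding \eqref{eq:global_upper_bound}.

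For equality, both inequalities in the chain must be tight. Tightness of the stower bound, combined with the uniqueness clause of Corollary \ref{cor:stower_supremum} (valid once we further exclude $(E,E_{l})\in\{(2,0),(3,2)\}$), forces the merged stower to be equilateral; hence every non-leaf edge of $\disgraph$ has length $L_{0}:=2/(2E_{p}+E_{l})$ and every leaf edge has length $L_{0}/2$. Tightness of the merging step then provides a first eigenfunction $\tilde{f}$ of $\metgraph(\disgraph;\lenvec)$ that lies in $H^{1}$ of the merged stower; explicitly, $\tilde{f}$ is a sine half-wave on each non-leaf edge, a quarter-wave on each leaf edge, vanishes at every non-leaf vertex of $\disgraph$, and must satisfy the Neumann condition at each such vertex \emph{individually} rather than merely in aggregate.

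The main obstacle is the topological classification: showing that only stowers and mandarins can admit these length and eigenfunction constraints while also achieving $k_{1}=\pi/L_{0}$ (rather than merely having $\pi/L_{0}$ in their spectrum above a strictly smaller $k_{1}$). We argue by contradiction: for any $\disgraph$ that is neither a stower nor a mandarin, construct an explicit test function $g\in H^{1}(\metgraph(\disgraph;\lenvec))$ with $g\perp 1$ and $\mathcal{R}(g)<\pi^{2}/L_{0}^{2}$, which would force $k_{1}<\pi/L_{0}$. A natural candidate assigns $\pm 1$ to two well-separated non-leaf vertices, interpolates linearly along a short connecting path, and is locally constant on the remaining branches, with constants balanced for orthogonality. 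The mandarin is the unique two-non-leaf-vertex exception precisely because its entire length budget lies in parallel edges between the two vertices, so any such interpolation is forced to have a large gradient. Graphs admitting a non-leaf cut vertex can alternatively be handled via Theorem \ref{thm:supremum_of_gluing} and Corollary \ref{cor:gluing_stowers}, which identify the supremizer as an equilateral stower lying on $\lenbd_{\disgraph}$ rather than in $\len_{\disgraph}$, so no interior length vector can saturate the bound; the delicate remaining case is that of $2$-vertex-connected graphs with several non-leaf vertices, where the test-function construction has to be tailored to the topology.
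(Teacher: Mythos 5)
Your inequality argument is correct and is exactly the paper's: identify all non-leaf vertices (each identification can only raise $k_{1}$, by Lemma \ref{lem:interlacing_of_gluing}, equivalently your min--max observation), obtain a stower with $E-E_{l}$ petals and $E_{l}$ leaves, and apply Corollary \ref{cor:stower_supremum}; the excluded pairs $(E,E_{l})$ match as well, and the uniqueness clause of that corollary correctly forces the equilateral length assignment (internal edges of length $\frac{2}{2E-E_{l}}$, leaves of length $\frac{1}{2E-E_{l}}$) in the equality case.

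The equality case, however, is where your proposal has a genuine gap: the classification ``equality $\Rightarrow$ stower or mandarin'' is only sketched, and you explicitly defer the decisive case (``2-vertex-connected graphs with several non-leaf vertices, where the test-function construction has to be tailored to the topology''), which is precisely the content that has to be proved. Your fallback via Theorem \ref{thm:supremum_of_gluing} and Corollary \ref{cor:gluing_stowers} for graphs with a non-leaf cut vertex does not apply as stated, since those results require knowing in advance that the supremizers of the two pieces are (equilateral) stowers, which is not available for arbitrary subgraphs. Also, the ``linear ramp on a short connecting path, constant elsewhere'' candidate is not viable in general: continuity forces interpolation on \emph{every} edge joining the $+1$ region to the $-1$ region, and a linear profile on a single edge of length $L_{0}$ has Rayleigh quotient $12/L_{0}^{2}>\pi^{2}/L_{0}^{2}$ before the constant pieces are added, so the estimate depends delicately on how much length remains outside the interpolating edges — exactly the issue you leave open. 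The paper closes this uniformly, with no connectivity case analysis: if $\metgraph$ has at least two internal vertices $v_{+},v_{-}$, it recursively extracts internal edges lying on leaf-avoiding $v_{+}$--$v_{-}$ paths into a set $\E_{0}$ until $\metgraph\setminus\E_{0}$ disconnects, writes $\metgraph=\metgraph_{+}\cup\metgraph_{-}\cup\E_{0}$, and takes the test function equal to $+1$ on $\metgraph_{+}$, $-1$ on $\metgraph_{-}$, and $\cos\left(k_{1}(\metgraph)\,x\right)$ on each edge of $\E_{0}$; this is continuous because every internal edge has length exactly $\pi/k_{1}(\metgraph)$ by the equilateral condition, and \eqref{eq:rayleigh_with_mean} shows the quotient is strictly below $k_{1}(\metgraph)^{2}$ whenever $\metgraph_{+}\cup\metgraph_{-}$ has positive length, forcing $\metgraph=\E_{0}$, i.e.\ an equilateral mandarin. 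Some version of this decomposition-plus-cosine argument (or an equivalent mechanism) is needed to make your equality claim a proof.
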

This latter bound is sharp as it is attained by most equilateral stower
graphs (see Example \ref{exa:stower} and Corollary \ref{cor:stower_supremum}).

\section{Infimizers \label{sec:Infimizers}}
\begin{proof}
[Proof of Theorem \ref{thm:infimizers}] Let $\metgraph$ be a metric
graph whose total edge length equals one and let $f$ be an eigenfunction
corresponding to the spectral gap $k_{1}(\metgraph)$ and normalized
such that its $L^{2}$ norm equals one. Denote
\begin{align}
m & :=\min f<0\\
M & :=\max f>0,
\end{align}
where the inequalities arise as $f$, being a Neumann eigenfunction
is orthogonal to the constant function. In what follows we bound from
below the Rayleigh quotient of $f$ by using the rearrangement technique
in a similar manner to the proof of lemma 3 in \cite{Friedlander_aif05}.
We further define
\[
\mu_{f}\left(t\right):=\left|\left\{ \left.x\in\metgraph~\right|~f\left(x\right)<t\right\} \right|~~~~\textrm{for }~t\in\left[m,M\right]
\]
where $\left|\cdot\right|$ denotes the Lebesgue measure of the corresponding
set on the graph. This allows to define a continuous, non-decreasing
function $f^{*}$ on the interval $\left[0,1\right]$, such that $\mu_{f^{*}}=\mu_{f}$.
This property gives 
\begin{equation}
1=\int_{\metgraph}\left|f\left(x\right)\right|^{2}\textrm{d}x=\int_{m}^{M}t^{2}\textrm{d}\mu_{f}=\int_{0}^{1}\left|f^{*}\left(x\right)\right|^{2}\textrm{d}x\label{eq:thm_infimizers_1}
\end{equation}

and

\begin{equation}
0=\int_{\metgraph}f\left(x\right)\textrm{d}x=\int_{m}^{M}t\textrm{d}\mu_{f}=\int_{0}^{1}f^{*}\left(x\right)\textrm{d}x,\label{eq:thm_infimizers_2}
\end{equation}
where the first equality in \eqref{eq:thm_infimizers_2} holds since
$f$ is orthogonal to the constant function. 

Another ingredient we use in the proof is the co-area formula \cite{Chavel_RiemannianGeometry}.
Let $t\in\left[m,M\right]$ such that if $f\left(x\right)=t$ then
$x$ is not a vertex and $f'\left(x\right)\neq0$ and call this $t$
a regular value. By Sard's theorem, the non-regular values are of
zero measure. According to the co-area formula if $t$ is a regular
value then 
\begin{equation}
\mu_{f}'\left(t\right)=\sum_{x~;~f\left(x\right)=t}\frac{1}{\left|f'\left(x\right)\right|},\label{eq:thm_infimizers_3}
\end{equation}
and for any $L^{1}$ function $g$ on the graph
\begin{equation}
\int_{\metgraph}g\left(x\right)\left|f'\left(x\right)\right|\textrm{d}x=\int_{m}^{M}\left(\sum_{x~;~f\left(x\right)=t}g\left(t\right)\right)\textrm{d}t.\label{eq:thm_infimizers_4}
\end{equation}

We now estimate the numerator of the Rayleigh quotient, $\int_{\metgraph}\left|f'\left(x\right)\right|^{2}\textrm{d}x$,
as follows. Denote by $x_{m},x_{M}$ two points for which $f\left(x_{m}\right)=m,~f\left(x_{_{M}}\right)=M$
(they are not necessarily unique). Let $t\in\left[m,M\right]$ be
a regular value. As $\metgraph$ is connected there is a path on the
graph connecting $x_{m}$ with $x_{M}$ and by continuity of $f$
it attains the value $t$ at least once along this path, say at some
point $x_{t}$. By the choice of $t$, $x_{t}$ is not a vertex. If
$\metgraph$ is a bridgeless graph, then cutting the graph at $x_{t}$,
the graph is still connected and we can find another path joining
$x_{m}$ and $x_{M}$. By the same reasoning $f$ attains the value
$t$ along this path as well, so that $t$ is attained by $f$ at
least twice on $\metgraph$. Denoting by $n\left(t\right)$ the number
of times that the value $t$ is attained by $f$ on the graph, we
get that 
\begin{equation}
n\left(t\right)\geq\begin{cases}
1 & \textrm{if ~}\metgraph\textrm{~has a bridge,}\\
2 & \textrm{if ~}\metgraph\textrm{~is bridgeless.}
\end{cases}\label{eq:thm_infimizers_n_of_t}
\end{equation}
We may also bound $n\left(t\right)$ from above 
\begin{align}
\left(n\left(t\right)\right)^{2} & =\left(\sum_{x~;~f\left(x\right)=t}\frac{1}{\sqrt{\left|f'\left(x\right)\right|}}\sqrt{\left|f'\left(x\right)\right|}\right)^{2}\\
 & \leq\left(\sum_{x~;~f\left(x\right)=t}\frac{1}{\left|f'\left(x\right)\right|}\right)\left(\sum_{x~;~f\left(x\right)=t}\left|f'\left(x\right)\right|\right)\\
 & =\mu_{f}'\left(t\right)\left(\sum_{x~;~f\left(x\right)=t}\left|f'\left(x\right)\right|\right),\label{eq:thm_infimizers_5}
\end{align}
by applying the Cauchy-Schwarz inequality and \eqref{eq:thm_infimizers_3}.
Writing \eqref{eq:thm_infimizers_4} with $g\left(x\right)=\left|f'\left(x\right)\right|$
gives

\begin{equation}
\int_{\metgraph}\left|f'\left(x\right)\right|^{2}\textrm{d}x=\int_{m}^{M}\left(\sum_{x~;~f\left(x\right)=t}\left|f'\left(x\right)\right|\right)\textrm{d}t\geq\int_{m}^{M}\frac{\left(n\left(t\right)\right)^{2}}{\mu_{f}'\left(t\right)}\textrm{d}t.\label{eq:thm_infimizers_6}
\end{equation}

We may repeat the arguments above for $f^{*}$, which attains each
regular value exactly once and obtain that \eqref{eq:thm_infimizers_5},\eqref{eq:thm_infimizers_6}
hold for $f^{*}$ as equalities and with $n^{*}\left(t\right)=1$.
Therefore 
\begin{equation}
\int_{\metgraph}\left|f'\left(x\right)\right|^{2}\textrm{d}x\geq\underset{m\leq t\leq M}{\textrm{ess~inf}}\left(n\left(t\right)\right)^{2}\int_{\metgraph}\left|\left(f^{*}\right)'\left(x\right)\right|^{2}\textrm{d}x,\label{eq:thm_infimizers_7}
\end{equation}
where the infimum above is taken only with respect to regular values.
As $f$ is the eigenfunction corresponding to $k_{1}(\metgraph)$
with unit $L^{2}$ norm we have $\int_{\metgraph}\left|f'\left(x\right)\right|^{2}\textrm{d}x=\left(k_{1}(\metgraph)\right)^{2}$.
Considering $f^{*}$ as a test function of unit $L^{2}$ norm (see
\eqref{eq:thm_infimizers_1}) and zero mean (see \eqref{eq:thm_infimizers_2})
on the unit interval we get that its Rayleigh quotient is no less
than the first positive eigenvalue, namely that $\int_{\metgraph}\left|\left(f^{*}\right)'\left(x\right)\right|^{2}\textrm{d}x\geq\pi^{2}$.
Combining this with \eqref{eq:thm_infimizers_7} and \eqref{eq:thm_infimizers_n_of_t}
we get the lower bounds, 
\begin{equation}
k_{1}(\metgraph)\geq\begin{cases}
\pi & \textrm{if ~}\metgraph\textrm{~has a bridge,}\\
2\pi & \textrm{if ~}\metgraph\textrm{~is bridgeless.}
\end{cases}\label{eq:thm_infimizers_8}
\end{equation}
All that remains to complete the proof is the characterization of
the infimizers.

Assume first that $\metgraph$ has a bridge. An equality in \eqref{eq:thm_infimizers_8}
is possible only if $n\left(t\right)=1$ for all regular $t\in\left[m,M\right]$.
This implies that $\metgraph$ does not have vertices of degree $3$
and above. Otherwise, due to continuity of $f$, we would have $n\neq1$
in the vicinity of such a vertex. $\metgraph$ cannot be a single
cycle graph as it has a bridge and is therefore the unit interval,
$\left[0,1\right]$. Hence it is the unique candidate for an infimizer.
Indeed, its spectral gap is $\pi$ and starting from any discrete
graph $\disgraph$ with a bridge, $\metgraph(\disgraph;~\lenvec)$
is the unit interval if $\lenvec\in\lencl_{\disgraph}$ is chosen
such that all of its entries vanish, except the entry corresponding
to the bridge.

Next, the possible minimizers of bridgeless graphs are characterized.
By Menger's theorem \cite{Menger_FM27}, a graph is bridgeless if
and only if there are at least two edge disjoint paths connecting
any pair of points. We use that to deduce that if $\disgraph$ is
bridgeless then $\metgraph(\disgraph;~\lenvec)$ is bridgeless as
well. Indeed, any path between a pair of points in $\metgraph(\disgraph;~\lenvec)$
corresponds to at least one path between those points in $\disgraph$.
Thus, to seek for a possible minimizer, we assume that $\metgraph$
is bridgeless and $k_{1}\left(\metgraph\right)=2\pi$. As a bridgeless
graph is $2$-edge-connected, we deduce from Menger's theorem that
there are at least two edge disjoint paths connecting $x_{m}$ with
$x_{M}$. Pick two such paths and denote them by $\gamma_{1},\gamma_{2}$.
A necessary condition for $k_{1}(\metgraph)=2\pi$ is that $n\left(t\right)=2$
for each regular value $t\in\left[m,M\right]$. By continuity, $f$
attains each regular value at least once on $\gamma_{1}$ and at least
once on $\gamma_{2}$. As $n\left(t\right)=2$ for a regular value
$t$, $f$ attains the value $t$ \emph{exactly} once on each of $\gamma_{1}$
and $\gamma_{2}$. Hence $f$ is strictly increasing on $\gamma_{1}$
from $x_{m}$ to $x_{M}$ and the same holds for $\gamma_{2}$. We
further conclude that $f$ may attain only non-regular values at $\metgraph\backslash\left\{ \gamma_{1}\cup\gamma_{2}\right\} $.
In particular, if there exists an edge in $\metgraph\backslash\left\{ \gamma_{1}\cup\gamma_{2}\right\} $,
$f$ should be constant on that edge and due to $-f''=\left(2\pi\right)^{2}f$
this constant equals zero. Thus, the edges of $\metgraph\backslash\left\{ \gamma_{1}\cup\gamma_{2}\right\} $
may be removed from $\metgraph$, such that $f$ still satisfies the
Neumann conditions on the remaining graph $\gamma_{1}\cup\gamma_{2}$
and it is an eigenfunction on that graph. However, by this we find
an eigenfunction of $k$-eigenvalue $2\pi$ on a bridgeless graph
whose total length smaller than one, which contradicts the lower bound,
\eqref{eq:thm_infimizers_8}. Hence $\metgraph$ consists of just
the union of the paths $\gamma_{1},\gamma_{2}$. As $\gamma_{1},\gamma_{2}$
are edge disjoint, $\gamma_{1}\cap\gamma_{2}$ contains only vertices.
We denote those vertices by $v_{0},\ldots,v_{n}$, with $v_{0}=x_{m},~v_{n}=x_{M}$
and the indices are arranged in an increasing order along the path
$\gamma_{1}$. As $f$ is strictly increasing along both $\gamma_{1},\gamma_{2}$,
the order of those vertices along $\gamma_{2}$ is the same: $v_{0},\ldots,v_{n}$.
Consider two adjacent vertices $v_{i},v_{i+1}$ ($0\leq i\leq n-1$)
and denote the corresponding path segments connecting them by $\gamma_{1}\left(v_{i},v_{i+1}\right)$,$\gamma_{2}\left(v_{i},v_{i+1}\right)$.
As $f$ takes the same values on the endpoints of $\gamma_{1}\left(v_{i},v_{i+1}\right)$,$\gamma_{2}\left(v_{i},v_{i+1}\right)$,
is increasing and satisfies $-f''=\left(2\pi\right)^{2}f$ on both,
we conclude $\left.f\right|_{\gamma_{1}\left(v_{i},v_{i+1}\right)}=\left.f\right|_{\gamma_{2}\left(v_{i},v_{i+1}\right)}$
and also that $\gamma_{1}\left(v_{i},v_{i+1}\right)$ has the same
length as $\gamma_{2}\left(v_{i},v_{i+1}\right)$. Hence $\metgraph=\gamma_{1}\cup\gamma_{2}$
is a symmetric necklace.
\end{proof}
\begin{rem*}
A further exploration of symmetric necklace graphs appears in Proposition
\ref{prop:Standarins}. It is shown there that a symmetric necklace
graph belongs to a family of graphs in which every graph has a simple
spectral gap and its spectral gap $k_{1}\left[\metgraph(\disgraph;\lenvec)\right]$
is a critical value when considered as a function of $\lenvec\in\len_{\disgraph}$.
\end{rem*}
Theorem \ref{thm:infimizers} provides a complete answer to the minimization
problem. In particular, it states that any infimizer of a bridgeless
graph is a symmetric necklace. A further task would be to classify
the entire family of necklace graphs which serve as infimizers of
a particular discrete graph. We start treating this by observing that
the spectral gap of any symmetric necklace (of total length one) is
$2\pi$. This follows from noting that $2\pi$ is an eigenvalue of
any symmetric necklace and combining this with Theorem \ref{thm:infimizers}.
Now, let $\disgraph$ be a bridgeless graph and let $\lenvec^{*}\in\len_{\disgraph}$,
such that $\metgraph(\disgraph;~\lenvec^{*})$ is a symmetric necklace
with some $\beta$ number of cycles. By the observation above and
Theorem \ref{thm:infimizers} we have that $\metgraph(\disgraph;~\lenvec^{*})$
is an infimizer of $\disgraph$. Furthermore, by choosing other values
for $\lenvec\in\len_{\disgraph}$ we may get $\metgraph(\disgraph;~\lenvec)$
to be any symmetric necklace with at most $\beta$ cycles, and from
the above this $\metgraph(\disgraph;~\lenvec)$ would also serve as
an infimizer. Therefore, the answer to the classification problem
above would be given once we find what is the maximal number of cycles
among all symmetric necklaces that can be obtained from a given discrete
graph $\disgraph$. Solving this requires some elements from the theory
of graph connectivity which we shortly present below. A graph is called
$k$-edge-connected if it remains connected whenever less than $k$
edges are removed. In particular, a bridgeless graph is $2$-edge-connected.
A cactus graph is a graph in which every edge is contained in exactly
one cycle. Let $\disgraph$ be a bridgeless graph. There exists $\lenvec\in\len_{\disgraph}$
such that $\metgraph(\disgraph;~\lenvec)$ is a cactus graph with
the following property. For every two edges $e,e'$ which form a 2-edge-cut
in $\disgraph$ (two edges whose removal disconnects the graph), we
have $l_{e},l_{e'}\neq0$. Namely, those two edges also appear in
$\metgraph(\disgraph;~\lenvec)$. The theory leading to this result
appears in \cite{DinKarLom_sdm76,FleFra_report09,NagamochiIbaraki_book08}
for general $k$-connected graphs and is very nicely explained for
the particular case of $2$-edge-connected graphs in section 10 of
the recent paper \cite{MehNeuSch_alg17}. Now, in order to determine
the maximal number of cycles of a necklace obtained from $\disgraph$
we perform the following procedure. Find all subgraphs of $\disgraph$
which are $3$-edge-connected and contract each of them to a vertex;
for example by choosing $\lenvec\in\len_{\disgraph}$ such that the
corresponding entries vanish and considering $\metgraph(\disgraph;~\lenvec)$.
This yields a cactus graph with the property mentioned above \cite{MehNeuSch_alg17}.
The cactus graph has a tree-like structure. This can be observed by
considering an auxiliary graph $\metgraph'$, where each cycle of
$\metgraph(\disgraph;~\lenvec)$ is represented by a vertex of $\metgraph'$
and two vertices of $\metgraph'$ are connected if the corresponding
cycles in $\metgraph(\disgraph;~\lenvec)$ share a vertex (a cactus
graph has the property that any two cycles of it, share at most one
vertex). The obtained graph, $\metgraph'$ turns to be a tree graph.
Any path of this tree graph then corresponds to a necklace which can
be obtained from the cactus $\metgraph(\disgraph;~\lenvec)$ by further
setting some edge lengths to zero. The longest possible necklace is
found by identifying the longest path of the tree $\metgraph'$.

\section{Supremizers of tree graphs \label{sec:TreeGraphs}}

The proof of Theorem \ref{thm:tree_supremizer} is based on bounding
the graph diameter, as follows.
\begin{defn}
\label{def:diameter}Let $\Gamma$ be a compact metric graph. The
diameter of $\Gamma$ is 
\[
d(\Gamma):=\max\left\{ \left.\textrm{dist}\left(x,y\right)~\right|~x,y\in\metgraph\right\} 
\]
\end{defn}
\begin{lem}
\label{lem:tree_diameter_lower_bound} Let $\Gamma$ be a metric tree
graph of total length $1$ and with $E_{l}\geq2$ leaves. Then 
\begin{equation}
d(\Gamma)\geq\frac{2}{E_{l}}\label{eq:lem_tree_diameter_lower_bound}
\end{equation}
with equality if and only if $\Gamma$ is an equilateral star.
\end{lem}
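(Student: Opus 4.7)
The plan is an averaging argument over pairs of leaves. First, in any compact metric tree the diameter is attained between two leaves: given a maximizing pair of points, one can push each endpoint along an incident edge in the direction that does not decrease the distance, continuing vertex by vertex until termination at a leaf. It therefore suffices to prove that $\max_{1\leq i<j\leq E_{l}}\mathrm{dist}(\ell_{i},\ell_{j})\geq 2/E_{l}$.

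For the main bound I would expand the sum of all pairwise leaf distances edge by edge. For each edge $e\in\mathcal{E}(\Gamma)$, removing $e$ splits $\Gamma$ into two subtrees; let $a_{e}$ and $b_{e}=E_{l}-a_{e}$ be the numbers of leaves of $\Gamma$ in each subtree. Since $e$ lies on the unique path between $\ell_{i}$ and $\ell_{j}$ if and only if they lie on opposite sides of $e$,
\[
\sum_{1\leq i<j\leq E_{l}}\mathrm{dist}(\ell_{i},\ell_{j})=\sum_{e\in\mathcal{E}(\Gamma)}a_{e}b_{e}\,l_{e}.
\]
Because each component of a tree-minus-edge still contains at least one original leaf, one has $1\leq a_{e}\leq E_{l}-1$, hence $a_{e}b_{e}=a_{e}(E_{l}-a_{e})\geq E_{l}-1$. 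Combined with $\sum_{e}l_{e}=1$ this gives $\sum_{i<j}\mathrm{dist}(\ell_{i},\ell_{j})\geq E_{l}-1$; dividing by $\binom{E_{l}}{2}=E_{l}(E_{l}-1)/2$ produces an average of at least $2/E_{l}$, which bounds the maximum pairwise leaf distance from below and yields $d(\Gamma)\geq 2/E_{l}$.

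For the equality analysis, $d(\Gamma)=2/E_{l}$ forces equality throughout the chain: the pointwise inequality $a_{e}b_{e}\geq E_{l}-1$ must be saturated on every edge with $l_{e}>0$ (so $\min(a_{e},b_{e})=1$), and $\max\,=\,\mathrm{average}$ forces all $\binom{E_{l}}{2}$ leaf-to-leaf distances to equal $2/E_{l}$. The first condition makes the ``small'' side of every edge a subtree containing exactly one leaf of $\Gamma$, and a short argument shows such a subtree must be a simple path ending at that leaf; absorbing interior Neumann vertices of degree two (whose suppression is metrically transparent and is used elsewhere in the paper, see Example~\ref{exa:necklace}) then collapses $\Gamma$ to a star with $E_{l}$ edges. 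The distance-equality condition subsequently forces $l_{i}+l_{j}=2/E_{l}$ for every pair of distinct leaf edges, pinning all edge lengths to $1/E_{l}$, so $\Gamma$ is the equilateral star. I expect the main delicate step to be precisely this structural reduction in the equality case and its reconciliation with the paper's convention on degree-two vertices; the rest of the proof is elementary counting and averaging.
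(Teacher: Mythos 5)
Your proof is correct, but it follows a genuinely different route from the paper's. The paper picks a diameter-realizing pair of leaves, takes the midpoint $x_{0}$ of the connecting path, and covers $\Gamma$ by $E_{l}$ paths from $x_{0}$ to the leaves, each of length at most $d(\Gamma)/2$, so that $1\leq E_{l}\,d(\Gamma)/2$; the equality case (star centered at $x_{0}$, then equilateral) falls out of the same covering. You instead use the exact edge-cut identity $\sum_{i<j}\mathrm{dist}(\ell_{i},\ell_{j})=\sum_{e}a_{e}b_{e}l_{e}$ together with $a_{e}b_{e}\geq E_{l}-1$ and an averaging step over the $\binom{E_{l}}{2}$ pairs of leaves; note that your preliminary step (diameter attained at leaves) is not even needed for the inequality, since $d(\Gamma)\geq\max_{i<j}\mathrm{dist}(\ell_{i},\ell_{j})$ trivially. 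Your method buys a bit more: an identity rather than just a bound, a lower bound on the \emph{average} leaf-to-leaf distance, and in the equality case the stronger conclusion that \emph{all} pairwise leaf distances equal $2/E_{l}$; the paper's argument is shorter and makes the equality characterization almost immediate. The one place where your sketch is compressed is the structural step from ``$a_{e}b_{e}=E_{l}-1$ on every edge'' to the star shape: the clean way to finish is to note that if a component of $\Gamma\setminus e$ contains a vertex of degree at least three, then it contains at least two original leaves of $\Gamma$ (its leaves, with at most one exception, are leaves of $\Gamma$), so saturation on every edge forbids two branch vertices; after suppressing degree-two vertices $\Gamma$ is a star, and the condition that all leaf distances equal $2/E_{l}$ then pins every arm to length $1/E_{l}$. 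With that sentence supplied, your argument is complete and at the same level of rigor as the paper's.
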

\begin{proof}
Choose two points, $x_{1},x_{2}$, in $\Gamma$ such that the distance
between them is exactly $d(\Gamma)$. We show that $x_{1},x_{2}$
are necessarily leaves. Assume by contradiction that (w.l.o.g) $x_{1}$
is not a leaf. Then $\Gamma\setminus\{x_{1}\}$ has at least two connected
components. Let $\Gamma_{1}$ be one of these components satisfying
$x_{2}\not\in\Gamma_{1}$. Let $z$ be a point of $\Gamma_{1}$ different
from $x_{1}$. As $\metgraph$ is a tree, any path from $z$ to $x_{2}$
contains $x_{1}$, which yields 
\[
d(x_{2},z)>d(x_{2},x_{1})=d\left(\metgraph\right),
\]
thus contradicting the definition of $d(\Gamma)$. Let now $\mathcal{P}$
be the shortest path connecting $x_{1}$ to $x_{2}$ and denote by
$x_{0}$ its middle, such that 
\[
d(x_{1},x_{0})=d(x_{2},x_{0})=\frac{d(\Gamma)}{2}.
\]
We cover $\Gamma$ with $E_{l}$ paths, each starting at $x_{0}$
and ending at a leaf of $\metgraph$. The length of each of these
paths is at most $d(x_{1},x_{0})$ (otherwise, we may replace $x_{1}$
by a different leaf and increase $d\left(\metgraph\right)$). As the
union of these paths cover $\Gamma$, whose total length is $1$,
we have 
\begin{equation}
1\leq\sum_{v\text{ is a leaf}}d(x_{0},v)\leq\sum_{v\text{ is a leaf}}d(x_{0},x_{1})=E_{l}\frac{d(\Gamma)}{2},
\end{equation}
from which the inequality of the lemma follows. The first inequality
can be an equality if and only if $\Gamma$ is a star and $x_{0}$
is its central vertex. Assuming this, the second inequality can be
an equality if an only if the star is equilateral.
\end{proof}
Aided with Lemma \ref{lem:tree_diameter_lower_bound}, we turn to
the proof of the theorem.
\begin{proof}
[Proof of Theorem \ref{thm:tree_supremizer}] We show in the following
that there exists a test function $f$ on $\Gamma$ such that its
Rayleigh quotient satisfies 
\begin{equation}
\mathcal{R}(f)\leq\left(\frac{\pi}{d(\Gamma)}\right)^{2}.\label{RayleighTree}
\end{equation}
Indeed, let $y,z$ be two leaves of $\Gamma$ such that the distance
between them is exactly $d(\Gamma)$. Let us denote by $\mathcal{P}$
a path of $\Gamma$, of length $d(\Gamma)$, connecting $y$ and $z$.
We consider $\mathcal{P}$ as the interval $[0,d(\Gamma)]$, for example
by identifying $y$ with $0$ and $z$ with $d(\Gamma)$ and define
the following function on $\mathcal{P}$, 
\[
f(x)=\cos\left(\frac{\pi x}{d(\Gamma)}\right)\text{ for \ensuremath{x\in\mathcal{P}}.}
\]
We extend $f$ to be defined on the whole graph, $\metgraph$, by
setting its value on each connected component of $\Gamma\setminus\mathcal{P}$
to the unique constant which preserves the continuity of $f$. Referring
to Appendix \ref{sec:appendix_test_functions_using_subgraphs} and
using $f-\left\langle f\right\rangle $ as our test function we have
from \eqref{eq:rayleigh_with_mean}, 
\begin{align}
\mathcal{R}\left(f-\left\langle f\right\rangle \right) & =\frac{\int_{\Gamma}|f'(x)|^{2}dx}{\int_{\Gamma}|f(x)|^{2}dx-\left(\int_{\Gamma}f(x)dx\right)^{2}}\\
 & =\frac{\left(\frac{\pi}{d(\Gamma)}\right)^{2}\frac{d(\Gamma)}{2}}{\frac{d(\Gamma)}{2}+\int_{\Gamma\setminus\mathcal{P}}|f(x)|^{2}dx-\left(\int_{\Gamma}f(x)dx\right)^{2}}\label{eq:thm_tree_supremum_eq_1}
\end{align}

As the integral of $f$ on $\mathcal{P}$ vanishes, using Cauchy-Schwarz
inequality we get 
\begin{equation}
\left(\int_{\Gamma}f(x)dx\right)^{2}=\left(\int_{\Gamma\setminus\mathcal{P}}f(x)dx\right)^{2}\leq(1-d(\Gamma))\int_{\Gamma\setminus\mathcal{P}}|f(x)|^{2}dx.\label{eq:thm_tree_supremum_eq_2}
\end{equation}

Plugging \eqref{eq:thm_tree_supremum_eq_2} in \eqref{eq:thm_tree_supremum_eq_1}
gives
\begin{equation}
\mathcal{R}\left(f-\left\langle f\right\rangle \right)\leq\frac{\left(\frac{\pi}{d(\Gamma)}\right)^{2}\frac{d(\Gamma)}{2}}{\frac{d(\Gamma)}{2}+d\left(\metgraph\right)\int_{\Gamma\setminus\mathcal{P}}|f(x)|^{2}dx}\leq\left(\frac{\pi}{d(\Gamma)}\right)^{2}.\label{eq:thm_tree_supremum_eq_3}
\end{equation}
Using this and Lemma \ref{lem:tree_diameter_lower_bound} we get 
\begin{equation}
k_{1}\left(\metgraph\right)\leq\frac{\pi}{d(\Gamma)}\leq\frac{\pi}{2}E_{l}.\label{eq:thm_tree_supremum_eq_4}
\end{equation}
Let $\disgraph$ be a tree graph with $E_{l}$ leaves. We may choose
$\lenvec\in\lencl_{\disgraph}$ such that $\metgraph(\disgraph;~\lenvec)$
is an equilateral star graph with $E_{l}$ leaves, so that $k_{1}\left[\metgraph(\disgraph;~\lenvec)\right]=\frac{\pi}{2}E_{l}$
and from the bound above we get that $\metgraph(\disgraph;~\lenvec)$
is a supremizer. This is a unique supremizer as having equality in
the right inequality of \eqref{eq:thm_tree_supremum_eq_4} implies
by Lemma \ref{lem:tree_diameter_lower_bound} that $\metgraph$ is
an equilateral star with $E_{l}$ leaves.
\end{proof}
\begin{rem*}
We note that the upper bound $k_{1}\left(\metgraph\right)\leq\frac{\pi}{d(\Gamma)}$,
which is obtained in the course of the proof above, is a particular
case of a result proven recently in \cite{Rohleder_pams16}. There
it was shown that for any $n$, $k_{n}\left(\metgraph\right)\leq\frac{\pi n}{d(\Gamma)}$.
Applying \eqref{eq:lem_tree_diameter_lower_bound} to the latter we
may get that for any $n\geq1$, $k_{n}\left(\metgraph\right)\leq\frac{\pi n}{2}E_{l}$,
which improves the bound $k_{n}\left(\metgraph\right)\leq\frac{\pi n}{2}E$
given in \cite{Rohleder_pams16}.
\end{rem*}
The theorem above yields the following.
\begin{proof}
[Proof of Corollary \ref{cor:non_tree_supremizer}] Let $\disgraph$
be a graph with $\beta>0$ cycles and $E_{l}$ leaves. We start by
observing that for $\left(\beta,E_{l}\right)\in\left\{ \left(1,0\right),\left(1,1\right)\right\} $,
the supremizer is the single cycle graph (see Lemma \ref{lem:stower_1_1}),
which is not a tree. We continue assuming $\left(\beta,E_{l}\right)\notin\left\{ \left(1,0\right),\left(1,1\right)\right\} $.
Choose a maximal spanning tree of $\disgraph\backslash\mathcal{E}_{l}$,
where $\mathcal{E}_{l}$ is the set of the graph's $E_{l}$ leaves.
Choose $\lenvec^{*}\in\lencl_{\disgraph}$ such that all of its entries
corresponding to the spanning tree edges are set to zero. This makes
$\metgraph(\disgraph;~\lenvec^{*})$ a stower with $\beta$ petals
and $E_{l}$ leaves. Furthermore, $\lenvec^{*}$ may be chosen such
that $\metgraph(\disgraph;~\lenvec^{*})$ is an equilateral stower.
The spectral gap of this graph is $\frac{\pi}{2}\left(2\beta+E_{l}\right)$
(see Example \ref{exa:stower}). Alternatively, if $\lenvec\in\lencl_{\disgraph}$
is such that $\metgraph(\disgraph;~\lenvec)$ is a tree then the number
of its leaves is at most $E_{l}$ and by Theorem \ref{thm:tree_supremizer}
its spectral gap is at most $\frac{\pi}{2}E_{l}$. Therefore, the
stower graph $\metgraph(\disgraph;~\lenvec^{*})$ obtained above has
a greater spectral gap than any tree graph $\metgraph(\disgraph;~\lenvec)$.
\end{proof}

\section{Spectral gaps as critical values\label{sec:LocalOptimizers}}

In this section we assume that the spectral gap, $k_{1}\left(\metgraph\left(\disgraph;~\lenvec\right)\right)$,
is a simple eigenvalue. This allows to take derivatives of the eigenvalue
with respect to the edge lengths, $\lenvec\in\len_{\disgraph}$, and
to find critical points which serve as candidates for maximizers.
We prove here Theorem \ref{thm:simple_supremum} which shows that
such local maximizers do not achieve a spectral gap higher than that
achieved by turning the graph into a mandarin or a flower.
\begin{lem}
\label{lem:constant_energy} Let $\metgraph$ be a metric graph and
$f$ an eigenfunction corresponding to the eigenvalue $k^{2}$ with
arbitrary vertex conditions. Then the function $f'(x)^{2}+k^{2}f(x)^{2}$
is constant along each edge.
\end{lem}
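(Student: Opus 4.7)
The plan is to prove this by a direct computation on each edge, exploiting the fact that on the interior of any edge the eigenfunction equation reduces to a constant coefficient linear ODE, independent of what vertex conditions are imposed at the endpoints.

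Fix an edge $e$ of $\metgraph$ and use the local coordinate $x_e$ on $[0,l_e]$. Since $f$ is an eigenfunction of the Laplacian $\hamil$ in \eqref{eq:metric_Schroedinger} with eigenvalue $k^2$, its restriction $\left.f\right|_e$ satisfies
\[
-\left.f\right|_e''(x_e) = k^2 \left.f\right|_e(x_e), \qquad x_e \in (0,l_e),
\]
and in particular $\left.f\right|_e \in H^2([0,l_e])$, so $\left.f\right|_e$ and $\left.f\right|_e'$ are absolutely continuous on the edge. Define on $[0,l_e]$ the auxiliary function
\[
\E_e(x_e) := \left(\left.f\right|_e'(x_e)\right)^2 + k^2 \left(\left.f\right|_e(x_e)\right)^2.
\]

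Differentiating (which is justified by the $H^2$ regularity) and using the eigenvalue equation gives
\[
\E_e'(x_e) = 2\left.f\right|_e'(x_e)\left.f\right|_e''(x_e) + 2 k^2 \left.f\right|_e(x_e) \left.f\right|_e'(x_e) = 2\left.f\right|_e'(x_e)\left(\left.f\right|_e''(x_e) + k^2 \left.f\right|_e(x_e)\right) = 0
\]
almost everywhere on $(0,l_e)$. Hence $\E_e$ is constant on each edge, which is exactly the claim. Since the argument nowhere invokes the vertex conditions on $f$ at the endpoints of $e$, the conclusion holds for any choice of self-adjoint conditions at the vertices.

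There is no substantive obstacle in this proof; the only point worth flagging is the regularity justification for differentiating $\E_e$, which follows from $\left.f\right|_e \in H^2([0,l_e])$. One may equivalently observe that $\E_e$ is the standard conserved energy (Wronskian-type quantity) of the harmonic oscillator equation $f'' + k^2 f = 0$, so constancy along each edge is immediate once the ODE holds in the edge interior.
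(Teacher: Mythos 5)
Your proof is correct and is essentially the paper's own argument: the paper disposes of this lemma by noting that differentiating $f'(x)^{2}+k^{2}f(x)^{2}$ along an edge and using the eigenvalue equation gives zero, exactly as you do. Your added remark on $H^{2}$ regularity justifying the differentiation is a fine, if routine, elaboration.
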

\begin{proof}
The proof is immediate by differentiating the function $f'(x)^{2}+k^{2}f(x)^{2}$
along an edge.
\end{proof}
The last lemma motivates us to define the energy\footnote{A simple harmonic oscillator whose spring constant is $k$ and whose
position is given by $f(x)$ has a total energy of $\frac{1}{2}\energy_{e}$.} of an eigenfunction on an edge $e$ as $\energy_{e}:=f'(x)^{2}+k^{2}f(x)^{2}$
for any $x\in e$. This energy shows up naturally when differentiating
an eigenvalue with respect to an edge length. In order to evaluate
such derivatives we extend Definition \ref{def:discrete_and_metric_graph}
so that $\metgraph\left(\disgraph;~\lenvec\right)$ is defined for
all $\lenvec\in\R^{E}$ with positive entries and relax the restriction
$\sum_{e=1}^{E}l_{e}=1$, imposed by $\lenvec\in\len_{\disgraph}$.
The following lemma appears also as Lemma A.1 in \cite{CdV_ahp14}
and within the proof of a lemma in \cite{Friedlander_ijm05}.
\begin{lem}
\label{lem:Derivative_on_edge_equals_energy} Let $\disgraph$ be
a discrete graph and let $\lenvec\in\R^{E}$ with positive entries.
Assume that the spectral gap, $k_{1}\left[\metgraph(\disgraph;~\lenvec)\right]$
is a simple eigenvalue and let $f$ be the corresponding eigenfunction,
normalized to have unit $L^{2}$ norm. Then $k_{1}\left[\metgraph(\disgraph;~\lenvec)\right]$
is differentiable with respect to any edge length $l_{\tilde{e}}$
and 
\begin{equation}
\frac{\partial}{\partial l_{\tilde{e}}}\left(\left(k_{1}\left[\metgraph\left(\disgraph;~\lenvec\right)\right]\right)^{2}\right)=-\energy_{\tilde{e}}.
\end{equation}
 
\end{lem}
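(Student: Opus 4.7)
The plan is to reduce the problem to a family of Rayleigh quotients on a fixed function space by rescaling each edge to the unit interval, and then to apply the Hellmann--Feynman principle, thereby avoiding the technical nuisance of a moving domain.

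Concretely, I would reparametrize each edge $e$ by $t\in[0,1]$ via $x_{e}=l_{e}t$ and set $u_{e}(t):=\left.f\right|_{e}(l_{e}t)$. In these coordinates the Rayleigh quotient of any admissible $u$ on the fixed reference graph reads
\[
\mathcal{R}_{\lenvec}(u)\;=\;\frac{\sum_{e}\frac{1}{l_{e}}\int_{0}^{1}(u_{e}')^{2}\,dt}{\sum_{e}l_{e}\int_{0}^{1}u_{e}^{2}\,dt},
\]
and the admissible set (continuous $H^{1}$ functions on the reference graph, orthogonal to the constants with respect to the weighted measure) is independent of $\lenvec$. Since $k_{1}$ is assumed simple, standard analytic perturbation theory (Rellich--Kato) guarantees that $\lambda:=k_{1}^{2}$ and a corresponding $L^{2}$-normalized eigenfunction depend analytically on $\lenvec$. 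Differentiating $\mathcal{R}_{\lenvec}(u^{\lenvec})$ in $l_{\tilde{e}}$, the variation of $u^{\lenvec}$ contributes nothing to first order (Hellmann--Feynman), so only the explicit length-dependence of the quotient survives. Using the normalization $\sum_{e}l_{e}\int_{0}^{1}u_{e}^{2}\,dt=1$, the quotient rule yields
\[
\frac{\partial\lambda}{\partial l_{\tilde{e}}}\;=\;-\frac{1}{l_{\tilde{e}}^{2}}\int_{0}^{1}(u_{\tilde{e}}')^{2}\,dt\;-\;\lambda\int_{0}^{1}u_{\tilde{e}}^{2}\,dt.
\]

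Undoing the rescaling, $(u_{\tilde{e}}'(t))^{2}=l_{\tilde{e}}^{2}(f'(x))^{2}$ and $dt=dx/l_{\tilde{e}}$, so both terms combine into
\[
\frac{\partial\lambda}{\partial l_{\tilde{e}}}\;=\;-\frac{1}{l_{\tilde{e}}}\int_{\tilde{e}}\bigl[(f'(x))^{2}+\lambda f(x)^{2}\bigr]\,dx,
\]
and by Lemma \ref{lem:constant_energy} the integrand is the constant $\energy_{\tilde{e}}$, so the integral equals $l_{\tilde{e}}\energy_{\tilde{e}}$ and $\partial_{l_{\tilde{e}}}\lambda=-\energy_{\tilde{e}}$, as claimed.

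The main obstacle is the rigorous justification of the Hellmann--Feynman step, which is exactly where the simplicity of $k_{1}$ is used; at a crossing of eigenvalues $\lambda$ need only be directionally differentiable. To bypass analytic perturbation theory, a purely variational alternative is available: push $f$ forward from $\metgraph(\disgraph;\lenvec)$ to $\metgraph(\disgraph;\lenvec+s\mathbf{e}_{\tilde{e}})$ by rescaling on $\tilde{e}$, use the result (after subtracting its mean so it lands in the constraint set) as a test function $F_{s}$, expand $\mathcal{R}(F_{s})$ to second order in $s$ to obtain the one-sided bound $\lambda(s)\leq\lambda(0)-s\energy_{\tilde{e}}+O(s^{2})$, and perform the symmetric construction starting from the eigenfunction of the perturbed graph pulled back to the unperturbed one; combined with the continuity of eigenfunctions in $\lenvec$ proved in Appendix \ref{sec:appendix_eigenvalue_continuity}, these two one-sided bounds pin down $\partial_{l_{\tilde{e}}}\lambda$ without any a priori differentiability assumption.
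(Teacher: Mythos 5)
Your argument is correct, and it reaches the formula by a genuinely different route from the paper. The paper differentiates on the moving domain: it fixes the normalization on $\metgraph(s)$, differentiates the Rayleigh integral directly, integrates by parts, and uses the eigenvalue equation together with the Neumann vertex conditions to cancel everything except the boundary term at the endpoint of $\tilde{e}$, which is then identified with $-\energy_{\tilde{e}}$ by Lemma \ref{lem:constant_energy}. You instead pull everything back to a fixed reference graph, so that the length dependence sits explicitly in the two quadratic forms of the quotient $\mathcal{R}_{\lenvec}$, invoke simplicity plus Rellich--Kato for differentiability of the eigenpair, and then use stationarity of the Rayleigh quotient at the eigenfunction (Hellmann--Feynman) so that only the explicit $l_{\tilde{e}}$-dependence contributes; Lemma \ref{lem:constant_energy} then converts the resulting edge average of $(f')^{2}+\lambda f^{2}$ into $\energy_{\tilde{e}}$. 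Your route avoids the boundary-term and vertex-sum bookkeeping of the paper's computation and adapts immediately to other length-independent self-adjoint vertex conditions, at the price of leaning on abstract perturbation theory where the paper's computation is self-contained (the paper, too, cites analyticity from the literature, so this is a fair trade). Two small inaccuracies worth fixing but not affecting validity: the admissible set you describe is not literally independent of $\lenvec$, since orthogonality to constants is taken with respect to the $\lenvec$-weighted measure --- however the Hellmann--Feynman cancellation only needs the weak eigenvalue equation $a_{\lenvec}(u,v)=\lambda\, b_{\lenvec}(u,v)$ against all continuous $H^{1}$ test functions $v$, so the moving constraint is harmless; and in your fallback variational sketch, Appendix \ref{sec:appendix_eigenvalue_continuity} provides continuity of eigenvalues, not of eigenfunctions, so the two-sided pinching would need an extra compactness or direct estimate to close.
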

\begin{proof}
In this proof we use the analyticity of the eigenvalues and eigenfunctions
with respect to the edge lengths. This is established for example
in sections 3.1.2, 3.1.3 of \cite{BerKuc_quantum_graphs}. Let $s\in\R$
and let $\tilde{e}$ be an edge of $\metgraph(\disgraph;~\lenvec)$.
Denote $\lenvec\left(s\right):=\lenvec+s\vec{e}$, with $\vec{e}\in\R^{E}$
a vector with one at its $\tilde{e}^{\textrm{th}}$ position and zeros
in all other entries. We use the notation $\metgraph\left(s\right):=\metgraph(\disgraph;~\lenvec\left(s\right))$
and denote by $k_{1}\left(s\right)$ the spectral gap of $\metgraph\left(s\right)$.
By assumption, $k_{1}\left(0\right)$ is a simple eigenvalue and hence
there is a neighborhood of zero for which all $k_{1}\left(s\right)$
are simple eigenvalues. The corresponding eigenfunctions are denoted
by $f\left(s;~\cdot\right)$ and we further assume that all those
eigenfunctions have unit $L^{2}$ norm, 
\begin{equation}
\int_{\Gamma(s)}\left(f\left(s;~x\right)\right){}^{2}dx=\sum_{e=1}^{E}\int_{0}^{l_{e}(s)}\left(f\left(s;~x_{e}\right)\right){}^{2}dx_{e}=1,\label{eq:derivative_equals_energy_eq_1}
\end{equation}
where $l_{e}(s)=l_{e}+\delta_{e,\tilde{e}}s$ and $\delta_{e,\tilde{e}}$
being the Kronecker delta function.

Taking a derivative of the above with respect to $s$, 
\begin{equation}
\left(f\left(s;~l_{\tilde{e}}\left(s\right)\right)\right){}^{2}+2\sum_{e=1}^{E}\int_{0}^{l_{e}(s)}f(s;~x_{e})\frac{\partial}{\partial s}f(s;~x_{e})dx_{e}=0.\label{eq:derivative_equals_energy_eq_2}
\end{equation}

In addition, evaluating the Rayleigh quotient of $f$, 
\begin{equation}
k_{1}\left(s\right){}^{2}=\ray\left[f\left(s;\cdot\right)\right]=\sum_{e=1}^{E}\int_{0}^{l_{e}(s)}\left(\frac{\partial}{\partial x_{e}}f\left(s;~x_{e}\right)\right)^{2}dx_{e},\label{eq:derivative_equals_energy_eq_3}
\end{equation}
using that $f\left(s;\cdot\right)$ has unit norm. Differentiating
this with respect to $s$ gives 
\begin{equation}
\frac{d}{ds}\left(k_{1}\left(s\right){}^{2}\right)=\left(\frac{\partial}{\partial x_{\tilde{e}}}f\left(s;~l_{\tilde{e}}\left(s\right)\right)\right)^{2}+2\sum_{e=1}^{E}\int_{0}^{l_{e}(s)}\frac{\partial}{\partial x_{e}}f\left(s;~x_{e}\right)\frac{\partial^{2}}{\partial s\partial x_{e}}f\left(s;~x_{e}\right)dx_{e}.\label{eq:derivative_equals_energy_eq_4}
\end{equation}
Integrating by parts in the right hand side and using the eigenvalue
equation, we get for each term in the sum above 
\begin{multline}
\int_{0}^{l_{e}(s)}\frac{\partial}{\partial x_{e}}f\left(s;~x_{e}\right)\frac{\partial^{2}}{\partial s\partial x_{e}}f\left(s;~x_{e}\right)dx_{e}=\\
~~~~~~~~\frac{\partial}{\partial x_{e}}f\left(s;~l_{e}\left(s\right)\right)\left(\frac{\partial}{\partial s}f\right)\left(s;~l_{e}(s)\right)-\frac{\partial}{\partial x_{e}}f\left(s;~0\right)\frac{\partial}{\partial s}f\left(s;~0\right)\\
+k_{1}(s)^{2}\int_{0}^{l_{e}(s)}f\left(s;~x_{e}\right)\frac{\partial}{\partial s}f\left(s;~x_{e}\right)dx_{e}\\
=\left.\frac{\partial f}{\partial x_{e}}\left(\frac{df}{ds}-\delta_{e,\tilde{e}}\frac{\partial f}{\partial x_{e}}\right)\right|_{\left(s;~l_{e}(s)\right)}-\left.\frac{\partial f}{\partial x_{e}}\frac{df}{ds}\right|_{\left(s;~0\right)}+k_{1}(s)^{2}\int_{0}^{l_{e}(s)}\left.f\frac{\partial f}{\partial s}\right|_{\left(s;~x_{e}\right)}dx_{e},\label{eq:derivative_equals_energy_eq_5}
\end{multline}
where the partial derivatives with respect to $s$ are rewritten in
terms of complete derivatives.

Summing the first two terms of the right hand side of \eqref{eq:derivative_equals_energy_eq_5}
over all edges and rewriting it as a sum over all graph vertices we
get
\begin{align}
\sum_{e=1}^{E}\left\{ \left.\frac{\partial f}{\partial x_{e}}f\left(\frac{df}{ds}-\delta_{e,\tilde{e}}\frac{\partial f}{\partial x_{e}}\right)\right|_{\left(s;~l_{e}(s)\right)}-\left.\frac{\partial f}{\partial x_{e}}\frac{df}{ds}\right|_{\left(s;~0\right)}\right\} \nonumber \\
=\sum_{v}\left.\left(\sum_{e\sim v}\frac{\partial f}{\partial x_{e}}\right)\frac{df}{ds}\right|_{\left(s;~v\right)} & -\left(\left.\frac{\partial f}{\partial x_{\tilde{e}}}\right|_{\left(s;~l_{\tilde{e}}(s)\right)}\right)^{2}\nonumber \\
= & -\left(\left.\frac{\partial f}{\partial x_{\tilde{e}}}\right|_{\left(s;~l_{\tilde{e}}(s)\right)}\right)^{2},\label{eq:derivative_equals_energy_eq_7}
\end{align}
where the sum $e\sim v$ above is taken over all edges adjacent to
a chosen vertex $v$, the derivatives $\frac{\partial}{\partial x_{e}}$
in this sum are all taken towards the vertex $v$ and $\sum_{e\sim v}\frac{\partial}{\partial x_{e}}f\left(s;~v\right)=0$,
as $f$ satisfies Neumann conditions at $v$.

Plugging \eqref{eq:derivative_equals_energy_eq_5}, \eqref{eq:derivative_equals_energy_eq_7}
and \eqref{eq:derivative_equals_energy_eq_2} in equation \eqref{eq:derivative_equals_energy_eq_4}
we get
\[
\frac{d}{ds}\left(k_{1}\left(s\right){}^{2}\right)=-\left(\left.\frac{\partial f}{\partial x_{\tilde{e}}}\right|_{\left(s;~l_{\tilde{e}}\left(s\right)\right)}\right)^{2}-\left(k_{1}\left(s\right)\right)^{2}\left(\left.f\right|_{\left(s;~l_{\tilde{e}}\left(s\right)\right)}\right){}^{2}=-\energy_{\tilde{e}},
\]
which finishes the proof once $s=0$ is taken.
\end{proof}
We note that the derivative of an eigenvalue with respect to an edge
length is derived in \cite{DelRoss_amp16} (theorem 4.4) for the general
case of the $p$-Laplacian on a graph. In the case of the $2$-Laplacian,
using Lemma \ref{lem:constant_energy} shows that the integral expression
obtained in \cite{DelRoss_amp16} simplifies to equal $-\energy_{\tilde{e}}$.

The lemma above provides a practical tool for increasing the spectral
gap once the corresponding eigenfunction is known. In order to do
so, one should increase the length of edges with lower energy on the
expense of shortening those with higher energy. In particular, focusing
on a particular vertex, one should increase the lengths of the edges
for which the eigenfunction derivative is the lowest and vice versa.
This method is useful as long as the spectral gap is not a critical
point in the edge length space, $\len_{\disgraph}$. An equilateral
star with an odd number of edges illustrates the importance of simplicity:
though we cannot increase the spectral gap, no eigenfunction on this
graph will have equal energy at all edges. 

The next lemma provides a necessary and sufficient condition for existence
of a critical point in the edge length space, $\len_{\disgraph}$.
\begin{lem}
\label{lem:critical_point_derivative_at_vertices} Let $\disgraph$
be a discrete graph and let $\lenvec^{*}\in\len_{\disgraph}$. Assume
that the spectral gap, $k_{1}\left[\metgraph(\disgraph;~\lenvec^{*})\right]$
is a simple eigenvalue and let $f$ be the corresponding eigenfunction.
The function $k_{1}\left[\metgraph(\disgraph;~\lenvec)\right]$ has
a critical value at $\lenvec=\lenvec^{*}$ if and only if both conditions
below are satisfied 
\end{lem}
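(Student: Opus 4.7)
The plan is to apply a Lagrange multiplier argument to the constrained optimization problem for $k_1^2$ on $\len_{\disgraph}$, which is a smooth manifold inside $\R^E$ cut out by the single constraint $\sum_e l_e = 1$. Since the spectral gap is assumed simple at $\lenvec^*$, Lemma \ref{lem:Derivative_on_edge_equals_energy} applies in a neighborhood of $\lenvec^*$ and gives
\[
\frac{\partial}{\partial l_e}\left(k_1^2\right) = -\energy_e
\]
for every edge $e$. Admissible tangent directions on the constraint surface are the vectors $\vec w \in \R^E$ with $\sum_e w_e = 0$, and the directional derivative of $k_1^2$ along such $\vec w$ equals $-\sum_e \energy_e w_e$. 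Criticality of $k_1^2$ at $\lenvec^*$ is thus equivalent to the existence of a constant $c$ such that $\energy_e = c$ for every edge $e$.

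Next, I would convert this global equality of edge-energies into the vertex-by-vertex conditions that presumably appear in the statement. By Lemma \ref{lem:constant_energy}, $\energy_e$ is constant along $e$, so it may be evaluated at any vertex $v$ of $e$, yielding $\energy_e = \left(\partial_e f(v)\right)^2 + k_1^2 f(v)^2$. For two edges $e_1,e_2\in\E_v$, continuity of $f$ at $v$ makes the term $k_1^2 f(v)^2$ common, so $\energy_{e_1}=\energy_{e_2}$ at $v$ if and only if $\left|\partial_{e_1} f(v)\right| = \left|\partial_{e_2} f(v)\right|$. Connectedness of the graph then ensures that if this equal-magnitude condition holds at \emph{every} vertex, the energy is actually the same constant on every edge. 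This is the expected form of the first condition; any residual condition would serve to promote the local equalities into a consistent global value of $\energy$, for instance by tying together the values of $f$ across different vertices.

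The forward direction (necessity) is essentially immediate from the computation above. The reverse direction (sufficiency) requires showing that once the conditions hold, every admissible variation $\vec w$ gives $\sum_e \energy_e w_e = 0$, which follows since all $\energy_e$ coincide and $\sum_e w_e = 0$. The main obstacle is purely expository: organizing the bookkeeping so that the global Lagrange condition is faithfully repackaged into the two vertex-local conditions of the lemma, and handling the mild subtlety that at vertices where $f(v) = 0$ the derivative-magnitude condition alone already implies edge-energy equality, whereas at vertices with $f(v)\neq 0$ the Neumann condition \eqref{eq:Neumann_deriv_conditions} constrains the signs of the derivatives and not just their magnitudes.
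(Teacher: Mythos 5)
Your core argument is exactly the paper's: use Lemma \ref{lem:Derivative_on_edge_equals_energy} to get $\frac{\partial}{\partial l_{e}}\left(k_{1}^{2}\right)=-\energy_{e}$, observe that on the constraint surface $\sum_{e}l_{e}=1$ criticality is equivalent to all edge energies being equal, and then use Lemma \ref{lem:constant_energy} together with continuity of $f$ at a vertex $v$ to translate $\energy_{e_{1}}=\energy_{e_{2}}$ for $e_{1},e_{2}\in\E_{v}$ into $\left|\frac{\partial}{\partial x_{e_{1}}}f\left(v\right)\right|=\left|\frac{\partial}{\partial x_{e_{2}}}f\left(v\right)\right|$. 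That part is correct and coincides with the paper's proof (including the implicit use of $k_{1}>0$ to pass between criticality of $k_{1}$ and of $k_{1}^{2}$, and connectedness to globalize the vertexwise equalities).

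What is missing is the step that produces the lemma's actual pair of conditions, and your guess about the nature of the ``residual condition'' points in the wrong direction. There is no extra condition needed to tie energies together across vertices---connectedness already does that, as you note---so the remaining content is not global consistency. It is the odd/even-degree dichotomy: at a vertex $v$ of \emph{odd} degree, the equal-magnitude condition combined with the Neumann condition \eqref{eq:Neumann_deriv_conditions} (an odd number of terms $\pm a$ summing to zero) forces $a=0$, i.e.\ \emph{all} derivatives of $f$ vanish at $v$; this is the lemma's condition (1), while condition (2) is the equal-magnitude statement at even-degree vertices. Conversely, condition (1) trivially implies equal magnitudes at odd-degree vertices, so the two stated conditions are equivalent to your ``equal magnitude at every vertex.'' Also, your closing remark misplaces the subtlety: the dichotomy that matters is odd versus even degree, not $f(v)=0$ versus $f(v)\neq0$ (continuity of $f$ at $v$ makes the term $k_{1}^{2}f(v)^{2}$ common in either case). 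Adding the one-line parity argument closes the gap and recovers the lemma exactly as proved in the paper.
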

\begin{enumerate}
\item \label{enu:lem_critical_point_odd_degree} The derivative of $f$
vanishes at all vertices of odd degree.
\item \label{enu:lem_critical_point_even_degree} The derivative of $f$
satisfy, $\left|\frac{\partial}{\partial x_{e_{1}}}f\left(v\right)\right|=\left|\frac{\partial}{\partial x_{e_{2}}}f\left(v\right)\right|$,
for all edges $e_{1},~e_{2}$ adjacent to a vertex of even degree,
$v$.
\end{enumerate}
\begin{proof}
We first observe that positivity of the spectral gap yields that $k_{1}\left[\metgraph(\disgraph;~\lenvec)\right]$
has a critical point at $\lenvec=\lenvec^{*}$ if and only if $\left(k_{1}\left[\metgraph(\disgraph;~\lenvec)\right]\right)^{2}$
has a critical point there. From Lemma \ref{lem:Derivative_on_edge_equals_energy}
we deduce that a critical point occurs if and only if the corresponding
eigenfunction has equal energies on all graph edges. The last deduction
comes as this is a critical point under the constraint $\sum_{e}l_{e}=1$.
Let $v$ be a graph vertex and $e_{1},~e_{2}$ two edges adjacent
to it. Since $f$ is continuous (i.e., single valued) at $v$ we conclude
\[
\energy_{e}=\energy_{\tilde{e}}~~\Leftrightarrow~~\left(\frac{\partial}{\partial x_{e}}f\left(v\right)\right)^{2}=\left(\frac{\partial}{\partial x_{\tilde{e}}}f\left(v\right)\right)^{2},
\]
which proves the second claim of the lemma. The first claim follows
since the Neumann condition gives that the sum of all derivatives
at $v$ vanishes.
\end{proof}
Obviously, graphs whose spectral gap is a critical point in the space
$\len_{\disgraph}$ serve as good candidates for maximizers. The next
lemma characterizes those graphs and their corresponding eigenfunctions.
\begin{lem}
\label{lem:path_decomposition} Let $\disgraph$ be a discrete graph,
$\lenvec^{*}\in\len_{\disgraph}$ and denote $\metgraph:=\metgraph(\disgraph;~\lenvec^{*})$.
Assume that $k:=k_{1}\left[\metgraph\right]$ is a critical value
and let $f$ be the corresponding eigenfunction. Then we have the
following edge-disjoint decomposition 
\begin{equation}
\metgraph=\bigcup_{i=1}^{P}\mathcal{P}_{i},\label{eq:path_decomposition}
\end{equation}
where
\end{lem}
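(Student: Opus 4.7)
The plan is to read off the structure of the paths $\mathcal{P}_i$ directly from the two preceding lemmas. First I would combine the criticality hypothesis with Lemma \ref{lem:Derivative_on_edge_equals_energy} and Lemma \ref{lem:constant_energy}: since the gradient of $k_{1}^{2}$ vanishes only when the energies $\energy_{e}$ are all equal (under the constraint $\sum_{e} l_{e}=1$), the energy $\energy$ takes a common positive value on every edge of $\metgraph$. This forces $f$ to equal $A\cos(k\,x+\phi_{e})$ on each edge, with the \emph{same} amplitude $A=\sqrt{\energy}/k$ but an edge-dependent phase.

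Next I would use Lemma \ref{lem:critical_point_derivative_at_vertices} to analyse the vertices. At a vertex $v$ of odd degree, every one-sided derivative of $f$ vanishes, so $f$ attains a local extremum (with value $\pm A$) at $v$ along every incident edge. At a vertex $v$ of even degree $2d$, all one-sided derivatives share a common absolute value $\alpha\ge 0$, and the Neumann condition forces exactly $d$ of them to equal $+\alpha$ and $d$ to equal $-\alpha$ (the case $\alpha=0$ occurs only when $v$ is an extremum from every side, in which case we will treat $v$ as a terminal vertex too).

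I would then build the decomposition \eqref{eq:path_decomposition} by a matching procedure. At each even-degree vertex with $\alpha>0$, fix any bijection between the $d$ incident edges carrying derivative $+\alpha$ and the $d$ carrying derivative $-\alpha$. Declare the two edges in a matched pair to be glued through $v$; continuity of $f$ and of $f'$ across such a pairing is automatic, so $f$ extends across the pairing as a single translate of $A\cos(k\,x+\phi)$. Starting from an arbitrary edge and extending in both directions by following these matchings, the walk either terminates at an odd-degree vertex (or an even-degree vertex with $\alpha=0$), where $f$ has an extremum and no partner edge is available, or it closes up into a cycle. The resulting object $\mathcal{P}_i$ is therefore either a simple arc whose endpoints are extrema of $f$, or a closed loop, and on it $f$ coincides with a single sinusoid $A\cos(k\,x+\phi_{i})$.

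Finally I would verify that $\{\mathcal{P}_i\}_{i=1}^{P}$ is an edge-disjoint decomposition of $\metgraph$: every edge belongs to exactly one path because each edge is used at most once at each of its endpoints (the matching at even-degree vertices is a bijection, and an odd-degree endpoint is always terminal), and every edge is contained in some $\mathcal{P}_{i}$ because the walk through it can always be prolonged in both directions until it terminates. The main subtlety I expect is the non-canonical nature of the matchings at even-degree vertices of degree $\ge 4$: different choices of the bijection yield different path decompositions, so the statement of the lemma must be understood to assert existence of \emph{some} such decomposition rather than uniqueness, and the proof must check that the specific properties asserted of the $\mathcal{P}_{i}$ (likely: each is a simple arc or a cycle, $f$ is a pure cosine on it, and the endpoints lie at odd-degree vertices of $\metgraph$) hold for every choice of matching.
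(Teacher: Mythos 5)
Your argument is essentially the paper's own proof: both rest on Lemma \ref{lem:critical_point_derivative_at_vertices} (equal edge energies, vanishing derivatives at odd-degree vertices, equal moduli with balanced signs at even-degree vertices) and both decompose $\metgraph$ into edge-disjoint trails by pairing, at each even-degree vertex, an edge with one of opposite outgoing derivative, so that $f$ continues as a single sinusoid of common amplitude along each trail and restricts to a Neumann eigenfunction on each piece. The differences are presentational: you fix a global matching of the $+\alpha$/$-\alpha$ edge-ends once and for all and read off the trail decomposition, while the paper builds the trails greedily (first from odd-degree vertices, then closing Eulerian cycles); your variant works, as does your convention of stopping a trail at an even-degree vertex where all derivatives vanish. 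Two points to tighten: (i) a trail may revisit vertices, so each $\mathcal{P}_i$ is in general not a simple arc or simple cycle as a subgraph of $\metgraph$, but a subgraph possessing an Eulerian path or cycle (an interval or circle only after your gluing), which is exactly how the lemma phrases claim \eqref{enu:thm_path_decomposition_1}; (ii) the lemma's quantitative claims \eqref{enu:thm_path_decomposition_4} and \eqref{enu:thm_path_decomposition_5} are absent from your write-up, though they follow in one line from what you have: on each open trail $f=A\cos(kx+\phi_i)$ with vanishing derivative at both free endpoints (periodicity for closed trails), so $kL_i\in\pi\Z$, and $kL_i/\pi$ equals the number of zeros along the trail, which is precisely the half-degree count $\mu_i$ on $\mathcal{P}_i$; summing over $i$ with $\sum_i L_i=1$ gives $k=\pi\mu$. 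With those two sentences added, your proof matches the paper's.
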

\begin{enumerate}
\item \label{enu:thm_path_decomposition_1} All $\mathcal{P}_{i}$'s are
graphs which possess an Eulerian path or an Eulerian cycle. Namely,
for each $\mathcal{P}_{i}$ there is a path (either open path or a
cycle), which visits each edge exactly once.
\item \label{enu:thm_path_decomposition_2} Different $\mathcal{P}_{i}$'s
may share only vertices, but not edges.
\item \label{enu:thm_path_decomposition_3} $\left.f\right|_{\mathcal{P}_{i}}$
is a Neumann eigenfunction of $\mathcal{P}_{i}$, whose eigenvalue
equals $k$.
\item \label{enu:thm_path_decomposition_4} Denote by $\mu_{i}$ the number
of zeros of $\left.f\right|_{\mathcal{P}_{i}}$, where each zero at
a vertex of $\mathcal{P}_{i}$ is counted as half the degree of this
vertex in $\mathcal{P}_{i}$. Denoting by $L_{i}$ the metric length
of $\mathcal{P}_{i}$, the following holds 
\[
kL_{i}=\pi\mu_{i}.
\]
\item \label{enu:thm_path_decomposition_5} In addition, 
\begin{equation}
k=\pi\mu,\label{eq:lem_path_decomposition_1}
\end{equation}
where $\mu$ is the number of zeros of $f$ on $\Gamma$, where each
zero at a vertex of $\metgraph$ is counted as half the degree of
this vertex in $\metgraph$.
\end{enumerate}
\begin{proof}
We use the claims of Lemma \ref{lem:critical_point_derivative_at_vertices}
to describe a recursive process, which produces this path decomposition.

\begin{itemize}
\item \begin{flushleft}
Assume first that $\metgraph$ has at least one vertex of odd degree,
$v_{0}$. Take $v_{0}$ to be the starting point of a path $\mathcal{P}$
and add to $\mathcal{P}$ any edge, $e_{0},$ which is adjacent to
$v_{0}$ and the vertex connected at its other end, which we denote
by $v_{1}$. If $v_{1}$ is of even degree we seek for an edge $e_{1}$
connected to $v_{1}$ such that $\left.f'\right|_{e_{1}}\left(v_{1}\right)=-\left.f'\right|_{e_{0}}\left(v_{1}\right)$
(both derivatives are outgoing from $v_{1}$). Such edge exists by
lemma \ref{lem:critical_point_derivative_at_vertices},\eqref{enu:lem_critical_point_even_degree}
and as the sum of derivatives of $f$ at $v$ vanish. Add $e_{1}$
and its other endpoint, $v_{2}$ to $\mathcal{P}$ and repeat the
step above until reaching a vertex of odd degree. Once an odd degree
vertex is reached, we end the construction of $\mathcal{P}$ and continue
recursively to form the next path on $\metgraph\backslash\mathcal{P}$.
Note that a certain vertex may be reached more than once during $\mathcal{P}'s$
construction. Such a vertex would appear in $\mathcal{P}$ only once,
but with a degree greater than two. This process of path constructions
continues until we exhaust the whole of $\metgraph$ or alternatively,
until $\metgraph$ does not have any more odd degree vertices, at
which point we continue with performing the next stage.
\par\end{flushleft}
\item \begin{flushleft}
If $\metgraph$ has no vertex of odd degree, the construction of $\mathcal{P}$
is as follows. We choose an arbitrary vertex, $v_{0}$ as the starting
point of $\mathcal{P}$ and choose an arbitrary edge, $e_{0}$ which
is connected to $v_{0}$ and add it to $\mathcal{P}$ as well, together
with its other endpoint, $v_{1}$. Now, just as we did in the first
stage, we seek for an edge $e_{1}$ connected to $v_{1}$ such that
$\left.f'\right|_{e_{1}}\left(v_{1}\right)=-\left.f'\right|_{e_{0}}\left(v_{1}\right)$.
We keep constructing $\mathcal{P}$ as above, keeping in mind that
all vertices are of even degree. At some point we reach again the
vertex $v_{0}$, arriving from some edge denoted $e_{n}$. If $\left.f'\right|_{e_{0}}\left(v_{0}\right)=-\left.f'\right|_{e_{n}}\left(v_{0}\right)$
(both derivatives are outgoing from $v_{0}$) then we end the construction
of $\mathcal{P}$. Otherwise, continue the construction of $\mathcal{P}$
until the condition above is satisfied. This will indeed occur, as
the graph is finite and $f$ satisfies Neumann conditions on $\metgraph$.
Once we finish constructing of $\mathcal{P}$ we continue recursively
to form the next path on $\metgraph\backslash\mathcal{P}$. 
\par\end{flushleft}

\end{itemize}
By construction, each $\mathcal{P}_{i}$ either possesses an Eulerian
path (first stage above) or an Eulerian cycle (second stage) and $\left.f\right|_{\mathcal{P}_{i}}$
satisfies Neumann conditions on $\mathcal{P}_{i}$. Thus claims \eqref{enu:thm_path_decomposition_1}
and \eqref{enu:thm_path_decomposition_3} are valid. Also, as each
subgraph $\mathcal{P}_{i}$ is removed from $\metgraph$ once constructed,
it is clear that $\forall i\neq j,~~\mathcal{P}_{i}\cap\mathcal{P}_{j}$
may contain only vertices, which is stated in claim \eqref{enu:thm_path_decomposition_2}.
A subgraph $\mathcal{P}_{i}$ of the first stage of the construction,
where $\metgraph$ has some odd degree vertices, possesses an Eulerian
path and may be identified with an interval $\left[0,L_{i}\right]$,
where $L_{i}$ is the metric length of $\mathcal{P}_{i}$. Also by
way of construction, $\left.f\right|_{\left[0,L_{i}\right]}$ is a
Neumann eigenfunction (notice that this is more restrictive than stating
that $\left.f\right|_{\mathcal{P}_{i}}$ is a Neumann eigenfunction,
because of possible self-crossings). Hence $\left.f\right|_{\left[0,L_{i}\right]}=\cos\left(\frac{\pi}{L_{i}}\mu_{i}x\right)$
for some positive integer, $\mu_{i}$. Clearly, $\mu_{i}$ equals
the number of zeros of $\left.f\right|_{\left[0,L_{i}\right]}$. Furthermore,
$\mu_{i}$ also equals the number of zeros of $\left.f\right|_{\mathcal{P}_{i}}$,
where a zero at a vertex is counted as many times as half the degree
of that vertex in $\mathcal{P}_{i}$. A subgraph $\mathcal{P}_{i}$
of the second construction stage, where all $\metgraph$ vertices
are of even degrees possesses an Eulerian cycle and may be identified
with an interval $\left[0,L_{i}\right]$, where $L_{i}$ is the metric
length of $\mathcal{P}_{i}$. Also by way of construction, $\left.f\right|_{\left[0,L_{i}\right]}$
is a Neumann eigenfunction which satisfies periodic boundary conditions.
Hence $\left.f\right|_{\left[0,L_{i}\right]}=\cos\left(\frac{\pi}{L_{i}}\mu_{i}x\right)$
for some positive even integer, $\mu_{i}$. As before, $\mu_{i}$
equals the number of zeros of $\left.f\right|_{\mathcal{P}_{i}}$,
counted according to vertex degrees. In both cases, we have that $k=\frac{\pi}{L_{i}}\mu_{i}$,
which shows claim \eqref{enu:thm_path_decomposition_4} of the theorem.

Finally, claim \eqref{enu:thm_path_decomposition_5} is deduced from
claim \eqref{enu:thm_path_decomposition_4}, by summing over all $\mathcal{P}_{i}$'s.
\end{proof}
Having characterized local critical points, we wish to connect those
to supremizers.
\begin{lem}
\label{lem:supremal_and_simple_is_critical} Let $\metgraph\left(\disgraph;~\lenvec\right)$
be a supremizer of a discrete graph $\mathcal{G}$, such that its
spectral gap $k_{1}\left[\metgraph(\disgraph;~\lenvec)\right]$ is
simple. Then, there exists a discrete graph $\disgraph^{*}$ and positive
edge lengths $\lenvec^{*}\in\len_{\disgraph^{*}}$ such that $\metgraph(\disgraph;~\lenvec)=\metgraph(\disgraph^{*};~\lenvec^{*})$
and the spectral gap $k_{1}\left[\metgraph(\disgraph^{*};~\lenvec^{*}\right]$
is a critical value. 
\end{lem}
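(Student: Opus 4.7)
The plan is to reduce to a setting where the edge-length vector lies in the open length simplex of some appropriate discrete graph, and then invoke the differentiability granted by simplicity of the spectral gap to force a critical point.

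First I would separate the coordinates of $\lenvec$ into the vanishing and the strictly positive ones. Let $S\subseteq\E$ be the set of edges on which $\lenvec$ vanishes; possibly $S=\emptyset$, in which case the argument below collapses to the already-interior case. Define $\disgraph^{*}$ by removing each edge in $S$ from $\disgraph$ and identifying its two endpoints into a single vertex, and let $\lenvec^{*}$ record the remaining (strictly positive) entries of $\lenvec$. Since $\sum l_{e}^{*}=\sum l_{e}=1$, we have $\lenvec^{*}\in\len_{\disgraph^{*}}$, and Definition \ref{def:discrete_and_metric_graph} gives an identification of metric graphs $\metgraph(\disgraph;~\lenvec)=\metgraph(\disgraph^{*};~\lenvec^{*})$. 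In particular the spectral gap is unchanged and remains a simple eigenvalue of $\metgraph(\disgraph^{*};~\lenvec^{*})$.

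Next I would show that $\metgraph(\disgraph^{*};~\lenvec^{*})$ is itself a supremizer of $\disgraph^{*}$. Given any $\mathbf{m}\in\lencl_{\disgraph^{*}}$, lift it to $\widetilde{\mathbf{m}}\in\lencl_{\disgraph}$ by placing zeros on the coordinates indexed by $S$ and copying $\mathbf{m}$ on the remaining ones. The same identification from Definition \ref{def:discrete_and_metric_graph} produces $\metgraph(\disgraph^{*};~\mathbf{m})=\metgraph(\disgraph;~\widetilde{\mathbf{m}})$, so the supremizer property of $\metgraph(\disgraph;~\lenvec)$ over $\lencl_{\disgraph}$ translates directly into
\[
k_{1}[\metgraph(\disgraph^{*};~\mathbf{m})]\;=\;k_{1}[\metgraph(\disgraph;~\widetilde{\mathbf{m}})]\;\leq\;k_{1}[\metgraph(\disgraph;~\lenvec)]\;=\;k_{1}[\metgraph(\disgraph^{*};~\lenvec^{*})].
\]

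Finally, since $\lenvec^{*}$ lies in the open set $\len_{\disgraph^{*}}$ and the spectral gap is simple there, the proof of Lemma \ref{lem:Derivative_on_edge_equals_energy} (via the analyticity of eigenvalues with respect to edge lengths) makes $\mathbf{m}\mapsto k_{1}[\metgraph(\disgraph^{*};~\mathbf{m})]$ differentiable in a neighborhood of $\lenvec^{*}$. The supremizer property from the preceding step then exhibits $\lenvec^{*}$ as a local maximum of this function on the constraint surface $\sum_{e}l_{e}=1$, hence a constrained critical point, so $k_{1}[\metgraph(\disgraph^{*};~\lenvec^{*})]$ is the critical value we wanted. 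The argument has no real analytical difficulty; the one subtlety worth attention is the identification $\metgraph(\disgraph;~\widetilde{\mathbf{m}})=\metgraph(\disgraph^{*};~\mathbf{m})$, which is what propagates the supremizer property across the contraction and puts us in the interior, differentiable regime where the critical-point conclusion is automatic.
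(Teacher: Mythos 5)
Your proposal is correct and follows essentially the same route as the paper's proof: contract the zero-length edges to form $\disgraph^{*}$, transfer the supremizer property to $\metgraph(\disgraph^{*};~\lenvec^{*})$ (so it is in fact a maximizer, since $\lenvec^{*}$ is interior), and then use simplicity of $k_{1}$ to get differentiability and hence a constrained critical point. Your explicit lifting argument just fills in a step the paper states without elaboration.
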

\begin{proof}
Start by forming a new discrete graph $\disgraph^{*}$ by contracting
the edges of $\disgraph$ which correspond to the vanishing values
of $\lenvec$, or setting $\disgraph^{*}=\disgraph$ if all entries
of $\lenvec$ are strictly positive. We get that there exists $\lenvec^{*}\in\len_{\disgraph^{*}}$
such that $\metgraph(\disgraph;~\lenvec)=\metgraph(\disgraph^{*};~\lenvec^{*})$.
In effect, $\lenvec^{*}$ entries are exactly the non-vanishing entries
of $\lenvec$. Since $\metgraph(\disgraph;~\lenvec)$ is a supremizer
of $\disgraph$ we get that $\metgraph(\disgraph^{*};~\lenvec^{*})$
is a supremizer of $\disgraph^{*}$. Furthermore, $\metgraph(\disgraph^{*};~\lenvec^{*})$
is even a maximizer of $\disgraph^{*}$ as all of $\lenvec^{*}$ entries
are positive. Since $k_{1}\left[\metgraph(\disgraph^{*};~\lenvec^{*})\right]$
is a simple eigenvalue, it is analytic with respect to edge lengths
and therefore must be a critical value. 
\end{proof}
Having Lemma \ref{lem:supremal_and_simple_is_critical} allows to
conclude that all the claims in lemmata \ref{lem:critical_point_derivative_at_vertices}
and \ref{lem:path_decomposition} hold for supremizers whose spectral
gaps are simple. We use this in proving Theorem \ref{thm:simple_supremum}.
\begin{proof}[Proof of Theorem \ref{thm:simple_supremum}]
 We start by noting that the path decomposition of Lemma \ref{lem:path_decomposition}
is valid under the assumptions of the theorem. Denote for brevity
$\metgraph:=\metgraph(\disgraph;~\lenvec)$ and $k:=k_{1}\left[\metgraph\right]$,
with corresponding eigenfunction $f$. Denote $\metgraph_{+}:=\left\{ \left.x\in\metgraph\right|f\left(x\right)>0\right\} $,~$\metgraph_{-}:=\left\{ \left.x\in\metgraph\right|f\left(x\right)<0\right\} $
and denote by $\beta_{+},\beta_{-}$ their corresponding first Betti
numbers. The connected components of $\metgraph_{+},\metgraph_{-}$
are called the nodal domains of $f$. As $k$ is the second eigenvalue
of $\metgraph$, we deduce from the Courant nodal theorem and the
simplicity of $k$ that $f$ has only two nodal domains (see \cite{Cou_ngwgmp23}
for the original proof of Courant, or \cite{GnuSmiWeb_wrm04,Ber_cmp08}
for its adaptation for graphs). Hence, the sets $\metgraph_{+}$ and
$\metgraph_{-}$ are connected (notice that $\metgraph_{\pm}$ are
not exactly subgraphs, as they do not include the vertices at which
$f$ vanishes).

Next, note that $f$ cannot completely vanish on an edge. Otherwise,
the energy of that edge equals to zero and as $k$ is a critical value,
by the proof of Lemma \ref{lem:critical_point_derivative_at_vertices}
all edge energies are equal to zero which leads to $f\equiv0$. Furthermore,
we show that $f$ cannot vanish more than once on the same edge, including
its endpoints. Assume by contradiction that there exists an edge,
$e=\left[u,v\right]$ on which $f$ vanishes at least twice. As $f$
has only two nodal domains, it can vanish at most twice on $e$. For
each zero of $f$ located on the interior of $e$, add a dummy vertex
of degree two at the position of this zero. Those two zeros now coincide
with two vertices of $\metgraph(\disgraph;~\lenvec)$, which we denote
by $v_{1},v_{2}$ and further denote the degrees of those vertices
by $d_{1},d_{2}$. We note that both $d_{1}$ and $d_{2}$ are even
and in particular not smaller than two. This holds as a zero at an
odd degree vertex implies by Lemma \ref{lem:critical_point_derivative_at_vertices}
that the energy at this vertex vanishes as well. As $k$ is a critical
value, all energies are equal throughout the graph, which implies
$f\equiv0$. From Lemma \ref{lem:path_decomposition}, \eqref{enu:thm_path_decomposition_5}
we get $k=\frac{1}{2}\left(d_{1}+d_{2}\right)\pi$. We modify $\metgraph$
by contracting the edge segment connecting between $v_{1}$ and $v_{2}$,
turning them into a single vertex which we denote by $v_{0}$. We
get that in the new graph, the vertex $v_{0}$ has a degree $d_{0}=d_{1}+d_{2}-2$.
This new graph is connected and we modify it by contracting all edges
except those $d_{0}$ edges connected to $v_{0}$. Doing so, we obtain
a mandarin graph with $d_{1}+d_{2}-2$ edges. By turning the mandarin
into an equilateral mandarin it achieves a spectral gap of $\left(d_{1}+d_{2}-2\right)\pi$
(see Example \ref{exa:mandarin}). As $\metgraph$ is a supremizer
we conclude $\left(d_{1}+d_{2}-2\right)\pi\leq\frac{1}{2}\left(d_{1}+d_{2}\right)\pi$,
so that $d_{1}+d_{2}\leq4$. Since we have seen above that $d_{1}\geq2,~d_{2}\geq2$
we deduce $d_{1}=d_{2}=2$. By the path decomposition in Lemma \ref{lem:path_decomposition},
each path must contain at least one zero of $f$. Hence only a single
path is possible in the decomposition and $\metgraph$ must be a single
cycle graph. We arrive at a contradiction, as the spectral gap of
this graph is not simple. Hence $f$ vanishes at most once on each
edge, which includes both the interior of the edge and its two endpoints.

If $f$ vanishes at points which are not vertices, we turn those points
into dummy vertices of degree two. Each zero of $f$ is now located
at some vertex of $\metgraph$. We introduce the following notation.
Denote by $V_{+}$ ($V_{-}$) the number of vertices at which $f$
is positive (negative), which is just the number of vertices of $\metgraph_{+}$
($\metgraph_{-}$). Denote by $V_{0}$ the number of vertices at which
$f$ vanishes (this includes the additional dummy vertices we added).
Similarly, denote by $E_{++}$ ($E_{--}$) the number of edges which
connect two vertices from $V_{+}$ ($V_{-}$). Note that $f$ does
not vanish at all on those edges. Further denote by $E_{0+}$ ($E_{0-}$)
the number of edges which connect a vertex of $V_{0}$ to a vertex
of $V_{+}$ ($V_{-}$). Note that due to the additional dummy vertices
there are no edges which connect a positive vertex to a negative one.
With those notations, the graph's first Betti number is 
\begin{align}
\beta & =E-V+1\nonumber \\
 & =\left(E_{++}+E_{--}+E_{0+}+E_{0-}\right)-\left(V_{+}+V_{-}+V_{0}\right)+1\nonumber \\
 & =\left(E_{++}-V_{+}+1\right)+\left(E_{--}-V_{-}+1\right)+\left(E_{0+}+E_{0-}-V_{0}\right)-1\nonumber \\
 & =\beta_{+}+\beta_{-}+\left(E_{0+}+E_{0-}-V_{0}\right)-1,\label{eq:thm_simple_supremum_1}
\end{align}
where $\beta_{+}:=E_{++}-V_{+}+1$ is the first Betti number of $\metgraph_{+}$
and similarly for $\beta_{-}:=E_{--}-V_{-}+1$ and $\metgraph_{-}$.
In addition, 
\begin{equation}
E_{0+}+E_{0-}=\sum_{v\in V_{0}}d_{v}=2V_{0}+2\delta,\label{eq:thm_simple_supremum_2}
\end{equation}
where $\delta\geq0$ is defined by the equality above. The sum above
is even by Lemma \ref{lem:critical_point_derivative_at_vertices}
and hence, $\delta$ is an integer. In addition, $\delta=0$ if and
only if $f$ does not vanish on the original vertices of $\metgraph$
(i.e., it vanishes only on the added dummy vertices which are of degree
two). The number of graph zeros, counted with their multiplicities
as in Lemma \ref{lem:path_decomposition} (namely, each zero is counted
as many times as half the degree of the corresponding vertex) is 
\begin{equation}
\mu=\frac{1}{2}\sum_{v\in V_{0}}d_{v}=E_{0+}+E_{0-}-V_{0}-\delta,\label{eq:thm_simple_supremum_3}
\end{equation}
where we used \eqref{eq:thm_simple_supremum_2}. Combining \eqref{eq:lem_path_decomposition_1},
\eqref{eq:thm_simple_supremum_1}, \eqref{eq:thm_simple_supremum_3}
we get 
\begin{equation}
k=\pi\left(\beta+1-\left(\beta_{+}+\beta_{-}\right)-\delta\right).\label{eq:thm_simple_supremum_4}
\end{equation}

Let $v$ be a vertex such that $f\left(v\right)=0$. We concluded
above such a vertex must be of even degree. Furthermore, from Lemma
\ref{lem:critical_point_derivative_at_vertices} we have that half
of $f$ derivatives at $v$ are positive and half negative. Hence,
$v$ is connected to the same number of positive values vertices as
to negative valued once. We conclude that $E_{0+}=E_{0-}$ and from
the left equalities in \eqref{eq:thm_simple_supremum_2} and \eqref{eq:thm_simple_supremum_3}
we get $\mu=E_{0-}$. Choose $\lenvec^{*}\in\lencl_{\disgraph}$ such
that all of its entries equal zero except those which correspond to
the $E_{0-}$ edges, which we set to be equal $\nicefrac{1}{E_{0-}}$.
We get that $\metgraph\left(\disgraph;~\lenvec^{*}\right)$ is an
equilateral mandarin graph whose spectral gap equals $\pi E_{0-}=\pi\mu$,
which finishes the proof of the theorem.
\end{proof}
The proof above yields the following.
\begin{cor}
\label{cor:simple_supremum} Let $\disgraph$ be a discrete graph
and let $\lenvec\in\len_{\disgraph}$. Assume that $\metgraph\left(\disgraph;~\lenvec\right)$
is a supremizer of $\disgraph$ and that the spectral gap $k_{1}\left(\metgraph(\disgraph;~\lenvec)\right)$
is a simple eigenvalue and let $f$ be the corresponding eigenfunction.
Denote $\metgraph_{+}:=\left\{ \left.x\in\metgraph\right|f\left(x\right)>0\right\} $,~$\metgraph_{-}:=\left\{ \left.x\in\metgraph\right|f\left(x\right)<0\right\} $
and further denote by $\beta_{+},\beta_{-}$ their corresponding first
Betti numbers. Then
\end{cor}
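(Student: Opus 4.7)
The plan is to distill the combinatorial identity \eqref{eq:thm_simple_supremum_4} that was already established in the course of the proof of Theorem \ref{thm:simple_supremum} into a stand-alone statement. Since the corollary is announced as a direct consequence of ``the proof above'', I would not introduce new machinery but rather extract from that proof exactly the relation between $k_1[\metgraph]$ and the Betti numbers $\beta,\beta_+,\beta_-$.

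First, I would apply Lemma \ref{lem:supremal_and_simple_is_critical} to replace $(\disgraph,\lenvec)$ by $(\disgraph^{*},\lenvec^{*})$ with $\lenvec^{*}\in\len_{\disgraph^{*}}$, so that $\metgraph(\disgraph^{*};\lenvec^{*})=\metgraph(\disgraph;\lenvec)$ and $k:=k_{1}[\metgraph]$ is a \emph{critical} value. The sets $\metgraph_\pm$, the eigenfunction $f$ and the Betti numbers $\beta,\beta_\pm$ are unaffected by contracting zero-length edges, so one may as well assume the spectral gap is critical. Then Lemma \ref{lem:path_decomposition}(\ref{enu:thm_path_decomposition_5}) gives the clean identity $k=\pi\mu$, where $\mu$ is the weighted zero count of $f$.

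Next, I would repeat the bookkeeping from the proof of Theorem \ref{thm:simple_supremum}: insert dummy degree-two vertices at the interior zeros of $f$ (justified by the fact, shown there, that $f$ vanishes at most once per edge), partition $\V$ into $V_+,V_-,V_0$ according to the sign of $f$, partition $\E$ into $E_{++},E_{--},E_{0+},E_{0-}$ (with no $E_{+-}$), and apply $\beta = E-V+1$ to $\metgraph$ and separately to $\metgraph_\pm$. This reproduces exactly \eqref{eq:thm_simple_supremum_1}, and combined with \eqref{eq:thm_simple_supremum_2} and \eqref{eq:thm_simple_supremum_3} produces
\[
k \;=\; \pi\bigl(\beta+1-\beta_{+}-\beta_{-}-\delta\bigr),
\]
with $\delta\in\Z_{\ge 0}$, which is \eqref{eq:thm_simple_supremum_4}. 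The corollary's statement then follows by reading this off, either as the equality together with the non-negativity of $\delta$, or as the inequality $k_1[\metgraph]\leq \pi(\beta+1-\beta_+-\beta_-)$.

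The main obstacle, in principle, is justifying that the bookkeeping really applies to $f$ in the reduced setting: one must verify that every zero of $f$ sits at a vertex of even degree with $E_{0+}=E_{0-}$ (so that $\delta$ really is a non-negative integer), and that $f$ does not vanish identically on any edge nor twice on the same edge. But both of these were already the delicate steps in the proof of Theorem \ref{thm:simple_supremum} (using Lemma \ref{lem:critical_point_derivative_at_vertices} and the supremizer property to rule out double-zero edges via the equilateral mandarin comparison), so no new work is needed here beyond pointing back to that argument.
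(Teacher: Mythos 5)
Your extraction of the identity \eqref{eq:thm_simple_supremum_4}, namely $k=\pi\left(\beta+1-\beta_{+}-\beta_{-}-\delta\right)$ with $\delta\in\Z_{\geq0}$, is the right starting point and coincides with the paper, whose proof of Corollary \ref{cor:simple_supremum} begins exactly from that equation. But the corollary does not ``follow by reading this off'': with $\delta\geq0$ the identity only yields the \emph{upper} bound $k\leq\pi\left(\beta+1-\beta_{+}-\beta_{-}\right)$, and by itself it places no constraint at all on $\beta_{+}+\beta_{-}$ or on $\delta$. The missing ingredient --- the place where the supremizer hypothesis enters for the corollary itself, beyond its role inside the proof of Theorem \ref{thm:simple_supremum} --- is the comparison with the equilateral flower: one may choose $\lenvec^{*}\in\lencl_{\disgraph}$ so that $\metgraph\left(\disgraph;~\lenvec^{*}\right)$ is an equilateral flower with $\beta$ petals, whose spectral gap is $\pi\beta$, and since $\metgraph\left(\disgraph;~\lenvec\right)$ is a supremizer this forces $k\geq\pi\beta$. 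Only after combining this lower bound with \eqref{eq:thm_simple_supremum_4} does one get $\beta_{+}+\beta_{-}+\delta\leq1$, from which the three claims follow: claim (1) is immediate; if $\beta_{+}+\beta_{-}=1$ then $\delta=0$ and $k=\pi\beta$, which is claim (2) with the flower realized by the above $\lenvec^{*}$; and $\delta\leq1$ together with $2\delta=\sum_{v\in V_{0}}\left(d_{v}-2\right)$ (so that every non-dummy zero vertex, having even degree at least four, contributes at least one to $\delta$) gives claim (3), i.e.\ at most one such vertex, necessarily of degree four, and only when $\beta_{+}+\beta_{-}=0$.

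As written, your argument establishes a different and strictly weaker statement (the upper bound on $k$); it never uses the attainability of the flower topology from $\disgraph$, so it cannot conclude $\beta_{+}+\beta_{-}\leq1$, nor the equality $k_{1}\left(\metgraph\left(\disgraph;~\lenvec\right)\right)=\beta\pi$ in the case $\beta_{+}+\beta_{-}=1$, nor the degree-four restriction on a vanishing vertex. The remainder of your plan --- invoking Lemma \ref{lem:supremal_and_simple_is_critical} to get criticality (note that since $\lenvec\in\len_{\disgraph}$ no contraction actually occurs) and re-running the bookkeeping from the theorem's proof --- is sound and matches the paper's route; what must be added is the supremizer-versus-flower comparison described above.
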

\begin{enumerate}
\item \label{enu:cor_simple_supremum_1} $\beta_{+}+\beta_{-}\leq1$.
\item \label{enu:cor_simple_supremum_2} If $\beta_{+}+\beta_{-}=1$ there
exists a choice of lengths $\lenvec^{*}\in\lencl_{\disgraph}$ such
that $\metgraph\left(\disgraph;~\lenvec^{*}\right)$ is an equilateral
flower and 
\[
k_{1}\left(\metgraph\left(\disgraph;~\lenvec\right)\right)=k_{1}\left(\metgraph\left(\disgraph;~\lenvec^{*}\right)\right)=\beta\pi.
\]
\item \label{enu:cor_simple_supremum_3} The number of (non-dummy) vertices
at which $f$ vanishes is at most one. Such a vertex may exist only
if $\beta_{+}+\beta_{-}=0$ and if it exists then this vertex is of
degree four.
\end{enumerate}
\begin{rem*}
We note that $\metgraph_{-},\metgraph_{+}$ defined above are open
sets and hence not metric graphs in the sense defined so far in the
paper. Nevertheless, we can still define their Betti numbers according
to the usual definition for topological spaces.
\end{rem*}
\begin{proof}
We start from equation \eqref{eq:thm_simple_supremum_4} in the preceding
proof. If $\beta_{+}+\beta_{-}>1$ we get that $k<\pi\beta$, so that
the spectral gap of $\metgraph\left(\disgraph;~\lenvec\right)$ is
strictly smaller than the one we can get by turning it into an equilateral
flower ($\pi\beta$) which contradicts it being a supremum. Therefore
$\beta_{+}+\beta_{-}\leq1$, which is claim \eqref{enu:cor_simple_supremum_1}.

If $\beta_{+}+\beta_{-}=1$, then by \eqref{eq:thm_simple_supremum_4},
the spectral gap of $\metgraph\left(\disgraph;~\lenvec\right)$ equals
$\pi\left(\beta-\delta\right)$. As it cannot be smaller than the
one of the equilateral flower we have $\delta=0$, which means that
$f$ does not vanish at vertices (with the exception of the dummy
ones) and also that there exists $\lenvec^{*}\in\lencl_{\disgraph}$
for which $\metgraph\left(\disgraph;~\lenvec^{*}\right)$ is an equilateral
flower, hence showing claim \eqref{enu:cor_simple_supremum_2}.

If $\beta_{+}+\beta_{-}=0$, then by \eqref{eq:thm_simple_supremum_4},
the spectral gap of $\metgraph\left(\disgraph;~\lenvec\right)$ equals
$\pi\left(\beta+1-\delta\right)$. As it cannot be smaller than the
one of the equilateral flower we have $\delta\leq1$, which means
that $f$ vanishes at most on a single (non-dummy) vertex. In addition,
if such a vertex exists its degree equals four.
\end{proof}
Another corollary of the proof of Theorem \ref{thm:simple_supremum}
is the following
\begin{cor}
Let $\disgraph$ be a discrete graph. Let $\lenvec\in\len_{\disgraph}$
and assume that $\metgraph:=\metgraph\left(\disgraph;~\lenvec\right)$
decomposes as 
\begin{equation}
\metgraph=\metgraph_{+}\cup\metgraph_{0}\cup\metgraph_{-},\label{DecompositionForNoSimpleCriticalPoint}
\end{equation}
such that 

\begin{enumerate}
\item The subgraphs $\metgraph_{+},\metgraph_{0}$ and $\metgraph_{-}$
are pairwise edge disjoint.
\item The subgraphs $\metgraph_{+}$ and $\metgraph_{-}$ do not have any
vertex in common. 
\item The vertices of $\metgraph_{0}$ have an odd degree in $\metgraph$. 
\end{enumerate}
Then, the spectral gap of $\metgraph$ cannot be both a simple eigenvalue
and a critical value as a function of $\lenvec\in\len_{\disgraph}$. 
\end{cor}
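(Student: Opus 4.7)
The proof proceeds by contradiction. Suppose that $k_1 := k_1[\metgraph]$ is both a simple eigenvalue and a critical value with respect to $\lenvec \in \len_{\disgraph}$, and let $f$ be a normalized eigenfunction. By Lemma \ref{lem:critical_point_derivative_at_vertices}, $f'|_e(v) = 0$ on every edge $e$ adjacent to every odd-degree vertex $v$; by hypothesis (3), this applies to every vertex of $\metgraph_0$. The key consequence is a complete decoupling at vertices of $\metgraph_0$: at any such $v$, the Neumann condition $\sum_{e \sim v} f'|_e(v) = 0$ splits term-by-term into separate trivial sums on each of $\metgraph_+, \metgraph_-, \metgraph_0$. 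As a result, $f|_{\metgraph_+}$, $f|_{\metgraph_-}$, and $f|_{\metgraph_0}$ are each Neumann $k_1^2$-eigenfunctions of their respective subgraphs.

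The plan is to produce a second $k_1^2$-eigenfunction $g$ of $\metgraph$, linearly independent from $f$, contradicting simplicity. I set $g|_{\metgraph_+} := f|_{\metgraph_+}$ and $g|_{\metgraph_-} := -f|_{\metgraph_-}$, and seek a Neumann $k_1^2$-eigenfunction $h$ of $\metgraph_0$ taking the prescribed boundary values $h(v) = f(v)$ at $v \in \metgraph_+ \cap \metgraph_0$ and $h(w) = -f(w)$ at $w \in \metgraph_- \cap \metgraph_0$, to serve as $g|_{\metgraph_0}$. By the decoupling above, the Neumann conditions at the shared vertices are automatic for any such $g$, because every derivative on a $\metgraph_\pm$-edge at such a vertex vanishes. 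Thus the only constraints on $h$ are the boundary values, the eigenvalue equation on each edge, and the Neumann conditions at interior vertices of $\metgraph_0$.

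The construction of $h$ is informed by a structural fact: combining Lemma \ref{lem:constant_energy} with the equality of energies at a critical point (Lemma \ref{lem:Derivative_on_edge_equals_energy}), there is a common constant $a \geq 0$ with $|f(v)| = a$ at every odd-degree vertex of $\metgraph$. In the case $a = 0$, both $f(v)$ and $f'|_e(v)$ vanish at each endpoint of every edge of $\metgraph_0$, so by ODE uniqueness $f \equiv 0$ on $\metgraph_0$, and $h \equiv 0$ solves the boundary-value problem; moreover the resulting $g$ satisfies $g = -f$ on $\metgraph_-$, and $f\not\equiv 0$ there by Courant's theorem applied to the simple second eigenfunction $f$, so $g$ and $f$ are linearly independent, contradicting simplicity. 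The main obstacle is the remaining case $a > 0$. Here $f$ does not vanish at the shared vertices and the construction of $h$ reduces to a combinatorial sign-assignment problem on the connected components of $\metgraph_0$: on each component, one must flip the sign of $f|_{\metgraph_0}$ consistently with the boundary data on $\metgraph_+ \cap \metgraph_0$ and $\metgraph_- \cap \metgraph_0$. Hypothesis (2) plays the decisive role in ensuring consistency of these signs, and one exploits in addition the degeneracy of $k_1^2$ as a Neumann eigenvalue on each component of $\metgraph_0$ provided by the path decomposition of Lemma \ref{lem:path_decomposition} (the lengths of $\metgraph_0$-edges being forced to be integer multiples of $\pi/k_1$), which supplies the flexibility needed whenever a component of $\metgraph_0$ simultaneously meets $\metgraph_+$ and $\metgraph_-$.
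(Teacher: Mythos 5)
Your opening observations are sound and in fact coincide with a step of the paper's argument: at a critical point all edge energies agree, so by Lemma \ref{lem:critical_point_derivative_at_vertices} every derivative of $f$ vanishes along every edge meeting a vertex of $\metgraph_{0}$, the restrictions $\left.f\right|_{\metgraph_{\pm}}$ and $\left.f\right|_{\metgraph_{0}}$ are Neumann eigenfunctions of the subgraphs, $\left|f\right|$ equals a common constant $a$ at all vertices of $\metgraph_{0}$ (and $a=0$ forces $f\equiv0$, so that case is vacuous), and each $\metgraph_{0}$-edge carries $f=\pm a\cos(k_{1}x)$ with $k_{1}l_{e}\in\pi\N$. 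The problem is the core of your plan. The existence of a Neumann $k_{1}^{2}$-eigenfunction $h$ on $\metgraph_{0}$ with the sign-flipped boundary data is exactly what you never prove, and it is false in the simplest mixed configuration: if a component of $\metgraph_{0}$ is a single edge $e=[u,w]$ with $u\in\metgraph_{+}$, $w\in\metgraph_{-}$ (the bridge case the paper explicitly highlights after the corollary), then on that interval the Neumann eigenspace at $k_{1}^{2}$ is one-dimensional, spanned by $\cos(k_{1}x)$, so every admissible $h$ obeys $h(w)=(-1)^{m}h(u)$ while also $f(w)=(-1)^{m}f(u)$ with $f(u)\neq0$; the requirements $h(u)=f(u)$ and $h(w)=-f(w)$ are then incompatible. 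The same obstruction occurs on the equilateral mandarin itself, where your boundary-value problem (equal prescribed values at the two vertices) has no solution. So the advertised ``flexibility'' coming from Lemma \ref{lem:path_decomposition} is absent precisely in the mixed components where you need it, and the sign-assignment step cannot be completed as described; your handling of the $a=0$ case also does not yield linear independence (if $f$ vanished on $\metgraph_{0}\cup\metgraph_{-}$ you would simply get $g=f$), though that case is empty anyway.

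The paper's proof runs through a different mechanism that your proposal is missing. After the decoupling observation, it uses Courant's nodal theorem (simplicity gives exactly two nodal domains) together with the fact that each $\metgraph_{0}$-edge contains an interior zero to place $\metgraph_{+}$ and $\metgraph_{-}$ inside distinct nodal domains and to exclude interior vertices of $\metgraph_{0}$; then the decisive step is that $\left.f\right|_{\metgraph_{+}}$, being a Neumann eigenfunction of $\metgraph_{+}$ with positive eigenvalue, is orthogonal to constants, which is impossible for a function of fixed sign unless $\metgraph_{+}$ is a single vertex (likewise for $\metgraph_{-}$). This forces $\metgraph$ to be an equilateral mandarin, whose spectral gap is not simple --- the contradiction. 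No construction of a second eigenfunction is needed, and indeed none is available in general. To salvage your approach you would have to replace the nonexistent $h$ by an argument of this nodal/zero-mean type, at which point you are reproducing the paper's proof.
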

\begin{proof}
Let $k$ denote the spectral gap of $\metgraph$ and assume that it
is a simple eigenvalue and a critical value. Let $f$ be the eigenfunction
corresponding to $k$. Since $k$ is simple, Courant's nodal theorem
(\cite{Cou_ngwgmp23,GnuSmiWeb_wrm04,Ber_cmp08}) entails that $f$
has exactly two nodal domains. By Lemma \ref{lem:critical_point_derivative_at_vertices}
and as the vertices of $\metgraph_{0}$ are of odd degree, we deduce
that $f$ vanishes on every edge of $\metgraph_{0}$. From the decomposition
\eqref{DecompositionForNoSimpleCriticalPoint}, it follows that $\metgraph_{+}$
and $\metgraph_{-}$ are contained each in a different nodal domain
of $\metgraph$ and also that each is a connected subgraph. Furthermore,
$\metgraph_{0}$ does not have any interior vertex as otherwise, it
would belong to a third nodal domain. It follows that $\metgraph_{0}$
consists of edges connecting vertices of $\metgraph_{+}$ and $\metgraph_{-}$.

Observe that $\left.f\right|_{\metgraph_{+}}$ is a Neumann eigenfunction
on $\metgraph_{+}$. Indeed, it satisfies Neumann conditions at all
vertices of $\metgraph_{+}\backslash\metgraph_{0}$ and its derivative
vanishes at each edge connected to a vertex in $\metgraph_{+}\cap\metgraph_{0}$.
Therefore, $\left.f\right|_{\metgraph_{+}}$ should be orthogonal
to the constant function on $\metgraph_{+}$. As $\left.f\right|_{\metgraph_{+}}$
is positive everywhere, this is possible only if $\metgraph_{+}$consists
of a single vertex, which we denote by $v_{+}$ (it cannot contain
more than a single vertex as we have shown it is connected). The same
goes for $\metgraph_{-}$ (its vertex denoted by $v_{-}$) and as
we have shown that $\metgraph_{0}$ consists of edges connecting vertices
of $\metgraph_{+}$ and $\metgraph_{-}$, we conclude that $\metgraph$
is a mandarin graph. As all derivatives of $f$ at $v_{\pm}$ vanish
and $f$ cannot vanish more than once on edges connecting them we
deduce that all those edges are of equal length. Hence, $\metgraph$
is an equilateral mandarin, whose spectral gap is not a simple eigenvalue
and we get a contradiction.
\end{proof}
This corollary applies, among other examples, to graphs having a bridge
linking two vertices of odd degrees, or to bipartite and $d-$regular
graphs for some odd $d$. All of those cannot have a spectral gap
which is both simple and a critical value.

Demonstrating examples of the other side, we next show a family of
discrete graphs, $\disgraph$, and connected subsets $\len^{*}\subset\len_{\disgraph}$,
such that for all $\lenvec^{*}\in\len^{*}$, $\metgraph\left(\disgraph;~\lenvec^{*}\right)$
satisfies the conditions of Lemma \ref{lem:critical_point_derivative_at_vertices}.
This provides a collection of graphs whose spectral gap is both simple
and a critical value. Those graphs are essentially chains of mandarins
glued serially one to the other and with an optional star glued at
either side of this chain. We call those standarin chains (see Figure
\ref{fig:standarin_chains}).
\begin{prop}
\label{prop:Standarins} Let $n\geq2,~M\geq1$ be integers. Take some
$M$ discrete $n$-mandarin graphs and glue them serially to form
a chain of mandarins. At each end of this chain either glue or not
an $n$-star graph at its central vertex. Let $S\in\left\{ 0,1,2\right\} $
be the number of star graphs which were glued and assume $M+S\geq2$.
Denote the obtained discrete graph by $\disgraph$. Set $\lenvec^{*}\in\len_{\disgraph}$
to be a vector of edge lengths such that

\begin{enumerate}
\item \label{enu:prop-standarins_1}All edges belonging to the same mandarin
have equal length.
\item \label{enu:prop-standarins_2}All edges belonging to the same star
graph have equal length, which is in the range $(0,\frac{1}{2n})$.
\end{enumerate}
Then for all such $\lenvec^{*}\in\len_{\disgraph}$, $\metgraph(\disgraph;~\lenvec^{*})$
satisfies the conditions of Lemma \ref{lem:critical_point_derivative_at_vertices}.
Namely

\begin{enumerate}
\item The spectral gap, $k_{1}\left[\metgraph(\disgraph;~\lenvec^{*})\right]$,
is a simple eigenvalue. 
\item The function $\lenvec\mapsto k_{1}\left[\metgraph(\disgraph;~\lenvec)\right]$
has a critical value at $\lenvec=\lenvec^{*}$.
\end{enumerate}
In addition, the corresponding spectral gap $k=k_{1}\left[\metgraph(\disgraph;~\lenvec)\right]$
equals $n\pi$.
\end{prop}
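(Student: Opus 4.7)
The plan is to exploit the $n$-fold symmetry within each mandarin and each star by decomposing the $L^{2}$-space over $\metgraph:=\metgraph(\disgraph;~\lenvec^{*})$ into invariant subspaces for $\hamil$, computing the spectrum on each, and reading off simplicity, the value $k_{1}=n\pi$, and the critical value property. First I would set up the decomposition: on each mandarin $M_{i}$ with $n$ parallel edges, write $f|_{M_{i}}=\bar f^{(i)}+\tilde f^{(i)}$, where $\bar f^{(i)}$ is the common edge-average of $f$ on the $n$ parallel edges and $\tilde f^{(i)}$ satisfies $\sum_{e\in M_{i}}\tilde f^{(i)}|_{e}\equiv 0$; do the same on each star $S_{j}$. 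This yields an $L^{2}$-orthogonal decomposition
\[
H^{2}(\metgraph) = V_{\mathrm{sym}} \oplus \bigoplus_{i=1}^{M} V^{M_{i}}_{\mathrm{anti}} \oplus \bigoplus_{j=1}^{S} V^{S_{j}}_{\mathrm{anti}},
\]
in which $V_{\mathrm{sym}}$ consists of those $f$ whose antisymmetric pieces all vanish, while $V^{M_{i}}_{\mathrm{anti}}$ (resp.\ $V^{S_{j}}_{\mathrm{anti}}$) consists of functions supported on $M_{i}$ (resp.\ $S_{j}$) whose restrictions to the $n$ parallel edges sum identically to zero. The crucial observation is that every function in $V^{M_{i}}_{\mathrm{anti}}$ vanishes at both endpoints of $M_{i}$ by continuity (being zero on all other edges at those vertices) and its $n$ outgoing derivatives sum to zero automatically; the analogous statement holds at the central vertex for $V^{S_{j}}_{\mathrm{anti}}$. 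Each summand is then preserved by $\hamil$.

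Next I would compute the spectrum on each invariant subspace. Restricted to $V_{\mathrm{sym}}$, the eigenvalue problem reduces to the Neumann Laplacian on the \emph{backbone}: a single interval of length $L:=\sum_{i}l_{i}+\sum_{j}s_{j}$, obtained by placing the mandarin representatives $[0,l_{i}]$ in series with a star-tail $[0,s_{j}]$ at each starred endpoint. The factor of $n$ in the Neumann condition conspires with common-value continuity to make the backbone function $C^{1}$ at each interior junction, so the spectrum coincides with that of the Neumann Laplacian on $[0,L]$. The normalization $n\sum_{i}l_{i}+n\sum_{j}s_{j}=1$ forces $L=1/n$, whence the $V_{\mathrm{sym}}$-eigenvalues are $\{(mn\pi)^{2}:m\geq 0\}$, each simple. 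Restricted to $V^{M_{i}}_{\mathrm{anti}}$ the problem becomes the Dirichlet Laplacian on $[0,l_{i}]$ tensored with the $(n-1)$-dimensional space of zero-sum coefficient vectors, giving eigenvalues $(m\pi/l_{i})^{2}$ for $m\geq 1$; restricted to $V^{S_{j}}_{\mathrm{anti}}$ it is the mixed Dirichlet/Neumann Laplacian on $[0,s_{j}]$, giving $((m+1/2)\pi/s_{j})^{2}$ for $m\geq 0$.

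Third, I would use the hypotheses to show every antisymmetric eigenvalue is strictly above $(n\pi)^{2}$. The assumption $s_{j}\in(0,1/(2n))$ gives $\pi/(2s_{j})>n\pi$ directly. The constraint $\sum_{i}l_{i}+\sum_{j}s_{j}=1/n$ with all $l_{i},s_{j}>0$ and $M+S\geq 2$ forces $l_{i}<1/n$ for every $i$ (if $M\geq 2$ then some other $l_{k}>0$ consumes a positive amount of the sum; if $M=1$ then $S\geq 1$ and some $s_{j}>0$ does), hence $\pi/l_{i}>n\pi$. Consequently the first positive eigenvalue of $\hamil$ is $(n\pi)^{2}$, attained only by the $m=1$ symmetric mode, so the spectral gap equals $n\pi$ and is simple, with eigenfunction $f(x)=\cos(n\pi\,y(x))$, where $y(x)$ is the arc-length coordinate of $x$ along the backbone.

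Finally, I would verify the two criteria of Lemma~\ref{lem:critical_point_derivative_at_vertices} using the explicit form of $f$. At each vertex $v$ of even degree $2n$---an interior junction $u_{i}$ or a starred endpoint---the $n$ parallel edges on each side of $v$ carry identical outgoing derivatives of $f$, and the $C^{1}$-smoothness of $\cos(n\pi y)$ along the backbone makes the magnitudes from the two sides agree, delivering condition~\eqref{enu:lem_critical_point_even_degree}. At each odd-degree vertex---a star leaf of degree $1$, or an unstarred chain end of degree $n$ when $n$ is odd---the vertex is a Neumann endpoint of the backbone, so $f'$ vanishes on every incident edge, delivering condition~\eqref{enu:lem_critical_point_odd_degree}. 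The main obstacle in carrying out the plan is pinning down the completeness, $L^{2}$-orthogonality, and Neumann-domain preservation of the symmetric/antisymmetric decomposition; once that is secure, the strict inequalities on $l_{i}$ and $s_{j}$ follow cleanly from $M+S\geq 2$ and $s_{j}<1/(2n)$, and the rest is routine bookkeeping.
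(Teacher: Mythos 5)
Your proof is correct, and it takes a genuinely different (and in some ways cleaner) route than the paper's. The paper does not decompose the operator: it writes $\metgraph$ as $n$ parallel copies of an interval of length $\tfrac{1}{n}$ with identified points, exhibits the symmetric eigenfunction $\cos(n\pi x)$ to get $k_{1}\leq n\pi$, then symmetrizes an arbitrary spectral-gap eigenfunction $g$ (replacing $\left.g\right|_{\gamma_i}$ by $\sum_j \left.g\right|_{\gamma_j}$) and rules out the symmetrization vanishing by exactly the same length bounds you use ($\pi/l_e>n\pi$ for inner edges, $\pi/(2l_e)>n\pi$ for dangling edges); simplicity is then obtained by an explicit interpolation formula showing any $n\pi$-eigenfunction is determined by its vertex values, and the critical-value claim is settled by computing that the energy $\energy_e=(f')^{2}+k^{2}f^{2}$ is constant over all edges and invoking Lemma \ref{lem:Derivative_on_edge_equals_energy}. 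You instead perform a full invariant-subspace decomposition into the symmetric (backbone) part and the per-mandarin and per-star zero-sum parts, compute all three spectra exactly (Neumann on $[0,\tfrac1n]$; Dirichlet on $[0,l_i]$ with multiplicity $n-1$; Dirichlet--Neumann on $[0,s_j]$ with multiplicity $n-1$), and conclude $k_1=n\pi$ and simplicity in one stroke from the strict bounds $l_i<\tfrac1n$ (forced by $M+S\geq2$) and $s_j<\tfrac{1}{2n}$; you then verify the vertex-derivative criteria of Lemma \ref{lem:critical_point_derivative_at_vertices} directly rather than the equal-energy criterion. The one step you defer as ``routine''---that the symmetric/antisymmetric projections preserve the Neumann domain and give an orthogonal, $\hamil$-invariant splitting---does check out exactly as you indicate (the antisymmetric piece of a domain function vanishes at the relevant vertices since the common vertex value is carried entirely by the average, and its outgoing derivatives sum to zero because the edge-sum is identically zero), so no gap remains. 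Your approach buys the full spectrum and a transparent simplicity argument at the cost of this bookkeeping; the paper's argument is more ad hoc but avoids setting up the decomposition.
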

\begin{figure}
\begin{minipage}[t]{0.4\columnwidth}%
(a)\includegraphics[scale=0.46]{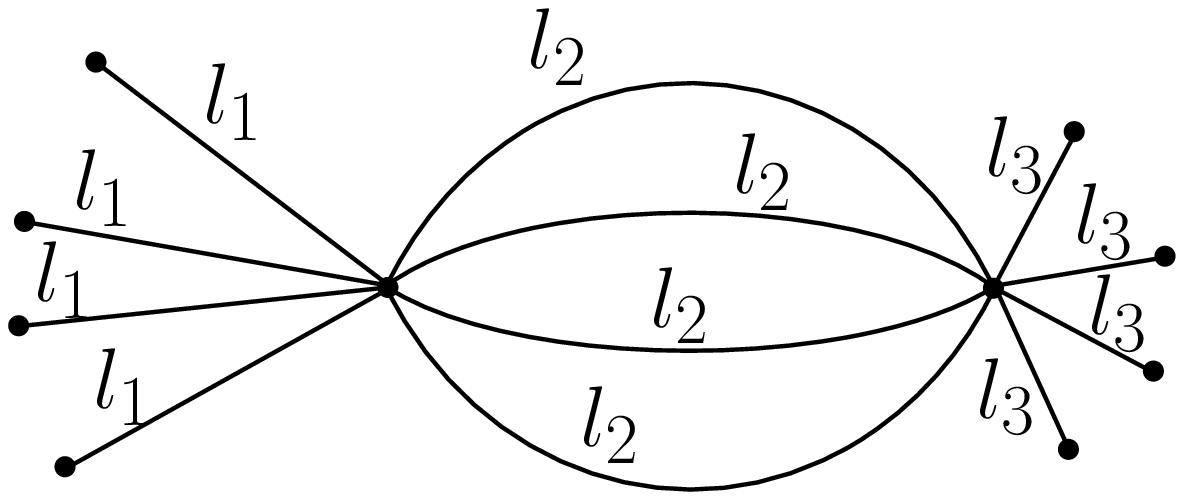}%
\end{minipage}\hfill{}%
\begin{minipage}[t]{0.6\columnwidth}%
(b)\includegraphics[scale=0.56]{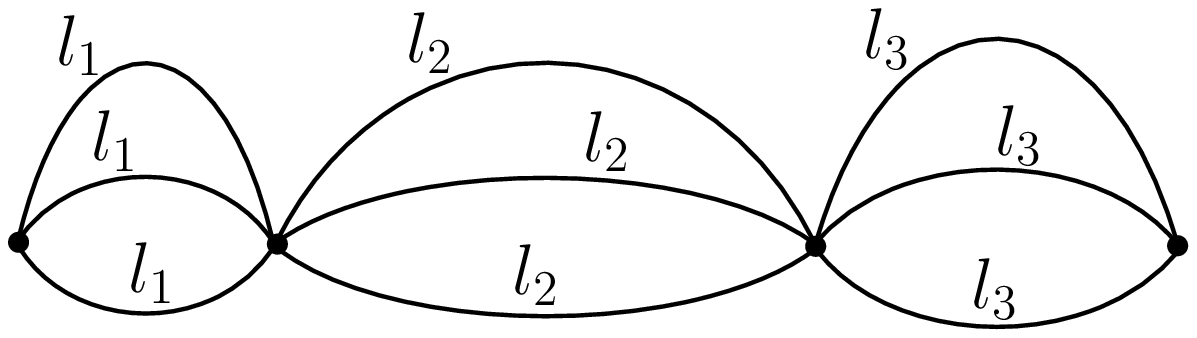}%
\end{minipage}\hspace{-12mm}

\caption{Two examples for the standarin chain graphs }
\label{fig:standarin_chains}
\end{figure}

\begin{proof}
Let $\lenvec^{*}\in\len_{\disgraph}$ which satisfies the assumptions
of the proposition. Denote $\metgraph:=\metgraph(\disgraph;~\lenvec^{*})$
and note that we may construct $\metgraph$ by taking $n$ intervals,
$\left\{ \gamma_{i}\right\} _{i=1}^{n}$, of length $\frac{1}{n}$
each, picking $M+1$ points on each interval which are similarly positioned
on each of the intervals, and identifying each set of parallel $n$
points to form a vertex of $\metgraph$. We use this decomposition
of $\metgraph$ to describe an eigenfunction which is shown on the
sequel to correspond to the spectral gap of $\metgraph$. Set $\left.f\right|_{\gamma_{i}}\left(x\right)=\cos\left(n\pi x_{i}\right)$
on each $\gamma_{i}$. It is easy to check that $f$ satisfies Neumann
conditions at all vertices and hence it is a valid eigenfunction and
its $k$-eigenvalue equals $n\pi$. We conclude that the spectral
gap obeys, $k_{1}\left[\metgraph\right]\leq n\pi$, and show in the
sequel that this is actually an equality and that the spectral gap
is a simple eigenvalue.

Let $g$ be an eigenfunction corresponding to the spectral gap $k_{1}\left[\metgraph\right]$.
We may assume that all the restrictions $\left.g\right|_{\gamma_{i}}$
at mentioned intervals are equal. Otherwise, we symmetrize $g$ by
taking

\[
\forall1\leq i\leq n,~~\left.\tilde{g}\right|_{\gamma_{i}}=\sum_{j=1}^{n}\left.g\right|_{\gamma_{j}}.
\]
This symmetrized function $\tilde{g}$ indeed satisfies Neumann conditions
at all vertices and we just need to justify that it is different from
the zero function. Assume by contradiction that it is the zero function.
In particular $\tilde{g}$ vanishes at all vertices and hence $g$
itself vanishes at all vertices which are not leaves. Necessarily,
there exists some edge on which $g$ does not identically vanish.
If such an edge, $e$, is an inner edge we get that $k_{1}\left[\metgraph\right]\geq\frac{\pi}{l_{e}}>n\pi$,
and a contradiction. If this edge is a dangling edge, we get by assumption
\eqref{enu:prop-standarins_2} that $k_{1}\left[\metgraph\right]\geq\frac{\pi}{2l_{e}}>n\pi$,
which is again a contradiction. Hence we continue assuming that $g$
is an eigenfunction with all $\{\left.g\right|_{\gamma_{i}}\}_{i=1}^{n}$
equal to each other. From here we conclude that for all $i$, $\left.g\right|_{\gamma_{i}}$
is an eigenfunction of the interval with Neumann condition at both
of its ends. This together with $g$ being an eigenfunction corresponding
to the spectral gap implies $g=f$ and $k_{1}\left[\metgraph\right]=n\pi$.

Next, we show the simplicity of $k_{1}\left[\metgraph\right]$. Let
$g$ be an eigenfunction of $k_{1}\left[\metgraph\right]$, not assuming
it is symmetric this time. Take all parallel edges of some mandarin
which is a subgraph of $\metgraph$. All those edges have a common
length $l<\frac{1}{n}$ and we have $k_{1}\left[\metgraph\right]\cdot l=n\pi l<\pi$
so that $\sin(k_{1}\left[\metgraph\right]\cdot l)\neq0$. Therefore,
the value of $g$ on each of those parallel edges is given by

\[
\left.g\right|_{e}(x)=\frac{1}{\sin\left(k_{1}\left[\metgraph\right]\cdot l\right)}\left\{ g\left(u\right)\sin\left(k_{1}\left[\metgraph\right]\cdot\left(l-x\right)\right)+g\left(v\right)\sin\left(k_{1}\left[\metgraph\right]\cdot x\right)\right\} ,
\]
where $u,v$ are the vertices of this mandarin and $e$ any edge connecting
them. A similar argument shows that $g$ is also uniquely determined
at the dangling edges. The simplicity of $k_{1}\left[\metgraph\right]$
follows.

Finally, computing the energy, ~$\energy_{e}=\left(f'\right)^{2}+k^{2}f^{2}$,
of $f$ as defined above, we get that it is equal on all edges. By
Lemma \ref{lem:Derivative_on_edge_equals_energy} we conclude that
the function $\lenvec\mapsto k_{1}\left[\metgraph(\disgraph;~\lenvec)\right]$
has a critical value at $\lenvec=\lenvec^{*}$.
\end{proof}
We note that the particular case $n=2,~M=1,~S=1$ is dealt with in
Lemma \ref{lem:stower_1_2}. It is stated there that for this particular
stower the graphs $\metgraph(\disgraph;~\lenvec)$ not only have the
spectral gap as a critical value, but they are also maximizers. Furthermore,
those graphs are supremizers and thus satisfy the conditions of Theorem
\ref{thm:simple_supremum}. Indeed, this stower has a spectral gap
of $2\pi$, which equals the spectral gap of a single cycle, which
is merely a one petal flower or a two edge mandarin.

In general, the graphs in the proposition above share the same spectral
gap as equilateral $n$-mandarin graphs. As such they obey the conclusion
of Theorem \ref{thm:simple_supremum} even though they do not satisfy
the requirements of the theorem as they are not necessarily supremizers.
For example, the graphs $\metgraph(\disgraph;~\lenvec^{*})$ of the
proposition above are not supremizers if we take $n\geq3$. In this
case, there is a choice of lengths, $\lenvec$, for which $\metgraph(\disgraph;~\lenvec)$
is a stower graph with $E_{p}=M\cdot(n-1)$ and $E_{l}=S\cdot n$,
whose spectral gap is $\frac{\pi}{2}\left(2M\cdot(n-1)+S\cdot n\right)$
and greater than $n\pi$.

\section{Gluing Graphs\label{sec:Gluing_Graphs}}

In this section we develop spectral gap inequalities for graphs whose
vertex connectivity equals one. Such graphs may be obtained by considering
two disjoint graphs and identifying two vertices, one of each graph.
We bound the spectral gap of the obtained graph by the sum of spectral
gaps of its two subgraphs and provide necessary and sufficient conditions
for equality to hold (Proposition \ref{prop:subadditive_gluing_principle}).
We use this in order to prove sufficient conditions needed for graphs
with vertex connectivity one to be supremizers (Theorem \ref{thm:supremum_of_gluing}).

We fix some notations to use throughout this section. Let $\Gamma$
be a graph and let $v$ be a vertex of $\Gamma$. We say that $f$
satisfies the $\delta$-type conditions at $v$ with parameter $\theta$
if 
\begin{align}
f\text{ is }\text{continuous~at }v\nonumber \\
\text{and}\qquad\quad\nonumber \\
\cos\left(\frac{\theta}{2}\right)\sum_{e\in\mathcal{E}_{v}}\frac{\ud f}{\ud x_{e}}\left(v\right) & =\sin\left(\frac{\theta}{2}\right)f\left(v\right),\label{eq:delta-condition_with_theta}
\end{align}
where $\theta\in\left(-\pi,\pi\right]$ (see Definition \ref{def:delta_type_conditions}).
Note that Neumann conditions are obtained as a special case with $\theta=0$
and Dirichlet conditions are obtained from $\theta=\pi$. We denote
by $k_{n}\left(\Gamma;~\theta\right)$ the $n^{\textrm{th}}$ $k$-eigenvalue
of $\Gamma$, endowed with the $\delta$-type condition with parameter
$\theta$ at $v$ and Neumann at all other vertices. The corresponding
$k$-spectrum is denoted by 
\begin{equation}
\sigma\left(\Gamma;~\theta\right):=\cup_{n}\left\{ k_{n}\left(\Gamma;~\theta\right)\right\} .\label{eq:spectrum_notation}
\end{equation}
It will be understood in the sequel which vertex $v$ is chosen so
that it is not indicated in the notation. In addition, we omit the
notation $\metgraph$ from $k_{n}\left(\Gamma;~\theta\right)$ and
$\sigma\left(\Gamma;~\theta\right)$ whenever it is clear which graph
we refer to. Similarly, $\theta$ is omitted from these notations
whenever $\theta=0$ to comply with the notations used so far. At
this point, we refer the reader to Appendix \ref{sec:appendix_interlacing_theorems},
where we quote some results from \cite{BerKuc_quantum_graphs} on
$\delta$-type conditions, that are used throughout this section.
The structure of the spectrum as it depends on the parameter $\theta$
(for some chosen vertex $v$) is described in the next lemma, which
quotes parts of theorem 3.1.13 from \cite{BerKuc_quantum_graphs},
slightly rephrased for our purpose. 
\begin{lem}
\label{lem:spectrum_structure}

Let $\metgraph$ be a metric graph and let $v$ be a vertex of $\metgraph$.
There exist a bounded from below discrete set, $\Delta\left(\metgraph\right)\subset\R$
and a real smooth function, $K\left(\metgraph;\cdot\right):\left(-\pi,\infty\right)\rightarrow\R$
(called ``dispersion relation'') such that 

\begin{enumerate}
\item \label{enu:lem_spectrum_structure_1}The function $\theta\mapsto K\left(\metgraph;\theta\right)$
is strictly increasing so that $\lim_{\theta\to\infty}K\left(\metgraph;\theta\right)=\infty$.
\item \label{enu:lem_spectrum_structure_2}For any $\theta\in\left(-\pi,\pi\right]$,
$\sigma\left(\Gamma;~\theta\right)=\left\{ K\left(\metgraph;~\theta+2\pi n\right)\right\} _{n=0}^{\infty}\cup\Delta\left(\metgraph\right)$. 
\end{enumerate}
\end{lem}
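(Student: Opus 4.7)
The plan is to exploit the parametric dependence of the $\delta$-type spectrum on $\theta$ via a Dirichlet-to-Neumann (DtN) style argument, isolating a $\theta$-independent ``rigid'' part $\Delta(\metgraph)$ and capturing the $\theta$-dependent ``movable'' part through a single monotone master curve $K(\metgraph;\cdot)$ obtained by inverting the DtN function and concatenating its branches.

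First I would set $\Delta(\metgraph)$ to be the set of $k$-eigenvalues of the Laplacian on $\metgraph$ for which some eigenfunction $f$ satisfies Neumann conditions at every vertex $\neq v$ together with both $f(v)=0$ and $\sum_{e\in\mathcal{E}_v}\frac{\ud f}{\ud x_e}(v)=0$. Any such $f$ trivially satisfies \eqref{eq:delta-condition_with_theta} for every $\theta\in(-\pi,\pi]$, so $\Delta(\metgraph)\subset\sigma(\metgraph;\theta)$ regardless of $\theta$. Because $\Delta(\metgraph)$ is contained in the Dirichlet spectrum of the compact graph obtained by imposing Dirichlet at $v$, it is discrete and bounded from below.

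For the complementary part I would introduce the DtN function based at $v$. For each $k$ outside the Dirichlet spectrum of $\metgraph$ with $v$ Dirichlet, let $f_k$ be the unique solution of $-f''=k^{2}f$ on $\metgraph$ with Neumann condition at every vertex $\neq v$ and $f_k(v)=1$, and set
\[
\Lambda(\metgraph;k):=\sum_{e\in\mathcal{E}_v}\frac{\ud f_k}{\ud x_e}(v).
\]
Plugging $f_k$ into \eqref{eq:delta-condition_with_theta} shows that $k\in\sigma(\metgraph;\theta)\setminus\Delta(\metgraph)$ if and only if $\Lambda(\metgraph;k)=\tan(\theta/2)$. The Herglotz-type theory of self-adjoint extensions recorded in Theorem~3.1.13 of \cite{BerKuc_quantum_graphs} guarantees that $\Lambda(\metgraph;\cdot)$ is real meromorphic in $k$, has only simple poles (precisely at the Dirichlet eigenvalues not already in $\Delta(\metgraph)$), is strictly monotone on each open interval of regularity, and maps each such interval bijectively onto $\R$.

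Finally, enumerating the intervals of monotonicity of $\Lambda(\metgraph;\cdot)$ as $I_0<I_1<\cdots$, the composition of $\tan(\cdot/2)$ with the branch-wise inverse of $\Lambda$ defines smooth strictly increasing bijections from $((2n-1)\pi,(2n+1)\pi)$ onto $I_n$. These branches match continuously and smoothly at each endpoint $\theta=(2n+1)\pi$ (where $\tan$ diverges and $\Lambda$ crosses a pole), producing a single smooth strictly increasing $K(\metgraph;\cdot):(-\pi,\infty)\to\R$ with $K(\metgraph;\theta)\to\infty$ as $\theta\to\infty$, giving part \eqref{enu:lem_spectrum_structure_1}. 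By construction, for each fixed $\theta\in(-\pi,\pi]$ the sequence $\{K(\metgraph;\theta+2\pi n)\}_{n\geq 0}$ exhausts the movable eigenvalues, which together with $\Delta(\metgraph)$ yields \eqref{enu:lem_spectrum_structure_2}. The main obstacle is the spectral-theoretic input on $\Lambda(\metgraph;\cdot)$: its Herglotz-type monotonicity, the precise description of its poles, and the smooth gluing of branches across those poles. All of these are standard for quantum graphs, and since they are already packaged in Theorem~3.1.13 of \cite{BerKuc_quantum_graphs} the present proof amounts to a rephrasing rather than a new derivation.
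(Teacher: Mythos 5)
Your proposal is fine, and it ends up in the same place as the paper: the paper gives no independent proof of this lemma but presents it explicitly as a rephrased quotation of Theorem 3.1.13 of \cite{BerKuc_quantum_graphs}, and your Dirichlet-to-Neumann sketch likewise defers all of its substantive analytic input (the Herglotz-type monotonicity of $\Lambda$, the location of its poles, and the smooth gluing of branches) to that same theorem, as you yourself acknowledge. So the argument is essentially the paper's approach --- a citation --- dressed with a plausible reconstruction of how the cited result is proved; just note that attributing the properties of $\Lambda$ to Theorem 3.1.13 itself is slightly loose, since that theorem states the dispersion-relation structure rather than the DtN facts used to derive it.
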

\begin{rem*}
We see from the lemma above that 
\[
\Delta\left(\metgraph\right)=\bigcap_{\theta\in\left(-\pi,\pi\right]}\sigma\left(\Gamma;~\theta\right).
\]
The values of this discrete set, common to all spectra, are often
called flat bands.
\end{rem*}
A particular value of $\theta$ which plays a special role is defined
below.
\begin{defn}
\label{def:spectral_gap_parameter} Let $\Gamma$ be a graph and let
$v$ be a vertex of $\Gamma$. A parameter $\theta^{SG}\in\R$ which
satisfies 
\[
K\left(\metgraph;\theta^{SG}\right)=k_{1}\left(\metgraph;0\right),
\]
is called the spectral gap parameter (SGP) of $\metgraph$ (with respect
to $v$). See Figure \ref{fig:SGP}.
\end{defn}
\begin{figure}
\begin{minipage}[t]{0.3\columnwidth}%
\includegraphics[scale=0.5]{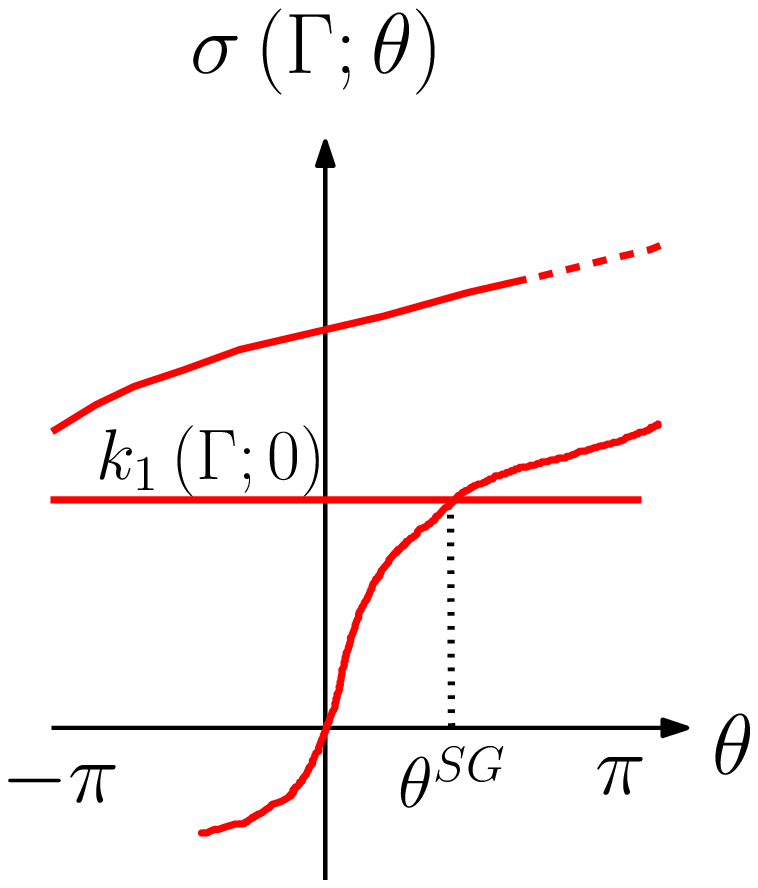}

(a) $\theta^{SG}\in\left(0,\pi\right)$%
\end{minipage}\hfill{}%
\begin{minipage}[t]{0.3\columnwidth}%
\includegraphics[scale=0.5]{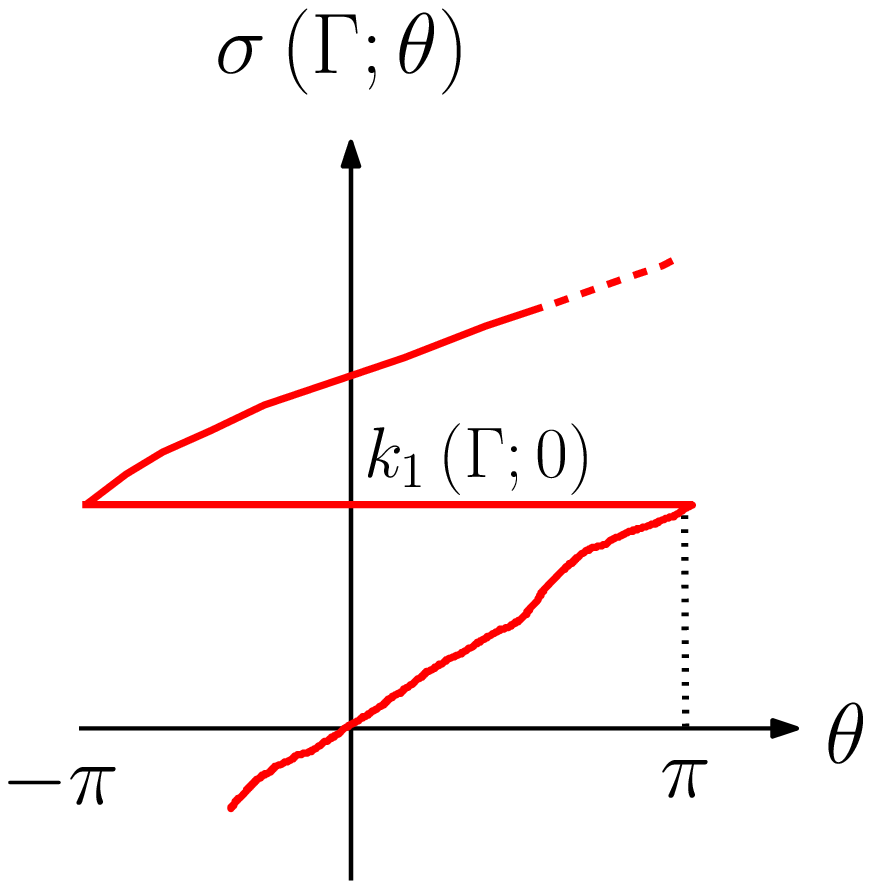}

(b) $\theta^{SG}=\pi$%
\end{minipage}\hfill{}%
\begin{minipage}[t]{0.3\columnwidth}%
\includegraphics[scale=0.5]{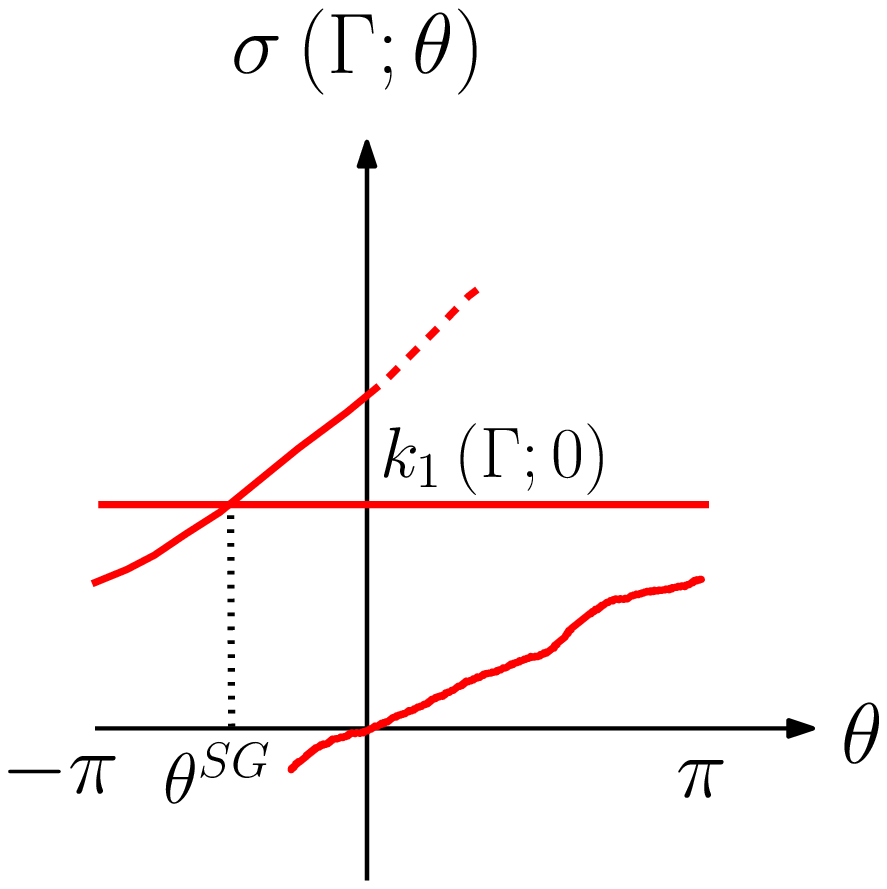}

(c) $\theta^{SG}\in\left(\pi,2\pi\right)$%
\end{minipage}

\caption{Three examples of dispersion relations curves \label{fig:SGP}}
\end{figure}

In the following we point out some of the SGP properties.
\begin{lem}
\label{lem:SG_paramter_properties}~ 

\begin{enumerate}
\item \label{enu:lem_SGP_properties_1}The spectral gap parameter exists
and it is unique.
\item \label{enu:lem_SGP_properties_2} $\theta^{SG}\in\left[0,2\pi\right]$. 
\item \label{enu:lem_SGP_properties_3} If $\theta^{SG}\neq2\pi$ then $k_{1}\left(\metgraph;~0\right)\in\Delta\left(\metgraph\right)$.
\item \label{enu:lem_SGP_properties_4} If $\theta^{SG}\in\left(0,\pi\right]$
then
\begin{equation}
\begin{cases}
k_{0}\left(\theta\right)<k_{1}\left(0\right) & ~~\textrm{for }~\theta\in\left(0,\theta^{SG}\right)\\
k_{0}\left(\theta\right)=k_{1}\left(0\right) & ~~\textrm{for }~\theta\in\left[\theta^{SG},\pi\right]\\
k_{1}\left(\theta-2\pi\right)=k_{1}\left(0\right) & ~~\textrm{for }~\theta\in\left(\pi,2\pi\right]
\end{cases}\label{eq:lem_SGP_property_4}
\end{equation}
\item \label{enu:lem_SGP_properties_5}If $\theta^{SG}\in\left(\pi,2\pi\right)$
then
\begin{equation}
\begin{cases}
k_{0}\left(\theta\right)<k_{1}\left(0\right) & ~~\textrm{for }~\theta\in\left[0,\pi\right]\\
k_{1}\left(\theta-2\pi\right)<k_{1}\left(0\right) & ~~\textrm{for }~\theta\in\left(\pi,\theta^{SG}\right)\\
k_{1}\left(\theta-2\pi\right)=k_{1}\left(0\right) & ~~\textrm{for }~\theta\in\left[\theta^{SG},2\pi\right]
\end{cases}\label{eq:lem_SGP_property_5}
\end{equation}
\end{enumerate}
\end{lem}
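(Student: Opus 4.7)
The plan is to derive all five claims directly from the structural description of the spectrum in Lemma \ref{lem:spectrum_structure}, exploiting that $K(\metgraph;\cdot)$ is a smooth, strictly increasing map on $(-\pi,\infty)$ with $\lim_{\theta\to\infty}K(\metgraph;\theta)=\infty$, and that $\sigma(\metgraph;\theta)=\{K(\metgraph;\theta+2\pi n)\}_{n=0}^{\infty}\cup\Delta(\metgraph)$. The key preliminary observation is that for any connected Neumann graph one has $K(\metgraph;0)=0$ and $0\notin\Delta(\metgraph)$: the zero eigenfunction is the constant, which satisfies the $\delta$-condition \eqref{eq:delta-condition_with_theta} only for $\theta=0$, so $0$ cannot lie in the flat-band set shared by all spectra; hence $0$ must coincide with some $K(\metgraph;2\pi n)$, and strict monotonicity together with $K(\metgraph;2\pi m)\in\sigma(\metgraph;0)\subseteq[0,\infty)$ for every $m\geq 0$ forces $n=0$.

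Claim \eqref{enu:lem_SGP_properties_1} then follows because $K(\metgraph;\cdot)$ is injective and because $K(\metgraph;2\pi)\in\sigma(\metgraph;0)$ is strictly positive, so $K(\metgraph;2\pi)\geq k_1(\metgraph;0)>0=K(\metgraph;0)$; by the intermediate value theorem a unique $\theta^{SG}\in(0,2\pi]$ satisfies $K(\metgraph;\theta^{SG})=k_1(\metgraph;0)$, which simultaneously yields \eqref{enu:lem_SGP_properties_2}. Claim \eqref{enu:lem_SGP_properties_3} is immediate from injectivity: if $\theta^{SG}\neq 2\pi$ then $\theta^{SG}\in(0,2\pi)$ lies outside $2\pi\Z$, so $k_1(\metgraph;0)=K(\metgraph;\theta^{SG})\neq K(\metgraph;2\pi n)$ for every integer $n\geq 0$, and the decomposition of $\sigma(\metgraph;0)$ forces $k_1(\metgraph;0)\in\Delta(\metgraph)$.

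For claims \eqref{enu:lem_SGP_properties_4} and \eqref{enu:lem_SGP_properties_5} the strategy is to enumerate $\sigma(\metgraph;\tilde\theta)$ (with $\tilde\theta=\theta$ on $[0,\pi]$ and $\tilde\theta=\theta-2\pi$ on $(\pi,2\pi]$) and pick out its smallest, respectively second-smallest, element. A second preliminary observation is that whenever $\theta^{SG}\neq 2\pi$, $k_1(\metgraph;0)$ is the smallest non-negative element of $\Delta(\metgraph)$: any positive flat band below $k_1(\metgraph;0)$ would belong to $\sigma(\metgraph;0)$ and sit strictly between $k_0(\metgraph;0)=0$ and $k_1(\metgraph;0)$, a contradiction. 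Combined with the strict monotonicity of $K$ this reduces each sub-case of \eqref{eq:lem_SGP_property_4} and \eqref{eq:lem_SGP_property_5} to a finite comparison between $K(\metgraph;\tilde\theta)$, $K(\metgraph;\tilde\theta+2\pi)$ and $k_1(\metgraph;0)$; the higher branches $K(\metgraph;\tilde\theta+2\pi n)$ with $n\geq 2$ are ruled out by $K(\metgraph;\tilde\theta+4\pi)>K(\metgraph;2\pi)\geq k_1(\metgraph;0)$, valid in the relevant range of $\tilde\theta$.

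The main obstacle is claim \eqref{enu:lem_SGP_properties_5}, specifically the sub-interval $\theta\in(\pi,\theta^{SG})$: there the second eigenvalue of $\sigma(\metgraph;\theta-2\pi)$ is realised by the dispersion branch $K(\metgraph;\theta)$ rather than by the flat band, and one must verify both that $K(\metgraph;\theta)$ sits strictly between $K(\metgraph;\theta-2\pi)<0$ and $k_1(\metgraph;0)$, which is immediate since $\pi<\theta<\theta^{SG}$ and $K$ is strictly increasing, and that no flat band can usurp the second-smallest slot, which follows from the preliminary observation that the lowest non-negative element of $\Delta(\metgraph)$ is $k_1(\metgraph;0)$. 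The remaining sub-cases in \eqref{eq:lem_SGP_property_4} and \eqref{eq:lem_SGP_property_5} are then routine comparisons of the same three candidates.
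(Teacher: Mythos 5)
Your treatment of claims (1)--(3) is essentially correct and close to the paper's: existence, uniqueness and the range $\left[0,2\pi\right]$ follow from $K\left(\metgraph;0\right)=0$, strict monotonicity, and $K\left(\metgraph;2\pi\right)\in\sigma\left(\metgraph;0\right)$, exactly as in the paper; for claim (3) your injectivity argument is a legitimate shortcut around Lemma \ref{lem:flat_band_eigenvalue}, which is what the paper invokes instead. Your preliminary observations that $0\notin\Delta\left(\metgraph\right)$ and that $\Delta\left(\metgraph\right)$ has no element in $\left[0,k_{1}\left(\metgraph;0\right)\right)$ are also correct and are the right preparation for (4)--(5).

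However, for the equality cases in (4) and (5) --- the lines $k_{1}\left(\theta-2\pi\right)=k_{1}\left(0\right)$ for $\theta\in\left(\pi,2\pi\right]$, respectively $\theta\in\left[\sgp,2\pi\right]$ --- your argument has a genuine gap. You identify $k_{1}\left(\metgraph;\tilde{\theta}\right)$ with the second-smallest \emph{element of the set} $\left\{ K\left(\tilde{\theta}+2\pi n\right)\right\} _{n\geq0}\cup\Delta\left(\metgraph\right)$. But Lemma \ref{lem:spectrum_structure}, as quoted, is only a set-level description of the spectrum and carries no multiplicity information (this is precisely why the paper proves Lemma \ref{lem:multiplicity_at_crossings} separately, using interlacing). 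If the bottom eigenvalue $K\left(\tilde{\theta}\right)<0$ of the $\delta$-problem with $\tilde{\theta}\in\left(-\pi,0\right)$ were a multiple eigenvalue, one would have $k_{1}\left(\tilde{\theta}\right)=K\left(\tilde{\theta}\right)<0\neq k_{1}\left(0\right)$, and nothing in your enumeration excludes this; ruling out ``higher branches'' and flat bands addresses other candidates for the second slot, not the multiplicity of the first. The missing ingredient is a lower bound on $k_{1}\left(\tilde{\theta}\right)$: either the simplicity of the ground state (true, but nowhere argued in your text), or --- as the paper does --- the interlacing inequality of Lemma \ref{lem:interlacing_wrt_delta_parameters}, which gives $k_{1}\left(\metgraph;\tilde{\theta}\right)\geq k_{0}\left(\metgraph;0\right)=0$; combined with your enumeration showing $\sigma\left(\metgraph;\tilde{\theta}\right)\cap\left[0,k_{1}\left(\metgraph;0\right)\right)=\emptyset$, this pins $k_{1}\left(\tilde{\theta}\right)=k_{1}\left(0\right)$. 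The strict-inequality lines and all statements about $k_{0}$ are unaffected (an upper bound on $k_{1}$, or the minimum of the spectrum, needs no multiplicity input), so the repair is a one-line appeal to Lemma \ref{lem:interlacing_wrt_delta_parameters}, which is exactly the lemma the paper's proof cites for properties (4) and (5) and which you never use.
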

\begin{proof}
The existence of the spectral gap parameter follows from $K\left(\metgraph;~0\right)=0$
together with $K\left(\metgraph;\cdot\right)$ being monotonically
increasing. This latter argument also shows the uniqueness of the
SGP and that $\sgp\geq0$.

We have that $K(\metgraph;~2\pi)=k_{n}(\metgraph;~0)$ for some $n$
and hence, by continuity and monotonicity of $K$ we get $\sgp\leq2\pi$,
which shows property \eqref{enu:lem_SGP_properties_2} above.

If $\sgp<2\pi$ we have $k_{1}(\metgraph;~0)\in\sigma(\metgraph;~0)\cap\sigma(\metgraph;~\sgp)$
and by Lemma \ref{lem:flat_band_eigenvalue} conclude $k_{1}(\metgraph;~0)\in\Delta(\metgraph)$,
which proves property \eqref{enu:lem_SGP_properties_3}. Finally,
properties \eqref{enu:lem_SGP_properties_4} and \eqref{enu:lem_SGP_properties_5}
are straightforward consequences of the strict monotonicity of $K$
together with the eigenvalue interlacing with respect to the $\delta$-type
condition parameter (see Lemma \ref{lem:interlacing_wrt_delta_parameters}).
\end{proof}
The main construction in this section involves scaling two disjoint
graphs and gluing them at a vertex to form a new graph, as defined
below. 
\begin{defn}
Let $\Gamma_{1},\Gamma_{2}$ two Neumann graphs of total length 1
each. Let $v_{i}$ be a vertex of $\Gamma_{i}$ ($i=1,2$). Let $\Gamma$
be the graph obtained by the following process 

\begin{enumerate}
\item Multiply all edge lengths of $\Gamma_{1}$ by some factor $L\in\left[0,1\right]$. 
\item Multiply all edge lengths of $\Gamma_{2}$ by a factor of $1-L$. 
\item Identify $v_{1}$ and $v_{2}$ of the graphs above and endow the new
vertex with Neumann vertex conditions. 
\end{enumerate}
We call $\Gamma$ the \emph{gluing} of $\Gamma_{1},\Gamma_{2}$ (with
respect to $v_{1},v_{2}$ and $L$). 
\end{defn}
\begin{prop}
\label{prop:subadditive_gluing_principle} Let $\Gamma_{1},\Gamma_{2}$
two connected Neumann graphs of total length 1 each. Let $v_{i}$
be a vertex of $\Gamma_{i}$ ($i=1,2$). Let $\Gamma$ be the gluing
of $\Gamma_{1},\Gamma_{2}$ with respect to $v_{1},v_{2}$ and some
value $L\in\left[0,1\right]$. Let $\sgp_{1},\sgp_{2}$ be the spectral
gap parameters of $\Gamma_{1},\Gamma_{2}$ with respect to $v_{1},v_{2}$,
correspondingly. Then the following inequality holds 
\begin{equation}
k_{1}\left(\Gamma\right)\leq k_{1}\left(\Gamma_{1}\right)+k_{1}\left(\Gamma_{2}\right),\label{eq:subadditive_gluing_principle}
\end{equation}
with equality if and only if both conditions below are satisfied 

\begin{enumerate}
\item \label{enu:prop_SGP_condition_1}$L=\frac{k_{1}\left(\Gamma_{1}\right)}{k_{1}\left(\Gamma_{1}\right)+k_{1}\left(\Gamma_{2}\right)}$\vspace{2mm}
\item \label{enu:prop_SGP_condition_2} $\sgp_{1}+\sgp_{2}\leq2\pi$
\end{enumerate}
\noindent Additional necessary conditions for equality in \eqref{eq:subadditive_gluing_principle}
are \renewcommand{\labelenumi}{(\alph{enumi})}

\begin{enumerate}
\item \label{enu:prop_SGP_condition_a}The spectral gaps of the glued graphs
obey $k_{1}\left(\Gamma_{1}\right)\in\Delta\left(\metgraph_{1}\right)$
and $k_{1}\left(\Gamma_{2}\right)\in\Delta\left(\metgraph_{2}\right)$
.
\item \label{enu:prop_SGP_condition_b}The spectral gap of the outcome graph,
$k_{1}\left(\metgraph\right)$ is a multiple (i.e. non-simple) eigenvalue. 
\end{enumerate}
\end{prop}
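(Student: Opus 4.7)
First, to establish \eqref{eq:subadditive_gluing_principle} and the necessity of condition (1), I would decouple at the glue vertex. Let $\tilde\Gamma$ denote the disjoint union $(L\Gamma_1) \sqcup ((1-L)\Gamma_2)$ obtained by splitting $v$ into two independent Neumann vertices. The domain of the Laplacian on $\Gamma$ sits inside that of $\tilde\Gamma$ as the codimension-one subspace cut out by $f(v_1) = f(v_2)$, so Glazman-type rank-one interlacing yields $k_1(\Gamma) \leq k_2(\tilde\Gamma)$. Since $\tilde\Gamma$ is disconnected with two Neumann components, $k_0(\tilde\Gamma) = k_1(\tilde\Gamma) = 0$ and $k_2(\tilde\Gamma) = \min(k_1(\Gamma_1)/L,\, k_1(\Gamma_2)/(1-L))$. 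The elementary inequality $\min(a/L,\, b/(1-L)) \leq a + b$ for $L \in (0,1)$, with equality iff $L = a/(a+b)$, then produces the proposition's bound and makes condition (1) necessary for equality.

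Assuming condition (1) and equality, let $k := k_1(\Gamma) = k_1(\Gamma_1)+k_1(\Gamma_2)$. An eigenfunction $f$ at $k$, restricted to the rescaled $\Gamma_i$, produces $\tilde f_i$: an eigenfunction of unit-length $\Gamma_i$ at $k_1(\Gamma_i)$ satisfying a $\delta$-type condition with some parameter $\theta_i$ at $v_i$ (or Dirichlet if $f(v)=0$). Since $k_1(\Gamma_i)$ lies in both $\sigma(\Gamma_i;0)$ and $\sigma(\Gamma_i;\theta_i)$, Lemma \ref{lem:flat_band_eigenvalue} forces either $\theta_i = 0$ or $k_1(\Gamma_i) \in \Delta(\Gamma_i)$. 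A short case analysis using the matching relation $(1-L)\tan(\theta_1/2) + L\tan(\theta_2/2) = 0$, derived from the Neumann condition at $v$, rules out the mixed situation $\theta_1 = 0,\,\theta_2 \neq 0$ and forces condition (a). The two independent $\Delta$-eigenfunctions from each side then combine to produce at least a two-dimensional eigenspace of $\Gamma$ at $k$, giving condition (b).

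For condition (2), I use the $\pi$-periodicity of $\tan$ together with $\theta_i^{SG} \in [0,2\pi]$ to rewrite the matching relation at the critical $L$ as $k_2 \tan(\theta_1^{SG}/2) + k_1 \tan(\theta_2^{SG}/2) = 0$. To rule out smaller eigenvalues in $(0,k_1+k_2)$, I introduce the secular function $\Phi(k') := (1-L)\tan(\theta_1(k')/2) + L\tan(\theta_2(k')/2)$, where $\theta_i(k')$ is obtained from $Lk' = K_i(\tilde\theta_i)$ via the dispersion curve. Each $\tilde\theta_i$ passes through $\pi$ (a Dirichlet jump where $\tan \to \pm\infty$) inside $(0,k_1+k_2)$ if and only if $\theta_i^{SG} > \pi$, so a careful sign and monotonicity analysis on the continuous pieces of $\Phi$'s domain, combined with the handling of the $f(v)=0$ branch via the $\Delta$ characterization already established, shows that $\Phi$ admits no zero in $(0,k_1+k_2)$ precisely when $\theta_1^{SG} + \theta_2^{SG} \leq 2\pi$.

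The main obstacle will be this final step, in which the $f(v)\neq 0$ (dispersion) and $f(v)=0$ (Dirichlet) branches must be treated in tandem. The degenerate boundary case $\theta_1^{SG} + \theta_2^{SG} = 2\pi$, where $\tan(\theta_2^{SG}/2) = -\tan(\theta_1^{SG}/2)$ reduces the matching relation to $(k_2 - k_1)\tan(\theta_1^{SG}/2) = 0$ and thus requires $k_1(\Gamma_1) = k_1(\Gamma_2)$ for consistency, must be reconciled with the general equality statement by showing that the Dirichlet/flat-band eigenfunctions from $\Delta(\Gamma_i)$ take over when the dispersion branch fails to provide a match.
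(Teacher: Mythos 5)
Your first paragraph reproduces the paper's own first step (Lemma \ref{lem:interlacing_of_gluing} applied to the decoupled graph, then minimizing $\min\bigl(k_{1}(\Gamma_{1})/L,\,k_{1}(\Gamma_{2})/(1-L)\bigr)$ over $L$), so the inequality \eqref{eq:subadditive_gluing_principle} and the necessity of condition \eqref{enu:prop_SGP_condition_1} are fine. The genuine gap is that the heart of the proposition --- the equivalence, given condition \eqref{enu:prop_SGP_condition_1}, between equality and $\sgp_{1}+\sgp_{2}\leq2\pi$ --- is never actually proved: the ``careful sign and monotonicity analysis'' of $\Phi$ is deferred, and you yourself call it the main obstacle. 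This is not a routine verification. First, $\Phi$ only detects eigenvalues of $\Gamma$ whose eigenfunctions do not vanish at $v$; eigenvalues below $k_{1}(\Gamma_{1})+k_{1}(\Gamma_{2})$ with $f(v)=0$ occur exactly when the two rescaled halves share a Dirichlet eigenvalue there (forcing both $\sgp_{i}>\pi$), and this branch must be treated separately --- the paper does so in the sub-case $k_{0}(\tilde{\Gamma}_{1};\pi)=k_{0}(\tilde{\Gamma}_{2};\pi)$ by matching derivative sums. Second, and more seriously, the endpoint relation your $\Phi$ produces is the weighted one, $k_{2}\tan(\sgp_{1}/2)+k_{1}\tan(\sgp_{2}/2)=0$ (you write it yourself): since the $\delta$-coefficient scales like an inverse length, the spectral gap parameters that enter your secular function are those of the rescaled halves $L\Gamma_{1}$, $(1-L)\Gamma_{2}$, not of the unit graphs, and your sketch never explains how this weighted criterion closes up to the unweighted condition $\sgp_{1}+\sgp_{2}\leq2\pi$ that you must prove; in the mixed case $\sgp_{1}\leq\pi<\sgp_{2}$ with $k_{1}(\Gamma_{1})\neq k_{1}(\Gamma_{2})$ the two are not obviously the same statement. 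The paper avoids a secular function entirely: for necessity it locates a zero $\tilde{\theta}\in(0,\pi)$ of $h(\theta)=k_{0}(\tilde{\Gamma}_{1};\theta)-k_{1}(\tilde{\Gamma}_{2};-\theta)$ as in \eqref{eq:prop_SGP_h_theta} and glues the two $\delta$-type eigenfunctions into a Neumann eigenfunction of $\Gamma$ strictly below $k_{1}(\Gamma_{1})+k_{1}(\Gamma_{2})$, while for sufficiency it restricts a hypothetical smaller eigenfunction to the two halves and derives a contradiction from \eqref{eq:lem_SGP_property_4}--\eqref{eq:lem_SGP_property_5}; none of this work appears in your plan.

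There is also a gap in your route to condition (a). Lemma \ref{lem:flat_band_eigenvalue} gives you, per side, ``$\theta_{i}=0$ or $k_{1}(\Gamma_{i})\in\Delta(\Gamma_{i})$'', and your matching relation only rules out the mixed alternative $\theta_{1}=0\neq\theta_{2}$; it does not exclude $\theta_{1}=\theta_{2}=0$ (both restrictions Neumann at $v$), nor the case where the chosen eigenfunction vanishes identically on one half, and in those cases you obtain no flat-band conclusion for the corresponding side. In the paper, (a) is not read off the eigenfunction at all: it follows from the already-established necessity of condition \eqref{enu:prop_SGP_condition_2}, which forces $\sgp_{1},\sgp_{2}\neq2\pi$, and then Lemma \ref{lem:SG_paramter_properties}\eqref{enu:lem_SGP_properties_3} yields $k_{1}(\Gamma_{i})\in\Delta(\Gamma_{i})$; condition (b) then follows by extending the flat-band eigenfunctions by zero, which is the one step where your argument and the paper's coincide. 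In short, your plan puts (a) and (b) before the equivalence with condition \eqref{enu:prop_SGP_condition_2}, but it is exactly that equivalence --- the part you leave open --- on which they rest.
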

\begin{proof}
We start by showing the inequality \eqref{eq:subadditive_gluing_principle}.

Let $L\in\left[0,1\right]$. If $L=0$ ($L=1$), then $\metgraph=\metgraph_{2}$
($\metgraph=\metgraph_{1}$) and \eqref{eq:subadditive_gluing_principle}
obviously holds as a strict inequality and indeed condition \eqref{enu:prop_SGP_condition_1}
is violated if $L=0$ or $L=1$. We therefore assume $L\in\left(0,1\right)$.
Denote by $\tilde{\Gamma}{}_{1}$ the graph obtained by multiplying
all edge lengths of $\Gamma_{1}$ by $L$ and by $\tilde{\Gamma}{}_{2}$
the graph obtained by multiplying all edge lengths of $\Gamma_{2}$
by $1-L$. Therefore identifying the vertices $v_{1},v_{2}$ of $\metgraph_{1},\metgraph_{2}$
gives the graph $\metgraph$. Applying Lemma \ref{lem:interlacing_of_gluing}
we get 
\[
k_{1}\left(\Gamma\right)\leq k_{2}\left(\tilde{\Gamma}{}_{1}\cup\tilde{\Gamma}{}_{2}\right).
\]
As the spectrum of $\tilde{\Gamma}{}_{1}\cup\tilde{\Gamma}{}_{2}$
is the union of spectra of both graphs, we have that 
\[
k_{0}\left(\tilde{\Gamma}{}_{1}\cup\tilde{\Gamma}{}_{2}\right)=k_{1}\left(\tilde{\Gamma}{}_{1}\cup\tilde{\Gamma}{}_{2}\right)=0\enspace\enspace\textrm{and}\enspace\enspace k_{2}\left(\tilde{\Gamma}{}_{1}\cup\tilde{\Gamma}{}_{2}\right)=\min\left(k_{1}\left(\tilde{\Gamma}{}_{1}\right),k_{1}\left(\tilde{\Gamma}{}_{2}\right)\right)
\]
and conclude 
\begin{equation}
k_{1}\left(\Gamma\right)\leq\min\left(k_{1}\left(\tilde{\Gamma}{}_{1}\right),k_{1}\left(\tilde{\Gamma}{}_{2}\right)\right)=\min\left(\frac{k_{1}\left(\Gamma{}_{1}\right)}{L},\frac{k_{1}\left(\Gamma{}_{2}\right)}{1-L}\right).\label{eq:prop_SGP_1}
\end{equation}
We consider the right hand side of \eqref{eq:prop_SGP_1} as a function
of $L$. The minimal value of this function is $k_{1}\left(\Gamma_{1}\right)+k_{1}\left(\Gamma_{2}\right)$
and it is obtained at $L=\frac{k_{1}\left(\Gamma_{1}\right)}{k_{1}\left(\Gamma_{1}\right)+k_{1}\left(\Gamma_{2}\right)}$,
which proves \eqref{eq:subadditive_gluing_principle}. In addition,
as the minimal value of this function is unique, it also proves that
condition \eqref{enu:prop_SGP_condition_1} is necessary for equality
in \eqref{eq:subadditive_gluing_principle} to hold. From now on we
assume throughout the proof that condition \eqref{enu:prop_SGP_condition_1}
of the proposition is satisfied, so that $k_{1}\left(\tilde{\Gamma}{}_{1}\right)=k_{1}\left(\tilde{\Gamma}{}_{2}\right)$. 

Next, we examine two ranges of $\sgp_{1},\sgp_{2}$ values and show
those values make the inequality in \eqref{eq:subadditive_gluing_principle}
strict.

\begin{enumerate}
\item \uline{$\sgp_{1}>\pi$ and $\sgp_{1}>\pi$.}\\
By \eqref{eq:lem_SGP_property_5} we have $k_{0}(\tilde{\metgraph}_{i};~\pi)<k_{1}(\tilde{\metgraph}_{i};~0)$
for both $i=1,2$. Assume first that $k_{0}(\tilde{\metgraph}_{1};~\pi)\neq k_{0}(\tilde{\metgraph}_{2};~\pi)$
and without loss of generality that $k_{0}(\tilde{\metgraph}_{1};~\pi)>k_{0}(\tilde{\metgraph}_{2};~\pi)$.\\
Examine the function 
\begin{equation}
h\left(\theta\right):=\begin{cases}
k_{0}\left(\tilde{\Gamma}{}_{1};~\theta\right)-k_{1}\left(\tilde{\Gamma}{}_{2};~-\theta\right) & ~~~\theta\in\left[0,\pi\right)\\
k_{0}\left(\tilde{\Gamma}{}_{1};~\pi\right)-k_{0}\left(\tilde{\Gamma}{}_{2};~\pi\right) & ~~~\theta=\pi
\end{cases}.\label{eq:prop_SGP_h_theta}
\end{equation}
By lemma \ref{lem:continuity_wrt_delta_parameter} we have that $h$
is a continuous non-decreasing function. In addition $h\left(0\right)=-k_{1}(\tilde{\metgraph}_{2};~0)<0$
and by the assumption $k_{0}(\tilde{\Gamma}{}_{1};~\pi)>k_{0}(\tilde{\Gamma}{}_{2};~\pi)$
we have $h\left(\pi\right)>0$. Hence $h$ vanishes at some value
$\tilde{\theta}\in\left(0,\pi\right)$, so that we find 
\begin{equation}
k_{0}\left(\tilde{\Gamma}{}_{1};~\tilde{\theta}\right)=k_{1}\left(\tilde{\Gamma}{}_{2};~-\tilde{\theta}\right).\label{eq:theta_and_minus_theta_dispertion}
\end{equation}
Denote by $\tilde{f}_{1}$ the eigenfunction corresponding to $k_{0}(\tilde{\Gamma}{}_{1};~\tilde{\theta})$
and by $\tilde{f}_{2}$ the eigenfunction corresponding to $k_{1}(\tilde{\Gamma}{}_{2};~-\tilde{\theta})$.
We use $\tilde{f}_{1},\tilde{f}_{2}$ to construct an eigenfunction
on the whole of $\metgraph$ as follows. First, notice that for both
$i=1,2$ , $\tilde{f}_{i}(v_{i})\neq0$. Assuming otherwise, we obtain
that $\tilde{f}_{i}$ obeys Dirichlet condition at $v_{i}$ and as
$\tilde{\theta}\ne\pi$ we get that $\tilde{f}_{i}$ obeys Neumann
conditions as well at $v_{i}$. Since $\tilde{\theta}<\sgp_{i}$,
the corresponding eigenvalue is strictly lower than the spectral gap.
As $\tilde{f}_{i}(v_{i})\neq0$ for $i=1,2$, we may normalize the
$\tilde{f}_{i}$'s so that $\tilde{f}_{1}(v_{1})=\tilde{f}_{2}(v_{2})$.
Now form an eigenfunction $f$ on $\metgraph$ by setting
\begin{equation}
f\left(x\right):=\begin{cases}
\tilde{f}_{1}\left(x\right) & ~x\in\tilde{\metgraph}_{1},\\
\tilde{f}_{2}\left(x\right) & ~x\in\tilde{\metgraph}_{2}.
\end{cases}\label{eq:prop_SGP_function_construction}
\end{equation}
where we consider $\tilde{\metgraph}_{1},\tilde{\metgraph}_{2}$ as
subgraphs of $\metgraph$. The normalization $\tilde{f}_{1}(v_{1})=\tilde{f}_{2}(v_{2})$
gives that $f$ is continuous at the glued vertex $v$. In addition,
its sum of derivatives there equals
\begin{equation}
\sum_{e\in\E_{v_{1}}}\left.\tilde{f}_{1}'\right|(v_{1})+\sum_{e\in\E_{v_{2}}}\left.\tilde{f}_{2}'\right|(v_{2})=\tan\left(\frac{\tilde{\theta}}{2}\right)\tilde{f}_{1}(v_{1})+\tan\left(\frac{-\tilde{\theta}}{2}\right)\tilde{f}_{2}(v_{2})=0.\label{eq:prop_SGP_derivatives_sum_zero}
\end{equation}
We conclude that $f$ is a Neumann eigenfunction on $\metgraph$ whose
eigenvalue equals $k_{0}(\tilde{\Gamma}{}_{1};~\tilde{\theta})=k_{1}(\tilde{\Gamma}{}_{2};~-\tilde{\theta})$.
However, this eigenvalue is strictly smaller than $k_{1}\left(\tilde{\Gamma}{}_{i}\right)$,
for both $i=1,2$, as shows the following chain of inequalities
\begin{equation}
k_{0}(\tilde{\Gamma}{}_{1};~\tilde{\theta})\leq k_{0}(\tilde{\Gamma}{}_{1};~\pi)<k_{1}(\tilde{\Gamma}{}_{1};~0)=k_{1}(\tilde{\Gamma}{}_{2};~0),\label{eq:prop_SGP_inequalities_chain}
\end{equation}
where the first inequality is due to eigenvalue monotonicity, the
second is by \eqref{eq:lem_SGP_property_5} and the last equality
results since our current working assumption is the validity of condition
\eqref{enu:prop_SGP_condition_1}, as discussed above. Therefore,
we have found an eigenvalue of $\metgraph$ strictly smaller than
both $k_{1}\left(\tilde{\Gamma}{}_{i}\right)$, so that there is a
strict inequality in \eqref{eq:prop_SGP_1} and therefore strict inequality
in \eqref{eq:subadditive_gluing_principle}.\\
We now assume $k_{0}(\tilde{\metgraph}_{1};~\pi)=k_{0}(\tilde{\metgraph}_{2};~\pi)$.
Denote by $\tilde{f}_{1},\tilde{f}_{2}$ as above the corresponding
eigenfunctions. By \eqref{eq:lem_SGP_property_5} $k_{0}(\tilde{\metgraph}_{i};~\pi)<k_{1}(\tilde{\metgraph}_{i};~0)$
for both $i=1,2$ and therefore $\tilde{f}_{i}$ does not obey Neumann
conditions at $v_{i}$ (as otherwise, its eigenvalue would be the
spectral gap). Using that the sum of derivatives of $\tilde{f}_{i}$
at $v_{i}$ differs from zero, we may normalize both $\tilde{f}_{1},\tilde{f}_{2}$
so that their sums of derivatives are opposite. Now, constructing
a function $f$ on $\metgraph$ as in \eqref{eq:prop_SGP_function_construction}
shows just as above (see \eqref{eq:prop_SGP_inequalities_chain} and
the argument which follows) that inequality \eqref{eq:prop_SGP_1}
is strict in this case as well. We conclude that the inequality in
\eqref{eq:subadditive_gluing_principle} is strict if $\sgp_{1}>\pi$
and $\sgp_{1}>\pi$.
\item \uline{$\sgp_{1}+\sgp_{2}>2\pi$ and $\left\{ \sgp_{1}\leq\pi<\sgp_{2}\text{ or }\sgp_{2}\leq\pi<\sgp_{1}\right\} $.}\\
Assume without loss of generality that $\sgp_{1}<\sgp_{2}$. We have
the following chain of inequalities 
\[
k_{0}(\tilde{\metgraph}_{2};~\pi)<k_{1}(\tilde{\metgraph}_{2};~0)=k_{1}(\tilde{\metgraph}_{1};~0)=k_{0}(\tilde{\metgraph}_{1};~\pi),
\]
where the first inequality comes from \mbox{\eqref{eq:lem_SGP_property_5}}
(keeping in mind that $\sgp_{2}>\pi$), the first equality is our
working assumption (assuming the validity of condition \mbox{\eqref{enu:prop_SGP_condition_1}})
and the second equality comes from \mbox{\eqref{eq:lem_SGP_property_4}}
(keeping in mind that $\sgp_{1}\leq\pi$). Therefore, defining the
function $h$ as in \mbox{\eqref{eq:prop_SGP_h_theta}} we find that
$h(0)<0$ and $h(\pi)>0$. As before we conclude that $h$ vanishes
for some value $\tilde{\theta}\in\left(0,\pi\right)$ and hence $k_{0}(\tilde{\metgraph}_{1};~\tilde{\theta})=k_{1}(\tilde{\metgraph}_{2};~-\tilde{\theta})$.
Similarly to the previous case, we may use this equality to construct
a Neumann eigenfunction on $\metgraph$ whose eigenvalue equals $k_{0}(\tilde{\metgraph}_{1};~\tilde{\theta})$
and to show that strict inequality happens in \mbox{\eqref{eq:subadditive_gluing_principle}}
for this case.
\end{enumerate}
Notice that condition \eqref{enu:prop_SGP_condition_2} of the proposition
forms the complement of the two cases examined above. Therefore, we
have proven so far that this condition is necessary for the equality
in \eqref{eq:subadditive_gluing_principle} to hold. We proceed to
show that conditions \eqref{enu:prop_SGP_condition_1},\eqref{enu:prop_SGP_condition_2}
are sufficient as well. Recall that assuming condition \eqref{enu:prop_SGP_condition_1}
implies $k_{1}(\tilde{\Gamma}{}_{1};~0)=k_{1}(\tilde{\Gamma}{}_{2};~0)$.
We further assume by contradiction that $k_{1}\left(\metgraph\right)<k_{1}(\tilde{\Gamma}{}_{1};~0)$,
and consider the following two cases for the $\sgp_{1},\sgp_{2}$
values:
\begin{enumerate}
\item \uline{$\sgp_{1}\leq\pi$ and $\sgp_{2}\leq\pi$.}\\
First, we note that by \eqref{eq:lem_SGP_property_4} we have $k_{1}(\tilde{\metgraph}_{i};~0)=k_{0}(\tilde{\metgraph}_{i};~\pi)$
for both $i=1,2$.\\
Let $f$ be the eigenfunction corresponding to $k_{1}\left(\metgraph\right)$.
Denote $\tilde{f}_{i}=f\at_{\tilde{\metgraph}_{i}}$ for $i=1,2$.
We find that there exists some $\tilde{\theta}$ such that $k_{n_{1}}(\tilde{\metgraph}_{1};~\tilde{\theta})=k_{1}\left(\metgraph\right)$,
for some $n_{1}$. We cannot have $\tilde{\theta}=\pi$, as otherwise
we get 
\[
k_{n_{1}}(\tilde{\metgraph}_{1};~\pi)=k_{1}\left(\metgraph\right)<k_{1}(\tilde{\Gamma}{}_{1};~0)=k_{0}(\tilde{\metgraph}_{1};~\pi)
\]
and contradiction. We find that as $\tilde{f}_{1}$ satisfies the
$\delta$-type condition at $v_{1}$ with the parameter $\tilde{\theta}$,
$\tilde{f}_{2}$ satisfies the $\delta$-type condition at $v_{2}$
with the parameter $-\tilde{\theta}$ (since the total sum of derivatives
is zero and see \eqref{eq:prop_SGP_derivatives_sum_zero}). Assume
without loss of generality that $\tilde{\theta}>0$. We get that 
\begin{equation}
k_{n_{2}}(\tilde{\metgraph}_{2};~-\tilde{\theta})=k_{1}\left(\metgraph\right)<k_{1}(\tilde{\Gamma}{}_{2};~0),\label{eq:prop_SGP_k_n2}
\end{equation}
which implies either $n_{2}=0$ or $n_{2}=1$. We rule out $n_{2}=0$
as it renders the left hand side of \eqref{eq:prop_SGP_k_n2} negative,
while $k_{1}(\metgraph)>0$. We also rule out $n_{2}=1$, as by \eqref{eq:lem_SGP_property_4}
the left and right hand sides of \eqref{eq:prop_SGP_k_n2} are equal.
Hence, in this case, we get a contradiction to the assumption $k_{1}\left(\metgraph\right)<k_{1}(\tilde{\Gamma}{}_{1};~0)$.
\item \uline{$\sgp_{1}+\sgp_{2}\leq2\pi$ and $\left\{ \sgp_{1}\leq\pi<\sgp_{2}\text{ or }\sgp_{2}\leq\pi<\sgp_{1}\right\} $.}\\
We repeat the construction of $\tilde{f}_{1},\tilde{f}_{2}$ as in
the previous case to get that there exists some $\tilde{\theta}\neq\pi$
such that $k_{n_{1}}(\tilde{\metgraph}_{1};~\tilde{\theta})=k_{1}\left(\metgraph\right)$,
for some $n_{1}$ and $k_{n_{2}}(\tilde{\metgraph}_{2};~-\tilde{\theta})=k_{1}\left(\metgraph\right)$,
for some $n_{2}$. Assume without loss of generality $\sgp_{1}<\sgp_{2}$.
Combining 
\[
k_{n_{1}}(\tilde{\metgraph}_{1};~\tilde{\theta})=k_{1}\left(\metgraph\right)<k_{1}(\tilde{\Gamma}{}_{1};~0)
\]
 with \mbox{\eqref{eq:lem_SGP_property_4}} shows that $n_{1}=0$
and $0<\tilde{\theta}<\sgp_{1}$. Similarly, we have for $\tilde{\metgraph}_{2}$,
\[
k_{n_{2}}(\tilde{\metgraph}_{2};~-\tilde{\theta})=k_{1}\left(\metgraph\right)<k_{1}(\tilde{\Gamma}{}_{2};~0),
\]
where the positivity of the left hand side implies $n_{2}=1$. Together
with \mbox{\eqref{eq:lem_SGP_property_5}} we get $-\tilde{\theta}<\sgp_{2}-2\pi$.
Combining that with $\tilde{\theta}<\sgp_{1}$ gives $\sgp_{1}+\sgp_{2}>2\pi$
and contradiction to the assumption in this case.
\end{enumerate}
Thus, we have shown that conditions \eqref{enu:prop_SGP_condition_1},\eqref{enu:prop_SGP_condition_2}
of the proposition are also sufficient for equality in \eqref{eq:subadditive_gluing_principle}
to hold.

Finally, we show the necessity of conditions (a),(b) of the proposition.
We have seen that necessary conditions for equality in \eqref{eq:subadditive_gluing_principle}
are \{$\sgp_{1}\leq\pi$ and $\sgp_{2}\leq\pi$\} or \{$\sgp_{1}+\sgp_{2}\leq2\pi$
and $\left\{ \sgp_{1}\leq\pi<\sgp_{2}\text{ or }\sgp_{2}\leq\pi<\sgp_{1}\right\} $\}.
Under those conditions we have both $\sgp_{1}\neq2\pi$ and $\sgp_{2}\neq2\pi$
and by Lemma \ref{lem:SG_paramter_properties},\eqref{enu:lem_SGP_properties_3}
we get $k_{1}(\tilde{\metgraph}_{i})\in\Delta(\tilde{\metgraph}_{i})$
for both $i=1,2$, which is condition (a). Now, in order show that
$k_{1}\left(\metgraph\right)$ is a non-simple eigenvalue we construct
two linearly independent eigenfunctions. As $k_{1}(\tilde{\metgraph}_{i})\in\Delta(\tilde{\metgraph}_{i})$,
by Lemma \ref{lem:flat_band_eigenvalue} there exists an eigenfunction
corresponding to $k_{1}(\tilde{\metgraph}_{i})$ which vanishes at
$v_{i}$ and its sum of derivatives vanishes there as well. Extend
this function to an eigenfunction of $\metgraph$, whose eigenvalue
is $k_{1}(\tilde{\metgraph}_{i})=k_{1}\left(\metgraph\right)$ by
setting it to be equal zero on the complementary subgraph, $\tilde{\metgraph}_{3-i}$.
Performing this for both $i=1$ and $i=2$ we get two linearly independent
eigenfunctions on $\metgraph$, which shows the necessity of condition
(b).
\end{proof}
We use Proposition \ref{prop:subadditive_gluing_principle} to study
the supremizers of graphs whose vertex connectivity equals one. Let
$\disgraph$ be such a graph which is obtained by taking two graphs
$\disgraph_{1},\disgraph_{2}$ and identifying two of their vertices
$v_{1},v_{2}$. An immediate guess is that a supremizer of $\disgraph$
may be obtained by taking the supremizers of $\disgraph_{1},\disgraph_{2}$
and identifying their vertices corresponding to $v_{1},v_{2}$. This
holds under some conditions, as stated in Theorem \ref{thm:supremum_of_gluing}
and proved below.
\begin{proof}[Proof of Theorem \ref{thm:supremum_of_gluing}]
 We start by formulating the Dirichlet criterion in terms of the
SGP, $\sgp$, used in the conditions of Proposition \ref{prop:subadditive_gluing_principle}.
Let $\metgraph$ be a graph which obeys the Dirichlet criterion. This
means that $k_{0}(\metgraph;~\pi)=k_{1}(\metgraph;~0)$ and by Lemma
\ref{lem:SG_paramter_properties} we deduce $\sgp\leq\pi$. Hence,
condition \eqref{enu:thm_SGP_condition_3} of Theorem \ref{thm:supremum_of_gluing}
implies condition \eqref{enu:prop_SGP_condition_2} of Proposition
\ref{prop:subadditive_gluing_principle}.

Assuming conditions \eqref{enu:thm_SGP_condition_1},\eqref{enu:thm_SGP_condition_3}
of the theorem we may now apply Proposition \ref{prop:subadditive_gluing_principle}
and get 
\begin{equation}
k_{1}\left(\metgraph\right)=k_{1}\left(\metgraph_{1}\right)+k_{1}\left(\metgraph_{2}\right).\label{eq:thm_supremum_of_gluing_1}
\end{equation}
Let $\hat{\metgraph}$ be a supremizer of $\disgraph$. In particular,
$k_{1}(\metgraph)\leq k_{1}(\hat{\metgraph})$. Denote by $\hat{\metgraph}_{1},\hat{\metgraph}_{2}$
the subgraphs of $\hat{\metgraph}$ corresponding to $\disgraph_{1},\disgraph_{2}$
and rescaled such that the total length of each of them equals $1$.
By Proposition \ref{prop:subadditive_gluing_principle} 
\begin{equation}
k_{1}\left(\hat{\metgraph}\right)\leq k_{1}\left(\hat{\metgraph}_{1}\right)+k_{1}\left(\hat{\metgraph}_{2}\right).\label{eq:thm_supremum_of_gluing_2}
\end{equation}
Hence we get 
\[
k_{1}\left(\hat{\metgraph}\right)\leq k_{1}\left(\hat{\metgraph}_{1}\right)+k_{1}\left(\hat{\metgraph}_{2}\right)\leq k_{1}\left(\metgraph_{1}\right)+k_{1}\left(\metgraph_{2}\right)=k_{1}\left(\metgraph\right),
\]
where the second inequality holds as $\metgraph_{1},\metgraph_{2}$
are supremizers. We therefore get that $k_{1}(\metgraph)=k_{1}(\hat{\metgraph})$,
so that $\metgraph$ is a supremizer of $\disgraph$ as $\hat{\metgraph}$
is a supremizer of $\disgraph$ (and possibly $\metgraph=\hat{\metgraph}$). 

We now further assume that either for both $i=1,2$ $\metgraph_{i}$
is the unique supremizer of $\disgraph_{i}$ or that both $\Gamma_{1},\metgraph_{2}$
obey the strong Dirichlet criterion and any other supremizer violates
the Dirichlet criterion. Assume that $\hat{\metgraph}$ is a supremizer
of $\disgraph$ so that $k_{1}(\metgraph)=k_{1}(\hat{\metgraph})$.
From \eqref{eq:thm_supremum_of_gluing_1}, \eqref{eq:thm_supremum_of_gluing_2}
we get 
\begin{equation}
k_{1}\left(\metgraph_{1}\right)+k_{1}\left(\metgraph_{2}\right)\leq k_{1}\left(\hat{\metgraph}_{1}\right)+k_{1}\left(\hat{\metgraph}_{2}\right).\label{eq:thm_supremum_of_gluing_3}
\end{equation}
As $\metgraph_{1},\metgraph_{2}$ are supremizers of $\disgraph_{1},\disgraph_{2}$,
we have an equality in \eqref{eq:thm_supremum_of_gluing_3} and get
that for both $i=1,2$, $k_{1}\left(\metgraph_{i}\right)=k_{1}\left(\hat{\metgraph}_{i}\right)$,
so that $\hat{\metgraph}_{1},~\hat{\metgraph}_{2}$ are supremizers
of $\disgraph_{1},\disgraph_{2}$ as well. If both $\metgraph_{1},\metgraph_{2}$
are unique supremizers of $\disgraph_{1},\disgraph_{2}$ then $\metgraph_{i}=\hat{\metgraph}_{i}$
for both $i=1,2$. Hence, $\hat{\metgraph}=\metgraph$. 

We carry on by assuming that both $\Gamma_{1},\metgraph_{2}$ obey
the strong Dirichlet criterion and any other supremizer violates the
Dirichlet criterion. From Lemma \ref{lem:SG_paramter_properties}
we deduce that a graph violates the Dirichlet criterion if and only
if its spectral gap parameter satisfies $\sgp\in\left(\pi,2\pi\right]$.
If for both $i=1,2$, $\hat{\metgraph}_{i}$ is different than $\metgraph_{i}$,
then we have $\sgp_{1},\sgp_{2}\in\left(\pi,2\pi\right]$ and by Proposition
\ref{prop:subadditive_gluing_principle} we have the strict inequality
\begin{equation}
k_{1}\left(\hat{\metgraph}\right)<k_{1}\left(\hat{\metgraph}_{1}\right)+k_{1}\left(\hat{\metgraph}_{2}\right),\label{eq:thm_supremum_of_gluing_4}
\end{equation}
which together with 
\[
k_{1}\left(\hat{\metgraph}_{1}\right)+k_{1}\left(\hat{\metgraph}_{2}\right)=k_{1}\left(\metgraph_{1}\right)+k_{1}\left(\metgraph_{2}\right)=k_{1}\left(\metgraph\right)
\]
contradicts $\hat{\metgraph}$ being a supremizer. From Lemma \ref{lem:multiplicity_at_crossings}
which follows this proof we deduce that a graph obeys the strong Dirichlet
criterion if and only if its SGP equals $\pi$. Therefore, if $\hat{\metgraph}_{i}=\metgraph_{i}$
for either $i=1$ or $i=2$, say $\hat{\metgraph}_{1}=\metgraph_{1}$,
then we have $\sgp_{1}=\pi$ and $\sgp_{2}\in\left(\pi,2\pi\right]$
and once again we get by Proposition \ref{prop:subadditive_gluing_principle}
the inequality \eqref{eq:thm_supremum_of_gluing_4} which contradicts
$\hat{\metgraph}$ being a supremizer.
\end{proof}
\begin{lem}
\label{lem:multiplicity_at_crossings}Let $k\in\Delta(\metgraph)$.
Let $n\in\mathbb{N}$ and $\theta\in(-\pi,\pi]$ such that $k=K(\theta+2n\pi)$.
Assume that $k$ has multiplicity $m+1$ in the spectrum $\sigma(\metgraph;~\theta)$.
Then, for any $\theta'\neq\theta$, $k$ has a multiplicity $m$ as
an eigenvalue in the spectrum $\sigma(\metgraph;~\theta')$.
\end{lem}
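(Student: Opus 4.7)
The plan is to interpret Lemma~\ref{lem:spectrum_structure} as a multiset decomposition of the spectrum: the multiplicity of $k$ in $\sigma(\metgraph;\theta')$ equals the number of indices $n'\geq 0$ for which $K(\metgraph;\theta'+2\pi n')=k$, plus the multiplicity $r$ of $k$ in the flat band $\Delta(\metgraph)$. The key feature is that $r$ is the $\theta'$-independent contribution: flat band eigenfunctions vanish together with having zero derivative sum at $v$, so they automatically satisfy every $\delta$-type condition at $v$ and contribute $r$ to the multiplicity for every $\theta'$.

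First I would count the multiplicity at the distinguished $\theta$. By item~\eqref{enu:lem_spectrum_structure_1}, the dispersion relation $K(\metgraph;\cdot)$ is strictly increasing on $(-\pi,\infty)$, so the equation $K(\metgraph;x)=k$ has at most one solution in that interval; the hypothesis $k=K(\metgraph;\theta+2n\pi)$ pins it down as $x_0:=\theta+2n\pi$. Consequently the moving branches contribute exactly one additional eigenfunction at $\theta$, so the multiplicity of $k$ in $\sigma(\metgraph;\theta)$ equals $r+1$. Equating with the hypothesized value $m+1$ yields $r=m$.

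Next, for any $\theta'\in(-\pi,\pi]$ with $\theta'\neq\theta$, I would verify that no moving branch hits $k$ at $\theta'$. If $K(\metgraph;\theta'+2\pi n')=k$ for some $n'\geq 0$, uniqueness of $x_0$ forces $\theta'+2\pi n'=\theta+2n\pi$, hence $\theta-\theta'\in 2\pi\Z$. Since both $\theta$ and $\theta'$ lie in the fundamental domain $(-\pi,\pi]$, their difference lies in $(-2\pi,2\pi)$, so the only possibility is $\theta=\theta'$, contradicting $\theta'\neq\theta$. Therefore the moving contribution at $\theta'$ is $0$, and the multiplicity of $k$ in $\sigma(\metgraph;\theta')$ equals $r=m$.

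The main subtlety lies in making rigorous the multiset version of Lemma~\ref{lem:spectrum_structure} --- in particular, identifying $r$ as the genuinely $\theta'$-independent part of the multiplicity. This can be done either by revisiting the proof of that lemma in~\cite{BerKuc_quantum_graphs}, or directly by realising each $\delta$-condition eigenspace as the preimage of a $1$-dimensional subspace $L_{\theta'}\subset\C^2$ under the boundary-value map $f\mapsto(f(v),\textstyle\sum_{e\in\mathcal{E}_v}\frac{\ud f}{\ud x_e}(v))$, and observing that its image can have dimension at most $1$ (otherwise a full branch of $K$ would be constant, contradicting strict monotonicity). Once this is settled, the conclusion is an immediate consequence of the strict monotonicity of $K$.
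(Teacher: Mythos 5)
Your counting scheme is the right picture, but the step you defer --- the multiset refinement of Lemma \ref{lem:spectrum_structure} --- is in fact the entire content of the lemma, and the justification you offer for it does not hold up. Write $B:f\mapsto\left(f(v),\ \sum_{e\in\E_v}\frac{\ud f}{\ud x_e}(v)\right)$ on the space $W_k$ of solutions of $-f''=k^2f$ obeying the vertex conditions everywhere except $v$ and continuity at $v$, so that the eigenspace for the $\delta$-type condition with parameter $\theta'$ is $B^{-1}(L_{\theta'})$ and the flat-band part is $\ker B$, of dimension $r$. Your parenthetical argument for $\dim B(W_k)\le 1$ is circular: if $\dim B(W_k)=2$ then $k$ lies in $\sigma(\metgraph;\theta')$ for every $\theta'$, but since $k\in\Delta(\metgraph)$ by hypothesis this is perfectly consistent with the set-level statement of Lemma \ref{lem:spectrum_structure}, and no branch of $K$ is forced to be constant. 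The bound itself is true, but the honest route to it is Green's identity: for $f,g\in W_k$ one gets $f(v)\sum_{e\in\E_v}g'_e(v)-g(v)\sum_{e\in\E_v}f'_e(v)=0$, so $B(W_k)$ is isotropic and hence at most one-dimensional.

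More importantly, even granting $\dim B(W_k)\le1$, your decomposition is still not proved. What that bound gives is: the multiplicity of $k$ in $\sigma(\metgraph;\theta')$ equals $r$ for every $\theta'$ except possibly one exceptional value $\theta^*$ (where $B(W_k)=L_{\theta^*}$), at which it is $r+1$. Nothing so far identifies $\theta^*$ with $\theta$, nor excludes $B(W_k)=\{0\}$; in the latter case the multiplicity would be $m+1$ at every $\theta'$ and the conclusion would fail. The hypothesis $k=K(\theta+2n\pi)$ must be converted into the statement that the crossing of the dispersion branch genuinely raises the multiplicity at $\theta$ above the flat-band dimension, and this is exactly where the paper's proof does its work: it lets $\theta'\to\theta$ from either side, identifies the index $a$ (resp.\ $b$) with $K(\theta'+2n\pi)=k_a(\metgraph;\theta')$, and uses continuity (Lemma \ref{lem:continuity_wrt_delta_parameter}) together with the interlacing of Lemma \ref{lem:interlacing_wrt_delta_parameters} and strict monotonicity of $K$ to pin the $m+1$ copies of $k$ at $\theta$ between indices $a$ and $b=a+m$, from which the multiplicity $m$ at every $\theta'\neq\theta$ is read off. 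Your final counting argument is fine once this is supplied, but as written the proposal assumes the decisive fact (``the moving branch contributes exactly one additional eigenfunction at $\theta$'') rather than proving it, and the one concrete argument offered in its place is invalid.
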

\begin{proof}
Since $\Delta(\metgraph)$ is a discrete set, for $k'<k$ sufficiently
close to $k$, $k'$ does not belong to $\Delta(\metgraph)$. Thus,
for $\theta'<\theta$ sufficiently close to $\theta$, $K(\theta'+2n\pi)$
is not in $\Delta(\metgraph)$. We define $a\in\mathbb{N}$ as the
unique integer satisfying $K(\theta'+2n\pi)=k_{a}(\metgraph,\theta')$
for all $\theta'<\theta$ sufficiently close to $\theta$. Since $K(\cdot+2n\pi)$
and $k(\metgraph,\cdot)$ are continuous functions of their arguments
(see Lemma \ref{lem:spectrum_structure} and Lemma \ref{lem:continuity_wrt_delta_parameter}),
letting $\theta'$ go to $\theta$ gives 
\[
k=k_{a}(\metgraph,\theta).
\]
If $\theta\neq\pi$, we may argue similarly with $\theta'>\theta$
sufficiently close to $\theta$ to find that 
\[
k=k_{b}(\metgraph,\theta)<k_{b}(\metgraph,\theta').
\]
Notice that since $a$ and $b$ are respectively minimal and maximal
integers such that $k=k_{a}(\metgraph,\theta)=k_{b}(\metgraph,\theta)$,
the multiplicity assumption on $k$ in $\sigma(\metgraph;~\theta)$
entails $b=a+m$. As $K$ is strictly increasing and by Lemma \ref{lem:interlacing_wrt_delta_parameters},
we get
\[
\forall\theta'\in(-\pi,\theta),\qquad~\quad k_{a}(\metgraph;~\theta')<k=k_{a+1}(\metgraph;~\theta')=\cdots=k_{b}(\metgraph;~\theta')<k_{b+1}(\metgraph;~\theta')
\]
and

\[
\forall\theta'\in(\theta,\pi],\qquad~\quad k_{a-1}(\metgraph;~\theta')<k=k_{a}(\metgraph;~\theta')=\cdots=k_{b-1}(\metgraph;~\theta')<k_{b}(\metgraph;~\theta').
\]
We conclude from these inequalities that $k$ has multiplicity $m$
in $\sigma(\metgraph;\theta')$ for all $\theta'\neq\theta$.

If $\theta=\pi$, we have 
\[
\forall\theta'\neq\pi,\qquad~\quad k=k_{b}(\metgraph,\pi)<k_{b+1}(\metgraph,\theta'),
\]
and once again
\[
\forall\theta'\neq\pi,\qquad~\quad k_{a}(\metgraph;~\theta')<k=k_{a+1}(\metgraph;~\theta')=\cdots=k_{b}(\metgraph;~\theta')<k_{b+1}(\metgraph;~\theta'),
\]
from which the result follows.
\end{proof}

\section{Symmetrization of dangling edges and loops \label{sec:Symmetrization}}
\begin{prop}
\label{prop:symmetrization}Let $\disgraph$ be a graph with $E\geq3$
edges. Let $v$ be a vertex of $\disgraph$ and $e_{1},e_{2}$ either
two dangling edges or two loops connected to $v$. Let $l_{1},l_{2}$
be the lengths of those edges and denote their average by $\ell:=\frac{1}{2}\left(l_{1}+l_{2}\right)$. 

Denoting $\tilde{\metgraph}:=\metgraph\left(\disgraph;~(l_{1},l_{2},l_{3},\ldots,l_{E})\right)$,
$\metgraph:=\metgraph\left(\disgraph;~(\ell,\ell,l_{3},\ldots,l_{E})\right)$,
we have 
\begin{equation}
k_{1}\left(\tilde{\metgraph}\right)\leq k_{1}\left(\metgraph\right).\label{eq:symmetrization}
\end{equation}
Moreover, if either $k_{1}\left(\metgraph\right)=\frac{\pi}{2\ell}$
in the dangling edges case (respectively, $k_{1}\left(\metgraph\right)=\frac{\pi}{\ell}$
in the loops case) or alternatively both the following conditions
are satisfied

\begin{enumerate}
\item \label{enu:prop_symmetrization_1}$\metgraph$ is a supremizer of
some graph.
\item \label{enu:prop_symmetrization_2}$k_{1}(\tilde{\metgraph})$ is a
simple eigenvalue.
\end{enumerate}
then equality above holds if and only if $l_{1}=l_{2}$.

\end{prop}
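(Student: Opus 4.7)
My plan is to establish $k_1(\tilde\metgraph)\leq k_1(\metgraph)$ by showing that $k_1$ is nonincreasing along a straight-line deformation of edge lengths from $\metgraph$ to $\tilde\metgraph$, and then to extract the equality statements from the strict version of the key estimates. I present the dangling-edge case; the loop case is entirely analogous, with $\cos(kl_i)$ and $\tan(kl_i)$ replaced by $\cos(kl_i/2)$ and $\tan(kl_i/2)$ throughout, and the critical threshold $\pi/(2\ell)$ replaced by $\pi/\ell$. Assume WLOG $l_1\geq l_2$, set $l_i(t):=(1-t)\ell+tl_i$ for $t\in[0,1]$, and let $\metgraph_t:=\metgraph(\disgraph;(l_1(t),l_2(t),l_3,\ldots,l_E))$; then $\metgraph_0=\metgraph$, $\metgraph_1=\tilde\metgraph$, the total length is preserved, and by Appendix A the map $\kappa(t):=k_1(\metgraph_t)$ is continuous on $[0,1]$ and analytic on the dense open subset where $\kappa$ is simple.

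The crucial a priori step is the bound $\kappa(t)\leq\pi/(2\ell)$ on $[0,1]$. I establish this via a universal test function: put $h(x)=\cos(\pi x/(2\ell))$ along the length-$2\ell$ path $e_1\cup e_2$ (parametrised so that the free ends sit at $x=0,2\ell$ and $v$ sits at $x=l_1(t)$), and extend $h$ by the constant $c:=\cos(\pi l_1(t)/(2\ell))$ to the complementary subgraph $B$. Continuity at $v$ and Neumann at the two free leaves are automatic, and since $\int_0^{2\ell}\cos(\pi x/(2\ell))\,\ud x=0$ a direct computation yields
\[
\mathcal{R}\bigl(h-\langle h\rangle\bigr)\;=\;\frac{\pi^2/(4\ell)}{\ell+c^2\,|B|(1-|B|)}\;\leq\;\frac{\pi^2}{4\ell^2},
\]
whence $\kappa(t)\leq\pi/(2\ell)$. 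Since the hypothesis $E\geq 3$ forces $|B|(1-|B|)>0$, this inequality is moreover strict whenever $c\neq 0$, i.e.\ whenever $l_1(t)\neq\ell$, i.e.\ whenever $l_1(t)\neq l_2(t)$; this sharp version is what drives the equality analysis.

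With this bound in hand, Lemma 4.4 gives on the simple-eigenvalue set
\[
\frac{d}{dt}\kappa^2(t)\;=\;-(l_1-\ell)\bigl(\energy_{e_1}(t)-\energy_{e_2}(t)\bigr),
\]
where, using the explicit form $A_i\cos(\kappa(l_i(t)-x))$ of the eigenfunction on each dangling edge together with Lemma 4.3, $\energy_{e_i}(t)=A_i^2\kappa^2=\kappa^2 f_t(v)^2/\cos^2(\kappa l_i(t))$ when $f_t(v)\neq 0$. The key reduction is the elementary trigonometric lemma that $|\cos\alpha|\leq|\cos\beta|$ whenever $0<\beta\leq\alpha$ and $\alpha+\beta\leq\pi$, with equality only when $\alpha=\beta$ or $\alpha+\beta=\pi$---verified by splitting into $\alpha\leq\pi/2$ and $\alpha\in(\pi/2,\pi)$ and exploiting $\beta\leq\pi-\alpha$ in the latter range. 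Applied with $\alpha=\kappa l_1(t)$, $\beta=\kappa l_2(t)$ (permissible by the a priori bound above), this yields $\energy_{e_1}(t)\geq\energy_{e_2}(t)$ and hence $d\kappa^2/dt\leq 0$; the degenerate case $f_t(v)=0$ is handled by direct inspection of the decoupled eigenfunction structure. Continuity of $\kappa$ across the finite exceptional set of multiplicity changes then upgrades the pointwise estimate to $\kappa(1)\leq\kappa(0)$, i.e.\ $k_1(\tilde\metgraph)\leq k_1(\metgraph)$. The main obstacle in this plan is precisely the a priori bound: without it the trigonometric inequality reverses and the derivative argument fails.

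For the equality statements, the trigonometric inequality is strict whenever $l_1(t)>l_2(t)$ and $\kappa(t)<\pi/(2\ell)$. Under the hypothesis $k_1(\metgraph)=\pi/(2\ell)$, equality $k_1(\tilde\metgraph)=\pi/(2\ell)$ would saturate the displayed Rayleigh bound at $t=1$; by the strict test-function estimate of paragraph two, this is impossible unless $l_1=l_2$. Under the alternative hypothesis that $\metgraph$ is a supremizer and $k_1(\tilde\metgraph)$ is simple, simplicity propagates by analytic perturbation to a one-sided neighborhood of $t=1$; on that neighborhood $l_1(t)>l_2(t)$ and (again by the strict test-function bound) $\kappa(t)<\pi/(2\ell)$, so the strict trigonometric inequality forces $d\kappa^2/dt<0$ there, contradicting $\kappa(1)=\kappa(0)$ and leaving $l_1=l_2$ as the only possibility.
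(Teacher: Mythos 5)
Your deformation strategy (move linearly from $\metgraph$ to $\tilde{\metgraph}$, show $\frac{d}{dt}k_{1}^{2}\leq0$ via Lemma \ref{lem:Derivative_on_edge_equals_energy} and a trigonometric comparison of the edge energies, with the a priori bound $k_{1}\leq\frac{\pi}{2\ell}$ making the comparison go the right way) is genuinely different from the paper, which instead transplants the eigenfunction of $\metgraph$ into an explicit test function on $\tilde{\metgraph}$ in three cases. However, as written your argument has two gaps. First, the monotonicity step only applies where $k_{1}(\metgraph_{t})$ is a \emph{simple} eigenvalue, since Lemma \ref{lem:Derivative_on_edge_equals_energy} is stated (and true) only in that case; you assert that the exceptional set is ``a finite set of multiplicity changes,'' but nothing rules out $k_{1}(\metgraph_{t})$ being a multiple eigenvalue on a whole subinterval of $[0,1]$ (e.g.\ because of eigenfunctions supported away from $e_{1},e_{2}$, or symmetries of the rest of the graph), and on such an interval your derivative computation says nothing. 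This can be repaired --- when the gap is multiple one can pick an eigenfunction vanishing at $v$ and, using the a priori bound, show $k_{1}(t)$ then lies in the union of a $t$-independent discrete set and the decreasing function $\frac{\pi}{2l_{1}(t)}$, hence is still non-increasing --- but that argument (or a careful use of analytic branches) is missing and is not a cosmetic detail.

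Second, and more seriously, your proof of the equality statement under hypotheses \eqref{enu:prop_symmetrization_1}--\eqref{enu:prop_symmetrization_2} is incomplete. Your contradiction near $t=1$ needs the \emph{strict} energy inequality $\energy_{e_{1}}>\energy_{e_{2}}$, which requires $f_{t}(v)\neq0$ (or $\kappa(t)=\frac{\pi}{2l_{1}(t)}$ with $f_{t}$ vanishing on $e_{2}$). The remaining degenerate possibility --- the spectral-gap eigenfunction vanishing identically on both of $e_{1},e_{2}$ near $t=1$, so that both energies vanish, $\frac{d}{dt}\kappa^{2}=0$, and the eigenvalue branch is $t$-independent --- is not excluded by your local argument, and in that scenario no contradiction with $\kappa(1)=\kappa(0)$ arises. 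The paper excludes exactly this degeneracy by using the supremizer hypothesis through Lemma \ref{lem:supremal_and_simple_is_critical} and Lemma \ref{lem:critical_point_derivative_at_vertices} (criticality forces all edge energies to be equal and nonzero) together with Courant's nodal theorem; you never invoke the supremizer hypothesis beyond the global monotonicity, so this case must be handled by some such additional input before the equality claim is proved. The first equality case ($k_{1}(\metgraph)=\frac{\pi}{2\ell}$, via the strict test-function bound) and the reduction of the loop case are fine.
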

\begin{proof}
Let $f$ be an eigenfunction of $\metgraph$ corresponding to $k_{1}\left(\metgraph\right)$.
The proof for both cases - dangling edges and loops - is by constructing
a test function $\tilde{f}$ on $\tilde{\metgraph}$, whose Rayleigh
quotient obeys $\ray(\tilde{f})\leq\ray(f)=k_{1}\left(\metgraph\right)^{2}$,
from which \eqref{eq:symmetrization} follows.

We start with the dangling edges case. First, we get a bound on $k_{1}\left(\metgraph\right)$
using a test function, 
\begin{equation}
\left.g\right|_{e_{1}\cup e_{2}}=\cos\left(\frac{\pi x}{2\ell}\right),~~~~\left.g\right|_{\metgraph\backslash\left(e_{1}\cup e_{2}\right)}=0,
\end{equation}
where $e_{1}\cup e_{2}$ is considered as single interval. We have
$\ray(g)=\left(\frac{\pi}{2\ell}\right)^{2}$ and hence $k_{1}\left(\metgraph\right)\leq\frac{\pi}{2\ell}$.

Assume that $k_{1}\left(\metgraph\right)=\frac{\pi}{2\ell}$. Let
$\tilde{f}$ be the following test function on $\tilde{\metgraph}$.
\[
\left.\tilde{f}\right|_{\tilde{e}_{1}\cup\tilde{e}_{2}}=\cos\left(\frac{\pi x}{2\ell}\right),~~~~\left.\tilde{f}\right|_{\tilde{\metgraph}\backslash\left(\tilde{e}_{1}\cup\tilde{e}_{2}\right)}=\tilde{f}\left(v\right),
\]
 where $\tilde{f}\left(v\right)$ in the right equation is determined
from the value $\left.\tilde{f}\right|_{\tilde{e}_{1}\cup\tilde{e}_{2}}$
on the left attains at $v$. As $\tilde{f}$ is not necessarily orthogonal
to the constant function, we actually take $\tilde{f}-\left\langle \tilde{f}\right\rangle $
to be the test function, where $\left\langle \tilde{f}\right\rangle :=\int_{\tilde{\metgraph}}\tilde{f}dx$.
By Lemma \ref{lem:appendix_test_functions_extension_constant} 
\begin{equation}
\ray\left(\tilde{f}-\left\langle \tilde{f}\right\rangle \right)=\frac{\left(\frac{\pi}{2\ell}\right)^{2}\ell}{\ell+\left|\tilde{f}\left(v\right)\right|^{2}2\ell\left(1-2\ell\right)}<\left(\frac{\pi}{2\ell}\right)^{2}=\left(k_{1}\left(\metgraph\right)\right)^{2},\label{eq:prop_symmetrization_Rayleigh}
\end{equation}

where we use that $l_{1}\neq l_{2}\Rightarrow\tilde{f}\left(v\right)=\cos(\frac{\pi l_{1}}{2\ell})\neq0$
to get the inequality.

Next, assume $k_{1}\left(\metgraph\right)<\frac{\pi}{2\ell}$ and
also that $f(v)=0$. Then $f$ has to identically vanish on both $e_{1}$
and $e_{2}$. We may then choose the test function $\tilde{f}=f$
and get $\ray\left(\tilde{f}\right)=\ray\left(f\right)$, as required.

Finally, assume $k_{1}\left(\metgraph\right)<\frac{\pi}{2\ell}$ and
$f(v)\neq0$. This results with $\left.f\right|_{e_{1}}=\left.f\right|_{e_{2}}$.
Assume without loss of generality that $l_{1}<l_{2}$. We define the
test function $\tilde{f}$ on $\tilde{\metgraph}$ as follows.
\[
\left.\tilde{f}\right|_{\tilde{\metgraph}\backslash\left(\tilde{e}_{1}\cup\tilde{e}_{2}\right)}=\left.f\right|_{\metgraph\backslash\left(e_{1}\cup e_{2}\right)},~~~~\left.\tilde{f}\right|_{\tilde{e}_{1}}=\left.f\right|_{e_{1}\left(0,l_{1}\right)},
\]
where $e_{1}\left(0,l_{1}\right)$ denotes a subset of $e_{1}$ in
$\metgraph$ whose origin is $v$. On $\tilde{e}_{2}$ we set 
\[
\left.\tilde{f}\right|_{\tilde{e}_{2}}\left(x\right)=\begin{cases}
\left.f\right|_{e_{2}}\left(x\right) & ~~x\in\left(0,\ell\right)\\
\left.f\right|_{e_{1}}\left(l_{1}+l_{2}-x\right) & ~~x\in\left(\ell,l_{2}\right)
\end{cases}.
\]
This is a valid continuous test function and by construction, $\ray(\tilde{f})=\ray(f)$. 

We have therefore shown inequality \eqref{eq:symmetrization} and
also that assuming $k_{1}\left(\metgraph\right)=\frac{\pi}{2\ell}$
assures equivalence between $l_{1}=l_{2}$ and equality in \eqref{eq:symmetrization}.
It is therefore left to show that under assumptions \eqref{enu:prop_symmetrization_1},\eqref{enu:prop_symmetrization_2}
of the proposition, $l_{1}\neq l_{2}$ implies $k_{1}(\tilde{\metgraph})<k_{1}(\metgraph$).
Assume by contradiction that $l_{1}\neq l_{2}$ and also $k_{1}(\tilde{\metgraph})=k_{1}(\metgraph$).
As $\metgraph$ is a supremizer of some graph, $\tilde{\metgraph}$
is also a supremizer of the same graph. Since $k_{1}(\tilde{\metgraph})$
is simple we deduce from Lemma \ref{lem:supremal_and_simple_is_critical}
that its spectral gap is a critical value and by Lemma \ref{lem:critical_point_derivative_at_vertices}
we get $\left|\frac{\partial}{\partial x_{\tilde{e}_{1}}}\tilde{f}\left(v\right)\right|=\left|\frac{\partial}{\partial x_{\tilde{e}_{2}}}\tilde{f}\left(v\right)\right|$,
where $\tilde{f}$ is the eigenfunction corresponding to $k_{1}(\tilde{\metgraph})$.
If $\tilde{f}(v)=0$ we get that $\tilde{f}$ has at least three nodal
domains (at least one nodal domain on each of $\tilde{e}_{1},~\tilde{e}_{2}$
and $\tilde{\metgraph}\backslash\{\tilde{e}_{1}\cup\tilde{e}\}$),
which contradicts Courant's nodal theorem (\cite{Cou_ngwgmp23,GnuSmiWeb_wrm04,Ber_cmp08}).
Assume without loss of generality $\tilde{f}(v)>0$. As $l_{1}\neq l_{2}$
and as the derivative of $\tilde{f}$ vanishes at the endpoints of
$\tilde{e}_{1},~\tilde{e}_{2}$, we get that at least one of $\tilde{e}_{1},~\tilde{e}_{2}$
should contain two nodal domains of $\tilde{f}$. In addition, by
Courant's bound it is not possible for both derivatives, $\frac{\partial}{\partial x_{\tilde{e}_{1}}}\tilde{f}\left(v\right),~\frac{\partial}{\partial x_{\tilde{e}_{2}}}\tilde{f}\left(v\right)$
to be negative as this results with a total of at least three nodal
domains. If one derivative is positive and the second is negative,
i.e., $\frac{\partial}{\partial x_{\tilde{e}_{1}}}\tilde{f}\left(v\right)=-\frac{\partial}{\partial x_{\tilde{e}_{2}}}\tilde{f}\left(v\right)$,
we get that $f\vert_{\tilde{e}_{1}\cup\tilde{e}_{2}}$ is proportional
to $\cos(\frac{\pi}{2\ell}x)$, so that $k_{1}(\tilde{\metgraph})=\frac{\pi}{2\ell}$,
which is a contradiction, to what we have shown above (see \eqref{eq:prop_symmetrization_Rayleigh}).
If both derivatives are positive, $\frac{\partial}{\partial x_{\tilde{e}_{1}}}\tilde{f}\left(v\right)=\frac{\partial}{\partial x_{\tilde{e}_{2}}}\tilde{f}\left(v\right)$,
then we get contradiction as $\left\langle \tilde{f}\right\rangle \neq0$.
Indeed, assuming without loss of generality $l_{1}<l_{2}$, the restriction
of $\tilde{f}$ on an interval of length $l_{2}-l_{1}$ at the end
of edge $\tilde{e}_{2}$ is of zero mean, but the $\tilde{f}'s$ restriction
to the rest of the graph is positive, as $\tilde{f}$ has only two
nodal domains and therefore.

We turn to deal with the loops case. Just as above, we start by getting
an upper bound on the spectral gap. Choose the following test function
on $\metgraph$ 
\begin{equation}
\left.g\right|_{e_{1}\cup e_{2}}=\cos\left(\frac{\pi x}{\ell}\right)~~~~\left.g\right|_{\metgraph\backslash\left(e_{1}\cup e_{2}\right)}=0,
\end{equation}
where $e_{1}\cup e_{2}$ is considered as single cycle (self intersecting
itself at its middle). In this case, $\ray\left(g\right)=\left(\frac{\pi}{\ell}\right)^{2}$
so that $k_{1}\left(\metgraph\right)\leq\frac{\pi}{\ell}$.

\noindent The proof now splits into three cases exactly as it was
for the dangling edges:

\begin{enumerate}
\item If $k_{1}\left(\metgraph\right)=\frac{\pi}{\ell}$, we may construct
a test function $\tilde{f}$ on $\tilde{\metgraph}$, such that $\mathcal{R}(\tilde{f})\leq\mathcal{R}(f)$
and with equality only if $l_{1}=l_{2}$.
\item If $k_{1}\left(\metgraph\right)<\frac{\pi}{\ell}$ and $f\left(v\right)=0$,
we conclude that $f$ identically vanishes on the edges $e_{1},e_{2}$
and we may construct a test function $\tilde{f}$ on $\tilde{\metgraph}$,
such that $\mathcal{R}(\tilde{f})=\mathcal{R}(f)$.
\item If $k_{1}\left(\metgraph\right)<\frac{\pi}{\ell}$ and $f\left(v\right)\neq0$,
we conclude that both $\left.f\right|_{e_{1}}$ and $\left.f\right|_{e_{2}}$
are symmetric functions and write 
\[
\left.f\right|_{e_{i}}=A_{i}\cos\left(k_{1}\left(\metgraph\right)\cdot x\right),
\]
for $x\in\left(-\frac{\ell}{2},~\frac{\ell}{2}\right)$ and $A_{i}\in\mathbb{R}$.
Construct a test function $\tilde{f}$ on $\tilde{\metgraph}$ by
setting
\[
\left.\tilde{f}\right|_{\tilde{\metgraph}\backslash\left(\tilde{e}_{1}\cup\tilde{e}_{2}\right)}=\left.f\right|_{\metgraph\backslash\left(e_{1}\cup e_{2}\right)},
\]
and 
\[
\left.\tilde{f}\right|_{e_{i}}\left(x\right)=A_{i}\cos\left(k_{1}\left(\metgraph\right)\left|x-\frac{l_{i}-\ell}{2}\right|\right)~~\textrm{for}~~x\in\left(-\frac{l_{i}}{2},~\frac{l_{i}}{2}\right).
\]
This last relation pictorially means that if $\tilde{e}_{1}$ is the
shorter edge, $\tilde{f}\at_{\tilde{e}_{1}}$ is a symmetric function
which equals $\left.f\right|_{e_{1}}$ up to a piece of length $\ell-l_{1}$
around the middle of the edge $e_{1}$ which is glued to the middle
of the the edge $e_{2}$. Overall, $\tilde{f}$ has zero mean and
$\ray(\tilde{f})=\ray(f)$, as required. 
\end{enumerate}
Just as above, assumptions \eqref{enu:prop_symmetrization_1},\eqref{enu:prop_symmetrization_2}
of the proposition together with assuming $l_{1}\neq l_{2}$ and $k_{1}(\tilde{\metgraph})=k_{1}(\metgraph)$,
enables to use Lemmata \ref{lem:supremal_and_simple_is_critical}
and \ref{lem:critical_point_derivative_at_vertices} together with
Courant's bound to arrive at a contradiction.
\end{proof}
An immediate generalization of this proposition is the following. 
\begin{cor}
\label{cor:symmetrization_of_many} Let $\disgraph$ be a graph with
$E\geq3$ edges. Let $n\geq2$ be an integer. Let $v$ be a vertex
of $\disgraph$ and $e_{1},\ldots,e_{n}$ be either $n$ dangling
edges or $n$ loops connected to $v$. Denote by $l_{1},\ldots,l_{n}$
the lengths of those edges and by $l_{n+1},\ldots,l_{E}$ the lengths
of all other edges. Defining 
\[
\ell:=\frac{1}{n}\sum_{i=1}^{n}l_{i},
\]
and denoting $\tilde{\metgraph}:=\metgraph\left(\disgraph;~(l_{1},\ldots,l_{n},l_{n+1},\ldots,l_{E})\right)$,
$\metgraph:=\metgraph\left(\disgraph;~(\ell,\ldots,\ell,l_{n+1},\ldots,l_{E})\right)$,we
have 
\begin{equation}
k_{1}(\tilde{\metgraph})\leq k_{1}(\metgraph).
\end{equation}
Moreover, if either $k_{1}(\metgraph)=\frac{\pi}{2\ell}$ in the dangling
edges case (respectively, $k_{1}(\metgraph)=\frac{\pi}{\ell}$ in
the loops case) or alternatively both the following conditions are
satisfied

\begin{enumerate}
\item \label{enu:cor_symmetrization_of_many_1} $\metgraph$ is a supremizer
of some graph.
\item \label{enu:cor_symmetrization_of_many_2} $k_{1}(\tilde{\metgraph})$
is a simple eigenvalue.
\end{enumerate}
then equality above holds if and only if $l_{j}=\ell$, for all $1\leq j\leq n$.

\end{cor}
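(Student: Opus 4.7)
The plan is to reduce the $n$-edge statement to the pairwise case handled by Proposition \ref{prop:symmetrization}, via iterated pairwise averaging together with the continuity of $k_1$ on $\lencl_\disgraph$ (Appendix \ref{sec:appendix_eigenvalue_continuity}). Since $n=2$ is literally Proposition \ref{prop:symmetrization}, one may assume $n\ge 3$. Starting from $\lenvec^{(0)}=(l_1,\dots,l_n)$, at each step $m$ I pick indices $i_m,j_m$ realizing the current maximum and minimum entries of $\lenvec^{(m)}$, and define $\lenvec^{(m+1)}$ by replacing both $l_{i_m}^{(m)}$ and $l_{j_m}^{(m)}$ by their common average (leaving all other entries unchanged). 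Proposition \ref{prop:symmetrization} applied at each step yields $k_1(\metgraph^{(m)})\le k_1(\metgraph^{(m+1)})$; a standard averaging argument shows that $\lenvec^{(m)}\to(\ell,\dots,\ell)$, and continuity of $k_1$ then delivers the weak inequality $k_1(\tilde\metgraph)\le k_1(\metgraph)$.

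For the equality case under hypotheses \ref{enu:cor_symmetrization_of_many_1}--\ref{enu:cor_symmetrization_of_many_2}, I would argue by contradiction. Suppose $k_1(\tilde\metgraph)=k_1(\metgraph)$ but not all $l_j$ equal $\ell$; pick $i\ne j$ with $l_i\ne l_j$ and let $\metgraph'$ be obtained from $\tilde\metgraph$ by averaging only that single pair. Proposition \ref{prop:symmetrization} gives $k_1(\tilde\metgraph)\le k_1(\metgraph')$, while the weak inequality already proven, applied from $\metgraph'$ to $\metgraph$, gives $k_1(\metgraph')\le k_1(\metgraph)$. The assumed overall equality then forces $k_1(\metgraph')=k_1(\metgraph)$, so $\metgraph'$ is itself a supremizer of the same discrete graph for which $\metgraph$ is a supremizer. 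Combined with the simplicity of $k_1(\tilde\metgraph)$, this activates the equality conditions of Proposition \ref{prop:symmetrization} for the pair $(\tilde\metgraph,\metgraph')$, forcing $l_i=l_j$, a contradiction.

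For the equality case under the alternative hypothesis $k_1(\metgraph)=\pi/(2\ell)$ (the loops case with $k_1(\metgraph)=\pi/\ell$ is analogous, using an adapted test function on the two longest loops), I would construct a test function $\tilde f$ on $\tilde\metgraph$ based on the two longest dangling edges $\tilde e_{(1)},\tilde e_{(2)}$ of lengths $l_{(1)}\ge l_{(2)}$: set $\tilde f=\cos(\pi x/L)$ on $\tilde e_{(1)}\cup\tilde e_{(2)}$ viewed as a single interval of length $L:=l_{(1)}+l_{(2)}$ with $v$ at $x=l_{(1)}$, and extend $\tilde f$ by the constant $\tilde f(v)=\cos(\pi l_{(1)}/L)$ on the rest of $\tilde\metgraph$ to keep it continuous. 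After subtracting $\langle\tilde f\rangle$ to enforce orthogonality to constants, a direct Rayleigh-quotient computation yields $k_1(\tilde\metgraph)\le \pi/L$. The elementary majorization fact that $l_{(1)}+l_{(2)}\ge 2\ell$ -- with equality if and only if all $l_j=\ell$ (valid for $n\ge 3$, since the mean of the two largest of $n$ non-negative values with overall mean $\ell$ is at least $\ell$) -- combined with the hypothesis $k_1(\tilde\metgraph)=\pi/(2\ell)$, forces $L=2\ell$ and therefore $l_j=\ell$ for every $j$. The main obstacle is precisely this dichotomy: the supremizer-plus-simplicity hypothesis is transmitted to the intermediate pairwise-averaged graph and hence reduces cleanly to Proposition \ref{prop:symmetrization}, whereas the extremal-value hypothesis $k_1(\metgraph)=\pi/(2\ell)$ is \emph{not} inherited by such intermediates, and so must be handled by the explicit test-function computation above rather than by iteration.
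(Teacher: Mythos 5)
Your argument is correct, and its first half (iterated averaging of the current extremal pair, monotonicity via Proposition \ref{prop:symmetrization}, convergence of the lengths to $(\ell,\dots,\ell)$, and continuity of $k_{1}$ from Appendix \ref{sec:appendix_eigenvalue_continuity}) is exactly the paper's proof of the weak inequality; if you want the ``standard averaging argument'' to be explicit, note that each step decreases the sum of squares of the lengths by $\tfrac{1}{2}\left(l_{\max}-l_{\min}\right)^{2}$, which forces $l_{\max}^{(m)}-l_{\min}^{(m)}\to0$. Where you genuinely diverge is in the equality analysis. For the supremizer-plus-simplicity case the paper re-runs the interior of the proof of Proposition \ref{prop:symmetrization} directly on $\tilde{\metgraph}$ (Lemma \ref{lem:supremal_and_simple_is_critical}, then Lemma \ref{lem:critical_point_derivative_at_vertices} and Courant's bound), whereas you squeeze $k_{1}(\tilde{\metgraph})\le k_{1}(\metgraph')\le k_{1}(\metgraph)$ through the single-pair-averaged graph $\metgraph'$, conclude that $\metgraph'$ is itself a supremizer, and then invoke Proposition \ref{prop:symmetrization} as a black box; this is valid (the step ``$k_{1}(\metgraph')=k_{1}(\metgraph)$ and $\metgraph$ a supremizer implies $\metgraph'$ is a supremizer of the same graph'' is exactly the move the paper itself makes inside Proposition \ref{prop:symmetrization}) and spares you adapting the Courant argument to $n$ edges. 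For the extremal case $k_{1}(\metgraph)=\frac{\pi}{2\ell}$ the paper picks an eigenfunction of the equilateral configuration supported on two edges and ``repeats'' the two-edge argument, while you build the test function on the two \emph{longest} edges of $\tilde{\metgraph}$ and combine $k_{1}(\tilde{\metgraph})\le\pi/(l_{(1)}+l_{(2)})$ with the majorization $l_{(1)}+l_{(2)}\ge2\ell$, equality only when all $l_{j}=\ell$ (for $n\ge3$); this is a fully explicit and clean variant, and you correctly identify why this case cannot be handled by the iteration (the extremal hypothesis is not inherited by intermediate graphs). The only point left implicit is the loops analogue: there the two longest loops form a two-petal flower of length $L$, whose spectral gap $2\pi/L$ is independent of the split, and one must rotate the cycle eigenfunction so that it takes a single value at $v$ before extending by a constant; this is easy and matches the level of detail at which the paper itself treats the loops case.
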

\begin{proof}
Denote by $\vec{L}$ the vector of lengths $(l_{1},\ldots,l_{n})$,
and by $k_{1}(l_{1},\ldots,l_{n})$ the corresponding spectral gap,
keeping all the other $E-n$ edge lengths fixed. Assume without loss
of generality that $l_{1}\geq\ldots\geq l_{n}$. If $l_{1}>l_{n}$,
we replace these two lengths by $\frac{1}{2}(l_{1}+l_{n})$ and get
by Proposition \ref{prop:symmetrization} that 
\begin{equation}
k_{1}(l_{1},\ldots,l_{n})\leq k_{1}\left(\frac{1}{2}(l_{1}+l_{n}),~l_{2},\ldots,l_{n-1},~\frac{1}{2}(l_{1}+l_{n})\right).
\end{equation}
Repeating this process infinitely many times, we get a sequence of
vectors 
\[
\left\{ \vec{L}^{\left(m\right)}\right\} _{m=1}^{\infty}:=\left\{ (l_{1}^{\left(m\right)},\ldots,l_{n}^{\left(m\right)})\right\} _{m=1}^{\infty}
\]
 such that 

\begin{itemize}
\item $l_{1}^{(m)}\geq\ldots\geq l_{n}^{(m)}$ (up to reordering the lengths), 
\item $\frac{1}{n}\sum_{i=1}^{n}l_{i}^{\left(m\right)}=\ell$
\item $l_{1}^{(m)}-l_{n}^{(m)}\rightarrow0$ as $m\rightarrow\infty$
\item the sequence $\left\{ k_{1}(\ell_{1}^{\left(m\right)},\ldots,\ell_{n}^{\left(m\right)})\right\} {}_{m=1}^{\infty}$
is non-decreasing
\end{itemize}
From the first three claims we deduce that, $l_{j}^{(m)}\to\ell\text{ as \ensuremath{m\to\infty}}$,
for any $1\leq j\leq n$. Therefore, the continuity of eigenvalues
with respect to edge lengths (see Appendix \ref{sec:appendix_eigenvalue_continuity})
gives 
\[
k_{1}(l_{1}^{\left(m\right)},\ldots,l_{n}^{\left(m\right)})\to k_{1}(\ell,\ldots,\ell)\text{ as \ensuremath{m\to\infty}.}
\]
 As the sequence $\left\{ k_{1}(\ell_{1}^{\left(m\right)},\ldots,\ell_{n}^{\left(m\right)})\right\} {}_{m=1}^{\infty}$
is non-decreasing it follows that 
\begin{equation}
k_{1}(l_{1},\ldots,l_{n})\leq k_{1}(\ell,\ldots,\ell),
\end{equation}
as desired.

We now turn to the strict inequality conditions. In the dangling edge
case, if the spectral gap satisfies $k_{1}(\ell,\ldots,\ell)=\frac{\pi}{2\ell}$,
then particular eigenfunctions are given by that of the equilateral
star with $n$ edges and total length $n\ell$. Among them, we choose
one supported only on two edges and repeat the argument given in Proposition
\ref{prop:symmetrization} to deduce the strict inequality if $l_{i}\neq l_{j}$
for some $i\neq j$. We argue similarly if $k_{1}(\ell,\ldots,\ell)=\frac{\pi}{\ell}$
in the dangling loops case. Alternatively, we may assume by contradiction
that there exist $i\neq j$ such that $l_{i}\neq l_{j}$ and $k_{1}(\tilde{\metgraph})=k_{1}\left(\metgraph\right)$.
This together with assumptions \eqref{enu:cor_symmetrization_of_many_1},\eqref{enu:cor_symmetrization_of_many_2}
enables to proceed exactly as in the proof of Proposition \ref{prop:symmetrization}
in order to get a contradiction.
\end{proof}

\section{Applications of graph gluing and symmetrization \label{sec:Applications_of_Gluing_and_Symmetrization}}

This section applies the techniques of graph gluing and edge symmetrization
developed in the previous two sections in order to prove the next
few corollaries.
\begin{proof}
[Proof of Corollary \ref{cor:gluing_stowers}]

This proof is a direct application of Theorem \ref{thm:supremum_of_gluing}
once we observe the following

\begin{enumerate}
\item The glued vertices, $v_{1},v_{2}$ become the central vertices of
the supremizing stowers.
\item Every equilateral stower obeys the Dirichlet criterion with respect
to its internal vertex, assuming the numbers of its petals and leaves
obey $E_{p}+E_{l}\geq2$. 
\item Denoting the supremizing stowers by $\metgraph_{1},\metgraph_{2}$,
their spectral gaps are 
\[
k_{1}\left(\Gamma_{i}\right)=\frac{\pi}{2}\left(2E_{p}^{\left(i\right)}+E_{l}^{\left(i\right)}\right).
\]
\item Gluing $\metgraph_{1},\metgraph_{2}$ with the length parameter 
\[
L=\frac{k_{1}\left(\metgraph_{1}\right)}{k_{1}\left(\metgraph_{1}\right)+k_{1}\left(\metgraph_{2}\right)}=\frac{2E_{p}^{\left(1\right)}+E_{l}^{\left(1\right)}}{2E_{p}^{\left(1\right)}+E_{l}^{\left(1\right)}+2E_{p}^{\left(2\right)}+E_{l}^{\left(2\right)}},
\]
results with an equilateral stower whose all petals are of length
$\frac{2}{2E_{p}^{\left(1\right)}+E_{l}^{\left(1\right)}+2E_{p}^{\left(2\right)}+E_{l}^{\left(2\right)}}$
and all dangling edges are of length $\frac{1}{2E_{p}^{\left(1\right)}+E_{l}^{\left(1\right)}+2E_{p}^{\left(2\right)}+E_{l}^{\left(2\right)}}$. 
\end{enumerate}
\end{proof}
\begin{rem*}
We note that an equilateral stower obeys the strong Dirichlet criterion.
Therefore, by Theorem \ref{thm:supremum_of_gluing}, if we assume
for $\disgraph_{1},\disgraph_{2}$ that all their supremizers other
than the stower violate the Dirichlet criterion, we also get uniqueness
in Corollary \ref{cor:gluing_stowers}.
\end{rem*}
~
\begin{proof}
[Proof of Corollary \ref{cor:stower_supremum}] We show that equilateral
stars and flowers (with $E\geq2)$ satisfy condition (b) of Theorem
\ref{thm:supremum_of_gluing}, when considered as supremizers of the
corresponding stowers. This allows to employ Theorem \ref{thm:supremum_of_gluing}
in order to glue a star with a flower and to show the statement of
the Corollary for all stowers with $E_{l}\geq2$ and $E_{p}\geq2$
(note that when gluing an equilateral flower and equilateral star
according to condition \eqref{enu:thm_SGP_condition_1} of Theorem
\ref{thm:supremum_of_gluing}, the stower obtained is equilateral).
The rest of the stowers will be dealt with, at the end of the proof.

Start by noting that Theorem \ref{thm:tree_supremizer} implies that
the statement of the corollary holds for all star graphs, which are
stowers with $E_{p}=0,~E_{l}\geq2$. The spectral gap of equilateral
star is $\frac{E\pi}{2}$ and it remains the same after imposing Dirichlet
condition at their central vertex, so that it obeys the Dirichlet
criterion. Furthermore the multiplicity of its spectral gap is $E-1$
and it increases to $E$ after imposing Dirichlet condition, so that
it obeys the strong Dirichlet criterion. As equilateral stars are
unique maximizers of stars, we conclude that they obey condition (b)
of Theorem \ref{thm:supremum_of_gluing}.

Among the flower graphs, we start with the two-petal and three-petal
flowers. An easy calculation reveals that the spectral gap of a flower
with two petals equals $2\pi$. Note that this spectral gap is independent
of the edge lengths, so that this give a continuous family of (trivial)
maximizers. In particular, the equilateral flower with two petals
is a non-unique maximizer. Yet, this equilateral two-petal flower
is the only maximizer in this family which obeys the Dirichlet criterion
and it further obeys the strong Dirichlet criterion, as we show next.
Consider a two-petal flower whose edge lengths are $l_{1}\neq l_{2}$
and assume $l_{1}>l_{2}$. Imposing Dirichlet condition at the vertex
lowers the spectral gap of the graph from $2\pi$ to $\nicefrac{\pi}{l_{1}}$,
so that it does not obey the Dirichlet criterion. The equilateral
flower, on the other hand, maintains the spectral gap of $2\pi$ even
after imposing a Dirichlet condition at its vertex. In addition, its
spectral gap with Neumann condition at the vertex is a simple eigenvalue,
but once imposing Dirichlet at the vertex, the spectral gap becomes
of multiplicity two. By this we have shown that the two-petal flower
satisfies condition (b) of Theorem \ref{thm:supremum_of_gluing}. 

Let $\Gamma$ be a flower with three petals and denote its vertex
by $v$. Let $\tilde{\Gamma}$ be the two petal subgraph which consists
of the largest two petals of $\Gamma$. Denote the total length of
$\tilde{\metgraph}$ by $\tilde{l}$ (so that $\tilde{l}\geq\frac{2}{3}$).
Let $\tilde{f}$ be the first non-constant eigenfunction on $\tilde{\metgraph}$.
Construct the following test function on $\metgraph$ 
\[
\left.f\right|_{\tilde{\metgraph}}=\tilde{f},~~~~\left.f\right|_{\metgraph\backslash\tilde{\metgraph}}=\tilde{f}\left(v\right).
\]
By Lemma \ref{lem:appendix_test_functions_extension_constant}
\[
\ray\left(f-\left\langle f\right\rangle \right)=\frac{\left(\frac{2\pi}{\tilde{l}}\right)^{2}\frac{\tilde{l}}{2}}{\frac{\tilde{l}}{2}+|\tilde{f}\left(v\right)|^{2}\tilde{l}\left(1-\tilde{l}\right)}\leq\left(\frac{2\pi}{\tilde{l}}\right)^{2}\leq\left(3\pi\right)^{2},
\]
where equality holds if and only if $\tilde{l}=\nicefrac{2}{3}$ and
$\tilde{f}\left(v\right)=0$. Conversely, it is easy to show that
the spectral gap of the equilateral three-petal flower equals $3\pi$.
Hence the equilateral three petal graph is a unique maximizer. In
addition, imposing a Dirichlet condition at the vertex maintains a
spectral gap of $3\pi$, so that the three-petal equilateral flower
obeys the Dirichlet criterion. It further obeys the strong Dirichlet
criterion as the multiplicity of its spectral gap is $2$ and it increases
to $3$ after imposing Dirichlet condition at central vertex. Therefore,
a three petal flower satisfies condition (b) of Theorem \ref{thm:supremum_of_gluing}.

From the above, we may glue two flowers of those types (each either
with two petals or three petals) and get a four, five or six petal
flower. Applying Theorem \ref{thm:supremum_of_gluing} shows that
the equilateral version of each of these graphs serves as the unique
maximizer. Furthermore, it is easy to show that any equilateral flower
obeys the strong Dirichlet criterion (as was shown for the two-petal
and three-petal flower above). This together with the uniqueness of
four, five and six petal flowers implies that they obey condition
(b) of Theorem \ref{thm:supremum_of_gluing}. Repeating this gluing
process as many times as needed shows that any equilateral flower
is both a unique maximizer (except for $E=2$ ) and obeys condition
(b) of Theorem \ref{thm:supremum_of_gluing} (which holds also for
$E=2$).

By this, we have both proved the corollary for all stars and flowers
with $E\geq2$ and also conclude the validity of the corollary for
all stowers with $E_{l}\geq2$ and $E_{p}\geq2$, as claimed in the
beginning of this proof. It is left to treat stowers with either $E_{p}=1$
or $E_{l}=1$. In order to do that, we state in lemmata \ref{lem:stower_1_2},
\ref{lem:stower_1_3}, \ref{lem:stower_2_1}, \ref{lem:stower_3_1}
(which follow this proof) that the current corollary is valid for
the following small stowers $\left(E_{p},E_{l}\right)\in\left\{ \left(1,2\right),\left(1,3\right),\left(2,1\right),\left(3,1\right)\right\} $
and that in addition, the equilateral versions of those stowers all
obey condition (b) of Theorem \ref{thm:supremum_of_gluing}. Hence,
each stower with either $E_{p}=1$ or $E_{l}=1$ may be obtained by
gluing one of those small stowers with an appropriate flower or star
and applying Theorem \ref{thm:supremum_of_gluing} for such a gluing
finishes the proof.
\end{proof}
\begin{rem*}
We note that the proof above might have been simplified if we were
after a weaker result. Namely, using the more elementary methods of
Rayleigh quotient calculations one can prove the statement in the
Corollary for all stowers except those with $E_{p}=1$ or $E_{l}=1$
and without the uniqueness part of the result.
\end{rem*}
\begin{proof}
[Proof of Corollary \ref{cor:upper_bound_for_supremum}] Let $\disgraph$
be a graph with $E$ edges out of which $E_{l}$ are leaves and $E-E_{l}$
are internal edges. Let $\lenvec\in\len_{\disgraph}$ and denote $\metgraph:=\metgraph(\disgraph;~\lenvec)$.
Identifying all internal (i.e. non-leaf) vertices of $\metgraph$
we get a stower graph with $E_{l}$ leaves and $E-E_{l}$ petals which
we denote by $\tilde{\metgraph}$ and by Lemma \ref{lem:interlacing_of_gluing}
we get 
\[
k_{1}(\metgraph)\leq k_{1}(\tilde{\metgraph}).
\]
From Corollary \ref{cor:stower_supremum} we have 
\begin{equation}
k_{1}(\tilde{\metgraph})\leq\pi\left((E-E_{l})+\frac{E_{l}}{2}\right)=\pi\left(E-\frac{E_{l}}{2}\right)\label{eq:cor_upper_bound_of_supremum}
\end{equation}
 if $E\geq2$ and $(E,E_{l})\neq(2,1)$ which are exactly the conditions
in this corollary and this proves its first part.

Assuming equality in \eqref{eq:global_upper_bound} we have equality
in \eqref{eq:cor_upper_bound_of_supremum}. If we further assume $(E,E_{l})\notin\left\{ \left(2,0\right),\,\left(3,2\right)\right\} $,
we satisfy the uniqueness conditions in Corollary \ref{cor:stower_supremum}.
Namely, we conclude that equality in \eqref{eq:cor_upper_bound_of_supremum}
is possible only if $\tilde{\metgraph}$ is equilateral in the stower
sense: leaves are of half length than petals. We conclude that $\metgraph$
is also equilateral in the following sense: all of its leaves are
of length $\frac{1}{2E-E_{l}}$ each and all the rest (inner) edges
are of length $\frac{2}{2E-E_{l}}$ each. We carry on by conditioning
on the number of internal (i.e. non-leaf) vertices of $\metgraph$
and keeping in mind that $k_{1}(\metgraph)=\pi\left(E-\frac{E_{l}}{2}\right)$.

If $\metgraph$ has a single internal vertex then it is a stower graph
and we are done. Assume that $\metgraph$ has at least two internal
vertices. Choose two such internal vertices. In the following we described
a recursive process which marks some set of edges of the graphs, to
be denoted by $\E_{0}$. Choose a path on $\metgraph$ connecting
$v_{+}$ with $v_{-}$ without going through graph leaves. This is
possible as $\metgraph$ is connected. Choose an arbitrary edge, $e$,
on this path and add it to $\E_{0}$. Next, if $\metgraph\backslash e$
is connected repeat the step above on $\metgraph\backslash e$. Namely,
choose a path on $\metgraph\backslash e$ connecting $v_{+}$ and
$v_{-}$ not going through graph leaves (with the exception of $v_{+},v_{-}$
which might have now turned themselves into leaves). Repeat this process
until $\metgraph\backslash\E_{0}$ is a disconnected graph. We may
then write $\metgraph=\metgraph_{+}\cup\metgraph_{-}\cup\E_{0}$,
where $\metgraph_{+}$ is a connected subgraph of $\metgraph$ containing
$v_{+}$, and similarly for $\metgraph_{-}$ and $v_{-}$. Set the
following test function on $\metgraph$:
\[
f\left(x\right)=\begin{cases}
1 & ~~x\in\metgraph_{+}\\
-1 & ~~x\in\metgraph_{-}\\
\cos\left(k_{1}(\metgraph)\cdot x\right) & ~~x\in e~s.t.~e\in\E_{0}.
\end{cases}
\]
By construction, this test function is continuous. It is easy to verify
by \eqref{eq:rayleigh_with_mean} (alternatively, by an easy extension
of Lemma \ref{lem:appendix_test_functions_extension_constant}) that
$\ray(f)<k_{1}(\metgraph)$ if $\metgraph_{+}\cup\metgraph_{-}\neq\emptyset$.
As $\ray(f)<k_{1}(\metgraph)$ contradicts the equality in \eqref{eq:global_upper_bound}
we conclude that $\metgraph_{+}\cup\metgraph_{-}\neq\emptyset$, which
implies that $\metgraph=\E_{0}$ and hence $\metgraph$ is a mandarin
graph. It is actually an equilateral mandarin, as we have shown above. 
\end{proof}
The lemmata needed in the proof of Corollary \ref{cor:stower_supremum}
are now stated. Their proofs involve some technical computations and
appear in Appendix \ref{sec:appendix_small_stowers}.
\begin{lem}
\label{lem:stower_1_2} Let $\mathcal{G}$ be a stower with $E_{p}=1$
petal and $E_{l}=2$ leaves. Then $\mathcal{G}$ has a continuous
family of maximizers whose spectral gap is $2\pi$. Those are all
the stowers with both leaf lengths equal and not greater than $\nicefrac{1}{4}$.
Furthermore, the equilateral stower obeys condition (b) of Theorem
\ref{thm:supremum_of_gluing}.
\end{lem}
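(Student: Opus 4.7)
I will parameterize the stower by its two leaf lengths $l_1, l_2$ and petal length $p = 1 - l_1 - l_2$, writing eigenfunctions on each leaf as $A_i \cos(k(l_i - x))$ (Neumann at the tip) and on the petal in symmetric/antisymmetric form about the loop midpoint. Continuity and the Kirchhoff condition at the central vertex $v$ then yield the generic secular equation
\[
\tan(kl_1) + \tan(kl_2) + 2\tan(kp/2) = 0,
\]
with a degenerate branch $f(v) = 0$ accounting for eigenfunctions that vanish at $v$.

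To bound $k_1$ from above, I will apply Proposition \ref{prop:symmetrization} to the two dangling edges, reducing to the symmetric stower with $l_1 = l_2 = \ell := (l_1+l_2)/2$. Under the $\Z_2$ leaf-swap the spectrum splits: the antisymmetric sector is supported on each leaf with Dirichlet at $v$ and Neumann at the tip, giving $k = (2n-1)\pi/(2\ell)$; the symmetric sector satisfies $\tan(k\ell) + \tan(kp/2) = 0$, which, using $\ell + p/2 = 1/2$, collapses to $k = 2n\pi$. The symmetric spectral gap is therefore $\min(2\pi, \pi/(2\ell))$, equal to $2\pi$ exactly when $\ell \leq 1/4$, so $k_1 \leq 2\pi$ universally.

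To characterize the equality case without circularity, I will substitute $k = 2\pi$ into the general secular equation: using $\tan(\pi p) = -\tan(\pi(l_1+l_2))$ together with the double-angle formula for $\tan(2\pi l_i)$, a direct algebraic simplification reduces the equation to $(\tan(\pi l_1) - \tan(\pi l_2))^2 = 0$, forcing $l_1 = l_2$. Hence $k = 2\pi$ lies in the Neumann spectrum only when the leaves are equal, and together with the upper bound $k_1 \leq 2\pi$ and continuity of $k_1$ in the edge lengths (Appendix \ref{sec:appendix_eigenvalue_continuity}) this forces $k_1 < 2\pi$ whenever $l_1 \neq l_2$. Consequently, the family of maximizers is exactly $\{(l, l, 1-2l) : 0 < l \leq 1/4\}$ and the maximal spectral gap equals $2\pi$.

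For the strong Dirichlet criterion at the equilateral stower $(l,p) = (1/4, 1/2)$, imposing Dirichlet at $v$ decouples the graph into a loop of length $1/2$ (an interval with Dirichlet at both identified endpoints, eigenvalues $2n\pi$) and two length-$1/4$ leaves with Dirichlet-Neumann conditions (eigenvalues $(2n-1) \cdot 2\pi$); so $k = 2\pi$ appears with multiplicity three in the Dirichlet spectrum versus two (one symmetric, one antisymmetric) in the Neumann spectrum. For any other maximizer $(l, l, 1-2l)$ with $l < 1/4$ one has $p > 1/2$, and the same decoupling yields a Dirichlet eigenvalue $\pi/p < 2\pi$ from the loop; the degenerate limit $l = 0$ (a single loop) likewise gives Dirichlet gap $\pi$. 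In both subcases the spectral gap drops strictly below $2\pi$, violating the Dirichlet criterion. The main obstacle is the circularity in the uniqueness clause of Proposition \ref{prop:symmetrization} (which would require $\metgraph_{\mathrm{sym}}$ to already be a known supremizer); the explicit secular identity above sidesteps this by ruling out $k = 2\pi$ in the asymmetric case outright.
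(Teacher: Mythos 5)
Your proposal follows the paper's overall strategy for the upper bound (symmetrize the two leaves via Proposition \ref{prop:symmetrization}, then compute the gap of the symmetric stower, which equals $2\pi$ precisely when $\ell\leq 1/4$), and your verification of the strong Dirichlet criterion at the equilateral stower, and of the failure of the Dirichlet criterion at the other supremizers (including the single-loop limit), matches the paper's. Where you genuinely diverge is the strictness step: the paper invokes the equality clause of Proposition \ref{prop:symmetrization}, which obliges it to prove that the spectral gap of every \emph{asymmetric} stower with $l_{1}+l_{2}\leq 1/2$ is a simple eigenvalue --- a somewhat delicate contradiction argument --- whereas you show directly from the secular equation that $2\pi$ is not an eigenvalue when $l_{1}\neq l_{2}$, which together with $k_{1}\leq 2\pi$ yields strictness with no simplicity discussion at all. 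This is a legitimate and arguably more elementary route, and it also dispenses with the paper's separate main-subgraph argument for the case $l_{1}+l_{2}>1/2$, since your symmetrized-gap formula $\min\left(2\pi,\frac{\pi}{2\ell}\right)$ already covers it.

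Two points need patching before the secular-equation step is airtight. First, the algebra: writing $t_{i}=\tan(\pi l_{i})$ and using $\tan(\pi p)=-\tan(\pi(l_{1}+l_{2}))$, the substitution $k=2\pi$ gives
\begin{equation}
(t_{1}+t_{2})\bigl[(1-t_{1}t_{2})^{2}-(1-t_{1}^{2})(1-t_{2}^{2})\bigr]=(t_{1}+t_{2})(t_{1}-t_{2})^{2}=0,
\end{equation}
not $(t_{1}-t_{2})^{2}=0$ alone; the extra factor must be excluded, which is easy since $t_{1}+t_{2}=0$ forces $l_{1}+l_{2}\in\Z$, impossible for $0<l_{1}+l_{2}<1$. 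Second, your general secular equation has, as you yourself note, a degenerate branch of eigenfunctions vanishing at $v$ (and the tangent form breaks down when $\cos(2\pi l_{i})=0$ or $\cos(\pi p)=0$), yet your equality-case analysis treats only the generic branch. You must also rule out $k=2\pi$ arising through the degenerate branch when $l_{1}\neq l_{2}$: an eigenfunction vanishing at $v$ with $k=2\pi$ needs $\cos(2\pi l_{i})=0$ on any leaf where it does not vanish and $p=1/2$ for a nonzero petal component, and the Kirchhoff condition then forces either the zero function or $l_{1}=l_{2}=1/4$. The same caveat applies to your sector computation for the symmetric stower: the leaf-swap-symmetric sector also contains $f(v)=0$ modes, though these all lie at or above $2\pi$ when $\ell\leq 1/4$, so the stated gap formula survives. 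With these two checks added, your argument is complete.
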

~ 
\begin{lem}
\label{lem:stower_1_3}Let $\mathcal{G}$ be a stower graph with $E_{p}=1$
petal and $E_{l}=3$ leaves. Then the equilateral stower graph is
the unique maximizer of $\disgraph$, and the corresponding spectral
gap equals $\frac{5\pi}{2}$. Furthermore, the equilateral stower
obeys condition (b) of Theorem \ref{thm:supremum_of_gluing}.
\end{lem}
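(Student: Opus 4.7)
The plan is to first reduce to the symmetric case using Corollary \ref{cor:symmetrization_of_many}, so that it suffices to analyze a stower with petal length $p$ and three equal leaves of length $\ell = (1-p)/3$. The $S_{3}$ symmetry of the leaves splits the eigenvalue problem: the two-dimensional standard representation gives eigenvalues $(2n-1)\pi/(2\ell)$ of multiplicity two (with eigenfunctions vanishing on the petal), while the trivial representation, after further decomposition by the reflection symmetry of the petal about its midpoint, produces an antisymmetric-petal subsector with eigenvalues $k = 2m\pi/p$, and a symmetric-petal subsector governed by the characteristic equation
\[
h(k, p) \;:=\; 2\tan(kp/2) + 3\tan(k\ell) = 0.
\]

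For $p \leq 2/5$, i.e.\ $p \leq 2\ell$, any solution of $h(k,p)=0$ with $k \in (0, \pi/(2\ell))$ would require $\tan(kp/2) < 0$ while $\tan(k\ell) > 0$, forcing $k > \pi/p \geq \pi/(2\ell)$, a contradiction. Hence in this range $k_{1} = \pi/(2\ell) = 3\pi/(2(1-p))$, which satisfies $k_{1} \leq 5\pi/2$ with equality if and only if $p = 2/5$. For $p > 2/5$ I will show that $h$ has a zero $k^{*} \in (\pi/p, 5\pi/2)$, which forces $k_{1} \leq k^{*} < 5\pi/2$.

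The main technical step is to verify $h(5\pi/2, p) > 0$ on $(2/5, 1]$; combined with $h(k, p) \to -\infty$ as $k \to (\pi/p)^{+}$, the intermediate value theorem then produces $k^{*}$. Setting $\phi := \pi(5p-2)/12 \in (0, \pi/4]$, direct substitution yields $5\pi p/4 = \pi/2 + 3\phi$ and $5\pi(1-p)/6 = \pi/2 - 2\phi$. Using $\tan(\pi/2 \pm x) = \mp\cot(x)$ together with the product-to-sum identities $2\sin(3\phi)\cos(2\phi) = \sin(5\phi) + \sin\phi$ and $2\cos(3\phi)\sin(2\phi) = \sin(5\phi) - \sin\phi$, one obtains
\[
h(5\pi/2, p) \;=\; -2\cot(3\phi) + 3\cot(2\phi) \;=\; \frac{\frac{1}{2}\sin(5\phi) + \frac{5}{2}\sin\phi}{\sin(2\phi)\sin(3\phi)}.
\]
The denominator is strictly positive on $(0, \pi/4]$. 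For the numerator $M(\phi) := \frac{1}{2}\sin(5\phi) + \frac{5}{2}\sin\phi$, one computes $M'(\phi) = 5\cos(3\phi)\cos(2\phi)$, which changes sign only at $\phi = \pi/6$. Since $M(0) = 0$, $M(\pi/6) = 3/2$, and $M(\pi/4) = \sqrt{2}$, the function $M$ increases on $(0, \pi/6]$ and decreases on $[\pi/6, \pi/4]$, remaining strictly positive throughout, whence $h(5\pi/2, p) > 0$.

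Putting everything together, the equilateral symmetric stower is the unique symmetric maximizer, with $k_{1} = 5\pi/2 = \pi/(2\ell)$, so the first explicit equality clause in Corollary \ref{cor:symmetrization_of_many} is triggered and yields uniqueness over the full $(E_{p}, E_{l}) = (1,3)$ family. To verify condition (b) of Theorem \ref{thm:supremum_of_gluing} it remains to establish the strong Dirichlet criterion (the \emph{any other supremizer violates the Dirichlet criterion} clause being vacuous by uniqueness). Imposing Dirichlet at the central vertex decouples the graph into three intervals of length $1/5$ and a loop of length $2/5$ with one Dirichlet point (equivalent to an interval $[0, 2/5]$ with Dirichlet at both endpoints); each produces the eigenvalue $5\pi/2$ with multiplicity one, for a total Dirichlet multiplicity $3 + 1 = 4$, strictly exceeding the Neumann multiplicity $3$ (two eigenfunctions from the standard $S_{3}$-representation and one from the degenerate symmetric-petal mode at $\cos(kp/2) = \cos(k\ell) = 0$). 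The main obstacle will be the trigonometric reduction and positivity analysis of the numerator $M(\phi)$ in the middle paragraph.
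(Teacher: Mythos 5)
Your proof is correct, but it takes a genuinely different route from the paper's. The paper settles the whole family of lengths in a few lines with two test-function bounds coming from Lemma \ref{lem:appendix_test_functions_extension_constant}: taking the three leaves (a star of total length $3\ell$, $\ell$ the mean leaf length) as main subgraph gives $k_{1}\leq\frac{\pi}{2\ell}$, while taking the petal together with the two longest leaves as main subgraph and invoking Lemma \ref{lem:stower_1_2} gives $k_{1}\leq\frac{2\pi}{1-l_{3}}\leq\frac{2\pi}{1-\ell}$; minimizing the two bounds yields $k_{1}\leq\frac{5\pi}{2}$, and equality forces $\ell=\frac{1}{5}$ and $l_{3}=\ell$, i.e.\ the equilateral stower, after which the eigenfunction and the multiplicity count ($3$ for Neumann, $4$ after Dirichlet at the center) are exhibited exactly as you do. You instead symmetrize the leaves first via Corollary \ref{cor:symmetrization_of_many} and then resolve the resulting one-parameter symmetric family completely, through the $S_{3}\times\mathbb{Z}_{2}$ decomposition, the secular equation $2\tan(kp/2)+3\tan(k\ell)=0$, an intermediate-value argument, and the positivity of $M(\phi)=\frac{1}{2}\sin5\phi+\frac{5}{2}\sin\phi$; your trigonometric identities and sign analysis check out, and you correctly include the degenerate symmetric-petal mode with $\cos(k\ell)=\cos(kp/2)=0$ in the Neumann multiplicity. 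Your route is computationally heavier but self-contained within the $(1,3)$ family (it does not use Lemma \ref{lem:stower_1_2}) and yields finer information: the exact gap $\frac{3\pi}{2(1-p)}$ for $p\leq\frac{2}{5}$ and an explicit eigenvalue branch strictly below $\frac{5\pi}{2}$ for $p>\frac{2}{5}$; the paper's route is shorter and recycles machinery already established for the $(1,2)$ stower. One point to make explicit in your last step: declaring the clause that any other supremizer violates the Dirichlet criterion vacuous requires uniqueness among \emph{supremizers}, i.e.\ over $\lencl_{\disgraph}$, not only among maximizers; this does hold here, since every boundary degeneration of the $(1,3)$-stower (a star with at most three edges, a $(1,2)$- or $(1,1)$-stower, a loop, or an interval) has spectral gap at most $2\pi<\frac{5\pi}{2}$, and the paper is equally brief on this point.
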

~ 
\begin{lem}
\label{lem:stower_2_1} Let $\disgraph$ be a stower graph with $E_{l}=1$
and $E_{p}=2$. Then $\disgraph$ has a unique maximizer, which is
the equilateral stower graph with spectral gap equal to $\frac{5\pi}{2}$.
Furthermore, the equilateral stower obeys condition (b) of Theorem
\ref{thm:supremum_of_gluing}. 
\end{lem}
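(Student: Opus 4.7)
The plan is to reduce to a one-parameter family via symmetrization, analyze the secular equation inside that family, and verify the Dirichlet properties at the equilateral configuration. Let $\tilde\metgraph$ be any maximizer of $\disgraph$, with petal lengths $l_1,l_2$ and leaf length $\ell=1-l_1-l_2$. I will first apply Proposition~\ref{prop:symmetrization} (loop case) to the two petals, producing a stower $\metgraph_p$ with equal petal length $p=(l_1+l_2)/2$ and the same leaf, satisfying $k_1(\tilde\metgraph)\le k_1(\metgraph_p)$. Thus $\metgraph_p$ is also a maximizer, and it is enough to maximize $k_1$ over the one-parameter family $\{\metgraph_p\}_{p\in(0,1/2)}$ with leaf length $\ell=1-2p$.

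On $\metgraph_p$ the $\mathbb{Z}_2$-symmetry exchanging the two petals splits the spectrum. Antisymmetric eigenfunctions vanish on the leaf and at the central vertex, giving $k$-eigenvalues $n\pi/p$ for $n\ge 1$. Writing symmetric eigenfunctions as $f|_{\mathrm{petal}}=A\cos(kx)+B\sin(kx)$ and $f|_{\mathrm{leaf}}(x)=E\cos(k(\ell-x))$, and then imposing continuity and the Neumann condition at the central vertex, yields the secular equation
\[
g(k,p)\;:=\;\sin(k\ell)\cos(kp/2)+4\sin(kp/2)\cos(k\ell)\;=\;0.
\]
Direct substitution shows $k=\tfrac{5\pi}{2}$ is a root at $p=\tfrac25$, and since $\pi/p=\tfrac{5\pi}{2}$ there too, $k_1(\metgraph_{2/5})=\tfrac{5\pi}{2}$.

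I then prove $k_1(\metgraph_p)\le\tfrac{5\pi}{2}$ with equality only at $p=\tfrac25$. The range $p\ge\tfrac25$ is immediate from $k_1\le\pi/p\le\tfrac{5\pi}{2}$. For $p\in(0,\tfrac25)$, since $g(0,p)=0$ and $g'_k(0,p)=\ell+2p=1>0$, by the intermediate value theorem it suffices to exhibit $k_\ast(p)\in(0,\tfrac{5\pi}{2})$ with $g(k_\ast,p)<0$; then the lowest symmetric eigenvalue lies strictly below $\tfrac{5\pi}{2}$ and, combined with $\pi/p>\tfrac{5\pi}{2}$, this gives $k_1(\metgraph_p)<\tfrac{5\pi}{2}$. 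For $p\le\tfrac3{10}$ take $k_\ast=\pi/\ell\le\tfrac{5\pi}{2}$, so that $g(\pi/\ell,p)=-4\sin(\pi p/(2\ell))<0$. For $p\in[\tfrac3{10},\tfrac25)$ take $k_\ast=\tfrac{5\pi}{2}$; the substitution $\psi=5\pi p/4$ with $5\pi\ell/2=\tfrac{5\pi}{2}-4\psi$ and Chebyshev expansions of $\cos(3\psi),\cos(5\psi)$ reduce $g(\tfrac{5\pi}{2},p)$ to
\[
-\cos\psi\,\bigl(24\cos^4\psi-40\cos^2\psi+15\bigr),
\]
and on $\psi\in[3\pi/8,\pi/2)$ one has $\cos^2\psi\le(2-\sqrt{2})/4<(10-\sqrt{10})/12$, the smaller root of the bracketed quadratic in $\cos^2\psi$, so the bracket is positive and the expression is negative. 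Uniqueness within the one-parameter family is therefore proved, and uniqueness of the original $\tilde\metgraph$ then follows from the equality clause of Proposition~\ref{prop:symmetrization}: at the equilateral $k_1(\metgraph_p)=\pi/p$, so the loop equality case applies and forces $l_1=l_2$.

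To finish, I verify condition (b) of Theorem~\ref{thm:supremum_of_gluing} at the equilateral stower. Imposing a Dirichlet condition at the central vertex decouples the graph into two loops with Dirichlet endpoints (lowest $k$-eigenvalue $\pi/p=\tfrac{5\pi}{2}$) and one leaf with Dirichlet at the vertex end and Neumann at its tip (lowest $k$-eigenvalue $\pi/(2\ell)=\tfrac{5\pi}{2}$), giving a three-dimensional $\tfrac{5\pi}{2}$-eigenspace. The Neumann $\tfrac{5\pi}{2}$-eigenspace is two-dimensional (one symmetric, one antisymmetric mode), so the Dirichlet criterion holds and the multiplicity strictly increases; this is the strong Dirichlet criterion. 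The second clause of condition (b) is vacuous by uniqueness. The main technical obstacle will be the sign analysis of $g(\tfrac{5\pi}{2},p)$ on $[\tfrac3{10},\tfrac25)$, i.e., producing and controlling the quartic in $\cos^2\psi$ above.
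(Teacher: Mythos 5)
Your proposal is correct, but after the common first step (symmetrizing the two petals via Proposition \ref{prop:symmetrization}) it takes a genuinely different route from the paper. The paper stays variational on the symmetric one-parameter family: it uses three subgraph/test-function bounds (two-petal flower as main subgraph for $l_l\le 1/5$; the leaf as main subgraph, giving strictness for $l_l\gtrsim 0.26$; and an explicit two-piece test function covering $1/5\le l_l\le 2/5$), and then obtains uniqueness in the asymmetric direction from Corollary \ref{cor:extension_constant}. You instead compute the symmetric family exactly: eigenvalues $n\pi/p$ from modes vanishing at the vertex, plus the secular equation $g(k,p)=\sin(k\ell)\cos(kp/2)+4\sin(kp/2)\cos(k\ell)$, and you run an intermediate-value sign argument. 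I checked your reductions: $g(5\pi/2,p)$ does equal $-\cos\psi\,(24\cos^{4}\psi-40\cos^{2}\psi+15)$ with $\psi=5\pi p/4$, the quadratic in $\cos^{2}\psi$ is positive on $[3\pi/8,\pi/2)$, and $g(\pi/\ell,p)=-4\sin(\pi p/(2\ell))<0$ for $p\le 3/10$; also your use of the equality clause of Proposition \ref{prop:symmetrization} (applicable because $k_1=\pi/p$ at $p=2/5$) is a valid replacement for the paper's Corollary \ref{cor:extension_constant}. What your route buys is an exact, numerics-free description of the symmetric family with strict inequalities; what it costs is the secular-equation bookkeeping, including the (implicit) fact that a root of $g$ in the range used is an honest eigenvalue --- harmless here, since the degenerate case $\cos(k\ell)=\cos(kp/2)=0$ would force $k$ to be an odd multiple of $\pi/p>5\pi/2$ when $p<2/5$. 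Two small points to tighten: the step ``$k=5\pi/2$ is a root at $p=2/5$, hence $k_1(\metgraph_{2/5})=5\pi/2$'' only shows $5\pi/2$ is an eigenvalue; add that $g(k,2/5)=\frac{5}{2}\sin(2k/5)$ has no root in $(0,5\pi/2)$ and $n\pi/p\ge 5\pi/2$, so it is indeed the gap (your multiplicity count $2\to 3$ for the strong Dirichlet criterion is then fine). And ``vacuous by uniqueness'' for the second clause of condition (b) of Theorem \ref{thm:supremum_of_gluing} needs uniqueness among supremizers, not just maximizers; this follows at once since every boundary topology (two-petal flower, one-petal--one-leaf stower, loop, interval) has supremal gap at most $2\pi<5\pi/2$ --- the paper is equally brief on this point.
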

~ 
\begin{lem}
\label{lem:stower_3_1} Let $\disgraph$ be a stower graph with $E_{l}=1$
and $E_{p}=3$. Then $\disgraph$ has a unique maximizer, which is
the equilateral stower graph with spectral gap equal to $\frac{7\pi}{2}$.
Furthermore, the equilateral stower obeys condition (b) of Theorem
\ref{thm:supremum_of_gluing}.

~~ 
\end{lem}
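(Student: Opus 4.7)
My plan is to follow the template of Lemmas \ref{lem:stower_1_2}--\ref{lem:stower_2_1}. First, apply Corollary \ref{cor:symmetrization_of_many} to the three petals, reducing the problem to stowers with three petals of common length $a$ and a single leaf of length $b = 1-3a$, parameterized by $a \in (0, 1/3)$. Write $k_1(a)$ for the spectral gap of this one-parameter family.

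Next, decompose the spectrum under the $S_3$-action permuting the petals. Eigenfunctions split into (i) the trivial-representation piece (symmetric across the petals, with a possibly nonzero leaf component), whose positive $k$-eigenvalues satisfy
\[
6\sin(ka/2)\cos(kb) + \cos(ka/2)\sin(kb) = 0,
\]
equivalently $F(k,a) := 6\tan(ka/2) + \tan(kb) = 0$ away from poles; and (ii) the standard-representation piece, supported only on the petals, taking the form $\alpha_i \sin(kx)$ on petal $i$ with $\alpha_1 + \alpha_2 + \alpha_3 = 0$ and $\sin(ka) = 0$, whose smallest positive $k$ is $\pi/a$. Letting $k_s(a)$ be the smallest positive root of the symmetric secular equation, $k_1(a) = \min(\pi/a, k_s(a))$. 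For $a \in (2/7, 1/3)$ we get immediately $k_1(a) \leq \pi/a < 7\pi/2$.

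On the remaining range $a \in (0, 2/7)$, a sign analysis of $F$ shows that $k_s(a)$ is the unique zero in the interval $(\pi/(2b), \pi/a)$, both endpoints of which tend to $7\pi/2$ as $a \to 2/7^-$. The key step is the monotonicity of $k_s$. Along a root of $F$, $\tan(kb) = -6\tan(ka/2)$ yields $\sec^2(kb) = 36\sec^2(ka/2) - 35$, so
\[
\frac{\partial F}{\partial a} = 3k\bigl[\sec^2(ka/2) - \sec^2(kb)\bigr] = -105\,k\tan^2(ka/2) \leq 0,
\]
while $\partial F/\partial k = 3a\sec^2(ka/2) + b\sec^2(kb) > 0$. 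Since $ka/2 \in (0, \pi/2)$ on this branch (so $\tan(ka/2) \neq 0$), the implicit function theorem gives $dk_s/da > 0$ strictly on $(0, 2/7)$, and therefore $k_s(a) < 7\pi/2$ there.

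Finally, at $a = 2/7$, $b = 1/7$, substituting $k = 7\pi/2$ gives $ka = \pi$ and $kb = \pi/2$; continuity at the central vertex $v$ forces $f(v) = 0$, each petal eigenfunction reduces to $\alpha_i \sin(kx)$, the leaf eigenfunction reduces to $B\sin(kx)$, and the Neumann relation becomes $2(\alpha_1 + \alpha_2 + \alpha_3) + B = 0$, giving a three-dimensional eigenspace. Imposing Dirichlet at $v$ drops this relation and decouples the four edges, producing a four-dimensional eigenspace, which confirms the strong Dirichlet criterion. Uniqueness combines three strict inequalities: the strict clause of Corollary \ref{cor:symmetrization_of_many} (applicable because the symmetrized graph achieves $k_1 = \pi/\ell$ in the loops case) forces $l_1 = l_2 = l_3$; the strict monotonicity of $k_s$ excludes $a \in (0, 2/7)$; and $\pi/a < 7\pi/2$ excludes $a \in (2/7, 1/3)$. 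The main obstacle is the monotonicity argument for $k_s$, where the identity $\sec^2(kb) = 36\sec^2(ka/2) - 35$ at roots of $F$ is the key observation; once it is in hand, the rest is routine checking.
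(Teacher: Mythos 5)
Your proof is correct in substance but follows a genuinely different route from the paper's. The paper disposes of the $(3,1)$ stower in a few lines by two main-subgraph/attached-subgraph Rayleigh bounds (Lemma \ref{lem:appendix_test_functions_extension_constant}, Corollary \ref{cor:extension_constant}): the three petals as main subgraph give $k_1\le\pi/\ell$ with equality only for equal petals, and the two longest petals plus the leaf as main subgraph give, via Lemma \ref{lem:stower_2_1}, $k_1\le\frac{5\pi}{2(1-l_3)}\le\frac{5\pi}{2(1-\ell)}$; intersecting the two bounds forces $\ell=\frac{2}{7}$ and $l_3=\ell$, hence the equilateral stower, and the multiplicity count $3\to4$ gives the strong Dirichlet criterion. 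You instead symmetrize the petals via Corollary \ref{cor:symmetrization_of_many}, decompose the resulting one-parameter family under the $S_3$ action, and control the symmetric branch $k_s(a)$ by implicit differentiation; the identity $\sec^2(kb)=36\sec^2(ka/2)-35$ at roots, yielding $\partial F/\partial a=-105\,k\tan^2(ka/2)\le0$ and strict monotonicity of $k_s$, is correct and serves as a self-contained substitute for the paper's appeal to Lemma \ref{lem:stower_2_1}. Two small inaccuracies, neither fatal: the claim that $k_s(a)$ is the \emph{unique} zero of $F$ in $(\pi/(2b),\pi/a)$ fails for small $a$ (that interval then contains several branches of $\tan(kb)$), but uniqueness is never used---what the sign analysis actually gives, and what you need, is that the smallest zero lies in $(\pi/(2b),\min(\pi/a,3\pi/(2b)))$, so $k_s(a)<\pi/a$ and $\tan(k_s a/2)>0$; and your trivial-representation secular equation omits the modes with $ka\in2\pi\mathbb{Z}$ (petal functions antisymmetric about the midpoint), which is harmless since these sit at $k\ge2\pi/a>\pi/a$, leaving $k_1(a)=\min(\pi/a,k_s(a))$ intact. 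With those caveats, your endgame (equality analysis at $a=\frac{2}{7}$, the $3$ versus $4$ multiplicity computation for the strong Dirichlet criterion, and uniqueness via the strict clause of Corollary \ref{cor:symmetrization_of_many} invoked at the maximizing configuration, where indeed $k_1=\pi/\ell$) matches the paper's conclusions; the trade-off is that the paper's argument is much shorter and reuses Lemma \ref{lem:stower_2_1}, while yours produces the explicit dispersion picture and the monotonicity of $k_s$ on $(0,\frac{2}{7})$ as byproducts.
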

The stower with $E_{p}=E_{l}=1$ was not mentioned in the theorem
above, as it is not maximized by the equilateral stower. Its unique
supremizer is the single loop graph ($E_{p}=1,~E_{l}=0$), as we state
in the following in order to complete the picture.
\begin{lem}
\label{lem:stower_1_1} Let $\mathcal{G}$ be a stower graph with
one leaf and one petal. Then $\mathcal{G}$ has a unique maximizer,
which is the unit circle, with spectral gap equal to $2\pi$. 
\end{lem}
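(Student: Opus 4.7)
The plan is to derive the secular equation for the two-edge stower explicitly, show by elementary analysis that the spectral gap stays strictly below $2\pi$ for every admissible assignment of lengths, and then extract the limiting configuration via continuity.

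Let $\disgraph$ consist of a petal of length $p$ attached to a central vertex $v$ and a leaf of length $l = 1-p$ joining $v$ to a degree-one vertex $u$. I would split eigenfunctions on the petal into symmetric and antisymmetric modes about the loop's midpoint. The antisymmetric modes vanish at $v$ and have equal outgoing derivatives at $v$, so the Kirchhoff condition forces the function to vanish identically on the leaf; these contribute only $k = 2\pi n/p$ for $n \ge 1$, all exceeding $2\pi$ since $p<1$. Imposing continuity and Kirchhoff conditions at $v$ for the symmetric ansatz $A\cos\bigl(k(x-p/2)\bigr)$ on the petal together with $C\cos(kx)$ on the leaf (with $x=0$ at $u$) yields a $2\times 2$ linear system in $(A,C)$ whose determinant is
\[
F(k) := 2\sin(kp/2)\cos(kl) + \cos(kp/2)\sin(kl);
\]
its positive zeros are exactly the symmetric eigenvalues of $\metgraph(\disgraph;(p,l))$.

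The core step is to trap the first positive zero of $F$ inside $(0, 2\pi)$. One checks $F(0) = 0$ and $F'(0) = p+l = 1 > 0$, so $F$ is strictly positive immediately to the right of zero. Using $p = 1-l$ together with the double-angle identities, a short trigonometric reduction gives
\[
F(2\pi) = -2\sin^3(\pi l),
\]
which is strictly negative for every $l \in (0,1)$. The intermediate value theorem then supplies a zero $k^\ast \in (0, 2\pi)$, and because the antisymmetric spectrum starts above $2\pi$ we conclude $k_1\bigl[\metgraph(\disgraph;(p,l))\bigr] \le k^\ast < 2\pi$.

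To identify the supremizer I would invoke continuity of $k_1$ on $\lencl_\disgraph$ (Appendix \ref{sec:appendix_eigenvalue_continuity}): letting $l \to 0^+$ collapses the leaf and yields the unit circle, whose spectral gap equals $2\pi$. Hence $\sup_{\lenvec \in \len_\disgraph} k_1 = 2\pi$, and this value is never attained in $\len_\disgraph$. The boundary $\lenbd_\disgraph$ consists of only two configurations, $(p,l)=(1,0)$ (the unit circle, gap $2\pi$) and $(p,l)=(0,1)$ (the unit interval, gap $\pi$), so the unit circle is the unique supremizer. The main obstacle I expect is the trigonometric reduction producing $F(2\pi) = -2\sin^3(\pi l)$; once that identity is in place, the intermediate value theorem together with continuity completes the argument.
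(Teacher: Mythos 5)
Your proposal is correct and follows essentially the same route as the paper: the same secular function $F$, the same key identity $F(2\pi)=-2\sin^{3}(\pi l)$ (with $l$ the leaf length), the same intermediate-value argument on $(0,2\pi)$, and the same inspection of the two boundary configurations. The only differences are simplifications: you derive $F$ by direct symmetric-mode matching instead of the scattering formalism, and since $-2\sin^{3}(\pi l)<0$ for \emph{every} $l\in(0,1)$ your single sign-change argument covers the whole range, making the paper's separate test-function bound for leaf length at least $\tfrac{1}{3}$ unnecessary; the small slip about the antisymmetric petal modes (their outgoing derivatives at $v$ are opposite, summing to zero, not equal) is immaterial, because any positive zero $k^{*}$ of $F$ is already a positive eigenvalue and so $k_{1}\leq k^{*}<2\pi$ without invoking the antisymmetric branch at all.
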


\section{Summary}

This work investigates the problem of optimizing a graph's spectral
gap in terms of its edge lengths. We start by providing a natural
formulation of this problem (Definitions \ref{def:discrete_and_metric_graph},\ref{def:optimizers}
and adjacent discussion). Our formalism allows both to state the optimization
questions in utmost generality (for all graph topologies and all edge
length values) and moreover to determine when such a question is fully
answered. For example, this is the case with the infimization problem
for which both the optimal bounds and all the possible infimizing
topologies are found, with no more room for improvement (see the discussion
which follows Theorem \ref{thm:infimizers}). Contrary to the infimization
problem, we point out that the supremization problem is not solved
in full generality. We show its complete solution for tree graphs
and for a family of graphs whose vertex connectivity equals one. In
addition, a global upper bound in provided (Corollary \ref{cor:upper_bound_for_supremum}),
improving the upper bound known so far, by taking into account the
number of graph leaves. Furthermore, we provide a set of techniques
to tackle the supremization problem. Among those are the gluing graphs
approach, the symmetrization of dangling edges and loops and the characterization
of local maximizers. Those tools are applicable in the current work
and might assist in further exploration of the problem. The techniques
and the results of the current work lead to forming a few conjectures
regarding the maximization problem.

First, the supremizer graph families known so far are stower graphs
(including stars and flowers as particular cases) and mandarin graphs.
The spectral gap of these graphs is highly degenerate due to their
large symmetry groups. The symmetry groups corresponding to the stower
and the mandarin are correspondingly $S_{E_{p}}\times S_{E_{l}}$
and $S_{E}$, where $E$ is the number of mandarin edges and $E_{p},E_{l}$
numbers of stower petals and leaves. The corresponding spectral gap
multiplicity of both a stower and a mandarin is $E-1$, which is indeed
high. In the other extreme of spectral gaps which are simple eigenvalues,
we show that those are unlikely to be supremizers. In Theorem \ref{thm:simple_supremum}
we prove that a supremizer whose spectral gap is simple can never
have a spectral gap higher than a mandarin and in some cases than
a flower (Corollary \ref{cor:simple_supremum}). In Proposition \ref{prop:subadditive_gluing_principle}
we prove that if a supremizer is obtained by the gluing method then
its spectral gap is necessarily a multiple eigenvalue. As high multiplicities
of eigenvalues is related to large order symmetry groups (or even
to large dimension of their representations), the discussion above
leads to the following two conjectures:
\begin{enumerate}
\item A supremizer of a graph is obtained by choosing edge lengths which
maximize the order of the symmetry group of the resulting graph\footnote{We thank Gregory Berkolaiko for raising this conjecture in a private
communication.}.
\item A supremizer of a graph is obtained by choosing edge lengths which
maximize the multiplicity of the spectral gap.
\end{enumerate}
We note that the conjectures above are not necessarily correlated.
We demonstrate this by mandarin chains, which are $M$ copies of $n$-mandarin
graphs glued serially, as presented in Proposition \ref{prop:Standarins}.
The symmetry group of those graphs is $(S_{n})^{M}$ whose order is
$(n!)^{M}$. Yet, a mandarin chain with $n\geq2,~M\geq2$ always has
a simple spectral gap, as proved in Proposition \ref{prop:Standarins}.
Hence, the large order of the symmetry group does not guarantee large
multiplicity of the spectral gap. Seeking for supremizers for those
graphs, we observe that turning such a graph into an equilateral flower
with $m(n-1)$ petals, increases its spectral gap from $n\pi$ to
$M(n-1)\pi$. The symmetry group of this flower is $S_{M(n-1)}$,
which is of order $(M(n-1))!$. For most values of $n,M$, the flower's
symmetry group is of larger order than that of the mandarin chain,
which is correlated to its spectral gap being of higher multiplicity.
However, for $n=3,~M=2$, the symmetry group of the flower is of order
$24$, while that of the mandarin chain is of order $36$. This flower
possesses a higher spectral gap ($3\pi$) than the mandarin chain
($4\pi$) despite its lower order symmetry group. On one hand, this
example serves in the favor of the second conjecture over the first
one. On the other hand, we still do not know what is the supremizer
in this example and feel that at this stage, both conjectures are
equally appealing.

Finally, we state a more explicit conjecture: the supremizer of a
certain graph is either a stower graph (in its generalized sense)
or a mandarin. These are indeed the only supremizers this work revealed.
Given a certain graph, the maximal spectral gap among all stowers
which may be obtained from that graph equals $\pi(\beta+\frac{E_{l}}{2})$,
where $\beta$ is the graph's first Betti number and $E_{l}$ is the
number of its dangling edges. The maximal spectral gap among all possible
mandarins has a less explicit expression, and we describe it next.
Let $\disgraph$ be a graph and let $\disgraph_{1},\disgraph_{2}$
be two connected subgraphs, sharing neither an edge nor a vertex and
such that each vertex of $\disgraph$ belongs to $\disgraph_{1}\cup\disgraph_{2}$.
Let $E(\disgraph_{1},\disgraph_{2})$ be the number of edges connecting
a vertex of $\disgraph_{1}$ to a vertex of $\disgraph_{2}$. Contracting
all edges of $\disgraph_{1}$and $\disgraph_{2}$ we get a mandarin
of $E(\disgraph_{1},\disgraph_{2})$ edges. The maximal spectral gap
among all mandarins is therefore given by 
\[
\pi\cdot\max_{\disgraph_{1},\disgraph_{2}}E(\disgraph_{1},\disgraph_{2}).
\]
We note that the expression above is curiously related to the Cheeger
constant, but do not further elaborate on that. For the allowed $(\disgraph_{1},\disgraph_{2})$
partitions among which we maximize we may also write $E(\disgraph_{1},\disgraph_{2})=\beta+1-\left(\beta_{1}+\beta_{2}\right)$,
where $\beta_{i}$ is the first Betti number of $\disgraph_{i}$.
This expression allows for a comparison with the optimal stower spectral
gap, $\pi(\beta+\frac{E_{l}}{2})$. For example it is seen that for
a graph with at most one dangling edge, the mandarin achieves a strictly
higher spectral gap than the stower (or flower in this case) only
if there is a partition where both $\disgraph_{1},\disgraph_{2}$
are tree graphs. On the other hand, if the graph has at least three
dangling edges, any mandarin has a lower spectral gap than the optimal
stower. Does the conjecture above hold or are there supremizers other
than stowers and mandarins? This question remains open.

\section*{Acknowledgments}

We acknowledge Richard Maynes for taking part in the preliminary examination
of the problem. We thank Gregory Berkolaiko and Uzy Smilansky for
their stimulating feedback. We thank Adam Sawicki and his student,
Oskar S\l{}owik, for some fruitful discussions regarding graph connectivity.
We thank Sebastian Egger and Lior Alon, as well as the anonymous referees
for a careful reading and useful comments.

R.B. was supported by ISF (Grant No. 494/14), Marie Curie Actions
(Grant No. PCIG13-GA-2013-618468) and the Taub Foundation (Taub Fellow).

G.L. thanks the Mathematics faculty of the Technion for their kind
hospitality, without which the current collaboration would not have
been possible.

\appendix

\section{Eigenvalue continuity with respect to edge lengths\label{sec:appendix_eigenvalue_continuity}}

In this section we sketch a proof for the continuity of all the graph's
eigenvalues (not only the spectral gap) with respect to the graph's
edge lengths. The continuity (and even differentiability) of eigenvalues
with respect to edge lengths is proven in \cite{BerKuc_quantum_graphs,DelRoss_amp16}.
Yet, those proofs deal only with positive edge lengths\footnote{It is possible that the proof in section 4 of \cite{DelRoss_amp16},
which is based on test functions, may be adapted for the zero edge
length case. Nevertheless, we provide here a different argument based
on the scattering approach.}, whereas in the current work we are interested in particular in $\lenvec\in\partial\len_{\disgraph}$,
when we distinguish between supremizers and maximizers (see definition
\ref{def:optimizers}). We claim that eigenvalue continuity indeed
carries over to the zero edge length case. We do not prove this in
full rigor, but rather point out the general lines for forming a proof
for this statement. We start by introducing the scattering approach
for quantum graphs (see also \cite{GnuSmi_ap06,BerKuc_quantum_graphs}).

\subsection{The scattering approach to the graph spectrum}

\label{sec:scat_mat} Let $\metgraph$ be a Neumann graph. The eigenvalue
equation, 
\begin{equation}
-\frac{d^{2}f}{dx^{2}}=k^{2}f(x)\ ,\label{eq:eig_eq_V0}
\end{equation}
has a solution on each directed edge $e$, written as (assuming $k\neq0$)
\begin{equation}
f_{e}(x_{e})=a_{e}^{\textrm{in}}\ue^{-\ui kx_{e}}+a_{e}^{\textrm{out}}\ue^{\ui kx_{e}}.\label{eq:sol_on_edge_a}
\end{equation}
We may consider the edge $\hat{e}$, which is the same as $e$, but
with a reverse direction (resulting in different parametrization of
the coordinate, $x_{\hat{e}}=l_{e}-x_{e}$) and write the same function
as above in the following form 
\begin{equation}
f_{\hat{e}}(x_{\hat{e}})=a_{\hat{e}}^{\textrm{in}}\ue^{-\ui kx_{\hat{e}}}+a_{\hat{e}}^{\textrm{out}}\ue^{\ui kx_{\hat{e}}}\ .\label{eq:sol_on_edge_other}
\end{equation}
Comparing both expressions above we arrive at 
\begin{equation}
a_{e}^{\textrm{in}}=\ue^{\ui kl_{e}}a_{\hat{e}}^{\textrm{out}}\qquad\textrm{and}\qquad a_{\hat{e}}^{\textrm{in}}=\ue^{\ui kl_{e}}a_{e}^{\textrm{out}}\ .\label{eq:a_connection}
\end{equation}
Fixing a vertex $v$ and using the Neumann vertex conditions to relate
solutions $f_{e}$ for all edges whose origin is $v$ one arrives
at 
\begin{equation}
\vec{a}_{v}^{\,\textrm{out}}=\sigma^{\left(v\right)}\vec{a}_{v}^{\,\textrm{in}},\label{eq:vert_scat}
\end{equation}
where $\vec{a}_{v}^{\,\textrm{out}}$ and $\vec{a}_{v}^{\,\textrm{in}}$
are vectors of the outgoing and incoming coefficients ($a_{e}^{\textrm{in}},~a_{e}^{\textrm{out}}$)
at $v$ and $\sigma^{(v)}$ is a $d_{v}\times d_{v}$ unitary matrix,
$d_{v}$ being the degree of the vertex $v$. The matrix $\sigma^{(v)}$
is called the vertex-scattering matrix and its entries were first
calculated in \cite{KotSmi_anp99}: 
\begin{equation}
\sigma_{e,e'}^{\left(v\right)}=\frac{2}{d_{v}}-\delta_{e,e'}\ .
\end{equation}
We collect all coefficients $a_{e}^{\textrm{in}}$ from the whole
graph into a vector $\vec{a}$ of size $2E$ such that the first $E$
entries correspond to edges which are the inverses of the last $E$
entries. We can then define the matrix $J$ acting on $\vec{a}$ by
requiring that it exchanges $a_{e}^{\textrm{in}}$ and $a_{\hat{e}}^{\textrm{in}}$
for all $e$ such that, 
\begin{equation}
J=\left(\begin{array}{cc}
0 & \mathbf{I}\\
\mathbf{I} & 0
\end{array}\right).
\end{equation}
Then, collecting equations (\ref{eq:vert_scat}) for all vertices
into one system and using (\ref{eq:a_connection}) we have, 
\begin{equation}
J\ue^{-\ui kL}\vec{a}=\Sigma\vec{a}\ ,
\end{equation}
where $L=\textrm{diag}\{l_{1},\dots,l_{E},l_{1},\dots,l_{E}\}$ is
a diagonal matrix of edge lengths and $\Sigma$ is block-diagonalizable
with individual $\sigma^{(v)}$ as blocks. This can be rewritten as
(note that $J^{-1}=J$), 
\begin{equation}
\vec{a}=\ue^{\ui kL}J\Sigma\vec{a}\ ,\label{eq:bond_scat}
\end{equation}
and hence all the non zero eigenvalues of the graph are the solutions
of 
\begin{equation}
\det\left(\mathbf{I}-U\left(k\right)\right)=0\ ,\label{eq:secular_cond}
\end{equation}
where $U(k):=\ue^{\ui kL}J\Sigma$. 

\subsection{Continuity of eigenvalues via scattering approach}

The scattering approach allows for a reduction in the dimensions of
the matrix $U(k)$ by reducing a subgraph into a single composite
vertex with some (non-trivial) vertex conditions (see section 3.3
in \cite{GnuSmi_ap06}). We pick a certain edge, $e$, to be the mentioned
subgraph and turn it into a single (composite) vertex by shrinking
it to zero length.

The length of this edge, $l_{e}$, will show up only in the scattering
matrix of this composite vertex and will allow to examine how the
eigenvalues depend on this length. We carry on with an explicit computation.
Let $e$ be an edge connecting two vertices, $v_{1},v_{2}$, of degrees
$d_{1},d_{2}$. Hence, the new composite vertex, $v$, would be of
degree $d_{1}+d_{2}-2$. We calculate a reflection coefficient of
this vertex (i.e., an on-diagonal entry of its vertex-scattering matrix).
The calculation may be done by summing infinitely many trajectories
on the original graph all starting by entering $v_{1}$ from some
edge $e_{1}$ (different than $e$) and eventually leaving $v_{1}$
along the same edge, $e_{1}$ (see section 3.3 in \cite{GnuSmi_ap06},
for further details). 
\begin{align}
\sigma_{e_{1},e_{1}}^{\left(v\right)} & =\frac{2-d_{1}}{d_{1}}+\frac{2}{d_{1}}\cdot\ue^{\ui k2l_{e}}\cdot\frac{2-d_{2}}{d_{2}}\cdot\sum_{n=0}^{\infty}\left(\ue^{\ui k2l_{e}}\frac{2-d_{2}}{d_{2}}\frac{2-d_{1}}{d_{1}}\right)^{n}\cdot\frac{2}{d_{1}}\nonumber \\
 & =-1+\frac{2}{d_{1}}\left(1+\frac{4-2d_{2}}{\ue^{-\ui k2l_{e}}d_{1}d_{2}-\left(2-d_{1}\right)\left(2-d_{2}\right)}\right)\underset{_{l_{e}\rightarrow0}}{\longrightarrow}-1+\frac{2}{d_{1}+d_{2}-2}.\label{eq:appendix_continuity_reflection}
\end{align}
where the continuity of the expression above in $l_{e}$ is apparent
and allows to take the limit $l_{e}\rightarrow0$. We calculate just
another entry of the composite vertex scattering matrix - the entry
which corresponds to entering at vertex $v_{1}$ and leaving at $v_{2}$.
The calculation is similar to the one above and gives 
\begin{align}
\sigma_{e_{1},e_{2}}^{\left(v\right)} & =\frac{2}{d_{1}}\cdot\ue^{\ui kl_{e}}\cdot\sum_{n=0}^{\infty}\left(\ue^{\ui k2l_{e}}\frac{2-d_{2}}{d_{2}}\frac{2-d_{1}}{d_{1}}\right)^{n}\cdot\frac{2}{d_{2}}\nonumber \\
 & =\frac{4}{\ue^{-\ui kl_{e}}d_{1}d_{2}-\ue^{\ui kl_{e}}\left(2-d_{1}\right)\left(2-d_{2}\right)}\underset{_{l_{e}\rightarrow0}}{\longrightarrow}\frac{2}{d_{1}+d_{2}-2}.\label{eq:appendix_continuity_transmission}
\end{align}
There is just another computation which is similar in nature and will
not be repeated here. All the rest of the composite vertex scattering
matrix entries may be obtained by symmetry. We hence get that the
resulting scattering matrix when taking the limit $l_{e}\rightarrow0$
is the same as the one obtained by considering Neumann conditions
at the composite vertex. As the scattering matrix continuously determines
the graph's eigenvalues (see \eqref{eq:secular_cond}) we get the
desired continuity result.

\section{$\delta$-type conditions and interlacing theorems\label{sec:appendix_interlacing_theorems}}

We present here the so-called $\delta$-type conditions, of which
both Neumann and Dirichlet conditions form special cases.
\begin{defn}
\label{def:delta_type_conditions} We say that $f$ satisfies the
$\delta$-type condition with the coefficient $\alpha\in\R$ at vertex
$v$ if

\begin{enumerate}
\item $f$ is continuous at $v$: 
\begin{equation}
f_{e_{1}}(v)=f_{e_{2}}(v),\label{eq:delta_continuity}
\end{equation}
 for all edges $e_{1},e_{2}\in\E_{v}$, where $\E_{v}$ is the set
of edges incident to $v$. 
\item the derivatives of $f$ at $v$ satisfy 
\begin{equation}
\sum_{e\in\E_{v}}\frac{df}{dx_{e}}(v)=\alpha f(v).\label{eq:delta_deriv_with_alpha}
\end{equation}
\end{enumerate}
\end{defn}
We consider the following transformations 
\begin{equation}
\alpha\mapsto\theta=\arg\left(\frac{\ui+\alpha}{\ui+\alpha}\right),\label{eq:alpha_to_theta}
\end{equation}
and 
\begin{equation}
\theta\mapsto\alpha=\ui\frac{1-\exp\left(\ui\theta\right)}{1+\exp\left(\ui\theta\right)}=\tan\left(\frac{\theta}{2}\right).\label{eq:theta_to_alpha}
\end{equation}

The transformations \eqref{eq:alpha_to_theta}, \eqref{eq:theta_to_alpha}
are the inverses one of the other and allow to write the condition
\eqref{eq:delta_deriv_with_alpha} in the form \eqref{eq:delta-condition_with_theta},
which is the one used throughout the paper. We denote by $k_{n}(\metgraph;~\theta)$
the $n^{\textrm{th}}$ $k$-eigenvalue of such a graph and possibly
omit either $\metgraph$ or $\theta$ from this notation whenever
it is clear what they are from the context. Similarly, the spectrum
is denoted $\sigma(\metgraph;~\theta)$ (see \eqref{eq:spectrum_notation}).

We quote below some useful results from \cite{BerKuc_quantum_graphs}
as lemmata.

The following lemma is a slight rephrasing of theorem 3.1.8 from \cite{BerKuc_quantum_graphs}.
\begin{lem}
\label{lem:interlacing_wrt_delta_parameters} Let $\metgraph$ be
a compact (not necessarily connected) graph. Let $v$ be a vertex
of $\metgraph$ endowed with the $\delta$-type condition and arbitrary
self-adjoint vertex conditions at all other vertices of $\metgraph$.
If $-\pi<\theta<\theta'\leq\pi$, then 
\[
k_{n}\left(\theta\right)\leq k_{n}\left(\theta'\right)\leq k_{n+1}\left(\theta\right).
\]

If the eigenvalue $k_{n}\left(\theta'\right)$ is simple and its eigenfunction
$f$ is such that either $f\left(v\right)$ or $\sum f'\left(v\right)$
is non-zero, then the inequalities above are strict, 
\[
k_{n}\left(\theta\right)<k_{n}\left(\theta'\right)<k_{n+1}\left(\theta\right).
\]
\end{lem}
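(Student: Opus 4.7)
The plan is to present the $\delta$-type condition at $v$ as a rank-one perturbation of the quadratic form and then combine the min-max principle with rank-one perturbation theory of self-adjoint operators. For $\theta \in (-\pi,\pi)$ the operator $\hat{H}_\theta$ is generated by the form $Q_\theta(f) = q_0(f) + \tan(\theta/2)|f(v)|^2$ on a $\theta$-independent domain $\mathcal{D}$, where $q_0(f) = \int_{\metgraph}|f'|^2\,dx$ plus boundary contributions from the other vertices; the Dirichlet case $\theta=\pi$ corresponds to the form $q_0$ on the restricted domain $\mathcal{D}_\pi = \{f \in \mathcal{D} : f(v)=0\}$. Since $\tan(\theta/2)$ is strictly increasing on $(-\pi,\pi)$, we have $Q_\theta \leq Q_{\theta'}$ pointwise on $\mathcal{D}$ for $\theta < \theta' < \pi$, and the min-max principle gives $k_n(\theta) \leq k_n(\theta')$; the limit $\theta'=\pi$ is handled by applying min-max on the smaller domain $\mathcal{D}_\pi$, on which $Q_\pi$ coincides with every $Q_\theta$.

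For the upper interlacing $k_n(\theta') \leq k_{n+1}(\theta)$ I would take $V = \mathrm{span}(\phi_0,\ldots,\phi_{n+1})$ spanned by the first $n+2$ eigenfunctions of $\hat{H}_\theta$ and pass to $V_0 := \{f \in V : f(v)=0\}$, which has dimension at least $n+1$ and is contained in $\mathcal{D}_{\theta'}$. On $V_0$ the $\theta$-dependent term drops out, so $Q_{\theta'}|_{V_0} = q_0|_{V_0} \leq k_{n+1}(\theta)^2 \|\cdot\|^2$, and min-max yields the bound.

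The strict inequalities rest on two observations. First, the Hellmann--Feynman formula $\frac{d}{d\tau}k_n(\tau)^2 = \frac{1}{2}\sec^2(\tau/2)|f_\tau(v)|^2$, valid wherever $k_n(\tau)$ is a simple eigenvalue and analytic in $\tau$. Second, an eigenfunction $g$ of $\hat{H}_\tau$ at $\lambda$ with $g(v)=0$ automatically satisfies the $\delta$-condition with any other parameter, so that $E_\theta(\lambda) \cap \{f : f(v)=0\} = E_{\theta'}(\lambda) \cap \{f : f(v)=0\}$. Assume $k_n(\theta')$ is simple with eigenfunction $f$ satisfying $f(v)\neq 0$ (which is equivalent to $\sum f'(v)\neq 0$ for $\theta'<\pi$ by the $\delta$-condition). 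If $k_n(\theta) = k_n(\theta')$, then $k_n$ is constant on $[\theta,\theta']$ by monotonicity, the Hellmann--Feynman integrand vanishes almost everywhere, and continuity forces $f(v) = 0$ at $\tau=\theta'$, contradicting the hypothesis. If $k_n(\theta')=k_{n+1}(\theta) = \lambda$, then $\lambda$ is a common eigenvalue of $\hat{H}_\theta$ and $\hat{H}_{\theta'}$; by rank-one perturbation theory any such common eigenvalue admits a shared eigenfunction $g$ with $g(v)=0$, but the only element of the simple eigenspace $E_{\theta'}(\lambda) = \mathbb{C}\, f_{\theta'}$ vanishing at $v$ is zero, yielding a contradiction.

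The hardest step will be the Dirichlet limit $\theta'=\pi$, where the form domain jumps and Hellmann--Feynman is unavailable at $\tau=\pi$. I would handle this by proving the strict inequalities on $[\theta,\pi)$ via the arguments above and passing to the limit $\tau \to \pi^-$; the identity $\sum f'_\tau(v) = \tan(\tau/2)\, f_\tau(v)$ bridges the two regimes, since its finite limit $\sum f'_\pi(v) \neq 0$ forces $f_\tau(v)$ to be of order $1/\tan(\tau/2)$ for $\tau$ near $\pi$ and keeps the Hellmann--Feynman derivative bounded away from zero, allowing the eigenspace argument to be transferred to the Dirichlet case.
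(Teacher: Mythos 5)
First, note that the paper itself does not prove this lemma: it is quoted (as a rephrasing of Theorem 3.1.8 of \cite{BerKuc_quantum_graphs}), so there is no in-paper proof to compare with. Your quadratic-form route is the standard one, and the non-strict part of your argument is correct: for $\theta<\theta'<\pi$ the forms $Q_\theta\leq Q_{\theta'}$ share the domain $\mathcal{D}$ and min--max gives $k_n(\theta)\leq k_n(\theta')$, the case $\theta'=\pi$ follows from $\mathcal{D}_\pi\subset\mathcal{D}$, and restricting the span of the first $n+2$ eigenfunctions of the $\theta$-problem to the hyperplane $\{f(v)=0\}$ gives $k_n(\theta')\leq k_{n+1}(\theta)$ for every $\theta'$, including $\pi$.

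The strictness part has two genuine gaps. (i) Your pivotal claim --- that a common eigenvalue of the $\theta$- and $\theta'$-problems admits a shared eigenfunction vanishing at $v$ --- is only asserted via ``rank-one perturbation theory''. It is true, but here the perturbation is a \emph{singular} rank-one form perturbation (the functional $f\mapsto f(v)$, i.e.\ $\delta_v\in H^{-1}$), so it needs an argument; the shortest one is Green's identity at $v$: if $g$ and $h$ are eigenfunctions at the same $\lambda$ for parameters $\theta\neq\theta'$, then $\bigl(\sum g'(v)\bigr)h(v)=g(v)\bigl(\sum h'(v)\bigr)$, and the $\delta$-conditions force $g(v)h(v)=0$; whichever factor vanishes produces an eigenfunction with both value and derivative-sum zero at $v$, which by simplicity must be proportional to $f$, contradicting the hypothesis. (ii) The Dirichlet case $\theta'=\pi$, which you flag as the hardest step, is not handled: proving strict inequalities on $[\theta,\pi)$ and ``passing to the limit'' cannot give strictness at $\pi$ (strict inequalities are lost in limits), and the refined version you sketch --- controlling $f_\tau(v)$ through $\sum f_\tau'(v)=\tan(\tau/2)f_\tau(v)$ --- silently assumes convergence of the eigenfunctions and of their derivative traces as $\tau\to\pi^-$ (i.e.\ norm-resolvent convergence of the $\delta$-coupling to the Dirichlet decoupling as $\alpha\to\infty$, plus regularity), which you neither state nor prove. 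The Green's identity argument of (i) applies verbatim when one parameter equals $\pi$ (there $f(v)=0$ and the identity yields $g(v)\sum f'(v)=0$), so no limiting procedure is needed at all. A minor slip: $f(v)\neq0$ and $\sum f'(v)\neq0$ are not equivalent at $\theta'=0$, where $\sum f'(v)=0$ always; only the implication you actually use ($f(v)=0\Rightarrow\sum f'(v)=0$ for $\theta'<\pi$) holds.
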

The following lemma is a slight rephrasing of theorem 3.1.10 from
\cite{BerKuc_quantum_graphs}.
\begin{lem}
\label{lem:interlacing_of_gluing} Let $\metgraph$ be a compact (not
necessarily connected) graph. Let $v_{1}$ and $v_{2}$ be vertices
of $\metgraph$ endowed with the $\delta$-type conditions with corresponding
coefficients $\alpha_{1},\alpha_{2}$ and arbitrary self-adjoint vertex
conditions at all other vertices of $\metgraph$. Let $\metgraph'$
be the graph obtained from $\metgraph$by gluing the vertices $v_{1}$
and $v_{2}$ together into a single vertex $v$, so that $\E_{v}=\E_{v_{1}}\cup\E_{v_{2}}$
and endowed with $\delta$-type condition at $v$, with the coefficient
$\alpha_{1}+\alpha_{2}$. 

Then the eigenvalues of the two graphs satisfy the inequalities
\[
k_{n}\left(\metgraph\right)\leq k_{n}\left(\metgraph'\right)\leq k_{n+1}\left(\metgraph\right).
\]

We apply the lemma above in the case $\alpha_{1}=-\alpha_{2}$, for
which $\metgraph'$ satisfies Neumann conditions at $v$.
\end{lem}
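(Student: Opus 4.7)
The plan is to prove the interlacing by the standard min--max argument, recognizing that gluing two vertices amounts to restricting the quadratic form to a codimension-one subspace.

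First I would write down the quadratic forms associated with each problem. For a graph carrying $\delta$-type conditions with parameter $\alpha_v$ at vertex $v$ (Neumann being the case $\alpha_v=0$), the associated sesquilinear form is
\[
Q_\metgraph(f)=\int_\metgraph |f'(x)|^2\,\mathrm{d}x+\sum_{v}\alpha_v\,|f(v)|^2,
\]
with form domain the subspace of $H^1(\metgraph)$ consisting of functions that are continuous at each vertex carrying a $\delta$-type (or Neumann) condition and that obey whatever other self-adjoint boundary condition is imposed elsewhere. This form is bounded below and the $n$-th eigenvalue $k_n(\metgraph)^2$ admits the Courant--Fischer representation $k_n(\metgraph)^2=\min_{V_n}\max_{0\neq f\in V_n} Q_\metgraph(f)/\|f\|_{L^2}^2$, where $V_n$ ranges over $(n+1)$-dimensional subspaces of the form domain.

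Next I would identify $Q_{\metgraph'}$ as the restriction of $Q_\metgraph$ to a codimension-one subspace. Any admissible $f$ on $\metgraph'$ is (by identifying $\E_v=\E_{v_1}\cup\E_{v_2}$) an admissible function on $\metgraph$ with the additional linear constraint $f(v_1)=f(v_2)$; conversely, any function on $\metgraph$ satisfying that constraint descends to an admissible function on $\metgraph'$. Because $\alpha_1|f(v_1)|^2+\alpha_2|f(v_2)|^2=(\alpha_1+\alpha_2)|f(v)|^2$ whenever $f(v_1)=f(v_2)=f(v)$, the forms agree on this subspace, and the $L^2$ norms agree as well (the identification does not change the underlying metric edges). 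So $Q_{\metgraph'}$ is literally $Q_\metgraph$ restricted to the kernel of the bounded linear functional $\ell(f):=f(v_1)-f(v_2)$.

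Having reduced the problem to this restriction, I would invoke the classical interlacing inequality for quadratic forms: if a self-adjoint form on a space $H$ has eigenvalues $\mu_0\le\mu_1\le\dots$ and is restricted to a codimension-one subspace $H'=\ker\ell$, producing eigenvalues $\mu'_0\le\mu'_1\le\dots$, then $\mu_n\le\mu'_n\le\mu_{n+1}$ for every $n$. This follows directly from Courant--Fischer: the bound $\mu'_n\ge\mu_n$ is automatic since every test subspace for $\mu'_n$ is also a test subspace for $\mu_n$; the bound $\mu'_n\le\mu_{n+1}$ follows by taking a minimizing $(n+2)$-dimensional subspace for $\mu_{n+1}$ and intersecting it with $\ker\ell$, which yields a subspace of dimension at least $n+1$ admissible for $\mu'_n$. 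Applied to our setting with $\mu_n=k_n(\metgraph)^2$ and $\mu'_n=k_n(\metgraph')^2$, and taking square roots (which preserves order), we obtain $k_n(\metgraph)\le k_n(\metgraph')\le k_{n+1}(\metgraph)$.

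The only subtlety I anticipate is a bookkeeping one: checking that the form-domain of $Q_\metgraph$ really contains $Q_{\metgraph'}$'s form-domain as the stated codimension-one subspace, especially when $v_1$ or $v_2$ happens to be a leaf or coincides in $\metgraph$ with other vertices; but in every case the linear functional $\ell(f)=f(v_1)-f(v_2)$ is nontrivial on the form domain unless $v_1=v_2$ already, which is the degenerate case where $\metgraph=\metgraph'$ and the inequality is trivial. No delicate analysis is required beyond this, which is consistent with the statement being a direct corollary of the Berkolaiko--Kuchment result.
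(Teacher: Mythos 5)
Your argument is correct, but note that the paper does not prove this lemma at all: it is quoted verbatim (up to rephrasing) as Theorem 3.1.10 of Berkolaiko--Kuchment, so what you have written is essentially a reconstruction of the standard proof behind that citation. The reconstruction is sound: the form domain of the glued graph $\metgraph'$ is exactly the codimension-one subspace $\ker\ell$, $\ell(f)=f(v_{1})-f(v_{2})$, of the form domain of $\metgraph$; the $\delta$-terms match because $\alpha_{1}|c|^{2}+\alpha_{2}|c|^{2}=(\alpha_{1}+\alpha_{2})|c|^{2}$; and Courant--Fischer then gives the two-sided interlacing for a rank-one constraint. Two small points of bookkeeping are worth making explicit. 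First, for general self-adjoint conditions at the remaining vertices the quadratic form is not just $\int|f'|^{2}+\sum_{v}\alpha_{v}|f(v)|^{2}$; it may carry additional vertex terms and domain constraints there. This does not affect your argument, since those contributions are identical for $\metgraph$ and $\metgraph'$, but the form should be stated in that generality (or one should say the extra terms are untouched by the gluing). Second, with $\delta$-couplings of arbitrary sign the operator need not be nonnegative, so strictly one interlaces the eigenvalues $\lambda_{n}$ and then passes to the $k_{n}$ by the paper's (monotone) convention; in the only case the paper actually uses, $\alpha_{1}=-\alpha_{2}$, the glued vertex is Neumann and this is immaterial. With these caveats your proof buys something the paper does not provide, namely a self-contained argument in place of the external reference.
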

The following lemma is a rephrasing of part of lemma 3.1.14 from \cite{BerKuc_quantum_graphs}
and the discussion which precedes it.
\begin{lem}
\label{lem:continuity_wrt_delta_parameter} $k_{n}\left(\theta\right)$
is a continuous non-decreasing function of $\theta\in\left(-\pi,\pi\right]$
and obeys the following continuity relation
\[
k_{n}\left(\pi\right)=\lim_{\theta\rightarrow-\pi^{+}}k_{n+1}\left(\theta\right).
\]
\end{lem}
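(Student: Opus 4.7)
The plan is to establish the three claims of the lemma in order: monotonicity on $(-\pi, \pi]$, continuity on that interval (treating the interior and the right endpoint separately), and the wrap-around identity relating $\theta = \pi$ and $\theta \to -\pi^+$. Monotonicity follows immediately from Lemma \ref{lem:interlacing_wrt_delta_parameters} applied with both vertices taken equal to $v$.

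For continuity on the open interval $(-\pi, \pi)$, I would work at the level of quadratic forms. The $\delta$-type condition of parameter $\theta$ corresponds to the quadratic form
\[
q_\theta(f) = \int_\Gamma |f'(x)|^2 \,dx + \tan(\theta/2)\,|f(v)|^2,
\]
defined on the $\theta$-independent form domain $\mathcal{D}$ of $H^1$-functions on $\Gamma$ continuous at $v$ (and continuous at the remaining vertices as well). Since $\theta \mapsto \tan(\theta/2)$ is smooth on $(-\pi,\pi)$ and the trace inequality $|f(v)|^2 \leq C\|f\|_{H^1(\Gamma)}^2$ holds on $\mathcal{D}$, the family $\{q_\theta\}$ is form-continuous in the sense of Kato, and the min-max formula yields continuity of each $k_n(\theta)$. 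For the right endpoint $\theta = \pi$, the form degenerates as $\tan(\theta/2) \to +\infty$, formally imposing the Dirichlet condition $f(v) = 0$ in the limit. Monotonicity gives $\lim_{\theta \to \pi^-} k_n(\theta) \leq k_n(\pi)$. For the reverse inequality I would take normalized eigenfunctions $f_\theta$ for $k_n(\theta)$; combining the identity $k_n(\theta)^2 = q_\theta(f_\theta)$ with the upper bound $k_n(\theta)^2 \leq k_n(\pi)^2$ forces $|f_\theta(v)| \to 0$, compactness of the resolvent extracts an $L^2$-subsequential limit satisfying Dirichlet conditions at $v$, and a level-by-level multiplicity count identifies the limit as $k_n(\pi)$.

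For the wrap-around identity I would invoke Lemma \ref{lem:spectrum_structure}, which expresses the spectrum as the multiset
\[
\sigma(\Gamma;\theta) = \{K(\Gamma;\theta + 2\pi m)\}_{m=0}^{\infty} \cup \Delta(\Gamma).
\]
At $\theta = \pi$ this reads $\{K(\Gamma;(2m+1)\pi)\}_{m=0}^{\infty} \cup \Delta(\Gamma)$, while the limit as $\theta \to -\pi^+$ acquires the additional value $k_* := \lim_{\theta \to -\pi^+} K(\Gamma;\theta)$, the infimum of the dispersion curve. The identity $k_n(\pi) = \lim_{\theta \to -\pi^+} k_{n+1}(\theta)$ then amounts to showing that inserting $k_*$ shifts every enumerated position of $\sigma(\Gamma;\pi)$ up by exactly one, which in turn requires $k_* \leq k_0(\pi)$. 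This is the primary obstacle: to deal with possible flat bands in $\Delta(\Gamma)$ lying below $k_*$, one must exploit the characterization of $\Delta$ as the intersection $\bigcap_\theta \sigma(\Gamma;\theta)$ together with a careful passage to the limit in the dispersion representation. A technically cleaner alternative is to work directly with the characteristic determinant $\det(I - U(k))$ of the scattering formalism recalled in Appendix \ref{sec:appendix_eigenvalue_continuity}, enlarged to include the $\theta$-dependent vertex phase, and to track the winding of its zeros as $\theta$ traverses $(-\pi, \pi]$; this automatically handles multiplicities and the flat-band bookkeeping. The Dirichlet limit at $\theta = \pi^-$ is the secondary technical difficulty, resolved by the compactness-plus-multiplicity argument sketched above.
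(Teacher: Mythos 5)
The paper itself offers no proof of this lemma: it is quoted as a rephrasing of lemma 3.1.14 of \cite{BerKuc_quantum_graphs} (and the discussion preceding it), so your attempt has to stand on its own rather than be compared with an in-paper argument. Two of your three claims are in good shape: monotonicity does follow directly from Lemma \ref{lem:interlacing_wrt_delta_parameters} (though that lemma concerns a single vertex, so the phrase ``applied with both vertices taken equal to $v$'' is a slip, apparently a confusion with Lemma \ref{lem:interlacing_of_gluing}), and the quadratic-form route to continuity on $(-\pi,\pi)$, together with the left-continuity at $\theta=\pi$ obtained from $|f_{\theta}(v)|^{2}\le k_{n}(\pi)^{2}/\tan(\theta/2)\to 0$ plus compactness and a dimension count, is standard and correct in outline.

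The genuine gap is in the wrap-around identity, which is the only non-obvious part of the lemma, and you explicitly leave it unresolved: you reduce it, via Lemma \ref{lem:spectrum_structure}, to showing that the limiting value $k_{*}$ of the $m=0$ branch lies below all of $\sigma(\Gamma;\pi)$, you flag ``possible flat bands in $\Delta(\Gamma)$ lying below $k_{*}$'' as the primary obstacle, and then you only gesture at two strategies (a ``careful passage to the limit'' and a winding count for $\det(\mathbf{I}-U(k))$) without carrying either out. The missing idea is elementary: as $\theta\to-\pi^{+}$ the coupling $\alpha=\tan(\theta/2)\to-\infty$, and inserting any fixed test function $g$ with $g(v)\neq0$ into the form gives $\lambda_{0}(\theta)\le\bigl(\|g'\|^{2}+\tan(\theta/2)|g(v)|^{2}\bigr)/\|g\|^{2}\to-\infty$. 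Hence the lowest branch escapes to $-\infty$ (in the $\lambda$-picture; note it corresponds to imaginary $k$, so the statement is cleanest for $\lambda$-eigenvalues), in particular it eventually lies below the fixed, bounded-below set $\Delta(\Gamma)$ and, by strict monotonicity of $K$, below every other branch. Consequently, for $\theta$ near $-\pi$ one has $k_{0}(\theta)=K(\Gamma;\theta)$, and the rest of the spectrum, $\{K(\Gamma;\theta+2\pi m)\}_{m\ge1}\cup\Delta(\Gamma)$, converges term by term (with multiplicities) to $\sigma(\Gamma;\pi)$, which is exactly $k_{n+1}(\theta)\to k_{n}(\pi)$; this disposes of the flat-band worry entirely. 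Without this step (or an equivalent one, e.g. the interlacing bound $k_{n}(\pi)\le k_{n+1}(\theta)$ from Lemma \ref{lem:interlacing_wrt_delta_parameters} combined with an explicit $(n+2)$-dimensional test space built from Dirichlet eigenfunctions plus a function concentrated near $v$), the identity $k_{n}(\pi)=\lim_{\theta\to-\pi^{+}}k_{n+1}(\theta)$ is not established, so the proposal as written is incomplete precisely where the lemma has content.
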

~

The following lemma contains a statement which is proved in the course
of the proof of lemma 3.1.15 in \cite{BerKuc_quantum_graphs}. We
state here the lemma we need and its proof for completeness.
\begin{lem}
\label{lem:flat_band_eigenvalue}Let $\Gamma$ be a graph and let
$v$ be a vertex of $\Gamma$. Let $\theta_{1}\neq\theta_{2}$ and
let $k\in\sigma\left(\Gamma;~\theta_{1}\right)\cap\sigma\left(\Gamma;~\theta_{2}\right)$.
Then there exists an eigenfunction corresponding to $k$ which vanishes
at $v$ and its sum of derivatives vanish at $v$. Therefore, this
eigenfunction satisfies the $\delta$-type condition at $v$ for every
$\theta\in\left(-\pi,\pi\right]$. Hence $k\in\Delta\left(\metgraph\right)$.
\end{lem}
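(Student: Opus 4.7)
The argument centers on the linear space
\[
W := \bigl\{ f :\, -f''=k^{2}f \text{ on each edge, } f \text{ continuous at } v, \text{ Neumann at every other vertex}\bigr\}
\]
and the evaluation map $\pi : W \to \R^{2}$ given by $\pi(f) := \bigl(f(v),\,\sum_{e \in \E_{v}} \tfrac{\ud f}{\ud x_{e}}(v)\bigr)$. Any non-zero element of $\ker \pi$ is automatically an eigenfunction satisfying every $\delta$-type condition \eqref{eq:delta-condition_with_theta} at $v$, and therefore lies in each $V_{\theta}$; from this one reads off the remaining assertions of the lemma ($k\in\sigma(\metgraph;\theta)$ for all $\theta$, and hence $k\in\Delta(\metgraph)$ via the characterization $\Delta(\metgraph)=\bigcap_{\theta}\sigma(\metgraph;\theta)$ from the remark following Lemma \ref{lem:spectrum_structure}). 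So the plan reduces to proving $\ker \pi \neq \{0\}$.

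I would argue by contradiction. If $\pi|_{W}$ were injective, then, since $V_{\theta}=\pi^{-1}(L_{\theta})$ with $L_{\theta}$ the one-dimensional line in $\R^{2}$ spanned by $(\cos(\theta/2),\sin(\theta/2))$ cut out by \eqref{eq:delta-condition_with_theta}, we would have $\dim V_{\theta}\leq 1$. The hypothesis $k\in\sigma(\metgraph;\theta_{1})\cap\sigma(\metgraph;\theta_{2})$ forces both $V_{\theta_{1}}$ and $V_{\theta_{2}}$ to be non-trivial, hence exactly one-dimensional, so $k$ is a simple eigenvalue of each of these spectra, and by injectivity the corresponding eigenfunctions satisfy $\pi(f)\neq 0$. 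Relabel if necessary so that $\theta_{1}<\theta_{2}$, and let $n$ be the unique index with $k=k_{n}(\metgraph;\theta_{2})$. These conditions are precisely what is required to activate the strict branch of Lemma \ref{lem:interlacing_wrt_delta_parameters}, yielding
\[
k_{n}(\metgraph;\theta_{1}) < k_{n}(\metgraph;\theta_{2}) = k < k_{n+1}(\metgraph;\theta_{1}).
\]
But this squeezes $k$ strictly between two consecutive eigenvalues of $\sigma(\metgraph;\theta_{1})$, contradicting $k\in\sigma(\metgraph;\theta_{1})$. Hence $\ker \pi \neq \{0\}$, and any non-zero element is the required eigenfunction.

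The main conceptual obstacle is to resist the naive strategy of seeking a kernel element inside $\mathrm{span}(f_{1},f_{2})$ with $f_{i}\in V_{\theta_{i}}$. Whenever both $f_{i}(v)\neq 0$, the vectors $\pi(f_{1})$ and $\pi(f_{2})$ lie on the distinct lines $L_{\theta_{1}},L_{\theta_{2}}$ and are linearly independent in $\R^{2}$, so the only combination with vanishing image under $\pi$ is the trivial one. The contradiction argument above sidesteps this difficulty by showing that the very configuration in which $\pi$ has trivial kernel is already incompatible with the strict monotonicity of the eigenvalue branches recorded in Lemma \ref{lem:interlacing_wrt_delta_parameters}.
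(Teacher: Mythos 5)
Your proof is correct. It rests on the same key ingredient as the paper's argument, namely the strict eigenvalue interlacing with respect to the $\delta$-type parameter (Lemma \ref{lem:interlacing_wrt_delta_parameters}), but you package it differently: instead of the paper's case analysis, you introduce the solution space $W$ and the trace map $\pi(f)=\left(f(v),\sum_{e\in\E_{v}}\frac{\ud f}{\ud x_{e}}(v)\right)$ and reduce everything to showing $\ker\pi\neq\{0\}$. The paper splits into two cases: if $k$ is a multiple eigenvalue for one of the parameters, an eigenfunction with vanishing value (or, when $\theta_{i}=\pi$, vanishing derivative sum) at $v$ is constructed directly inside the eigenspace; if both eigenvalues are simple, the eigenfunction is shown to vanish at $v$ because otherwise strict interlacing excludes $k$ from the other spectrum. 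Your injectivity dichotomy subsumes both cases at once: multiplicity at least two forces $\dim\pi(V_{\theta_{i}})\geq2>\dim L_{\theta_{i}}$, hence a kernel element (this is exactly the paper's first case, and it handles $\theta_{i}=\pi$ without special treatment, since $\pi(f)=0$ encodes both vanishing conditions simultaneously), while in the injective scenario simplicity and the nonvanishing of the trace data come for free, so a single application of the strict interlacing at the larger parameter $\theta_{2}$ yields $k_{n}(\theta_{1})<k=k_{n}(\theta_{2})<k_{n+1}(\theta_{1})$ and the contradiction. Your version is thus a uniform and slightly cleaner organization of the same underlying mechanism, and you are also careful to place the simplicity/nonvanishing hypothesis at the larger parameter, as the interlacing lemma requires.
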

\begin{proof}
Let $f_{1},f_{2}$ the eigenfunctions corresponding to $k\in\sigma\left(\Gamma;~\theta_{1}\right)\cap\sigma\left(\Gamma;~\theta_{2}\right)$,
with coefficients $\theta_{1},\theta_{2}$, respectively. Assume first
that either $k\in\sigma\left(\Gamma;~\theta_{1}\right)$ or $k\in\sigma\left(\Gamma;~\theta_{2}\right)$
is a multiple eigenvalue. Assume without loss of generality that it
is $k\in\sigma\left(\Gamma;~\theta_{1}\right)$. Further assume that
$\theta_{1}\neq\pi$. As the eigenvalue is multiple, we can choose
a corresponding eigenfunction which vanishes at $v$ and denote it
by $f_{1}$. We deduce from the $\delta$-type condition that the
sum of derivatives of $f_{1}$ at $v$ vanishes as well and conclude
that $f_{1}$ satisfies $\delta$-type condition at $v$ for any value
of $\theta$. If we assume $\theta_{1}=\pi$, then we may use the
multiplicity of the eigenvalue to choose an eigenfunction $f_{1}$
whose sum of derivatives at $v$ vanishes and once again conclude
that $f_{1}$ satisfies $\delta$-type condition at $v$ for any value
of $\theta$. We have shown that the lemma holds if one of the eigenvalues
is multiple. Otherwise, assume that $k\in\sigma\left(\Gamma;~\theta_{1}\right)$
and $k\in\sigma\left(\Gamma;~\theta_{2}\right)$ are simple eigenvalues.
Assume without loss of generality that $\theta_{1}\neq\pi$. Let $f_{1}$
be the eigenfunction corresponding to $k$ and satisfying the $\delta$-type
condition with $\theta_{1}$. If $f_{1}\left(v\right)\neq0$, then
the strict eigenvalue interlacing (Lemma \ref{lem:interlacing_wrt_delta_parameters})
contradicts $k\in\sigma\left(\Gamma;~\theta_{1}\right)\cap\sigma\left(\Gamma;~\theta_{2}\right)$.
Therefore $f_{1}\left(v\right)=0$ and the sum of derivatives of $f_{1}$
vanishes at $v$, due to the $\delta$-type condition.
\end{proof}

\section{A Basic Rayleigh quotient computation\label{sec:appendix_test_functions_using_subgraphs}}

In the current section, we develop a basic but useful bound on the
Rayleigh quotient, which is used throughout the paper. We define the
mean of a function on a graph as 
\begin{equation}
\left\langle f\right\rangle :=\int_{\metgraph}f\,\ud x,\label{eq:mean_of_function}
\end{equation}
and observe that 
\begin{equation}
\ray\left(f-\left\langle f\right\rangle \right)=\frac{\int_{\metgraph}|f'|^{2}\,\ud x}{\int_{\metgraph}f^{2}\,\ud x-\left\langle f\right\rangle ^{2}},\label{eq:rayleigh_with_mean}
\end{equation}
which is useful as the test functions for which the Rayleigh quotient
is computed ought to be of zero mean.
\begin{lem}
\label{lem:appendix_test_functions_extension_constant} Let $\metgraph$
be a graph of length $1$. Assume that $\metgraph=\metgraph_{1}\cup\metgraph_{2}$
where $\metgraph_{1,2}$ are subgraphs of $\metgraph$ such that $\metgraph_{1}\cap\metgraph_{2}$
is a single vertex, denoted by $v$. Choose an eigenfunction $f$
on $\metgraph_{1}$ corresponding to $k_{1}(\metgraph_{1})$ and extend
it to $\metgraph_{2}$ by the constant $f(v)$. The resulting test
function on $\metgraph$, denoted $\tilde{f}$, satisfies 
\begin{equation}
\ray(\tilde{f}-\left\langle \tilde{f}\right\rangle )=\frac{k_{1}(\Gamma_{1})^{2}\left(\int_{\metgraph_{1}}f^{2}\ud x\right)}{\left(\int_{\metgraph_{1}}f^{2}\ud x\right)+|f(v)|^{2}l_{2}(1-l_{2})},
\end{equation}
where $l_{2}$ denotes the total length of $\metgraph_{2}$. 
\end{lem}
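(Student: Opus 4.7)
The proof is a direct computation using formula \eqref{eq:rayleigh_with_mean} from Appendix \ref{sec:appendix_test_functions_using_subgraphs}. The plan is to evaluate the numerator and denominator of $\ray(\tilde{f}-\langle \tilde{f}\rangle)$ separately, using that $\tilde{f}$ is constant on $\metgraph_{2}$ and that $f$ is a non-constant Neumann eigenfunction on $\metgraph_{1}$.

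First I would compute the numerator. Since $\tilde{f}\equiv f(v)$ on $\metgraph_{2}$, the derivative of $\tilde{f}$ vanishes there, so
\[
\int_{\metgraph}|\tilde{f}'|^{2}\,\ud x=\int_{\metgraph_{1}}|f'|^{2}\,\ud x=k_{1}(\metgraph_{1})^{2}\int_{\metgraph_{1}}f^{2}\,\ud x,
\]
where the second equality uses that $f$ is an eigenfunction of $\metgraph_{1}$ with eigenvalue $k_{1}(\metgraph_{1})^{2}$.

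Next I would compute $\langle \tilde{f}\rangle$ and $\int_{\metgraph}\tilde{f}^{2}\,\ud x$. Because $f$ corresponds to $k_{1}(\metgraph_{1})$ it is orthogonal to the constant function on $\metgraph_{1}$, hence $\int_{\metgraph_{1}}f\,\ud x=0$. Therefore
\[
\langle \tilde{f}\rangle=\int_{\metgraph_{1}}f\,\ud x+\int_{\metgraph_{2}}f(v)\,\ud x=f(v)\,l_{2},
\]
and
\[
\int_{\metgraph}\tilde{f}^{2}\,\ud x=\int_{\metgraph_{1}}f^{2}\,\ud x+|f(v)|^{2}\,l_{2}.
\]
Substituting these expressions into \eqref{eq:rayleigh_with_mean} yields the denominator $\int_{\metgraph_{1}}f^{2}\,\ud x+|f(v)|^{2}l_{2}(1-l_{2})$, and combining with the numerator above gives the claimed formula.

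There is no real obstacle here; the lemma is a bookkeeping computation. The only subtle point worth flagging is that we really need $f$ to be a genuine non-constant eigenfunction (so that $\int_{\metgraph_{1}}f\,\ud x=0$ and the Rayleigh identity $\int |f'|^2 = k_1(\metgraph_1)^2\int f^2$ both hold), and that the continuity of $\tilde{f}$ at $v$ is automatic from the definition of the extension, so no vertex-condition issues arise.
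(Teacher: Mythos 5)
Your computation is correct and is essentially identical to the paper's proof: both evaluate $\langle\tilde{f}\rangle$, $\int_{\metgraph}\tilde{f}^{2}$ and $\int_{\metgraph}|\tilde{f}'|^{2}$ using the constancy of $\tilde{f}$ on $\metgraph_{2}$, the zero mean of $f$ on $\metgraph_{1}$, and the eigenfunction identity, then substitute into \eqref{eq:rayleigh_with_mean}. Your explicit remark that $f$ must be the non-constant eigenfunction (so that $\int_{\metgraph_{1}}f\,\ud x=0$) is a point the paper uses implicitly, so nothing is missing.
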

\begin{proof}
We compute the mean and the $L^{2}$ norm of $\tilde{f}$: 
\[
\langle\tilde{f}\rangle=\int_{\metgraph_{2}}f(v)dx=f(v)l_{2}
\]
and 
\[
\int_{\Gamma}|\tilde{f}|^{2}dx=\left(\int_{\metgraph_{1}}f^{2}\ud x\right)+\int_{\Gamma_{2}}|f(v)|^{2}dx=\left(\int_{\metgraph_{1}}f^{2}\ud x\right)+|f(v)|^{2}l_{2}.
\]
As $\tilde{f}$ is constant on $\Gamma_{2}$ and $f$ is an eigenfunction
on $\Gamma_{1}$, we have 
\[
\int_{\Gamma}|\tilde{f}'(x)|^{2}\ud x=\int_{\Gamma_{1}}|f'(x)|^{2}\ud x=k_{1}(\Gamma_{1})^{2}\left(\int_{\metgraph_{1}}f^{2}\ud x\right).
\]
Plugging the above in \eqref{eq:rayleigh_with_mean} gives the desired
result.
\end{proof}
An immediate corollary of Lemma \ref{lem:appendix_test_functions_extension_constant}
is the following. 
\begin{cor}
\label{cor:extension_constant} With the notations above we have $k_{1}(\Gamma)\leq k_{1}(\Gamma_{1})$.
This inequality is strict if there exists an eigenfunction of $k_{1}(\Gamma_{1})$
not vanishing at $v$. 
\end{cor}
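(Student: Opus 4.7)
The plan is to apply Lemma~\ref{lem:appendix_test_functions_extension_constant} directly and combine its explicit Rayleigh quotient formula with the variational characterization of $k_1(\Gamma)$. First I would pick $f$ to be an eigenfunction on $\Gamma_1$ corresponding to $k_1(\Gamma_1)$ and form $\tilde f$ as in the lemma. Since $k_1(\Gamma_1)>0$, the eigenfunction $f$ is non-constant on $\Gamma_1$, so $\tilde f$ is non-constant on $\Gamma$ and thus $\tilde f - \langle \tilde f\rangle$ is a non-zero element of $H^1(\Gamma)$ orthogonal to the constant function.

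Next I would invoke the variational characterization
\[
\big(k_1(\Gamma)\big)^2 \;=\; \min_{\substack{g\in H^1(\Gamma)\setminus\{0\}\\ \langle g\rangle = 0}} \mathcal{R}(g) \;\leq\; \mathcal{R}(\tilde f - \langle \tilde f\rangle).
\]
Plugging in the formula from Lemma~\ref{lem:appendix_test_functions_extension_constant} and using that $l_2\in[0,1]$ so $l_2(1-l_2)\geq 0$, one immediately gets
\[
\mathcal{R}(\tilde f - \langle \tilde f\rangle) \;=\; \frac{k_1(\Gamma_1)^2 \int_{\Gamma_1} f^2\, \mathrm{d}x}{\int_{\Gamma_1} f^2\,\mathrm{d}x + |f(v)|^2 l_2(1-l_2)} \;\leq\; k_1(\Gamma_1)^2,
\]
which is the desired inequality $k_1(\Gamma)\leq k_1(\Gamma_1)$.

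For the strict inequality clause, suppose there exists an eigenfunction $f$ of $k_1(\Gamma_1)$ with $f(v)\neq 0$, and choose this particular $f$ in the construction. Then $|f(v)|^2>0$. The factor $l_2(1-l_2)$ is strictly positive provided $l_2\in(0,1)$, which holds whenever $\Gamma_2$ is a non-trivial proper subgraph of $\Gamma$; in the degenerate cases $l_2=0$ or $l_2=1$ the gluing is trivial and one of $\Gamma_1,\Gamma_2$ coincides with $\Gamma$ (so there is nothing to prove, or the statement is vacuously handled by noting $\Gamma=\Gamma_1$). Under the non-trivial case the denominator strictly exceeds the numerator's $\int_{\Gamma_1} f^2\,\mathrm{d}x$ factor, yielding the strict inequality $k_1(\Gamma) < k_1(\Gamma_1)$.

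There is no real obstacle here; the only subtlety is making sure the chosen test function is admissible (non-zero and of mean zero), which follows from $k_1(\Gamma_1)>0$ so that eigenfunctions are non-constant. The argument is essentially a one-line application of the preceding lemma.
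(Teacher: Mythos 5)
Your argument is correct and is exactly the intended one: the paper states this corollary as an immediate consequence of Lemma \ref{lem:appendix_test_functions_extension_constant}, and your proof—taking the eigenfunction of $k_{1}(\Gamma_{1})$, extending by the constant $f(v)$, and bounding the Rayleigh quotient of $\tilde f-\langle\tilde f\rangle$ using the nonnegativity (resp.\ positivity when $f(v)\neq0$ and $l_{2}\in(0,1)$) of the extra term in the denominator—is precisely that immediate deduction. No gaps; the admissibility check (non-constancy of $f$, hence of $\tilde f$) is the right point to verify.
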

In the decomposition discussed above, $\metgraph=\metgraph_{1}\cup\metgraph_{2}$,
we call $\Gamma_{1}$ the \emph{main subgraph} of $\Gamma$ and $\Gamma_{2}$
the \emph{attached subgraph}. Note that when the main subgraph is
a single loop, we may rotate its eigenfunction so that it achieves
its maximal value at $v$. We exploit this in the sequel when applying
Lemma \ref{lem:appendix_test_functions_extension_constant}, since
this choice leads to a low value of the Rayleigh quotient.

\section{Proofs for small stowers (Lemmata \ref{lem:stower_1_2}-\ref{lem:stower_1_1})
\label{sec:appendix_small_stowers}}

In this more technical Appendix, we extensively use Lemma \ref{lem:appendix_test_functions_extension_constant}.
Namely, we consider the decomposition $\metgraph=\metgraph_{1}\cup\metgraph_{2}$
and refer to $\Gamma_{1,2}$ as either the main or the attached subgraph
of $\Gamma$ (see Appendix \ref{sec:appendix_test_functions_using_subgraphs}).

~
\begin{proof}[Proof of Lemma \ref{lem:stower_1_2}]
Let us denote by $l_{1},l_{2}$ and $l_{p}$ the lengths of the two
leaves and the petal, respectively and by $v$ the vertex of degree
three. Denote by $k_{1}(l_{1},l_{2},l_{p})$ the spectral gap corresponding
to these edge lengths. First, if $l_{1}+l_{2}>\frac{1}{2}$, we use
the interval made of the two leaves as the main subgraph and the petal
as the attached subgraph. We thus get, in this case, the inequality
$k_{1}(l_{1},l_{2},l_{p})<2\pi$. Now, if $l_{1}+l_{2}\leq\frac{1}{2}$
and $l_{1}=l_{2}$, explicit calculations show that the spectral gap
is equal to $2\pi$. Applying the symmetrization principle on the
leaves (Proposition \ref{prop:symmetrization}) shows that whenever
$l_{1}+l_{2}\leq\frac{1}{2}$ and $l_{1}\neq l_{2}$, we have $k_{1}(l_{1},l_{2},l_{p})\leq2\pi$.
We further wish to prove that this inequality is strict and do so
by checking the assumptions in Proposition \ref{prop:symmetrization}.
Assumption \eqref{enu:prop_symmetrization_1} is valid as we have
shown above that the stower with $l_{1}=l_{2}\leq\frac{1}{4}$ is
a supremizer. We now check assumption \eqref{enu:prop_symmetrization_2}
- that whenever $0\leq l_{1}<l_{2}$ and $l_{1}+l_{2}\leq\frac{1}{2}$
the corresponding spectral gap is simple. In turn, thanks to Proposition
\ref{prop:symmetrization}, we will get the strict inequality $k_{1}(l_{1},l_{2},l_{p})<2\pi$
for $l_{1}\neq l_{2}$ and $l_{1}+l_{2}\leq\frac{1}{2}$. Assume by
contradiction that there exist $0\leq l_{1}<l_{2}$ with $l_{1}+l_{2}\leq\frac{1}{2}$
such that the spectral gap $k_{1}(l_{1},l_{2},l_{p})$ is not simple.
Thanks to the multiplicity, we may choose an eigenfunction vanishing
at $v$. Since $l_{1}<\frac{1}{4}$, such an eigenfunction has to
vanish on the whole edge $e_{1}$ for otherwise, the spectral gap
would satisfy $k_{1}(l_{1},l_{2},l_{p})\geq\frac{\pi}{2l_{1}}>2\pi$.
Furthermore, the eigenfunction does not identically vanish neither
on $e_{2}$ (again, this would contradict the bound on $k_{1}$) nor
on $e_{p}$ (because of the Neumann condition at $v$). Thus, there
exist two integers $\alpha,\beta$ with $\alpha$ odd such that $k_{1}(l_{1},l_{2},l_{p})=\frac{\alpha\pi}{2l_{2}}=\frac{\beta\pi}{l_{p}}$.
From the bound on $k_{1}(l_{1},l_{2},l_{p})$ and the conditions on
the lengths, we get $\alpha=\beta=1$. But as $k_{1}(l_{1},l_{2},l_{p})=\frac{\pi}{2l_{2}}$
and $l_{1}\neq l_{2}$, \emph{all} eigenfunctions should vanish at
$v$. Using again multiplicity, we may choose another eigenfunction
which vanishes at $v$ and at another point on $e_{2}$, call it $w$.
But this contradicts the equality $k_{1}(l_{1},l_{2},l_{p})=\frac{\pi}{2l_{2}}$,
hence the simplicity. We have therefore found a continuous family
of maximizers - all stowers with $l_{1}=l_{2}\leq\frac{1}{4}$. It
is easy to check that among all those, only the equilateral stower
satisfies the Dirichlet criterion. In addition, the multiplicity of
the spectral gap increases from two to three when imposing the Dirichlet
condition at the central vertex, which is exactly the strong Dirichlet
criterion. Hence, the equilateral stower satisfies condition (b) of
Theorem \ref{thm:supremum_of_gluing}.
\end{proof}
~
\begin{proof}[Proof of Lemma \ref{lem:stower_1_3}]
Denote by $\metgraph$ the metric graph corresponding to $\disgraph$,
whose length of the petal is $l_{p}$ and lengths of the leaves are
$l_{1},l_{2},l_{3}$ (so that $l_{p}+l_{1}+l_{2}+l_{3}=1$). Assume
for instance that $l_{1}\geq l_{2}\geq l_{3}$ and denote $\ell:=\frac{l_{1}+l_{2}+l_{3}}{3}$.
Using the three leaves a main subgraph and the petal as an attached
subgraph, we get the inequality 
\[
k_{1}\left(\metgraph\right)\leq\frac{\pi}{2\ell}.
\]
On the other hand, using the petal and the longest two leaves as a
main subgraph and the shortest leaf as an attached subgraph, we use
Lemma \ref{lem:stower_1_2} to get 
\[
k_{1}\left(\metgraph\right)\leq\frac{2\pi}{1-l_{3}}.
\]
Combining these two inequalities, 
\[
k_{1}\left(\metgraph\right)\leq\min\left(\frac{\pi}{2\ell},\frac{2\pi}{1-l_{3}}\right)\leq\min\left(\frac{\pi}{2\ell},\frac{2\pi}{1-\ell}\right).
\]
This immediately yields, for any choice of $l_{l}$, 
\[
k_{1}\left(\metgraph\right)\leq\frac{5\pi}{2},
\]
with equality possible only if $\ell=\frac{1}{5}$ and $l_{3}=\ell$.
These two conditions together imply $l_{1}=l_{2}=l_{3}=\frac{1}{5}$
and $l_{p}=\frac{2}{5}$. Conversely, for this specific choice of
lengths, it is straightforward to point out the eigenfunction whose
$k$-eigenvalue equals $\frac{5\pi}{2}$. Furthermore, it is also
easy to check that in this case, the spectral gap indeed equals $\frac{5\pi}{2}$,
with multiplicity three. Furthermore, imposing the Dirichlet condition
at the central vertex increases the multiplicity of the spectral gap
from three to four. Hence, the equilateral stower satisfies the strong
Dirichlet criterion and is a unique supremizer, which proves that
the equilateral stower satisfies condition (b) of Theorem \ref{thm:supremum_of_gluing}.
\end{proof}
~
\begin{proof}[Proof of Lemma \ref{lem:stower_2_1}]
Let us denote by $l_{1},l_{2}$ and $l_{l}$ the lengths of the two
petals and the leaf, respectively. Denote $\ell:=\frac{l_{1}+l_{2}}{2}$.
From Proposition \ref{prop:symmetrization}, we have the inequality
$k_{1}(l_{1},l_{2},l_{l})\leq k_{1}(\ell,\ell,l_{l})$. We now focus
on the case where $l_{1}=l_{2}=\ell$. Let $v$ be the central vertex
of the stower. Using the two petals as a main subgraph and the leaf
as an attached subgraph, we get 
\[
k_{1}(\ell,\ell,l_{l})\leq\frac{2\pi}{1-l_{l}}.
\]
Thus, for $0\leq l_{l}\leq\frac{1}{5}$, we have $k_{1}(\ell,\ell,l_{l})\leq\frac{5\pi}{2}$,
with equality possible only if $l_{l}=\frac{1}{5}$. Now, using the
leaf as a main subgraph and the two loops as an attached subgraph,
we get 
\[
k_{1}(\ell,\ell,l_{l})\leq\frac{\pi}{2l_{l}\sqrt{3-l_{l}}}.
\]
In particular, we have $k_{1}(\ell,\ell,l_{l})<\frac{5\pi}{2}$ for
$0.26\leq l_{l}\leq1$. To cover the remaining values of $l_{l}$,
we construct the following test function. Take the function $x\mapsto\cos(\frac{\pi x}{l_{l}})$
on the leaf, so that it vanishes at $v$. On each petal, take the
function $x\mapsto\frac{l_{l}}{1-l_{l}}\sin(\frac{2\pi x}{1-l_{l}})$.
Denoting the resulting function by $h$, we have 
\[
\mathcal{R}(h)=\pi^{2}\frac{(1-l_{l})^{3}+16l_{l}^{3}}{4l_{l}^{2}(1-l_{l})^{2}}.
\]
In particular, we have $k_{1}(\ell,\ell,l_{l})\leq\frac{5\pi}{2}$
for $\frac{1}{5}\leq l_{l}\leq\frac{2}{5}$, with equality possible
only if $l_{l}=\frac{1}{5}$. Gathering the information given by these
three test functions, we conclude that for all $l_{l}$ values we
have $k_{1}(\ell,\ell,l_{l})\leq\frac{5\pi}{2}$, with equality possible
only if $l_{l}=\frac{1}{5}$. 

Moreover, it is easy to show that $k_{1}(\frac{2}{5},\frac{2}{5},\frac{1}{5})=\frac{5\pi}{2}$
with multiplicity two. This multiplicity increases to three when imposing
the Dirichlet condition at the central vertex, so that the equilateral
stower satisfies the strong Dirichlet criterion. It only remains to
show that if $l_{l}=\frac{1}{5}$ and $l_{1}\neq l_{2}$, we have
$k_{1}(l_{1},l_{2},l_{l})<\frac{5\pi}{2}$. This is obtained by applying
Corollary \ref{cor:extension_constant} to the two loops as the main
subgraph and the leaf as the attached subgraph. Thus, the equilateral
stower is a unique maximizer and satisfies in particular condition
(b) of Theorem \ref{thm:supremum_of_gluing}.
\end{proof}
~
\begin{proof}[Proof of Lemma \ref{lem:stower_3_1}]
Denote by $l_{1},l_{2},l_{3}$ and $l_{l}$ the lengths of the three
petals and the leaf. Assume without loss of generality that $l_{1}\geq l_{2}\geq l_{3}$
and define $\ell:=\frac{l_{1}+l_{2}+l_{3}}{3}$. Using the three petals
as a main subgraph and the leaf as an attached subgraph, we have $k_{1}(l_{1},l_{2},l_{3},l_{p})\leq\frac{\pi}{2\ell}$.
Moreover, equality is possible only if $l_{1}=l_{2}=l_{3}=\ell$.
Using the longest two petals and the leaf as a main subgraph and the
shortest petal as an attached subgraph we further have 
\[
k_{1}(l_{1},l_{2},l_{3},l_{l})\leq\frac{5\pi}{2(1-l_{3})}\leq\frac{5\pi}{2(1-\ell)}.
\]
Combining the two bounds we got on $k_{1}$, it follows that $k_{1}(l_{1},l_{2},l_{3},l_{p})\leq\frac{7\pi}{2}$,
with an equality possible only if $\ell=\frac{2}{7}$ and $l_{3}=\ell$.
These two equalities together entail that $l_{1}=l_{2}=l_{3}=\frac{2}{7}$
and $l_{l}=\frac{1}{7}$. With this choice of lengths, it is easy
to show that the spectral gap equals $\frac{7\pi}{2}$ and of multiplicity
three. This multiplicity increases to four when imposing the Dirichlet
condition at the central vertex, which means that the equilateral
stower satisfies the strong Dirichlet criterion. As the equilateral
stower is a unique supremizer, it also satisfies condition (b) of
Theorem \ref{thm:supremum_of_gluing}.
\end{proof}
~
\begin{proof}[Proof of Lemma \ref{lem:stower_1_1}]
 Let $\ell\in[0,1]$ be the length of the leaf and $1-\ell$ the
length of the petal. Using the leaf as a main subgraph and the petal
as an attached subgraph, we get 
\[
k_{1}(\ell,1-\ell)\leq\frac{2\pi}{2\ell\sqrt{3-2\ell}}.
\]
In particular, we have $k_{1}(\ell,1-\ell)\leq2\pi$ as long as $2\ell\sqrt{3-2\ell}\geq1.$
This is satisfied for $\ell\geq\frac{1}{3}$, and in this case the
inequality is strict. Next, we refer to the scattering approach described
in Appendix A and more precisely to equation \eqref{eq:secular_cond},
whose zeros are the graph's eigenvalues. This equation is equivalent,
in our case, to $F(k,\ell)=0$, where 
\begin{equation}
F(k,\ell):=2\cos(k\ell)\sin\left(k\frac{1-\ell}{2}\right)+\sin(k\ell)\cos\left(k\frac{1-\ell}{2}\right).
\end{equation}
Substituting $k=2\pi$, and using basic trigonometric identities,
we get 
\begin{align}
F(2\pi,\ell) & =2\cos(2\pi\ell)\sin\left(\pi(1-\ell)\right)+\sin(2\pi\ell)\cos\left(\pi(1-\ell)\right)\\
 & =2\sin\left(\pi\ell\right)\left(\cos(2\pi\ell)-\cos^{2}\left(\pi\ell\right)\right)=2\sin\left(\pi\ell\right)\left(\cos^{2}\left(\pi\ell\right)-1\right).
\end{align}
We notice that $F(k,\ell)>0$ for small positive values of $k$ and
that $F(2\pi,\ell)<0$ for $\ell\in\left(0,\frac{1}{3}\right]$. As
$F$ is continuous in $k$, we deduce that there exists some $k<2\pi$
such that $F(k,\ell)=0$. This means that for $\ell\in\left(0,\frac{1}{3}\right]$,
the spectral gap is strictly below $2\pi$. As we have seen above
that this is also the case for $\ell>\frac{1}{3}$ and since the spectral
gap is $2\pi$ for $\ell=0$ (single cycle graph), the result follows.
\end{proof}
\bibliographystyle{plain}
\bibliography{SpectralGap}

\def\cprime{$'$}
\begin{thebibliography}{10}

\bibitem{Ariturk_arXiv16}
S.~{Ariturk}.
\newblock {Eigenvalue estimates on quantum graphs}.
\newblock {\em ArXiv e-prints}, September 2016.

\bibitem{BanBerWey_jmp15}
R.~Band, G.~Berkolaiko, and T.~Weyand.
\newblock Anomalous nodal count and singularities in the dispersion relation of
  honeycomb graphs.
\newblock {\em Journal of Mathematical Physics}, 56(12), 2015.

\bibitem{Ber_cmp08}
G.~Berkolaiko.
\newblock A lower bound for nodal count on discrete and metric graphs.
\newblock {\em Comm. Math. Phys.}, 278(3):803--819, 2008.

\bibitem{BerKenKurMug_arXiv17}
G.~{Berkolaiko}, J.~B. {Kennedy}, P.~{Kurasov}, and D.~{Mugnolo}.
\newblock {Edge connectivity and the spectral gap of combinatorial and quantum
  graphs}.
\newblock {\em ArXiv e-prints}, February 2017.

\bibitem{BerKuc_quantum_graphs}
G.~Berkolaiko and P.~Kuchment.
\newblock {\em Introduction to Quantum Graphs}, volume 186 of {\em Mathematical
  Surveys and Monographs}.
\newblock AMS, 2013.

\bibitem{BerLiu_jmaa16}
G.~Berkolaiko and W.~Liu.
\newblock Simplicity of eigenvalues and non-vanishing of eigenfunctions of a
  quantum graph.
\newblock {\em J Math Anal Appl. (arXiv:1601.06225v2)}, 2016.

\bibitem{ButRufVel_esaim12}
G.~Buttazzo, B.~Ruffini, and B.~Velichkov.
\newblock Shape optimization problems for metric graphs.
\newblock {\em ESAIM: Control, Optimisation and Calculus of Variations},
  20(01):1--22, 2014.

\bibitem{Chavel_RiemannianGeometry}
I.~Chavel.
\newblock {\em Riemannian Geometry}.
\newblock Cambridge University Press, second edition, 2006.
\newblock Cambridge Books Online.

\bibitem{CdV_ahp14}
Y.~Colin~de Verdi{\`e}re.
\newblock Semi-classical measure on quantum graphs and the gaus map of the
  determinant manifold.
\newblock {\em Ann. Henri Poincar{\'e}}, 2014.

\bibitem{Cou_ngwgmp23}
R.~Courant.
\newblock Ein allgemeiner {S}atz zur {T}heorie der {E}igenfuktionen
  selbstadjungierter {D}ifferentialausdr\"ucke.
\newblock {\em Nachr. Ges. Wiss. G\"ottingen Math Phys}, pages 81--84, 1923.

\bibitem{DelRoss_amp16}
L.~M. Del~Pezzo and J.~D. Rossi.
\newblock The first eigenvalue of the p - laplacian on quantum graphs.
\newblock {\em Analysis and Mathematical Physics}, pages 1--27, 2016.

\bibitem{DinKarLom_sdm76}
E.~Dinits, A.~Karzanov, and M.~Lomonosov.
\newblock On the structure of a family of minimal weighted cuts in graphs.
\newblock In A.~Fridman, editor, {\em Studies in Discrete Mathematics}, pages
  290--306. Nauka, Moscow, 1976.
\newblock (in Russian).

\bibitem{ExnJax_pla12}
P.~Exner and M.~Jex.
\newblock On the ground state of quantum graphs with attractive
  {$\delta$}-coupling.
\newblock {\em Phys. Lett. A}, 376(5):713--717, 2012.

\bibitem{FleFra_report09}
Tam{\'a}s Fleiner and Andr{\'a}s Frank.
\newblock A quick proof for the cactus representation of mincuts.
\newblock Technical Report QP-2009-03, Egerv{\'a}ry Research Group, Budapest,
  2009.
\newblock {\tt www.cs.elte.hu/egres}.

\bibitem{Friedlander_aif05}
L.~Friedlander.
\newblock Extremal properties of eigenvalues for a metric graph.
\newblock {\em Ann. Inst. Fourier (Grenoble)}, 55(1):199--211, 2005.

\bibitem{Friedlander_ijm05}
L.~Friedlander.
\newblock Genericity of simple eigenvalues for a metric graph.
\newblock {\em Israel J. Math.}, 146:149--156, 2005.

\bibitem{GnuSmi_ap06}
S.~Gnutzmann and U.~Smilansky.
\newblock Quantum graphs: Applications to quantum chaos and universal spectral
  statistics.
\newblock {\em Adv. Phys.}, 55(5--6):527--625, 2006.

\bibitem{GnuSmiWeb_wrm04}
S.~Gnutzmann, U.~Smilansky, and J.~Weber.
\newblock Nodal counting on quantum graphs.
\newblock {\em Waves Random Media}, 14(1):S61--S73, 2004.

\bibitem{KarKurKup_ams16}
G.~Karreskog, P.~Kurasov, and I.~Trygg~Kupersmidt.
\newblock Schr\"odinger operators on graphs: symmetrization and {E}ulerian
  cycles.
\newblock {\em Proc. Amer. Math. Soc.}, 144(3):1197--1207, 2016.

\bibitem{KenKurMalMug_ahp16}
J.~B. Kennedy, P.~Kurasov, G.~Malenov{\'a}, and D.~Mugnolo.
\newblock On the spectral gap of a quantum graph.
\newblock {\em Annales Henri Poincar{\'e}}, pages 1--35, 2016.

\bibitem{KenMun_arxv16}
J.B. Kennedy and D.~Mugnolo.
\newblock The {C}heeger constant of a quantum graph.
\newblock {\em arXiv:1604.07453v2 [math.CO]}.

\bibitem{KotSmi_anp99}
T.~Kottos and U.~Smilansky.
\newblock Periodic orbit theory and spectral statistics for quantum graphs.
\newblock {\em Ann. Phys.}, NY 274:76, 1999.

\bibitem{Kurasov_acta13}
P.~Kurasov.
\newblock On the spectral gap for laplacians on metric graphs.
\newblock {\em Acta Phys. Pol. A}, 124(27):1060, 2013.

\bibitem{KurMalNab_jpa13}
P.~Kurasov, G.~Malenov{\'a}, and S.~Naboko.
\newblock Spectral gap for quantum graphs and their edge connectivity.
\newblock {\em J. Phys. A}, 46(27):275309, 16, 2013.

\bibitem{KurNab_jps14}
P.~Kurasov and S.~Naboko.
\newblock Rayleigh estimates for differential operators on graphs.
\newblock {\em J. Spectr. Theory}, 4(2):211--219, 2014.

\bibitem{MehNeuSch_alg17}
Kurt Mehlhorn, Adrian Neumann, and Jens~M. Schmidt.
\newblock Certifying 3-edge-connectivity.
\newblock {\em Algorithmica}, 77(2):309--335, 2017.

\bibitem{Menger_FM27}
K.~Menger.
\newblock Zur allgemeinen {K}urventheorie.
\newblock {\em Fundamenta Mathematicae}, 10(1):96--115, 1927.

\bibitem{NagamochiIbaraki_book08}
Hiroshi Nagamochi and Toshihide Ibaraki.
\newblock {\em Algorithmic Aspects of Graph Connectivity:}.
\newblock Cambridge University Press, Cambridge, 009 2008.

\bibitem{Nicaise_bsm87}
S.~Nicaise.
\newblock Spectre des r\'eseaux topologiques finis.
\newblock {\em Bull. Sci. Math. (2)}, 111(4):401--413, 1987.

\bibitem{Post_coll09}
O.~Post.
\newblock Spectral analysis of metric graphs and related spaces.
\newblock In G.~Arzhantseva and A.~Valette, editors, {\em Limits of graphs in
  group theory and computer science}, pages 109--140. Presses Polytechniques et
  Universitaires Romandes, 2009.

\bibitem{Rohleder_pams16}
J.~Rohleder.
\newblock Eigenvalue estimates for the laplacian on a metric tree.
\newblock {\em Proc. Amer. Math. Soc (arXiv:1602.03864v3)}, 2016.

\end{thebibliography}

\end{document}